\renewcommand{\algorithmcfname}{ALGORITHM}
\newtheorem{thm}{Theorem}
\newtheorem{dfn}{Definition}
\newtheorem{lem}{Lemma}
\newtheorem{ex}{Example}
\newtheorem{prop}{Proposition}
\newtheorem{coro}{Corrollary}
\newtheorem{claim}{Claim}
\newcounter{newct}
\newtheorem{asmp}{Assumption}
\newcommand{\myhyperlink}[2]{\hyperlink{#1}{\color{red} #2}}
\newcommand{\myparagraph}[1]{\vspace{2mm}\noindent {\bf\boldmath #1}}
\newcommand{\bv}{\begin{array}}
\newcommand{\appLem}[3]{\vspace{1mm}\noindent{\bf Lemma~\ref{#2}.} {\bf (#1). } {\em #3}}
\newcommand{\appThm}[3]{\vspace{1mm}\noindent{\bf Theorem~\ref{#2}} {({\bf #1}). }{\em #3 \vspace{1mm}}}
\newcommand{\appDfn}[3]{\vspace{1mm}\noindent{\bf Definition~\ref{#2}} {({\bf #1}). }{\em #3 \vspace{1mm}}}
\newcommand{\cal}{\mathcal }
\newcommand{\ma}{\mathcal A}
\newcommand{\calP}{\mathcal P}
\newcommand{\vH}{{\vec H}}
\newcommand{\ml}{\mathcal L}
\newcommand{\mV}{\mathcal V}
\newcommand{\calQ}{\mathcal Q}
\newcommand{\calN}{\mathcal N}
\newcommand{\calR}{\mathcal R}
\newcommand{\calV}{\mathcal V}
\newcommand{\calB}{\mathcal B}
\newcommand{\calI}{\mathcal I}
\newcommand{\ra}{\rightarrow}
\newcommand{\cor}{{\overline r}} % voting correspondence
\newcommand{\rank}{\text{Rank}}
\newcommand{\sign}{\text{Sign}}
\newcommand{\signH}{\text{Sign}_{\vH}}
\newcommand{\others}{\text{others}}
\newcommand{\hist}{\text{Hist}}
\newcommand{\ind}{\text{Id}}
\newcommand{\fs}{{\cal S}_{\vec H}}
\newcommand{\sk}{{\cal S}_{K}}
\newcommand{\Omit}[1]{}
\newcommand{\calM}{\mathcal M}
\newcommand{\calA}{\mathcal A}
\newcommand{\ba}{{\mathbf A}}
\newcommand{\pba}[1]{{\mathbf A}^{#1}}
\newcommand{\vbb}{{{\mathbf b}}}
\newcommand{\pvbb}[1]{{{\mathbf b}}^{#1}}
\newcommand{\vXp}{{\vec X}_{\vec\pi}}
\newcommand{\piuni}{{\pi}_{\text{uni}}}
\newcommand{\avg}[1]{ \text{avg}(#1)}
\newcommand{\bd}{{\mathbf D}}
\newcommand{\conv}{\text{\rm CH}}
\newcommand{\cone}{\text{\rm Cone}}
\newcommand{\Span}{\text{\rm Span}}
\newcommand{\range}[2]{\text{width}_{#1}(#2)}
\newcommand{\csus}[1]{{\mathcal U}_{#1}}
\newcommand{\sus}[1]{{\mathcal C}_{#1}}
\newcommand{\wmg}{\text{WMG}}
\newcommand{\umg}{\text{UMG}}
\newcommand{\pair}{\text{Pair}}
\newcommand{\scorediff}[1]{\text{Score}^{\Delta}_{#1}}
\newcommand{\score}{\text{Score}}
\newcommand{\copeland}{\text{Cd}_\alpha}
\newcommand{\pcopeland}[1]{\text{Cd}_{#1}}
\newcommand{\borda}{\text{Borda}}
\newcommand{\veto}{\text{veto}}
\newcommand{\maximin}{\text{MM}}
\newcommand{\rp}{\text{RP}}
\newcommand{\schulze}{\text{Sch}}
\newcommand{\stv}{\text{STV}}
\newcommand{\plurunoff}{\text{Pro}}
\newcommand{\condition}[1]{{\text{\rm K}_{#1}}}
\newcommand{\poly}{{\mathcal H}}
\newcommand{\polyz}{{\mathcal H}_{\leq 0}}
\newcommand{\ppoly}[1]{{\mathcal H}^{#1}}
\newcommand{\ppolyz}[1]{{\mathcal H}_{\leq 0}^{#1}}
\newcommand{\cpoly}[1]{{\mathcal H}_{#1}}
\newcommand{\cpolyn}[1]{{\mathcal H}_{#1,n}}
\newcommand{\cpolynint}[1]{{\mathcal H}_{#1,n}^{\mathbb Z}}
\newcommand{\cpolyz}[1]{{\mathcal H}_{#1,\leq 0}}
\newcommand{\sourcepoly}{\cpoly{\text{S}}}
\newcommand{\targetpoly}{\cpoly{\text{T}}}
\newcommand{\sourcepolyz}{\cpolyz{\text{S}}}
\newcommand{\targetpolyz}{\cpolyz{\text{T}}}
\newcommand{\sourceprofile}{P_{\text{S}}}
\newcommand{\targetprofile}{P_{\text{T}}}
\newcommand{\invert}[1]{\left(#1\right)^\top}
\newcommand{\satmax}[2]{\widetilde{#1}_{#2}^{\max}}
\newcommand{\satmin}[2]{\widetilde{#1}_{#2}^{\min}}
\newcommand{\sat}[1]{{#1}}
\newcommand{\vo}{\vec o}
\newcommand{\vosize}{|\voset{}|}
\newcommand{\voset}[1]{{\mathbb O}_{#1}}
\newcommand{\vomatrix}[1]{{\mathbf O}_{#1}}
\newcommand{\vosc}{{\mathcal O}_{\pm}}
\newcommand{\vosetting}{{\cal S}}
\newcommand{\bB}{{\mathbf B}}
\newcommand{\multimatrix}[1]{\left[\begin{array}{c}#1\end{array}\right]}
\newcommand{\ccav}{\text{\sc CCAV}}
\newcommand{\ccdv}{\text{\sc CCDV}}
\newcommand{\dcav}{\text{\sc DCAV}}
\newcommand{\dcdv}{\text{\sc DCDV}}
\newcommand{\eccav}{\text{\sc e-CCAV}}
\newcommand{\eccdv}{\text{\sc e-CCDV}}
\newcommand{\edcav}{\text{\sc e-DCAV}}
\newcommand{\edcdv}{\text{\sc e-DCDV}}
\newcommand{\mov}{\text{\sc MoV}}
\newcommand{\cm}{\text{\sc CM}} %coalitional manipulaiton
\newcommand{\cml}{\text{\sc CML}} %coalitional manipulaiton for loser
\newcommand{\cb}{\text{\sc CB}} 
\newcommand{\ecb}{\text{\sc e-CB}}
\newcommand{\db}{\text{\sc DB}} 
\newcommand{\edb}{\text{\sc e-DB}}
\newcommand{\control}{\text{\sc Control}} 
\newcommand{\econtrol}{\text{\sc e-Control}}
\newcommand{\nb}[2]{\text{\rm Nb}(#1,#2)}
\newcommand{\nbs}[2]{\text{\rm Nb}(#1,#2)}
\newcommand{\momatrix}[1]{{\mathbf K}_{#1}}
\title{The Impact of a Coalition: Assessing the Likelihood of Voter Influence in Large Elections} 
\author{
%Submission \#470
Lirong Xia%, xialirong@gmail.com 
} 
\affiliation{
\institution{RPI}
\city{Troy}
\state{NY}
  \country{USA}
  }
\email{xialirong@gmail.com}
\begin{document}
%\newpage

%\tableofcontents

%\maketitle
\setcounter{page}{0}
\begin{abstract}
For centuries, it has been widely believed  that the influence of a small coalition of voters is negligible in a large election. Consequently, there is a large body of literature on characterizing the  likelihood for an election to be influenced when the votes follow certain distributions, especially  the likelihood of being manipulable by a single voter under the i.i.d.~uniform distribution, known as the Impartial Culture (IC).

In this paper, we extend previous studies in three aspects: (1) we propose a more   general   semi-random model, where a  distribution adversary  chooses a worst-case distribution and then a   contamination adversary modifies up to $\psi$ portion of the data, (2) we consider many  coalitional influence  problems, including coalitional manipulation, margin of victory, and various vote controls and bribery, and (3) we consider arbitrary and variable coalition size $B$. Our main theorem provides asymptotically tight bounds on the semi-random likelihood of the existence of a size-$B$ coalition that can successfully influence the election under a wide range of voting rules. Applications of the main theorem and its proof techniques resolve long-standing open questions about the likelihood of coalitional manipulability under IC, by showing that the likelihood is $\Theta\left(\min\left\{\frac{B}{\sqrt n}, 1\right\}\right)$ for many commonly-studied voting rules.

The main technical contribution is a characterization of the semi-random likelihood for a  Poisson multinomial variable (PMV) to be unstable, which we believe to be a general and useful technique with independent interest.

%A full version of this paper can be found at \url{https://arxiv.org/abs/2202.06411}
\end{abstract}
\begin{titlepage}

\maketitle
 
\end{titlepage}
\setcounter{page}{1} 
 
%\newpage

%\newpage

\section{Introduction}

For centuries, it has been widely believed  that the influence of a small coalition of voters, especially the special case of a single voter, is negligible in a large election. For example, Condorcet commented in 1785, that {\em ``In single-stage elections, where there are a great many voters, each voter's influence is very small''~\citep{Condorcet1785:Essai}} (see the translation by~\citet[pp.245]{McLean1994:Condorcet}). As another example, Hegel commented in  {\em The Philosophy of Right} in 1821, that  {\em ``the casting of a single vote is of no significance where there is a multitude of electors''} (see the translation and comments by~\citet{Buchanan1974:Hegel}).  

How small the influence is? The answer depends on the definition and measure of influence. Many types of influence were defined and investigated in the literature. For example, {\em coalitional manipulation ($\cm$ for short)} refers to the phenomenon in which a coalition of voters have incentive to misreport their preferences to make the winner  more favorable to all of them. The {\em margin of victory ($\mov$ for short)} is the size of the smallest  coalition of voters who have the power to change the winner by voting differently, regardless of their incentives.  

Clearly, a sensible and informative measure cannot be completely based on the worst-case analysis. Take $\cm$ with a single manipulator  for example:  for many voting rules, there  exists a situation where a single voter can and has incentive to change the election outcome by changing his/her vote, due to the celebrated Gibbard-Satterthwaite theorem~\citep{Gibbard73:Manipulation,Satterthwaite75:Strategy}. Similarly, unless the voting rule always chooses the same winner, there exists a situation where a single voter is highly influential, i.e., $\mov=1$.

%Despite its important, accurately measuring  the influence of voters turns out to be highly non-trivial and challenging.  
%Building a realistic model to  accurately measure the influence of voters turns out to be a highly  challenging mission.  %This phenomenon is known as {\em manipulation}. %Therefore, a  measure beyond worst-case analysis is desirable. 

Consequently, there is a large body of literature on understanding the influence of voters using {\em average-case} analysis, especially for $\cm$. Since the pioneering works by~\citet{Pazner1978:Cheatproofness} and~\citet{Peleg1979:Note} in the 1970's, much of the literature has focused on characterizing the asymptotic likelihood for randomly generated votes to be coalitional manipulable for a fixed number of alternatives,  as the number of voters $n\ra\infty$. % has attracted a lot of attention to researchers in economics, political science, math, and computer science, 
Previous work has established an $O(\frac{1}{\sqrt n})$ upper bound for many commonly-studied voting rules by  a coalition of constantly many manipulators, under the i.i.d.~uniform distribution, known as the {\em Impartial Culture (IC)} in social choice. As for the lower bound, for a single manipulator w.r.t.~IC,  \citet{Slinko2002:Asymptotic} proved an $\Omega(\frac{1}{\sqrt n})$  lower bound  under the plurality rule, and the quantitative Gibbard-Satterthwaite theorems, e.g.,~\citep{Friedgut2011:A-quantitative,Mossel2015:A-quantitative}, established an $\Omega(\frac{1}{n^{67}})$ lower bound for a single manipulator under all voting rules that are constantly far away from dictatorships. \citet{Xia2023:Semi-Random}  proved an $\Omega(\frac{B}{\sqrt n})$ lower bound for an arbitrary coalition size $1\le B\le \sqrt n$ under any Condorcet consistent rule. Nevertheless, no matching bound was known for any non-Condorcet-consistent rules (except plurality) with $B=1$, or for any rule with a variable coalition size $B$. See Section~\ref{sec:related-work} for more related work and discussions.  The key research question is:
\begin{center}
{\bf How likely a coalition of voters can influence a large election?}
\end{center}
The significance of addressing this question has been widely recognized, as~\citet[p.187]{Pattanaik1978:Strategy} discussed for (coalitional) manipulation soon after the discovery of the Gibbard-Satterthwaite theorem:
{\em ``For, if the likelihood of such strategic voting is negligible, then one need not be unduly worried about the existence of the possibility as such.''} The answer to this question  also  plays a central role in many other studies, and a small influence can be positive news or negative news depending on the context. For example, a small influence is desirable under various robustness measures, such as decisiveness and privacy~\citep{Liu2020:How}.   On the other hand, a  large influence is desirable in justifying the {\em voting power} of small groups of voters~\citep{Banzhaf-III1968:One-Man}, and therefore would encourage voter turnout~\citep{Downs1957:An-Economic,Riker1968:A-Theory}.  

Surprisingly, the question remains largely open despite its significance and long history. Little was known for variable coalition size $B$, other means of influence, such as margin of victory ($\mov$), vote controls, and bribery~\citep{Faliszewski2016:Control}, and/or other probabilistic models for generating votes.  Previous work faces two challenges: first, IC has been widely criticized of being unrealistic (see, e.g.,  \citep[p.~30]{Nurmi1999:Voting}, \citep[p.~104]{Gehrlein2006:Condorcets}, and~\citep{Lehtinen2007:Unrealistic}), which means that the conclusions drawn under IC may only have limited implications in practice.  Second, technically, existing tools for likelihood analysis, especially various central limit theorems, are too coarse due to an $O(\frac{1}{\sqrt n})$ error~\citep{Xia2020:The-Smoothed}. Accurately bounding the likelihood is desirable, because the $O(\frac{1}{\sqrt n})$ upper bound in previous work may still be too large, especially when an influenceable election leads to a high social cost, such as the wide spread  of negative news in a well-connected society~\citep{Xia2023:Semi-Random}. %To achieve this, a more accurate tool is needed. 

\subsection{Our Contributions}
We make three contributions to address the question on the likelihood of coalitional influence: our  main conceptual contribution is a novel and general semi-random model for generating votes; our main modeling contribution is the conversion of commonly-studied coalitional influence problems to  {\em PMV-instability problems}; and our main technical contribution is an accurate answer to  general PMV-instability problems, which allows us to accurately characterize the likelihood of many coalitional influence problems.

\myparagraph{Conceptual contribution: The $(\Pi,\psi)$-semi-random model.} Our model combines and generalizes two models in the literature. The first is the semi-random model for social choice in~\citep{Xia2020:The-Smoothed}, which was inspired by and resembles the celebrated {\em smoothed analysis}~\cite{Spielman2004:Smoothed}; and the second is the celebrated {\em contamination model}  in robust statistics~\citep{Huber1964:Robust} and robust machine learning~\citep{Diakonikolas2021:Robust}. Our model has two parameters $(\Pi,\psi)$, where $\Pi$ is a set of distributions over all rankings over  $m$ alternatives and $\psi\in [0,1]$, and is therefore called the {\em $(\Pi,\psi)$-semi-random model}. Under our model, the collection of votes for $n$ agents, called a {\em profile}, is generated in the following three steps.
\begin{itemize}
\item Step 1: as in~\citep{Xia2020:The-Smoothed}, a {\em distribution adversary}  chooses a (worst-case) distribution for each agent. The distributions are denoted by $\vec \pi\in \Pi^n$. 
\item Step 2: the vote for each agent is independently (but not necessarily identically) generated from his/her distribution in $\vec\pi$. The votes are denoted by $P'$ and is called a   {\em tentative profile}.
\item Step 3: as in~\citep{Huber1964:Robust,Diakonikolas2021:Robust}, a  {\em contamination adversary} modifies $P'$ up to $\psi n$ votes to obtain a profile $P$. %, denoted by $P' \overset\psi\approx P$.
%where $\psi$ is a parameter of the model. 
\end{itemize}

%\vspace{2mm}\noindent{\bf The semi-random model in this paper.} 
%For any coalitional influence problem $X$ studied in this paper, including coalitional manipulation ($\cm$) and margin of victory ($\mov$), 
%Given a voting rule $r$,  a budget $B\ge 0$, and a profile $P$, an influence problem is modeled by a binary function $\sat{X}(r,P,B)$. That is,  $\sat{X}(r,P,B)=1$ if there exits a group of no more than $B$ voters who are influential; otherwise $\sat{X}(r,P,B)=0$. Take coalitional manipulation for example, where  $X = \cm$. $\cm(r,P,B)=1$ if and only if there exist a coalition of no more than $B$ voters who have incentive to misreport their preferences to improve the winner.  % Let $\Pi$ denote a set of distributions over all rankings over the  alternatives.  

\myparagraph{Modeling contribution: The PMV-instability problem.} Given a $(\Pi,\psi)$-semi-random model and a coalitional influence problem $X$, a voting rule $r$, and a coalition size $B$, we define the {\em max-semi-random likelihood} of $X$, denoted by $\satmax{X}{\Pi,\psi}(r,n,B)$,  as:
%\begin{equation}
%\label{dfn:max-s-sat}
%\satmax{X}{\Pi,\psi}(r,n,B) \triangleq  \underbrace{\sup\nolimits_{\vec\pi\in\Pi^n}}_\text{dist.~adversary}\underbrace{\Pr\nolimits_{P\sim\vec\pi}}_{\text{prelim.~profile}}(\underbrace{\exists P'\overset\psi\approx P}_{\begin{subarray}{l} \text{contamination adversary}\end{subarray}}\text{ s.t. }\underbrace{\sat{X}(r,P,B)=1}_{\text{influential}}),
%\end{equation}
\begin{equation}
\label{dfn:max-s-sat}
\satmax{X}{\Pi,\psi}(r,n,B) \triangleq  \underbrace{\sup\nolimits_{\vec\pi\in\Pi^n}}_{\begin{subarray}{c} \text{distribution}\\ \text{adversary}\end{subarray}}\ \underbrace{\Pr\nolimits_{P'\sim\vec\pi}}_{\begin{subarray}{c} \text{tentative}\\ \text{profile}\end{subarray}}\ (\underbrace{\exists P\overset\psi\approx P'}_{\begin{subarray}{c} \text{contamination}\\\text{adversary}\end{subarray}}\text{ s.t. }\underbrace{\sat{X}(r,P,B)=1}_{\begin{subarray}{c} 
\text{can be influenced}\\ \text{with budget } B\end{subarray}}),
\end{equation}
where $P \overset\psi\approx P'$ means that $P$  can be obtained from $P'$ by modifying no more than $\psi n$ votes, and $\sat{X}(r,P,B)=1$ if  $P$ can be influenced with budget $B$; otherwise $\sat{X}(r,P,B)=0$. For example, when  $X = \cm$, $\cm(r,P,B)=1$ if and only if there exist a coalition of no more than $B$ voters who have incentive to misreport their preferences.
%Similarly,  the {\em min-semi-random likelihood} is defined as:
%\begin{equation}
%\label{dfn:min-s-sat}
%\satmin{X}{\Pi,0}(r,n,B) \triangleq \inf\nolimits_{\vec\pi\in\Pi^n}\Pr\nolimits_{P\sim\vec\pi}\sat{X}(r,P,B)
%\end{equation}
%In other words, $\satmax{X}{\Pi,\psi}(r,n,B)$ %(respectively, $\satmin{X}{\Pi,0}(r,n,B)$)  upper-bounds the average-case likelihood of $X$, where the votes are first  generated independently (but not necessarily identically) from worst-case distribution vector $\vec\pi$ in $\Pi^n$ (chosen by the distribution adversary) and then $\psi$ fraction of votes are adversarilly modified (by the contamination adversary). 

%When a small influence is desirable (e.g., in the context of decisiveness, strategy-proofness, and privacy of voting), a low max-semi-random likelihood is good news, because it implies that the influence is small regardless of the underlying distributions $\vec\pi$. On the other hand, when a large influence is desirable (e.g., in the context of voting power and turnout), a large min-semi-random likelihood is good news.

%We show in Section~\ref{sec:model} that many coalitional influence problems under commonly-studied voting rules, including $\cm$ and $\mov$, can be solved by analyzing the positions of the histogram of the profile relative to certain polyhedra. 
Notice that the histogram of the tentative profile in our model is a {\em Poisson multivariate variable (PMV)}~\citep{Xia2021:How-Likely}. We model many coalitional influence problems under commonly-studied voting rules, including $\cm$ and $\mov$, as {\em PMV-instability problems} (Definition~\ref{dfn:PMV-instability-problem}), which aim to characterize the likelihood for a PMV to be {\em unstable}, in the sense that after modifying no more than $\psi n$ votes, the resulting histogram is in a given ``source'' polyhedron $\sourcepoly$, and can be further influenced to be in a ``target'' polyhedron $\targetpoly$ under budget $B$.

\myparagraph{Technical contribution: Characterizing the likelihood.} We address the   PMV-instability problem  under a large class of $(\Pi,\psi)$-semi-random models as follows. 
\begin{itemize}
\item {\bf \boldmath When  $\psi = 0$ (Theorem~\ref{thm:PMV-instability}),} the  solution to the  PMV-instability problem  is
$$0,  \exp(-\Theta(n)) ,\text{phase transition at $\Theta(\sqrt n)$, or phase transition at $\Theta(n)$}$$ 
The two phase transition cases refer to relatively sharp in crease of the likelihood as a function of $B$. More precisely, in the phase-transition-at-$\Theta(\sqrt n)$ case, the  likelihood reaches its maximum $\text{poly}^{-1}(n)$ before $B=\Theta(\sqrt n)$. In the phase-transition-at-$\Theta( n)$ case, the  likelihood increases from $\exp(-\Theta(n))$ to its maximum $\text{poly}^{-1}(n)$ around $B=\Theta(n)$.

\item {\bf \boldmath When  $\psi > 0$ (Theorem~\ref{thm:PMV-instability-psi>0}),} the  solution to the  PMV-instability problem is
$$0,  \exp(-\Theta(n)) ,\text{ or } \Theta(1)$$
\end{itemize}
The formal statements of Theorem~\ref{thm:PMV-instability} and Theorem~\ref{thm:PMV-instability-psi>0} also characterize  conditions and asymptotically tight bounds on the likelihood. The two theorems are useful technical tools for accurately bounding semi-random likelihood of coalitional influence, as shown in the following three applications. 
\begin{itemize}
\item {\bf First (Theorem~\ref{thm:PMV-instability-applications}),} we  prove   that for $X=\cm$ and $X=\mov$, many commonly-studied voting rules $r$, a large class of $\Pi$ including IC, any coalition size $B\ge 1$, and any sufficiently large $n$, 
$$\satmax{X}{\Pi,0}(r,n,B) =  \Theta\left(\min\left\{\frac{B}{\sqrt n},1\right\}\right)$$
Moreover, for every $\psi>0$, $\satmax{X}{\Pi,\psi}(r,n,B) =  \Theta\left(1\right)$. A straightforward application of  Theorem~\ref{thm:PMV-instability-applications} to $\cm$ under IC (Corollary~\ref{coro:IC}) not only closes the previous $\left(\Omega(\frac{1}{n^{-67}}), O(\frac{1}{\sqrt n})\right)$ gap for integer positional scoring rules and STV with $B=1$,  but also provides asymptotically tight bounds  under many rules for every $B\ge 1$: roughly speaking, each additional manipulator (up to $O(\sqrt n)$) increases the likelihood of success by $\Theta(\frac{1}{\sqrt n})$. This is good news when the social cost of dealing with the manipulation is low, but it is bad news if the social cost is high. In the latter case, voting rules with lower likelihood of manipulability are desirable.
\item {\bf Second (Theorem~\ref{thm:PMV-instability-GSR-upper}),} we prove an $O\left(\min\left\{\frac{B}{\sqrt n},1\right\}\right)$ upper bound on the likelihood of many types of coalitional influence, including $\cm$ and $\mov$, for all {\em generalized scoring rules (GSRs)}~\citep{Xia08:Generalized}, which include all voting rules mentioned in this paper. This supersedes all previous upper bounds we are aware of  and extends them to arbitrary $B\ge 1$ and a more general (semi-random) model. %Theorem~\ref{thm:PMV-instability-GSR-upper} can be viewed as good news for $\cm$, as it states that the likelihood vanishes for any $B = o(\sqrt n)$ as $n\ra\infty$. It also suggests that there is a large room in either designing natural voting rules with lower likelihood of $\cm$, or improving the $\Omega(\frac{1}{n^{-67}})$ lower bound for voting rules that are constantly far away from dictatorships~\citep{Mossel2015:A-quantitative}.
\item {\bf Third (Theorem~\ref{thm:CML}),} we propose  a new  coalitional influence problem called {\em coalitional manipulation for the loser}, denoted by $\cml$, which requires that a coalition of voters are incentivized to misreport their preferences in order to make the loser win. We prove that for any integer positional scoring rules with $m\ge 3$ (except veto), any $\Pi$ from a large class, and any sufficiently large $n$ and $B$, 
$$\satmax{\cml}{\Pi,0}(r_{\vec s} ,n,B) =  \Theta\left(\min\left\{\frac{B}{\sqrt n},1\right\}^{m-1}\right) $$
Moreover, for every $\psi>0$, $\satmax{\cml}{\Pi,\psi}(r_{\vec s},n,B) =  \Theta\left(1\right)$.  While $\cml$ may be of independent interest, the main purpose of this result is to illustrate that the likelihood of coalitional influence can be much smaller than $\Theta(\frac{B}{\sqrt n})$ or even $\Omega(\frac{1}{n^{-67}})$, can be non-linear in $B$, the degree of polynomial can depend on $m$, and each additional budget is marginally more powerful.  We are not aware of a similar phenomenon in the literature.
\end{itemize} 

\section{Preliminaries}
\label{sec:prelim}
%\noindent{\bf Basic Setting.} 
For any  $q\in\mathbb N$,  let $[q]=\{1,\ldots,q\}$. Let $\ma=[m]$ denote a set of $m\ge 3$ {\em alternatives}. Let $\ml(\ma)$ denote the set of all linear orders over $\ma$. Let $n\in\mathbb N$ denote the number of agents (voters). Each agent uses a linear order $R\in\ml(\ma)$ to represent his or her preferences, called a {\em vote}, where $a\succ_R b$ or $\{a\}\succ_R\{b\}$ means that the agent prefers alternative $a$ to alternative $b$. The vector of $n$ agents' votes, denoted by $P$, is called a {\em (preference) profile}, sometimes called an $n$-profile. The set of $n$-profiles for all $n\in\mathbb N$ is denoted by $\ml(\ma)^* = \bigcup_{n =1}^{\infty} \ml(\ma)^n$. 

For any   profile $P$, let $\hist(P)\in {\mathbb R}_{\ge 0}^{m!}$ denote the anonymized profile of $P$, also called the {\em histogram} of $P$, which contains the number of occurrence of every  linear order in $\ml(\ma)$ according to $P$.  An {\em irresolute voting rule} $\cor:\ml(\ma)^*\ra (2^{\ma}\setminus \{\emptyset\})$ maps a profile to a non-empty set of winners in $\ma$. A {\em resolute} voting rule $r$ is a special irresolute voting rule that always chooses a single alternative as the (unique) winner. Often a resolute rule is obtained from an irresolute rule by applying a {\em tie-breaking mechanism}. For example, the lexicographic tie-breaking  chooses the co-winner with the smallest index as the unique winner.   
 
%\vspace{2mm}\noindent{\bf  Representable voting system (RVS).} An {\em representable voting system ({RVS} for short)}, denoted by $\irvs_{s}$,  is defined by a {\em scoring function} $s:\ml(\ma)\times \ma\ra {\mathbb R}_{\ge 0}$, such that (1) for every $R\in\ml(\ma)$ and every pair of alternatives $a,b$ with $a\succ_R b$, we have $s(R,a)\ge s(R,b)$, and (2) let $a$ and $b$ denote the top-ranked and bottom-ranked alternatives in $R$, respectively, then we have $s(R,a)> s(R,b)$.   Given a profile $P$ with weights $\vec \omega_P$,   $\irvs_{s}$ chooses all alternatives $a$ with maximum $\sum_{R\in P}\omega_R\cdot  s(R,a)$. If all scores $s(R,a)$ are integers, then $\irvs_s$ is called an integer RVS.

%\vspace{.3mm}
\vspace{2mm}\noindent{\bf Integer positional scoring rules.}  An {\em (integer) positional scoring rule}  $\cor_{\vec s}$  is  characterized by an integer scoring vector $\vec s=(s_1,\ldots,s_m)\in{\mathbb Z}^m$ with $s_1\ge s_2\ge \cdots\ge s_m$ and $s_1>s_m$. For any alternative $a$ and any linear order $R\in\ml(\ma)$, we let  $\vec s(R,a)=s_i$, where $i$ is the rank of $a$ in $R$.  Commonly-studied integer positional scoring rules include  {\em plurality}, which uses the scoring vector $(1,0,\ldots,0)$, {\em Borda}, which uses the scoring vector $(m-1,m-2,\ldots,0)$, and {\em veto}, which uses the scoring vector $(1,\ldots,1,0)$. %When there are multiple alternatives with the maximum total score, a tie-breaking mechanism (e.g., lexicographic tie-breaking) is used to select the (single) winner.

We recall the definition of {\em generalized scoring rules (GSRs)}~\citep{Xia08:Generalized} based on separating hyperplanes~\citep{Xia09:Finite,Mossel13:Smooth} as follows. 
%For any  $x\in\mathbb R$,  let $\sign(x)\in\{+,-,0\}$ denote the sign of $x$. Given a set of $K$ hyperplanes, denoted by $\vH = (\vec h_1,\ldots,\vec h_K)$, in the $q$-dimensional Euclidean space, for any $\vec x \in {\mathbb R}^q$, let $\sign_\vH(\vec x) = (\sign(\vec x\cdot \vec h_1),\ldots, \sign(\vec x\cdot \vec h_K))$. In other words, for any $k\le K$, the $k$-th component of $\sign_\vH(\vec x)$ equals to $0$, if $\vec p$ lies in hyperplane $\vec h_k$; and it equals to $+$ (respectively, $-$) if  $\vec p$ lies in the positive (respectively, negative) side of  $\vec h_k$.Each element in $\{+,-,0\}^K$ is called a {\em signature}.

\begin{dfn} 
\label{dfn:GISR} A {\em generalized scoring rule (GSR)} $r$ is defined by (1) a set of $K\ge 1$ hyperplanes $\vH = (\vec h_1,\ldots,\vec h_K)\in ({\mathbb R}^{m!})^K$ and (2) a function $g:\{+,-,0\}^K\ra \ma$. For any  profile $P$, we let $r(P) = g(\sign_\vH(\hist(P)))$, where 
$\sign_\vH(\vec x) = (\sign(\vec h_1\cdot \vec x),\ldots, \sign( \vec h_K\cdot \vec x))$ represents the signs of $ \vec h_1\cdot \vec x,\ldots,  \vec h_K\cdot \vec x$.    When ${\vH}\in ({\mathbb Z}^{m!})^K$,  $r$ is called an {\em integer GSR (int-GSR)}.  
\end{dfn}

\begin{ex}[{\bf\boldmath Borda as a  GSR}{}]
\label{ex:borda-GSR}
Let $m=3$. Borda with lexicographic tie-breaking is  a GSR with $K=m$ and $\vec H = \{\vec h_1,\vec h_2,\vec h_3\}$  defined as follows. 
\begin{center}
\begin{tabular}{rrrrrrrrl}
& & $x_{123}$ & $x_{132}$ & $x_{213}$ & $x_{231}$ & $x_{312}$ & $x_{321}$ \\
$\vec h_1= $ & $($ & $1,$& $2,$& $-1,$& $-2,$& $1,$& $-1$& $)$\\
$\vec h_2= $ & $($ & $2,$& $1,$& $1,$& $-1,$& $-1,$& $-2$& $)$\\
$\vec h_3= $ & $($ & $1,$& $-1,$& $2,$& $1,$& $-2,$& $-1$& $)$\\
\end{tabular}
\end{center}
Let $\vec x = (x_{123},x_{132},x_{213},x_{231},x_{312},x_{321})$ denote the histogram of a profile, where $x_{123}$ represents the number of $[1\succ 2\succ 3]$ votes. It follows that $\vec h_1\cdot \vec x$ is the  Borda score of alternative $1$ minus the  Borda score of $2$ in the profile; $\vec h_2\cdot \vec x$ is the  Borda score of alternative $1$ minus the  Borda score of $3$ in the profile; and $\vec h_3\cdot \vec x$ is the  Borda score of alternative $2$ minus the  Borda score of $3$ in the profile. The $g$ function chooses the winner based on $\signH(\hist(P))$ and break ties lexicographically when necessary.\end{ex}

%See Appendix~\ref{app:more-rule} for definitions of other commonly-studied voting rules. Many commonly-studied voting rules, including all voting rules studied in this paper, are GSRs.

%\vspace{.3mm}

%The {\em unweighted majority graph (UMG)} of a profile $P$, denoted by $\umg(P)$, is the unweighted directed graph where the vertices are the alternatives and there is an edge $a\ra b$ if and only if $P[a\succ b] > P[b\succ a]$. If $a$ and $b$ are tied, then there is no edge between $a$ and $b$. $\umg(\pi)$ is defined similarly. A {\em Condorcet cycle}  of a profile $P$ is a cycle in $\umg(P)$. A {\em weak Condorcet cycle} of a profile $P$ is a cycle in a supergraph of $\umg(P)$.

\myparagraph{\bf Coalitional manipulation and margin of victory.}  Let $r$ be a resolute rule, $P$ be a preference profile, and $B\ge 0$ be a  budget. A {\em coalitional influence problem} is defined by a binary function $X$ such that $X(r,P,B)=1$ if and only if $P$ can be influenced under $r$ with budget $B$. For example, {\em coalitional manipulation}  ($\cm$)  is defined by $\cm(r,P,B)$ such that $\cm(r,P,B)=1$ if and only there exists $P'\subseteq P$ with $|P'|\le B$ and $P^*$ with $|P^*|=|P'|$,  such that for all $R\in P'$, $r(P-P'+P^*)\succ_R r(P)$. The {\em margin of victory} ($\mov$) is defined by $\mov(r,P,B)$ such that $\mov(r,P,B)=1$ if and only if there exist a coalition of no more than $B$ voters who can  change the winner by voting differently (regardless of their preferences and incentives). 

\myparagraph{\bf Other rules and coalitional influence problems.} In the main text we focus on $X=\cm$ and $r=\borda$.  See Appendix~\ref{app:more-rule} for the definitions of some other commonly-studied voting rules (which are GSRs), i.e., ranked pairs, Schulze, maximin, Copeland, and STV, and  Appendix~\ref{app:more-CI}  for the definitions of  some other commonly-studied coalitional influence problems, i.e., constructive/destructive control by adding/deleting votes and bribery. Many results in the main text apply to these rules and coalitional influence problems, as stated in their full versions in the Appendix.

%In the main text of the paper, we focus on presenting the result for $\cm$ and $\mov$ under integer positional scoring rules. 

\section{\boldmath The $(\Pi,\psi)$-Semi-Random Model}
Before introducing our model, let us first briefly recall the definitions of two semi-random models in previous work. 
In the semi-random model in~\citep{Xia2020:The-Smoothed}, a set $\Pi$ of distributions over $\ml(\ma)$, a voting rule $r$, and a coalitional influence problem $X$ are given. Then, a {\em distribution adversary} chooses a distribution from $\Pi$ for each agent, whose vote is generated independently (but may not identically), to maximize the probability for $X$ to be $1$. Formally, under this model we are interested in characterizing
$${\sup\nolimits_{\vec\pi\in\Pi^n}}  {\Pr\nolimits_{P\sim\vec\pi}} (\sat{X}(r,P,B)=1)$$
%$$ \underbrace{\sup\nolimits_{\vec\pi\in\Pi^n}}_{\begin{subarray}{c} \text{distribution}\\ \text{adversary}\end{subarray}}\ \underbrace{\Pr\nolimits_{P\sim\vec\pi}}_{\begin{subarray}{c}  \text{profile}\end{subarray}}\ (\sat{X}(r,P)=1)$$
In the contamination model~\citep{Huber1964:Robust,Diakonikolas2021:Robust},   $\psi\in[0,1]$ and a distribution $\pi$ over $\ml(\ma)$ are given, and a {\em tentative profile} $P'$ is generated i.i.d.~from $\pi$. After seeing $P'$, a {\em contamination adversary} modifies up to $\psi n$ votes in $P'$ to satisfy $X$. Formally, under this model we are interested in characterizing
$$  {\Pr\nolimits_{P'\sim\pi^n}} (\exists P\overset\psi\approx P' \text{ s.t. } \sat{X}(r,P,B)=1)$$
We propose a semi-random model that combines the two aforementioned models sequentially: first, a distribution adversary chooses $\vec\pi\in\Pi^n$; second, a tentative profile $P'$ is generated; and third, a contamination adversary modifies up to $\psi n$ votes in $P'$ to obtain $P$. Formally, we have the following definition.
\begin{dfn}[{\bf \boldmath The $(\Pi,\psi)$-semi-random model and likelihood of coalitional influence}{}]
\label{dfn:Pi-psi-model}
Given a voting rule $r$,  a budget $B\ge 0$, an influence problem $X$, $\Pi$, and $\psi\in [0,1]$, the {\em max-semi-random likelihood} of $X$ under $r$ with $n$ agents and budget $B$, denoted by $\satmax{X}{\Pi,\psi}(r,n,B)$,  is defined as:
%\begin{equation}
%\label{dfn:max-s-sat}
%\satmax{X}{\Pi,\psi}(r,n,B) \triangleq  \underbrace{\sup\nolimits_{\vec\pi\in\Pi^n}}_\text{dist.~adversary}\underbrace{\Pr\nolimits_{P\sim\vec\pi}}_{\text{prelim.~profile}}(\underbrace{\exists P'\overset\psi\approx P}_{\begin{subarray}{l} \text{contamination adversary}\end{subarray}}\text{ s.t. }\underbrace{\sat{X}(r,P,B)=1}_{\text{influential}}),
%\end{equation}
$$\hspace{15mm}
\satmax{X}{\Pi,\psi}(r,n,B) \triangleq  \underbrace{\sup\nolimits_{\vec\pi\in\Pi^n}}_{\begin{subarray}{c} \text{distribution}\\ \text{adversary}\end{subarray}}\ \underbrace{\Pr\nolimits_{P'\sim\vec\pi}}_{\begin{subarray}{c} \text{tentative}\\ \text{profile}\end{subarray}}\ (\underbrace{\exists P\overset\psi\approx P'}_{\begin{subarray}{c} \text{contamination}\\\text{adversary}\end{subarray}}\text{ s.t. }\underbrace{\sat{X}(r,P,B)=1}_{\begin{subarray}{c} 
\text{can be influenced}\\ \text{with budget } B\end{subarray}}) \hspace{15mm}\eqref{dfn:max-s-sat}
$$
\end{dfn}
%As discussed in the Introduction, given a coalitional influence problem $X$, a set $\Pi$ of distributions over $\ml(\ma)$, $\psi\in [0,1]$, a voting rule $r$, a number of voters $n\in\mathbb N$, and a budget constraint $B\ge 0$, the {\em max-semi-random likelihood of $X$} ({\em max-semi-random $X$} for short), denoted by $\satmax{X}{\Pi,\psi}(r,n, B)$, is defined in Equation~(\ref{dfn:max-s-sat}). %Similarly, the {\em min-semi-random likelihood of $X$} ({\em min-semi-random $X$} for short), denoted by $\satmin{X}{\Pi,0}(r,n,B)$, is defined in Equation~(\ref{dfn:min-s-sat}). 

Notice that while technically, the goal of the distribution adversary and the contamination adversary is to maximize $X(r,P,B)$, they are ``virtual'' and are only used to model the worst-case nature in the data generation process. On the other hand, the influencer(s) are real person who aim at influencing the election after the votes are generated. Therefore, in this paper we model them separately, especially because we believe that the $(\Pi,\psi)$-semi-random model can be useful for studying the likelihood of other properties of interest that do not have an ``influencer'', such as likelihood of Condorcet's voting paradox and the ANR impossibility theorem~\citep{Xia2020:The-Smoothed,Xia2023:Semi-Random}. Allowing the influencers to conduct multiple means to influence the election is an interesting direction for future work. 

The $(\Pi,\psi)$-semi-random model generalizes popular models in previous work as shown in the following example.
\begin{ex}
\label{ex:model-special-cases}
The semi-random model in~\citep{Xia2020:The-Smoothed} is equivalent to the $(\Pi,\psi)$-semi-random model with $\psi = 0$. The contamination model~\citep{Huber1964:Robust,Diakonikolas2021:Robust} is equivalent to the $(\Pi,\psi)$-semi-random model with $|\Pi|=1$. Impartial Culture (IC) is equivalent to the $(\{\piuni\},0)$-semi-random model, where $\piuni$ is the uniform distribution over $\ml(\ma)$.
\end{ex}

 Let us take a look at another example of the $(\Pi,\psi)$-semi-random model.

\begin{ex}[{\bf\boldmath Semi-random $\cm$ under Borda}{}]
\label{ex:sCM-Borda}
Let $X=\cm$ and $r=\borda$ with lexicographic tie-breaking. Let $\ma = \{1,2,3\}$ and  $\Pi = \{\pi^1,\pi^2\}$, where $\pi^1$ and $\pi^2$ are distributions in Table~\ref{tab:sCM-Borda}.
%Table~\ref{tab:ex-sCC}.

\begin{minipage}[t][][b]{1\textwidth}
%\begin{table}[htp]
\centering
%\caption{\small Two distributions over three alternatives. ``$123$'' represents the ranking $1\succ 2\succ 3$.\label{tab:ex-sCC}}
\begin{tabular}{|@{\ }c@{\ }|c|c|c|c|c|c|c| }
\hline & \small $1\succ 2\succ 3$& \small $1\succ 3\succ 2$& \small $2\succ 3\succ 1$& \small $3\succ 2\succ 1$& \small $2\succ 1\succ 3$& \small $3\succ 1\succ 2$ \\

\hline $\pi^1$& $1/4$& $1/4$&$1/4$& $1/12$& $1/12$&$1/12$  \\

\hline $\pi^2$& $1/12$& $1/12$&$1/12$ & $1/4$& $1/4$&$1/4$ \\

\hline
\end{tabular} 
\captionof{table}{\small $\Pi$ in Example~\ref{ex:sCM-Borda}.\label{tab:sCM-Borda}}
\end{minipage}

When $n=2$ and $B=1$,  $\satmax{\cm}{\Pi,0}(\borda,2,1) = \sup\nolimits_{\vec\pi\in\{\pi^1,\pi^2\}^n}\Pr\nolimits_{P\sim\vec\pi}\sat{\cm}(\borda,P,1)$. That is, the adversary has four choices of $\vec \pi$, i.e., $\{(\pi^1,\pi^1),(\pi^1,\pi^2),(\pi^2,\pi^1),(\pi^2,\pi^2)\}$. Each $\vec\pi$ leads to a distribution over the set of all  profiles of two agents, i.e., $\ml(\ma)^2$.  As we will see later in Example~\ref{ex:thm-Borda-CM}, for every sufficiently large $n$,  $\satmax{\cm}{\Pi,0}(\borda,n,1)= \Theta(\frac{1}{\sqrt n })$, and for every $\psi>0$, $\satmax{\cm}{\Pi,\psi}(\borda,n,1)= \Theta(1)$.  
%\end{table}
\end{ex}
%Throughout the paper, we make the following mild assumptions on $\Pi$.
%\vspace{2mm}
Following~\citep{Xia2020:The-Smoothed,Xia2021:How-Likely},  we make the following assumptions on $\Pi$ throughout the paper.

\begin{tcolorbox} {\bf \boldmath Assumptions on $\Pi$.}
(1) {\em Strict positiveness}, which means that there exists a constant $\epsilon>0$ such that the probabilities over all rankings in all $\pi\in\Pi$ are larger than $\epsilon$. (2) {\em Closedness}, which means that $\Pi$ is a closed set in the Euclidean space. 
\end{tcolorbox} 
For example, IC (where $\Pi=\{\piuni\}$ and $\psi =0$) and the model in Example~\ref{ex:sCM-Borda} satisfy the assumptions.

\myparagraph{\bf \boldmath Generality and limitations of the  model.} The $(\Pi,\psi)$-semi-random model generalizes several popular models as shown in Example~\ref{ex:model-special-cases}. It introduces strong (and adversarial) correlations among the votes compared to the semi-random model in~\cite{Xia2020:The-Smoothed}, therefore addressing a major limitation of the latter, which is the statistical independence of the votes. Compared to the contamination model~\citep{Huber1964:Robust,Diakonikolas2021:Robust}, the $(\Pi,\psi)$-semi-random model is more flexible and general in choosing the distribution for generating the tentative profile. The major limitation  is the strict positiveness  assumption of $\Pi$.  Nevertheless, as discussed in~\citep{Xia2020:The-Smoothed,Xia2021:How-Likely}, distributions in many commonly-studied and widely-used models are strictly positive, such as Mallows' model~\cite{Mallows57:Non-null} and random utility models~\citep{Thurstone27:Law}. Moreover, this limitation is partly addressed by the contamination adversary in our model. Therefore, while more general models are always desirable, we believe that the $(\Pi,\psi)$-semi-random model is a significant step forward, especially compared to the large body of literature that is based on IC, which is a special case of our model as well.

\section{Coalitional Influence Problems as PMV-Instability Problems}
\label{sec:model}
We start with an example of modeling $\cm$ with $\psi = 0$ under Borda as  systems of linear inequalities, which motivates the study of the more general {\em PMV-instability problem} that will be defined later in this section (Definition~\ref{dfn:PMV-instability-problem}). % and its extension {\em multi-instability problem} (Definition~\ref{dfn:multi-instability-problem}). %, which will be solved in Section~\ref{sec:PMV-instability}. 
%Let us look at an example that uses a system of linear inequalities to represent the histograms of profiles where the winner is $1$ and can be coalitionally manipulated to $2$ under Borda.
\begin{ex}
\label{ex:illustration}  
Let $\ma = \{1,2,3\}$ and let $r$ be  Borda  with lexicographic tie-breaking.  Let $\csus{n,B}^{1\ra2}$ denote the histograms of all $n$-profiles  that satisfy the following conditions: (1) the winner before manipulation is $1$, (2) a coalition of no more than $B$ manipulators are motivated to change the winner to $2$ by casting different votes. %Notice that only the $[2 \succ 1\succ 3]$ voters  and the $[3\succ 2\succ 1]$ voters have incentive to misreport their preferences (both to $[2\succ 3\succ 1]$). 
Let  $o_1$ (respectively, $o_2$) denote the number of voters who change their votes from $[2\succ 1\succ 3]$ (respectively, $[3\succ 2\succ 1]$) to $[2\succ 3\succ 1]$.  

Then, the histogram $\vec x$ of an $n$-profile is in $ \csus{n,B}^{1\ra2}$  if and only if there exists an integer vector $\vo = (o_1,o_2)$ such that $(\vec x,\vo\,)$ is a feasible solution to the following linear program, where the objective is omitted because  only the feasibility matters. 
\begin{align*}
\left.\begin{tabular}{r@{\ }r} $-x_{123}-2 x_{132} + x_{213}+2 x_{231} -  x_{312} + x_{321} \le $ &\phantom{$-$}$ 0$ \\
 $-2x_{123}- x_{132} - x_{213}+ x_{231} + x_{312}  + 2 x_{321} \le $ &$ 0$  \\
  $-\vec x \le $ &$ 0 $
\end{tabular}\right\} & \begin{tabular}{l} \text{1 wins before} \\ \text{manipulation}\end{tabular}\\
%\end{align*}  
%\begin{align*}
\left.\begin{tabular}{r@{\ }r}  
 $x_{123}+2 x_{132} - (x_{213}-o_1) - 2 (x_{231}+o_1+o_2) + x_{312} - (x_{321}-o_2) \le $ &$-1$ \\
 $-x_{123}+ x_{132} - 2(x_{213}-o_1) -  (x_{231}+o_1+o_2) + 2x_{312} +(x_{321}-o_2) \le $ &$ 0$\\
  $- o_1 \le 0$, $-o_2\le 0$,  $- (x_{213}-o_1) \le 0$, $- (x_{321}-o_2) \le 0$,
 $ o_1 + o_2 \le $ &$B$ 
\end{tabular}\right\}& \begin{tabular}{l} \text{2 wins after} \\ \text{manipulation}\end{tabular} 
\end{align*}
%It follows that a profile whose histogram is $\vec x$ is coalitional manipulable by $B$ voters, if there exists an integer vector $\vo = (w_1,w_2)$ such that $(\vec x,\vo)$ satisfies all linear constraints above. 
\end{ex}

 In Example~\ref{ex:illustration}, the effect of each manipulator can be modeled by its changes to the histogram, and the manipulators aim at manipulating  vectors in a {\em source} polyhedron, which represents $1$ being the winner, into a {\em target} polyhedron, which represents $2$ being the winner, under the budget constraint $B$.  This motivates us to define {\em instability settings} as follows.

 \begin{dfn}[{\bf \boldmath Instability settings}{}]
In an {\em instability setting} $\vosetting{}\triangleq\langle\sourcepoly, \targetpoly, \voset{}, \vec c\,\rangle$,   

$\bullet$ $\sourcepoly$ and $\targetpoly$ are polyhedra in $\mathbb R^{q}$ for some $q\in\mathbb N$, the subscript S and T represent ``source'' and  ``target'', respectively. For $Y\in \{\text{S},\text{T} \,\}$, let
 $\cpoly{Y} \triangleq \left\{\vec x\in {\mathbb R}^q: \ba_Y \times\invert{\vec x}\le \invert{\vbb_Y}\right\},$ 
where $\ba_{Y}$ is an integer matrix of $q$ columns; 

$\bullet$ $\voset{}\subseteq {\mathbb R}^{q}$ is a finite set of {\em vote operations}, and let $\vomatrix{}$ denote the $|\voset{}|\times q$ matrix whose rows are the vectors in $\voset{}$; 

$\bullet$ $\vec c \in \mathbb R_{\ge 0}^{|\voset{}|}$  is the cost vector for the vote operations in $\voset{}$.
\end{dfn} 
W.l.o.g.,  we assume  $\vec c>\vec 0$  and  the minimum cost of a single operation is $1$, i.e., $\min_{i\le q}[\vec c\,]_i = 1$. Given an instability setting $\vosetting{}$, $n\in\mathbb N$, and a budget $B \ge 0$, we  let $\csus{n,B}$ denote the set of non-negative size-$n$ integer vectors that represent {\em unstable} histograms w.r.t.~vote operations in $\voset{}$ and budget $B$. That is,
$$\csus{n,B} \triangleq \left\{\underbrace{\vec x\in  \sourcepoly\cap {\mathbb Z}_{\ge 0}^q}_{\vec x\text{ is in }\sourcepoly}: 
\underbrace{\vec x\cdot \vec 1 = n}_{n\text{-profile}}\text{ and }\exists \underbrace{\vo\in {\mathbb Z}_{\ge 0}^{|\voset{}|}}_{\text{vote operations}}\text{ s.t. }\underbrace{\vec c\cdot\vo \le B}_{\text{budget constraint}} \text{ and }\underbrace{\vec x+ \vo\times \vomatrix{}\in\targetpoly}_{\text{manipulated to be in }\targetpoly}\right\}$$

%$$\csus{n,B} \triangleq \left\{\underbrace{\vec x\in  \sourcepoly\cap {\mathbb Z}_{\ge 0}^q}_{\vec x\text{ is in }\sourcepoly}: \vec x\cdot \vec 1 = n\text{ and }\exists \vo\in {\mathbb Z}_{\ge 0}^{|\voset{}|} \text{ s.t. }\vec c\cdot\vo \le B \text{ and }\vec x+ \vo\times \vomatrix{}\in\targetpoly\right\}$$
 
\begin{ex}%[\bf \boldmath instability settings]
\label{ex:influence-model}
In the setting of Example~\ref{ex:illustration}, $q=m!=6$.  $\csus{n,B}^{1\ra 2}$ is the set of unstable histograms of the instability settings where $\sourcepoly$ (respective, $\targetpoly$) is the polyhedron that represents $1$ (respectively, $2$) being the winner, $\voset{} = \{(0,0,-1,1,0,0), (0,0,0,1,0,-1)\}$ (the indices to rankings are the same as in Example~\ref{ex:borda-GSR}), $\vomatrix{} = \left[\begin{tabular}{@{}r@{, }r@{, }r@{, }r@{, }r@{, }r@{}}0 & 0& -1& 1& 0& 0\\ 0 & 0& 0& 1& 0& -1\end{tabular}\right]$, $\vec c = (1,1)$.
%, and $\vo = (w_1,w_2)$ models the number of the two types of manipulators.
\end{ex}

Notice that Examples~\ref{ex:illustration} and~\ref{ex:influence-model} only model the   situations where alternative $1$ is manipulated to alternative $2$. Similarly, we can define $\csus{n,B}^{1\ra 3}$, $\csus{n,B}^{2\ra 1}$, $\csus{n,B}^{2\ra 3}$, $\csus{n,B}^{3\ra 1}$, and $\csus{n,B}^{3\ra 2}$. Let $\calM$ denote the set of the six instability settings, and define 
\begin{equation}
\label{eq:union-ex}
\csus{n,B}^\calM \triangleq \csus{n,B}^{1\ra 2}\cup \csus{n,B}^{1\ra 3}\cup\csus{n,B}^{2\ra 1}\cup\csus{n,B}^{2\ra 3}\cup\csus{n,B}^{3\ra 1}\cup\csus{n,B}^{3\ra 2}
\end{equation}

As observed in~\citep{Xia2021:How-Likely}, the histogram of the tentative profile   is a {\em Poisson multivariate variable (PMV)}, defined as follows.
\begin{dfn}[{\bf Poisson multivariate variable (PMV)}{}]
Given $n,q\in\mathbb N$ and a vector $\vec \pi=(\pi_1,\ldots,\pi_n)$ of $n$ distributions over  $\{1,\ldots,q\}$, we define an {\em $(n,q)$-PMV}, denoted by $\vXp$, to be the histogram of $n$ independent random variables whose distributions are $\{\pi_1,\ldots,\pi_n\}$, respectively.
\end{dfn}
Then, it is not hard to verify that
\begin{equation}
\label{eq:CM-Borda}
\satmax{\cm}{\Pi,0}(\borda, n, B) =\sup\nolimits_{\vec\pi\in\Pi^n}\Pr\left(  \vXp  \in  \csus{n,B}^\calM\right)
%\text{ and }\satmin{\cm}{\Pi}(\borda, n, B) =\inf\nolimits_{\vec\pi\in\Pi^n}\Pr\left(\vXp \in\csus{n,B}^\calM\right)
\end{equation}
It turns out that many other coalitional influence problems  can be modeled as the union of multiple instability settings (formally called the {\em multi-instability setting} in Definition~\ref{dfn:multi-instability-setting} in Appendix) as proved in the following lemma. 
\begin{lem}[{\bf Coalitional Influence  as instability settings}{}]
\label{lem:CI-GSR}
For any  $X\in\{\cm,\mov\}$ and any GSR $r$, there exist multiple instability settings  $\calM = \{\vosetting^i:  i\le I\}$   such that for every $n$-profile $P$ and every $B\ge 0$, $\sat{X}(r,P,B) =1$ if and only if $\hist(P)\in \csus{n,B}^\calM$.
\end{lem}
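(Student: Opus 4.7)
The plan is to decompose $\sat X(r,P,B)$ for $X \in \{\cm,\mov\}$ and a GSR $r = (\vH,g)$ by enumerating a constant-size set of quadruples $(a, a^*, \vec t_S, \vec t_T)$ indexed by (i) the original winner $a = r(P)$, (ii) a desired winner $a^* \neq a$, (iii) a sign pattern $\vec t_S \in \{+,-,0\}^K$ with $g(\vec t_S) = a$, and (iv) a sign pattern $\vec t_T$ with $g(\vec t_T) = a^*$. Because $K$ and $m$ are constants, there are at most $m(m-1) \cdot 3^{2K}$ such tuples, each producing one PMV-instability setting $\vosetting^i$; the collection of these is the set $\calM$ promised by the lemma.

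For the $i$-th tuple I take $q = m!$ and define $\cpoly{\text{S}}^i$ as the polyhedron cut out by the half-spaces that $\vec t_S$ prescribes on the $K$ generalized scores $\vec h_k\cdot \vec x$ (namely $\vec h_k\cdot \vec x \geq 1$ when $t_{S,k} = +$, $-\vec h_k\cdot \vec x \geq 1$ when $t_{S,k} = -$, and $\pm\vec h_k\cdot \vec x \geq 0$ when $t_{S,k} = 0$), together with the non-negativity constraint $\vec x \geq \vec 0$; define $\cpoly{\text{T}}^i$ analogously from $\vec t_T$. The vote-operation set $\voset{}^i$ assigns to each ordered pair of rankings $(R, R')$ the vector $\vec o_{R,R'} = \vec e_{R'} - \vec e_R \in \mathbb R^{m!}$ with cost $[\vec c\,]_{R,R'} = 1$; for $\cm$ I keep only those $(R,R')$ with $a^* \succ_R a$, so that only voters who truly prefer $a^*$ over $a$ can join the coalition, whereas for $\mov$ all ordered pairs are retained. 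Each $\vec o_{R,R'}\cdot \vec 1 = 0$, so the total vote count is preserved, and non-negativity of the post-manipulation histogram is enforced by the $\vec x \geq \vec 0$ clause inside $\cpoly{\text{T}}^i$.

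The equivalence is then a direct unpacking in both directions. If $\hist(P) \in \csus{n,B}^{\vosetting^i}$ with witness $\vo$, then $\vec c \cdot \vo \leq B$ is exactly the size of the coalition $P' \subseteq P$ whose voters perform the operations counted by $\vo$, and $\hist(P - P' + P^*) = \hist(P) + \vo \times \vomatrix^i \in \cpoly{\text{T}}^i$ implies $r(P - P' + P^*) = g(\vec t_T) = a^*$; the restriction on $\voset{}^i$ enforces the $\cm$ preference condition. Conversely, any successful $(P', P^*)$ for $\sat X(r,P,B) = 1$ determines $\vec t_S = \signH(\hist(P))$, $\vec t_T = \signH(\hist(P - P' + P^*))$ and a tally $\vo$ of ranking switches in $P' \to P^*$, which identifies an index $i$ with $\hist(P) \in \csus{n,B}^{\vosetting^i}$.

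The main technical subtlety is that the signs $+$ and $-$ correspond to strict inequalities on $\vec h_k \cdot \vec x$, while a polyhedron is specified by weak inequalities and, per the PMV-instability setting, via an integer matrix $\ba_Y$. For int-GSRs this is resolved exactly, since $\hist(P)$ is integer-valued and $\vec h_k \cdot \vec x \geq 1$ is equivalent to $\vec h_k \cdot \vec x > 0$ on $\mathbb Z$; for general-real $\vH$ one can first scale to an integer matrix (or use a sufficiently fine rational tightening that preserves $\signH$ on $\mathbb Z_{\geq 0}^{m!}$) without changing any sign pattern realized by a histogram. Beyond this bookkeeping about inequalities, the construction is a straightforward translation, and the lemma extends identically to other coalitional influence problems (control, bribery) treated in the appendix, by adjusting what each vote operation represents and which pairs $(R, R')$ are admissible.
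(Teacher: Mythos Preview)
Your proposal is correct and follows essentially the same construction as the paper: enumerate pairs of alternatives $(a,a^*)$ and pairs of feasible sign patterns $(\vec t_S,\vec t_T)$ with $g(\vec t_S)=a$ and $g(\vec t_T)=a^*$, take the corresponding sign-pattern polyhedra as source and target, and use the ``motivated vote change'' operations $\{\vec e_{R'}-\vec e_R: a^*\succ_R a\}$ for $\cm$ versus all vote changes for $\mov$, all with unit cost. Your explicit discussion of the strict-versus-weak inequality issue (handled by the $\pm 1$ right-hand sides for integer $\vH$) is slightly more careful than the paper's presentation, but the substance is identical.
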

The full statement  of the lemma (which covers other coalitional influence problems) and its proof can be found in Appendix~\ref{app:proof-prop-CI-GSR}.  When $\psi>0$, the contamination adversary can modify the profile to be any integer vector in the $\psi n$ neighborhood of $\vXp$ in $L_1$, formally defined as {\em $\psi$-fraction-neighborhood} as follows.
\begin{dfn}%[{\bf  $\psi$-fraction-neighborhood}{}] 
For any $\psi\in [0,1]$ and any $\vec x\in {\mathbb R}^{q}$, define the {\em $\psi$-fraction-neighborhood of $\vec x$} as
$$\nb{\vec x}{\psi}\triangleq \{\vec x': |\vec x'|_1=|\vec x'|_1\text{ and } |\vec x'-\vec x|_1\le 2\psi |\vec x|_1\}\}$$ 
For any  set $S$, define 
$$\nb{S}{\psi}\triangleq \bigcup\nolimits_{\vec x\in S} \nb{\vec x}{\psi}$$
\end{dfn} 

In light of \eqref{eq:CM-Borda} and Lemma~\ref{lem:CI-GSR}, we propose the {\em PMV-instability  problem}  as follows. 
% to characterize the max-semi-random likelihood of $\cm$, it suffices to solve  . 
\begin{dfn}[{\bf\boldmath  PMV-instability problem}{}]
\label{dfn:PMV-instability-problem}
Given an instability setting $\vosetting{}=\langle\sourcepoly, \targetpoly, \voset{}, \vec c\,\rangle$,  a set $\Pi$ of distributions over $[q]$, $\psi\in[0,1]$, $n\in\mathbb N$, and $B\ge 0$, we are interested in characterizing
\begin{align*}
&\text{\bf  max-semi-random   instability: } \sup\nolimits_{\vec\pi\in\Pi^n}\Pr\left(\vXp\in  \nb{\csus{n,B}}{\psi}\right)
%\sup\nolimits_{\vec\pi\in\Pi^n}\Pr\left(\exists \vec x \overset\psi\approx \vXp\text{ s.t. }\vec x \in \csus{n,B}\right) 
%\text{, and}\\
%&\text{\bf  min-semi-random  instability: }\inf\nolimits_{\vec\pi\in\Pi^n}\Pr\left(\vXp \in\csus{n,B}\right)
\end{align*}
\end{dfn}
That is, the max-semi-random instability  is the upper bound  on the probability such that the contamination adversary makes up to $\psi$ fraction of changes to the PMV, and the resulting histogram is unstable, when the underlying probabilities $\vec\pi$ are adversarially chosen from $\Pi^n$.  Notice that while $\nb{\csus{n,B}}{\psi}$ may contain non-integer vectors,  $\vXp$ is a non-negative integer vector.  In the main text we focus on presenting results for max-semi-random  instability. The min-semi-random instability can be defined similarly in  Appendix~\ref{sec:PMV-instability-dfn-full}.%Equivalently, it can be defined using the $\psi n$ neighborhood of the PMV. 
%Then, the max-semi-random instability in Definition~\ref{dfn:PMV-instability-problem} is equivalent to 
%$$\sup\nolimits_{\vec\pi\in\Pi^n}\Pr\left(\nb{\vXp}{\psi}\cap  \csus{n,B}\ne\emptyset\right) $$

\section{Solving The PMV-Instability Problem}
\label{sec:PMV-instability}
%In this subsection, we  first address the PMV-instability problem in Theorem~\ref{thm:PMV-instability}, and then leverage the theorem to address the multi-instability problem in Theorem~\ref{thm:multi-instability}. Before formally presenting the notation and results, 
%In light of Lemma~\ref{lem:CI-GSR}, the max-semi-random likelihood of commonly-studied coalitional influence problems can be reduced to the PMV-instability problem. 
We address the PMV-instability problem by discussing the $\psi =0$ case in Section~\ref{sec:main-psi=0} and then the $\psi>0$ case in Section~\ref{sec:main-psi>0}.  We will primarily focus on presenting the $\psi =0$ case because it generalizes previous studies in semi-random social choice as discussed in Example~\ref{ex:model-special-cases}, and its proof serves as the basis for the $\psi>0$ case.

%Let us start with some high-level intuitions   for solving the PMV-instability problem  to motivate the statement and proof of Theorem~\ref{thm:PMV-instability}. We will focus on the $\psi=0$ case in this section for the purpose of clarity, which means that we will focus on characterizing  $\sup_{\vec\pi\in\Pi^n}\Pr(\vXp\in \csus{n,B})$. The study of the $\psi\in (0,1]$ case can be found in Appendix~\ref{sec:psi>0}.

\subsection{\boldmath The $\psi = 0$ Case}
\label{sec:main-psi=0}
When $\psi =0$, the contamination adversary cannot change the tentative profile, and the $0$-fraction-neighborhood of any vector is itself. Therefore, the PMV-instability problem becomes  characterizing
$$ \sup\nolimits_{\vec\pi\in\Pi^n}\Pr\left( \vXp \in   \csus{n,B} \right)$$
The model becomes the semi-random model in~\citep{Xia2020:The-Smoothed}. Nevertheless, the challenge here is to characterize the probability as a function of  (variable) $B$, whereas previous work~\citep{Xia2020:The-Smoothed,Xia2021:How-Likely} characterizes a similar probability for a fixed polyhedron.

Let us start with some high-level intuitions  to motivate the notation, statement, and proof of Theorem~\ref{thm:PMV-instability}. 

\myparagraph{\bf Intuitions.} If $\csus{n,B}=\emptyset$, then the max-semi-random   instability is $0$ by definition, so we will refer to this case  as the {\bf \boldmath $0$ case}. Suppose the $0$ case does not hold, i.e., $\csus{n,B}\ne \emptyset$, then we will adopt an approximation of $\Pr(\vXp\in \csus{n,B})$ based on the following two  approximations. 
\begin{itemize}
\item First,  let us {\em pretend} that all integrality constraints in the linear system (like the one in Example~\ref{ex:illustration}) are relaxed. That is, let us pretend that  $\vXp$ can take non-integer values, and the vote operation vector $\vo$ can be fractional (but still need to be non-negative). For simplicity, assume that $\vec c = \vec 1$. Then, the possible changes to the histogram as the result of  a unit of budget is characterized by $\conv(\voset{})$, where $\conv(\cdot)$ represents the convex hull. It follows that $\csus{n,B}$ can be approximated by a polyhedron $\cpoly{B}$, which is  the intersection of $\sourcepoly$ and the Minkowski addition of $\targetpoly$ and $\bigcup_{0\le b\le B}-b\cdot \conv(\voset{})$.  See Figure~\ref{fig:illustration-PMV} (a) for an illustration of $\cpoly{B}$ in the shaded area, where $\voset{} = \{\vo_1,\vo_2\}$. Notice that $\cpoly{B}$ is in the $-\conv(\voset{})$ direction  of $\targetpoly$.

\item Second,  let us  {\em pretend}  that $\vXp$ is distributed as a $(q-1)$-dimensional Gaussian distribution $\calN_{\vec\pi}$ (whose mean is $\sum_{j=1}^n\pi_j\in n\cdot\conv(\Pi)$) in the hyperplane $\{\vec x:\vec x\cdot \vec 1 = n\}$.  This approximation is justified by various multi-variable central limit theorems, e.g.,~\citep{Bentkus2005:A-Lyapunov-type,Valiant2011:Estimating,Daskalakis2016:A-Size-Free,Diakonikolas2016:The-fourier,Raic2019:A-multivariate}. 
\end{itemize}
With these  approximations, we adopt the following approximation of $\Pr(\vXp\in \csus{n,B})$:
\begin{equation}\label{eq:approximation-intuition}
\Pr(\vXp\in \csus{n,B})\approx\Pr(\calN_{\vec\pi}\in \cpoly{B})
\end{equation}
\begin{figure}[htp]
\centering
\begin{tabular}{cc}
\includegraphics[width = 0.5\textwidth]{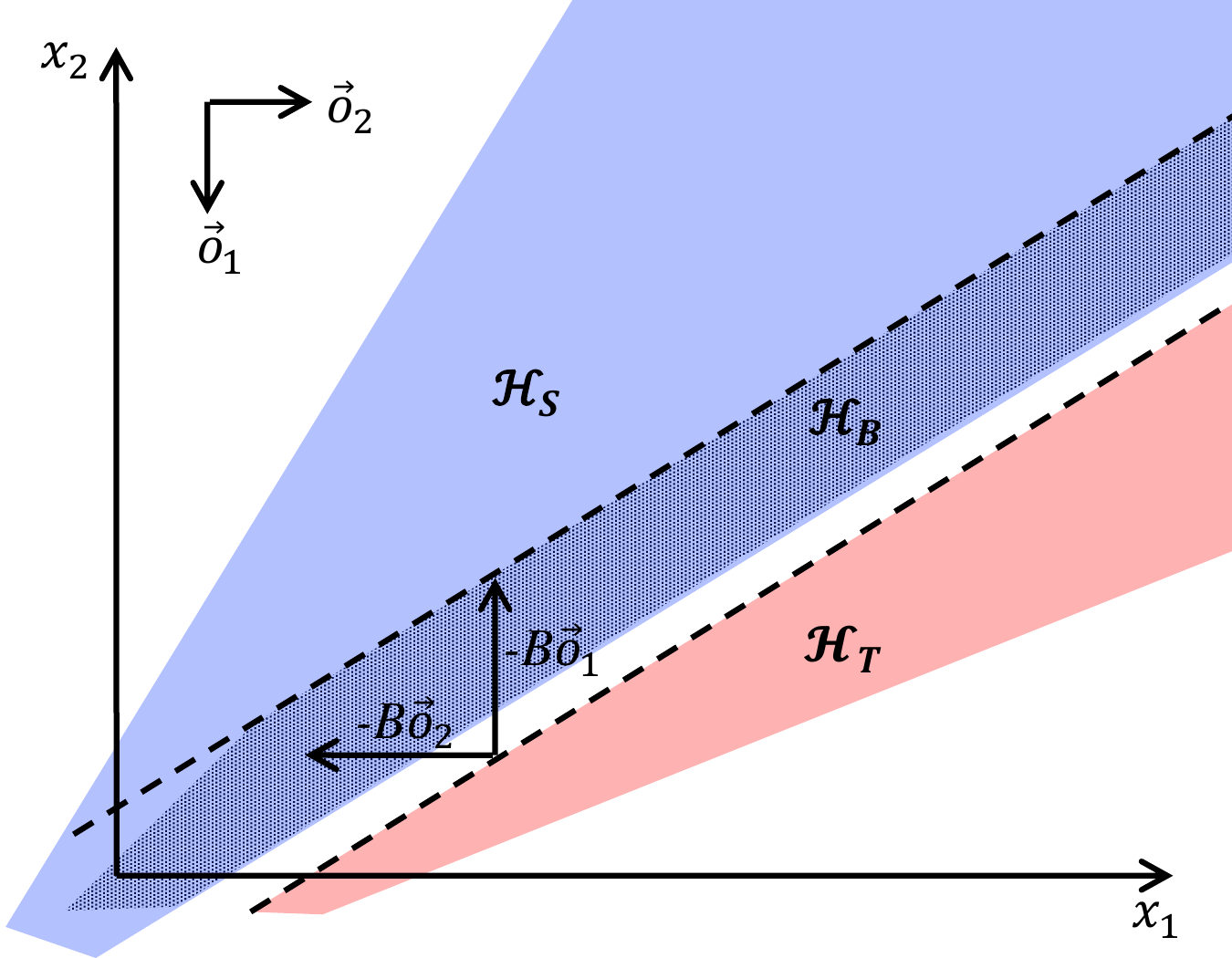} & \includegraphics[width = 0.5\textwidth]{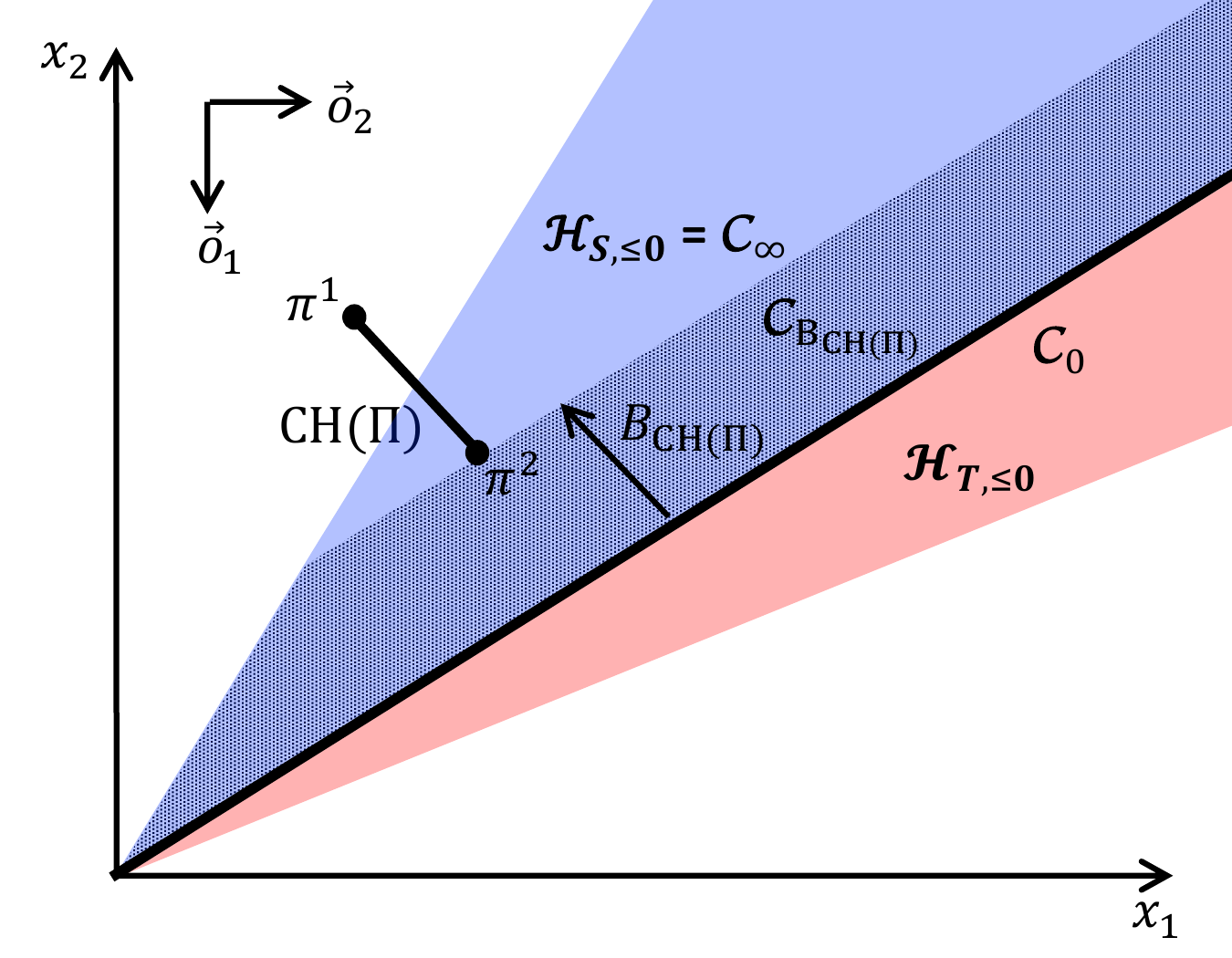}\\
(a) $\cpoly{B}$. & (b) $B_{\conv(\Pi)}$ and $\sus{B_{\conv(\Pi)}}$.
\end{tabular}
\caption{Illustration  of  notation.\label{fig:illustration-PMV}}
\end{figure}
Now, let us take a look at the max-semi-random PMV instability in light of the approximation in \eqref{eq:approximation-intuition}. Because the probability mass of $\calN_{\vec\pi}$ is mostly centered around an $O(\sqrt n)$ neighborhood of its mean, to maximize  $\Pr(\calN_{\vec\pi}\in \cpoly{B})$, the distribution adversary aims to choose $\vec \pi$ so that the ``volume'' of the intersection of an $O(\sqrt n)$ neighborhood of $\sum_{j=1}^n\pi_j$ and $\cpoly{B}$ is as large as possible. 

When $B = O(\sqrt n)$, it turns out that  $\cpoly{B}$ is close to $\sourcepolyz\cap \targetpolyz$, where for any polyhedron $\poly  \triangleq \{\vec x: \ba\times \invert{\vec x}\le \invert{\vbb} \}$, $\polyz  \triangleq \{\vec x: \ba\times \invert{\vec x}\le \invert{\vec 0} \}$ denotes its {\em characteristic cone}, also known as the {\em recess cone}.  Therefore, if $\conv(\Pi)\cap \sourcepolyz\cap \targetpolyz=\emptyset$, then the mean of $\vXp$ is $\Theta(n)$ away from $\sourcepolyz\cap \targetpolyz$, which implies that the likelihood is (exponentially) small due to straightforward applications of Hoeffding's inequality. We call this case the {\bf \boldmath exponential case}.

When $B = O(\sqrt n)$ and $\conv(\Pi)\cap \sourcepolyz\cap \targetpolyz\ne \emptyset$,  the distribution adversary can choose $\vec\pi\in\Pi^n$ so that the mean of $\vXp$ is either in $\sourcepolyz\cap \targetpolyz$ or is $O(1)$ away, which means that the likelihood is large. In this case, there is a  {\bf \boldmath phase transition at $B=\Theta(\sqrt n)$}, as it will be shown that the likelihood reaches its (asymptotic) max at $B = \Theta(\sqrt n)$.

When $B= \Theta(n)$, again, the distribution adversary aims at choosing $\vec\pi\in\Pi^n$  so that the mean of $\vXp$ is close to $\cpoly{B}$. Let $B_{\conv(\Pi)}$ denote the smallest budget so that a ``pseudo-conic'' approximation to $\cpoly{B_{\conv(\Pi)}}$, denoted by $\sus{B}$ and is formally defined below in \eqref{eq:CB}, touches $\conv(\Pi)$, which is the convex hull of $\Pi$. See Figure~\ref{fig:illustration-PMV} (b) for an illustration of $B_{\conv(\Pi)}$   and $\sus{B_{\conv(\Pi)}}$ (the shaded area).  It is then expected that when $B<B_{\conv(\Pi)}\cdot n$, the max-semin-random instability whould be  small, and when  $B>B_{\conv(\Pi)} \cdot n$, the max-semin-random   instability whould be large. In other words, the semi-random stability has a {\bf \boldmath phase transition at  $B=\Theta(n)$}.

Nevertheless, characterizing the conditions and likelihood for each case is highly challenging, as the approximation  only  provides a qualitative  intuition. Existing multi-variate central limit theorems are often too coarse due to an $\Omega(\frac{1}{\sqrt n})$ error, as discussed in~\citep{Xia2021:How-Likely}.

\vspace{2mm}\noindent{\bf Notation.} Now let us define notation to formalize the intuitions discussed above. For any budget $B\ge 0$, we let $\cpoly{B}$ denote the relaxation of $\csus{n,B}$ by removing the size constraint and the integrality constraints on $\vec x$ and on $\vo$. Recall that  $\vo$ still needs to be non-negative. That is,
%denote the polyhedron that consists of (possibly non-integral and negative) vectors in $\sourcepoly$ that can be ``manipulated'' to vectors in $\targetpoly$ by performing possibly non-integral  operations $\vo$ under the budget constraint $\vec c\cdot \vo\le B$. 
\begin{equation}\label{dfn:HB}
\cpoly{B} \triangleq \left\{\vec x\in  \sourcepoly:  \exists \vo\in {\mathbb R}_{\ge 0}^{|\voset{}|} \text{ s.t. }\vec c\cdot\vo \le B \text{ and }\vec x+ \vo\times \vomatrix{}\in\targetpoly\right\}
\end{equation}
For example, Figure~\ref{fig:illustration-PMV} (a) illustrates  $\cpoly{B}$ in the shaded area. $\cpoly{B}$ is a polyhedron because it is the intersection of $\sourcepoly$ and the  Minkowski addition of $\targetpoly$ and the following polyhedron $\calQ_{B}$:
$$\calQ_{B} \triangleq \left\{- \vo\times \vomatrix{}:  \vo\in {\mathbb R_{\ge 0}^{|\voset{}|}} \text{ and }\vec c\cdot \vo\le B   \right\}$$
%That is,
%$$\cpoly{B} = \sourcepoly\cap (\targetpoly + \calQ_{B})$$
Specifically, with infinite budget ($B=\infty$), we have 
$$\cpoly{\infty} = \left\{\vec x\in  \sourcepoly:  \exists \vo\in {\mathbb R}_{\ge 0}^{|\voset{}|} \text{ s.t. }  \vec x+ \vo\times \vomatrix{}\in\targetpoly\right\} = \sourcepoly\cap (\targetpoly+ \calQ_{\infty})$$
For every $ B\ge 0$, we define $\sus{B}$ to be the polyhedron that consists of all (possibly non-integer) vectors in $\sourcepolyz$ that can be manipulated to be in $\targetpolyz$ by using (possibly non-integer) operations $\vo$ under budget constraint $B$. That is,   
\begin{equation}
\label{eq:CB}
\sus{B} \triangleq \left\{\vec x\in  \sourcepolyz:  \exists \vo\in {\mathbb R}_{\ge 0}^{|\voset{}|} \text{ s.t. } \vec c\cdot \vo\le B\text{ and }\vec x+ \vo\times \vomatrix{}\in\targetpolyz\right\} 
\end{equation}
It is not hard to verify that 
$\sus{B}   = \sourcepolyz\cap (\targetpolyz + \calQ_{B} )$  and $\sus{B}$ can be viewed as a ``pseudo-conic'' approximation to $\cpoly{B}$, as $\sus{B}$ is defined based on the characteristic cones of $\sourcepoly$ and $\targetpoly$, though $\sus{B}$ itself may not be a cone. Specifically,  $\sus{0}$  and $\sus{\infty}$  will play a central role in Theorems~\ref{thm:PMV-instability} and~\ref{thm:multi-instability}. It is not hard to verify that
$$\sus{0} \triangleq \sourcepolyz\cap \targetpolyz, \text{ and} $$
%Notice that $\sus{0}\ne\emptyset$ because $\vec 0\in \sus{0}$, while it is possible that $\sourcepoly\cap \targetpoly =\emptyset$.  For any $B\in\mathbb R_{\ge 0}$ such that  $\cpoly{B}\ne \emptyset$, we have $\sus{0} = \cpolyz{B}$.
%We will often the special cases of $\cpoly{B}$, where $B=\infty$, i.e., the budget constraint is removed. That is,
\begin{equation}
\label{eq:cone-infty}
\sus{\infty} \triangleq \left\{\vec x\in  \sourcepolyz:  \exists \vo\in {\mathbb R}_{\ge 0}^{|\voset{}|} \text{ s.t. } \vec x+ \vo\times \vomatrix{}\in\targetpolyz\right\} = \sourcepolyz\cap \left(\targetpolyz + \calQ_{\infty}\right)
\end{equation}
Figure~\ref{fig:illustration-PMV} (b) illustrates $\sus{0}$ (which is a line) and $\sus{\infty}$ (which is the same as $\sourcepolyz$).  Both $\sus{0}$ and $\sus{\infty}$ are polyhedral cones, because the intersection Minkowski addition  of two polyhedral cones is a polyhedral cone. Notice that when $B \not\in \{0,\infty\}$, $\sus{B }$ may not be a cone.

For any set $\Pi^*\subseteq {\mathbb R}^q$, let $B_{\Pi^*}\in\mathbb R$ be the minimum budget $B $ such that the intersection of $\Pi^*$ and $\sus{B }$ is non-empty. If no such $B $ exists (i.e., $\Pi^*\cap \sus{\infty}=\emptyset$), then we let $B_{\Pi^*} \triangleq \infty$. Formally,
\begin{equation}
\label{eq:B-min}
B_{\Pi^*} \triangleq \inf \{B\ge 0: \Pi^*\cap \sus{B } \ne \emptyset\}
\end{equation}
Figure~\ref{fig:illustration-PMV} (b) illustrates $B_{\conv(\Pi)}$ and $\sus{B_{\conv(\Pi)}}$ in the shaded area, where $\Pi= \{\pi^1,\pi^2\}$ and $\conv(\Pi)$ is   the line segment between $\pi^1$ and $\pi^2$. Next, we define  notation and conditions used in the statement of the  theorem.

\begin{dfn}%[\bf Notation and Conditions for the PMV-instability problem]
\label{dfn:conditions}
Given an instability setting, $\Pi$, $B$, and $n$, define
$$d_0 = \dim(\sus{0}), d_{\infty} =  \dim(\sus{\infty})\text{, and } d_{\Delta}  = d_{\infty}  - d_0,$$
where $\dim(\sus{0})$ is the {\em dimension} of $\sus{0}$, which is the dimension of the minimal affine space that contains $\sus{0}$. We also define the following three conditions:
\begin{center}
\begin{tabular}{|c|c|c|}
\hline $\condition{1}$ & $\condition{2}$ & $\condition{3}$  \\
\hline $\csus{n,B}=\emptyset$ & $\conv(\Pi)\cap\sus{\infty}=\emptyset$ & $\conv(\Pi)\cap\sus{0}=\emptyset$ \\
\hline
\end{tabular}
\end{center}
%\begin{itemize}
%\item $\condition{1}\triangleq \left [\csus{n,B}=\emptyset\right]$.
%\item $\condition{2}\triangleq \left [\conv(\Pi)\cap\sus{\infty}=\emptyset\right]$.
%\item $\condition{3}\triangleq \left [\conv(\Pi)\cap\sus{0}=\emptyset\right]$.
%\item $\condition{4}\triangleq \left [\conv(\Pi)\subseteq \sus{\infty} \right]$.
%\item $\condition{5}\triangleq \left [\conv(\Pi)\subseteq\sus{0}\right]$.
%\end{itemize}
\end{dfn}
Recall that $\conv(\Pi)$ is the convex hull of $\Pi$. Because $\sus{0}\subseteq \sus{\infty}$, $d_\Delta\ge 0$. Also notice that $\condition{2}$ implies $\condition{3}$, or equivalently, $\neg\condition{3}$ implies $\neg\condition{2}$. %Similarly, $\condition{4}$ implies $\condition{5}$, or equivalently, $\neg\condition{5}$ implies $\neg\condition{4}$.

\begin{thm}[\bf\boldmath  Semi-Random  PMV-Instability, $\psi =0$]
\label{thm:PMV-instability} Given any $q\in\mathbb N$, any closed and strictly positive $\Pi$ over $[q]$, and any instability settings $\vosetting{} = \langle\sourcepoly,\targetpoly,\voset{},\vec c\,\rangle$, any $C_2>0$ and $C_3>0$ with $C_2<B_{\conv(\Pi)}<C_3$,  any $n\in \mathbb N$, and any $B\ge 0$, 
%\begin{align*}
%&\sup_{\vec\pi\in\Pi^n}\Pr\left(\vXp \in \csus{n,B}\right)=\left\{\begin{array}{ll}0 &\text{if } \condition{1}\\
%\exp(-\Theta(n)) &\text{if }\neg \condition{1}\wedge \condition{2}\\
%\Theta\left(\dfrac{\min\{B+1,\sqrt n\}^{d_{\Delta}}}{(\sqrt n)^{q-d_0}} \right) &\text{if } \neg \condition{1}\wedge \neg \condition{3}\\
%\begin{rcases}\exp(-\Theta(n)) &\text{when }B\le C_2n \\\Theta\left( (\frac{1}{\sqrt n})^{q-d_\infty}  \right) &\text{when }B\ge C_3n\end{rcases}&\text{otherwise, i.e.,} \neg \condition{1}\wedge \neg \condition{2}\wedge \condition{3}\\ 
%\end{array}\right. 
%\end{align*}
$$ 
 \begin{array}{r@{}|l|@{}l@{}|l|}
\cline{2-4} & \text{\bf Name} &  \text{\bf\  Likelihood}&  \text{\bf Condition}\\
\cline{2-4} \multirow{6}{*}{$\sup_{\vec\pi\in\Pi^n}\Pr\left(\vXp \in \csus{n,B}\right)= \left\{\begin{array}{@{}r@{}}\\\\\\\\\\\end{array}\right.$}& \text{0 case}& \ 0 &\condition{1}\\
\cline{2-4}& \text{exp case}& \ \exp(-\Theta(n)) &\neg \condition{1}\wedge \condition{2}\\
\cline{2-4}&  \text{PT-$\Theta(\sqrt n)$}& \ \Theta\left(\dfrac{\min\{B+1,\sqrt n\}^{d_{\Delta}}}{(\sqrt n)^{q-d_0}} \right) &\neg \condition{1}\wedge \neg \condition{3}\\
\cline{2-4}& \text{PT-$\Theta(n)$} & \begin{array}{ l l } \exp(-\Theta(n)) &\text{if }B\le C_2n \\ \hline \Theta\left( (\frac{1}{\sqrt n})^{q-d_\infty}  \right) &\text{if }B\ge C_3n\end{array}&\begin{array}{@{}l}\text{otherwise, i.e.,}\\ \neg \condition{1}\wedge \neg \condition{2}\wedge \condition{3}\end{array}\\
\cline{2-4}
\end{array}
$$
\end{thm}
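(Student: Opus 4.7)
My plan is to exploit the heuristic $\Pr(\vXp\in\csus{n,B})\approx \Pr(\calN_{\vec\pi}\in\cpoly{B})$ from the intuition section, but to extract sharp two-sided bounds I would work directly with anti-concentration of PMVs on the integer lattice rather than relying solely on a Gaussian approximation, which loses factors of $1/\sqrt n$. The $\condition{1}$ case is immediate, since $\csus{n,B}=\emptyset$ forces the probability to be zero. For the exp case ($\neg\condition{1}\wedge\condition{2}$), closedness of $\conv(\Pi)$ and $\sus{\infty}$ together with their disjointness yields a separating hyperplane with positive margin $\delta>0$. For any $\vXp\in\csus{n,B}\subseteq\cpoly{B}$, the point $\vXp/n$ must lie within $O(B/n)$ of $\sus{\infty}$ in the separating direction, whereas $\E[\vXp]/n\in\conv(\Pi)$ is at distance at least $\delta$ from it; Hoeffding then gives the $\exp(-\Omega(n))$ upper bound. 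The matching lower bound comes from fixing one witness profile $P^\star$ with $\hist(P^\star)\in\csus{n,B}$ (nonempty by $\neg\condition{1}$) and using strict positiveness of $\Pi$ to generate $P^\star$ exactly with probability at least $\epsilon^n$ for some constant $\epsilon>0$.

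\textbf{The PT-$\Theta(\sqrt n)$ case.} This is the main analytic step. The first order of business is a geometric lemma: for any interior point $\vec x^\star$ of $\conv(\Pi)\cap\sus{0}$, the rescaled set $\csus{n,B}/n$ is sandwiched between two ``slabs'' that, after centering at $\vec x^\star$, have widths $\Theta(B/n)$ in the $d_\Delta$ directions pointing from $\sus{0}$ into $\sus{\infty}$ and occupy the full $d_0$-dimensional linear extent of $\sus{0}$. The adversary picks $\vec\pi\in\Pi^n$ so that $\frac{1}{n}\sum_j\pi_j=\vec x^\star$, making $\vXp$ a PMV with covariance $\Theta(n)\cdot\Sigma$ for a $\Sigma$ positive definite on $\{\vec x\cdot\vec 1=0\}$; uniform non-degeneracy of $\Sigma$ across $\vec\pi\in\Pi^n$ follows from closedness and strict positiveness of $\Pi$. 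A suitably refined multivariate CLT for PMVs, along the lines of Bentkus or Daskalakis et al.\ as cited in the intuition, then shows that the pointwise probability of $\vXp$ on the lattice is $\Theta(n^{-(q-1)/2})$ throughout a $\Theta(\sqrt n)$ neighborhood of the mean. Summing this density over the $\Theta\bigl(\min\{B+1,\sqrt n\}^{d_\Delta}\cdot(\sqrt n)^{d_0-1}\bigr)$ integer points in $\csus{n,B}$ near $n\vec x^\star$ yields the asserted $\Theta\bigl(\min\{B+1,\sqrt n\}^{d_\Delta}/(\sqrt n)^{q-d_0}\bigr)$ bound both from above and below.

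\textbf{The PT-$\Theta(n)$ case.} I would split on the two sub-regimes and reuse the preceding templates. When $B\le C_2 n < B_{\conv(\Pi)}n$, the definition of $B_{\conv(\Pi)}$ gives that the rescaled set $\sus{B}/n$ is disjoint from $\conv(\Pi)$; compactness of $\conv(\Pi)$ and closedness of the rescaled cone yield a constant separation, after which the Hoeffding argument from the exp case produces $\exp(-\Omega(n))$. When $B\ge C_3 n > B_{\conv(\Pi)}n$, the intersection $(\sus{B}/n)\cap\conv(\Pi)$ has affine dimension $d_\infty$; the adversary sets $\frac{1}{n}\sum_j\pi_j$ to an interior point of this intersection, so that the mean of $\vXp$ lies strictly inside $\cpoly{B}$ at distance $\Theta(n)$ from its boundary in the $q-d_\infty$ transverse directions. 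The local CLT then produces $\Theta\bigl((\sqrt n)^{-(q-d_\infty)}\bigr)$ by integrating the Gaussian-like density over the $\Theta\bigl((\sqrt n)^{d_\infty-1}\bigr)$ lattice points inside a typical $\sqrt n$ neighborhood of the mean, all of which remain inside $\cpoly{B}$.

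\textbf{Main obstacle.} The hardest piece is upgrading the Gaussian heuristic to sharp two-sided bounds that hold uniformly in $\vec\pi\in\Pi^n$. Uniform non-degeneracy of the PMV covariance has to be extracted from strict positiveness of $\Pi$ by a compactness argument; the lattice-versus-continuum correction needs a local CLT whose error beats $O(1/\sqrt n)$ in the directions orthogonal to $\sus{0}$; and when $B=O(1)$ the slab width is comparable to the lattice spacing, so the smooth analytic argument has to be paired with a direct combinatorial count of integer points in $\csus{n,B}$ near $n\vec x^\star$, matched against the PMV probability via an explicit quasi-polynomial expansion of $\Pr(\vXp=\vec z)$.
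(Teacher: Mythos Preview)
Your handling of the $0$ case and the exponential case matches the paper's argument closely, and your PT-$\Theta(n)$ sketch has the right separation-plus-Hoeffding structure for $B\le C_2n$. The substantive divergence, and the place where your proposal has a genuine gap, is the PT-$\Theta(\sqrt n)$ case.

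First, you assert the lattice-point count $\Theta\bigl(\min\{B+1,\sqrt n\}^{d_\Delta}\cdot(\sqrt n)^{d_0-1}\bigr)$ for $\csus{n,B}$ near $n\vec x^\star$ as a ``geometric lemma,'' but this is precisely the technical core of the theorem, not a preliminary. The paper does not obtain it by a slab picture; for the upper bound it constructs an explicit partition $[q]=I_{0+}\cup I_{0-}\cup I_1$ from the implicit equalities of $\sus{0}$ and $\sus{\infty}$, and then proves via LP sensitivity (Cook's theorem applied to the LP defining $\cpoly{B}$) that, conditional on the $I_1$ coordinates, each $I_{0-}$ coordinate ranges over $O(B)$ integers and each $I_{0+}$ coordinate over $O(1)$ integers. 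This width bound holds for \emph{every} fiber, not just near a chosen $\vec x^\star$, which is essential because the $\sup$ upper bound must hold for arbitrary $\vec\pi\in\Pi^n$; your argument fixes the mean at $n\vec x^\star$ and so does not control the adversary. The paper then conditions on a Bayesian-network decomposition of the PMV and applies a PMV-specific pointwise anti-concentration bound to the $I_0$ block, avoiding any CLT.

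Second, your plan to invoke a ``suitably refined multivariate CLT'' or local CLT is exactly what the paper warns against: the available Berry--Esseen-type bounds carry an additive $\Omega(1/\sqrt n)$ error, which swamps the target probability whenever $q-d_0\ge 2$. The paper sidesteps this entirely by using pointwise concentration/anti-concentration lemmas specific to PMVs (from Xia 2020/2021), not Gaussian comparison. For the lower bound, the paper also does not rely on a density estimate over a slab: it builds an explicit sparse lattice $\calR_{B,n}$ in $\cpoly{B}$ by exploring $d_0-1$ directions inside $\sus{0}$ and $d_\Delta$ directions inside $\sus{\infty}\setminus\sus{0}$, proves these non-integer points are $2C'$-separated, and then rounds each to an integer point of $\csus{n,B}$ via Cook's integer-LP sensitivity theorem. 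Your proposal recognizes that the $B=O(1)$ regime needs a combinatorial count but does not supply one; this rounding step is where the integer-lattice and the continuum are reconciled. Finally, note that $\neg\condition{3}$ gives only $\conv(\Pi)\cap\sus{0}\ne\emptyset$, not an interior point, so your assumption of an interior $\vec x^\star$ is unjustified; the paper's construction works from any point of the intersection by adding an interior point of $\sus{0}$ scaled by $\sqrt n$.
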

The $B+1$ value is introduced to handle the $B<1$ case. For every $B\ge 1$,   $\Theta(B+1)=\Theta(B)$.\\
\begin{minipage}[t][][b]{0.3\textwidth}
\noindent{\bf The four cases.} Following the intuition presented at the beginning of this  section, we  call the first case  in Theorem~\ref{thm:PMV-instability} the  {\em $0$ case}, the second case the  {\em  exponential case}, the third case the   {\em phase transition at $\Theta(\sqrt n)$ case} (PT-$\Theta(\sqrt n)$ for short), and the last case, which contains two subcases, the  {\em phase transition at $\Theta(n)$ case} (PT-$\Theta(n)$  for short). %Notice that in each of the $\inf$ part, the likelihood is the same as its counterpart for the $\sup$ case, but the conditions and the threshold for the PT-$\Theta(n)$ case are different. 
\end{minipage}
\hfill
\begin{minipage}[t][][b]{0.65\textwidth}
%\begin{wrapfigure}{R}{0.5\textwidth}
\centering
\includegraphics[width = 1\textwidth]{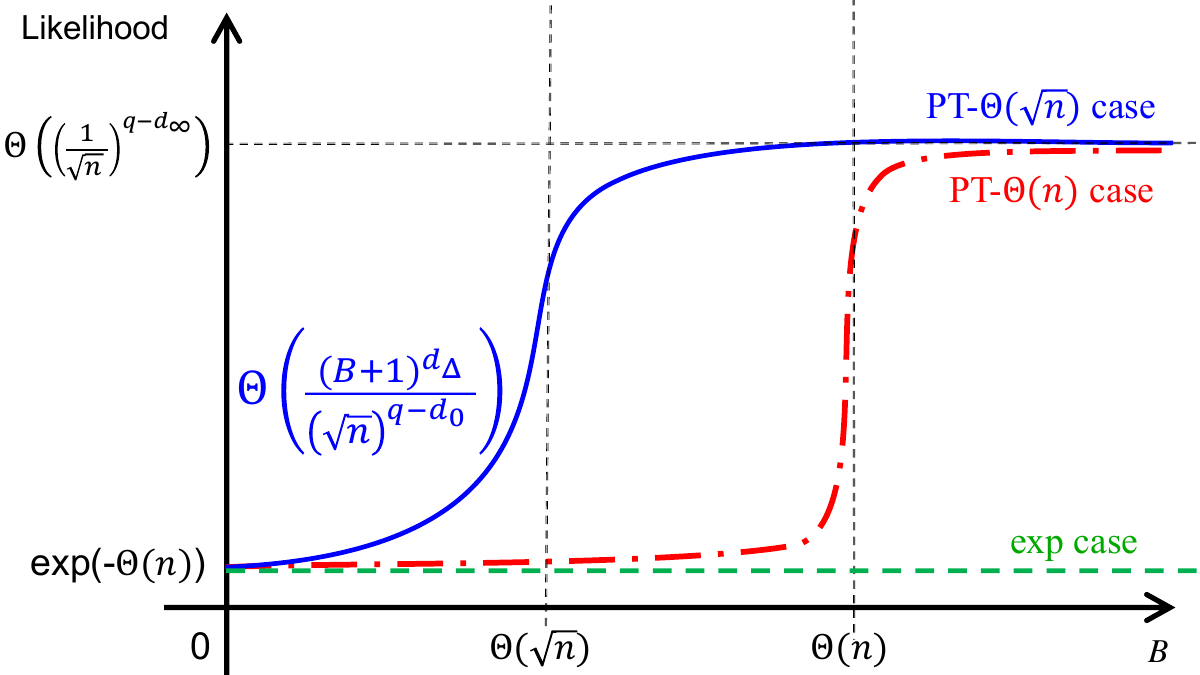}
 % \vspace{-8mm}
\captionof{figure}{\small Illustration of Theorem~\ref{thm:PMV-instability}.  The x-axis is  in log scale. \label{fig:PMV-instability}}
%\vspace{-5mm}
%\end{wrapfigure}
\end{minipage}

\vspace{1mm}Figure~\ref{fig:PMV-instability} illustrates the max-semi-random instability as a function of $B$ for the exp case, PT-$\Theta(\sqrt n)$ case, and PT-$\Theta(n)$ case, respectively. Figure~\ref{fig:PMV-cases} (a)  illustrates  condition $\condition{2} = [\conv(\Pi)\cap\sus{\infty}=\emptyset]$ for the exp case. Figure~\ref{fig:PMV-cases} (b)  illustrates  condition $\neg\condition{3} =  [\conv(\Pi)\cap\sus{0}\ne \emptyset]$  for the PT-$\Theta(\sqrt n)$ case. Figure~\ref{fig:illustration-PMV} (b) illustrates condition $\neg\condition{2}\wedge \condition{3} =  [\conv(\Pi)\cap\sus{\infty} \ne \emptyset]\wedge [\conv(\Pi)\cap\sus{0} = \emptyset]$  for the PT-$\Theta(n)$ case.

\begin{figure}[htp]
\centering
\begin{tabular}{cc}
\includegraphics[width = 0.5\textwidth]{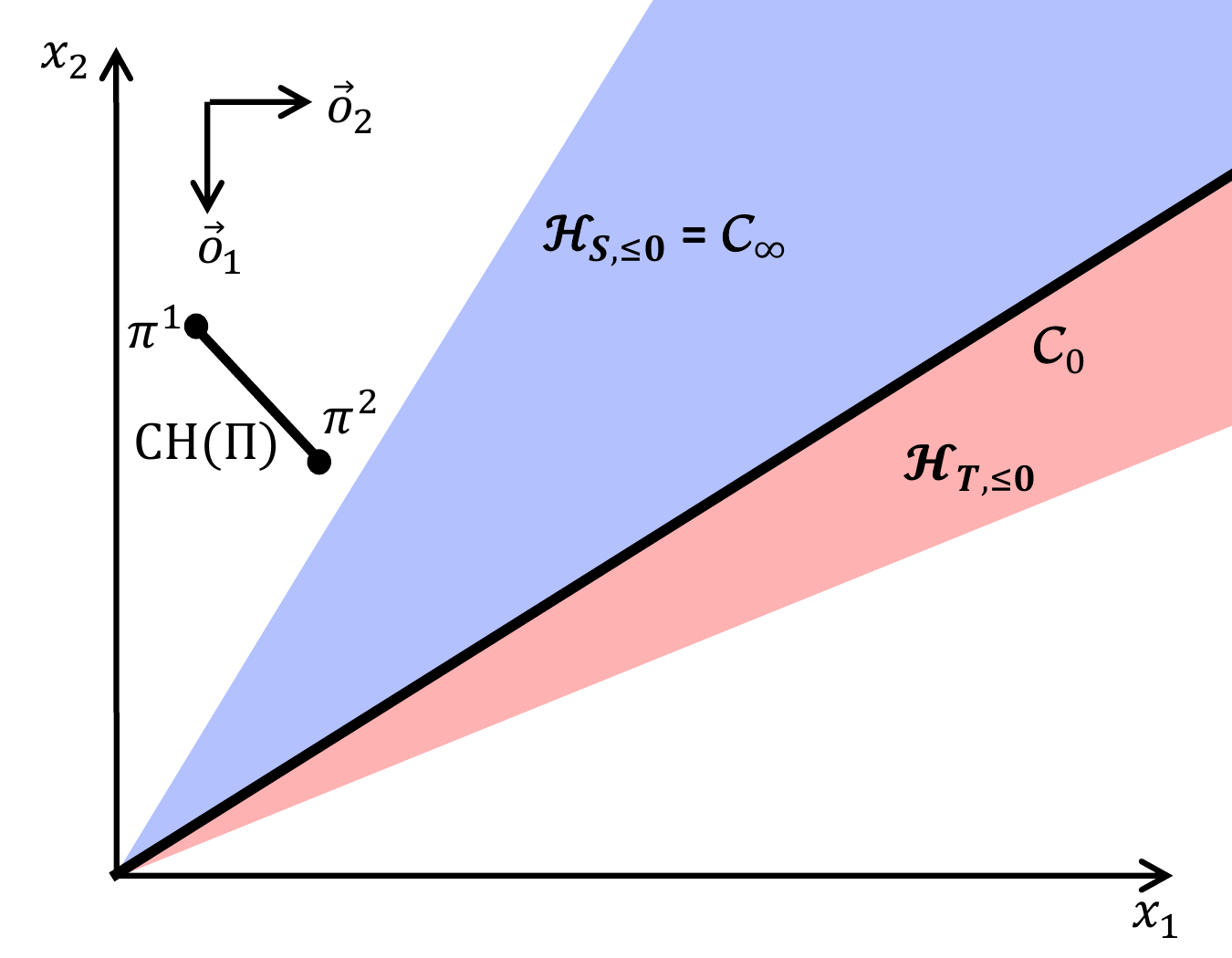} & \includegraphics[width = 0.5\textwidth]{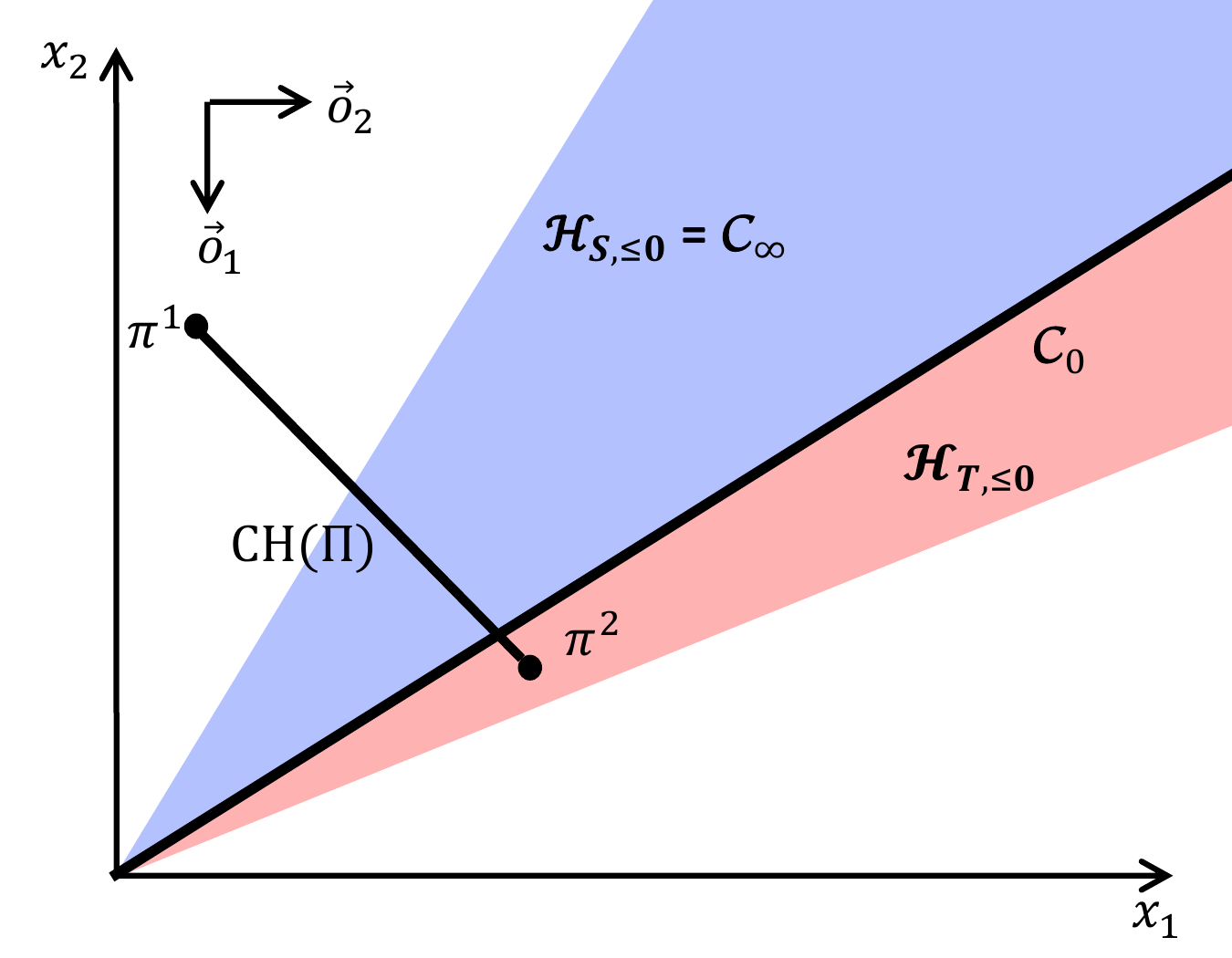}\\
(a) The exp case. & (b) The PT-$\Theta(\sqrt n)$ case.
\end{tabular}
\caption{Illustration of the exp case and phase-transition-at-$\Theta(\sqrt n)$ case.\label{fig:PMV-cases}}
\end{figure}
\noindent{\bf Uses and limitations of Theorem~\ref{thm:PMV-instability}.}   Theorem~\ref{thm:PMV-instability} provides a general and useful tool for accurately addressing PMV-instability problems, because it converts the complicated PMV-instability problems, which involve  reasoning about the likelihood of discrete events (about the PMV) that cannot be easily bounded by standard techniques, to deterministic geometric problems about $\conv(\Pi)$, $\sus{0}$, $\sus{\infty}$, $d_0$, and $d_\infty$.  It provides an almost complete characterization of the PMV-instability problem, which can be easily applied to resolve long-standing open questions, e.g., in Corollary~\ref{coro:IC}. Practically, $\dim_0$ and $\dim_\Delta$ can still be hard to characterize, but at least Theorem~\ref{thm:PMV-instability} provides a useful guideline about what to look for. See Section~\ref{sec:applications} for three examples of such applications. Theorem~\ref{thm:PMV-instability} has  two major limitations. First, it misses a knife-edge case of $B$, i.e., when $B$ is around  $B_{\conv(\Pi)}\cdot n$. Second, the constants in asymptotic bounds may be exponentially large in $m$. This may not be a critical issue when $m$ is small or is viewed as a constant as in most previous work.

%\vspace{3mm}
%\noindent \hypertarget{limitation-thm}{{\bf Limitations of Theorem~\ref{thm:PMV-instability}.}} 
%Theorem~\ref{thm:PMV-instability} has  two major limitations. First, we were not able to characterize max-semi-random likelihood around $B_{\conv(\Pi)}\cdot n$ (respectively, min-semi-random instability around $B_{\conv(\Pi)}^-\cdot n$). Second, the constants in asymptotic bounds may be exponentially large in $m$. %Moreover, while we believe that the strict positiveness assumption on $\Pi$ is not a critical limitation in practice, it would be useful (and challenging) to relax this assumption. These are promising directions for future work. 
  
\myparagraph{\bf Proof sketch of Theorem~\ref{thm:PMV-instability}.} At a high level, the proof follows after the intuitions presented  in the beginning of this subsection. The hardest part is the proof of the (asymptotically tight) polynomial bounds in the PT-$\Theta(\sqrt n)$ case. Take $\sup$ and $B = O(\sqrt n)$ for example. To prove the polynomial upper bound, our proof can be viewed as upper-bounding the ``volume''  of the intersection of an $O(\sqrt n)$ neighborhood of $\sum_{j=1}^n\pi_j$ and $\cpoly{B}$. We prove that, in $d_0$ dimensions, the volume is large, and each such dimension contributes a multiplicative $O(1)$ factor to the likelihood; in $d_\Delta$ dimensions, the volume is $O(B)$, and each such dimension contributes a multiplicative $O\left(\frac{B+1}{\sqrt n}\right)$ factor to the likelihood; and in the remaining $q-d_\infty$ dimensions, the volume is $O(1)$, and each such dimension contributes a multiplicative $O(\frac{1}{\sqrt n})$ factor to the likelihood. Putting all together, this proves the desired upper bound
$$O(1)^{d_0}\times O\left(\frac{B+1}{\sqrt n}\right)^{d_\Delta}\times  O\left(\frac{1}{\sqrt n}\right)^{q-d_\infty} =  O\left(\dfrac{ (B+1)^{d_{\Delta}}}{(\sqrt n)^{q-d_0}} \right)$$

  To prove the polynomial lower bound,  we first pretend that the PMV can take non-integer values, then  enumerate  (possibly non-integral) vectors that are far away from each other by exploring two directions. The first direction is the convex hull of $\sus{0}$, which is a   $(d_0-1)$-dimensional space that represents no budget ($B=0$), and each such dimension contributes an $\Omega(\sqrt n)$ multiplicative factor  to the total number of desirable vectors. The second direction is the convex hull of $\sus{\infty}$, which is a   $d_\Delta$-dimensional space   that represents infinite budget ($B=\infty$), and each such dimension contributes an $\Omega(B+1)$ multiplicative factor  to the total number of desirable vectors. Then, we  prove that for each such (possibly non-integral) vector, there exists a nearby integer vector, and apply the pointwise concentration bound~\citep[Lemma~1]{Xia2021:How-Likely} to prove the desired lower bound. The full version of the theorem and its full proof can be found in Appendix~\ref{app:proof-thm-PMV-instability}. $\hfill\Box$
  
\subsection{\boldmath  The $\psi > 0$ Case}
\label{sec:main-psi>0}

\begin{thm}[\bf\boldmath Semi-Random PMV-Instability, $\psi>0$]
\label{thm:PMV-instability-psi>0} Given any $q\in\mathbb N$, any closed and strictly positive $\Pi$ over $[q]$, and any instability setting $\vosetting{} = \langle\sourcepoly,\targetpoly,\voset{},\vec c\,\rangle$, any $C_1>0$, any $n\in \mathbb N$, and any $0\le B\le C_1 \sqrt n$,  
$$\sup_{\vec\pi\in\Pi^n}\Pr\left(\vXp\in \nb{\csus{n,B}}{\psi} \right)= 
\begin{cases}
0 & \text{if }\csus{n,B}=\emptyset\\
\exp(-\Theta(n)) & \text{otherwise, if }\conv(\Pi)\cap  \nb{\sus{0}}{\psi}=\emptyset\\
\Theta(1)&\text{otherwise}
\end{cases}$$  
 \end{thm} 
Theorem~\ref{thm:PMV-instability-psi>0} is simpler than Theorem~\ref{thm:PMV-instability}: suppose the $0$ case does not hold, then, either the likelihood is exponentially low even with $\Theta(\sqrt n)$ budget, or the likelihood is $\Theta(1)$ high even with $0$ budget. The high-level intuition behind the $\Theta(1)$ case is that the $\Theta(n)$ contamination adversary can be viewed as doing the influencers' job. Notice that this is only a technical observation. As discussed after Definition~\ref{dfn:Pi-psi-model}, we need to distinguish the contamination adversary from the influencers from the modeling perspective. 
 
\myparagraph{\bf Proof sketch of Theorem~\ref{thm:PMV-instability-psi>0}.} The $0$ case is straightforward. The exponential case is proved by  applying Hoeffding's inequality as in the proof of the exponential case in Theorem~\ref{thm:PMV-instability}. The most challenging case is the $\Theta(1)$ case, whose proof is similar to the proof of the  lower bound of the polynomial case in Theorem~\ref{thm:PMV-instability}. The main difference and challenge is to  specify $\Theta(n^{(q-1)/2})$ integer vectors in an $O(\sqrt n)$ neighborhood of the sum of distributions in some $\vec \pi^*\in \Pi^n$, such that each such vector can be modified by a contamination adversary  to a vector $\vec x\in \sus{0}$. The full version of Theorem~\ref{thm:PMV-instability-psi>0} and its proof can be found in Appendix~\ref{sec:psi>0}.
$\hfill\Box$

\section{Applications: Semi-Random Coalitional Influence}
\label{sec:applications} 

\myparagraph{The overall approach.}  
We propose the following two-step polyhedral approach in Procedure~\ref{proc:application} for characterizing   $(\Pi,\psi)$-semi-random $X$.
\renewcommand{\algorithmcfname}{Procedure}
\begin{algorithm} 
\caption{The polyhedral approach for  semi-random coalitional influence}\label{proc:application}
\begin{algorithmic}
\item {\bf Input:}  $X$, $r$, $\Pi$, $\psi$
\item {\bf Step (i):} Model $X$ under $r$ as a union of multiple instability settings $\calM$.
\item {\bf Step (ii):} Characterize the conditions and degree of polynomial for $\calM$ by applying Theorem~\ref{thm:PMV-instability} and Theorem~\ref{thm:PMV-instability-psi>0}, and then combine the results. 
\end{algorithmic}
\end{algorithm}

Step (i) is often easy. % and for Theorems~\ref{thm:PMV-instability-applications} and~\ref{thm:PMV-instability-GSR-upper} below using the same modeling as in the proof of Lemma~\ref{lem:CI-GSR}. 
The difficulty level of Step (ii) is highly problem-dependent, and we see two potential difficulties: first, it might be hard to verify for which $n$,  $B$, and PMV-instability problem,  the $0$ case of Theorem~\ref{thm:PMV-instability} does not happen; and second, sometimes the outcomes of  the applications of Theorem~\ref{thm:PMV-instability} to the instability settings in $\calM$ are too complicated to be combined. 

In this section, we present three applications of Procedure~\ref{proc:application}  to illustrate its usefulness. %The $\psi>0$ case can be solved similarly.

\myparagraph{The first application} is an asymptotically tight characterization for many commonly-studied voting rules (defined in Appendix~\ref{app:more-rule}). Recall that $\piuni$ is the uniform distribution over $\ml(\ma)$.
\begin{thm}[{\bf Max-Semi-Random Coalitional Influence: Commonly Studied Rules}{}]
\label{thm:PMV-instability-applications}{
Let $r$ be an integer positional scoring rule, STV, ranked pairs, Schulze, maximin, or Copeland with lexicographic tie-breaking for any fixed $m\ge 3$. For  any closed and strictly positive $\Pi$ with $\piuni\in\conv(\Pi)$, any $X\in \{\cm,\mov\}$, there exists $ N>0$  such that for any $n>N$ and any $B\ge 1$,
$$\satmax{X}{\Pi,0}(r ,n,B) =  \Theta\left(\min\left\{\frac{B}{\sqrt n},1\right\}\right)  $$
%For any $X\in \control$,  any $n>N$ and any $B\ge 0$, $\satmax{X}{\Pi,\psi}(r,n,B) =  \Theta\left(1\right)$.
Moreover, for any $\psi>0$, $\satmax{X}{\Pi,\psi}(r ,n,B) =\Theta(1)$.
} \end{thm}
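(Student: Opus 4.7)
The plan is to follow the two-step polyhedral approach of Procedure~\ref{proc:application}. Step (i) is supplied directly by Lemma~\ref{lem:CI-GSR}: for each voting rule $r$ in the list and for $X \in \{\cm,\mov\}$, $\sat{X}(r,P,B)=1$ is equivalent to $\hist(P) \in \csus{n,B}^{\calM} = \bigcup_i \csus{n,B}^i$, where the $\csus{n,B}^i$ come from finitely many PMV-instability settings $\vosetting^i = \langle \sourcepoly^i, \targetpoly^i, \voset{}^i, \vec c^{\,i}\rangle$, one per ordered pair $(a,b)$ of a current winner $a$ and a new winner $b$ (plus the side constraint that every manipulator's original vote $R$ satisfies $b \succ_R a$ in the $\cm$ case). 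Step (ii) then runs Theorem~\ref{thm:PMV-instability} on each $\vosetting^i$ separately and combines the bounds via a union bound for the upper bound and via one favorable setting for the lower bound.

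For the upper bound, I will verify rule-by-rule that every $\vosetting^i$ falls into either the $0$ case, the exp case, or the PT-$\Theta(\sqrt n)$ case with parameters $d_0^i = q-1$ and $d_\Delta^i = 1$ (equivalently $d_\infty^i = q$). The underlying geometry is that (a) $\sourcepoly^i$ and $\targetpoly^i$ are both cones separated by a single ``$a$ tied with $b$'' linear equation in the GSR representation of $r$ (Borda/plurality score gap, pairwise margin, etc.), so $\sus{0}^i = \sourcepolyz^i \cap \targetpolyz^i$ sits inside that tie-hyperplane with dimension $q-1$; and (b) at least one vote operation in $\voset{}^i$, e.g.\ switching a ballot with $b\succ a$ to one placing $b$ on top and $a$ at bottom, strictly shifts the critical tie coordinate in the right direction, so adding $\calQ_\infty = \cone(-\voset{}^i)$ restores the missing dimension and $\sus{\infty}^i$ becomes full-dimensional. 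Plugging $d_\Delta^i = 1$ and $q - d_0^i = 1$ into the PT-$\Theta(\sqrt n)$ bound yields $O(\min\{B+1,\sqrt n\}/\sqrt n) = O(\min\{B/\sqrt n, 1\})$ for each $i$, and the finite union over $i$ preserves the $O(\cdot)$.

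For the matching $\Omega$, I exploit the hypothesis $\piuni \in \conv(\Pi)$: by symmetry of $r$, every pair $(a,b)$ has equal expected score under $\piuni$, so $\piuni$ lies on every tie hyperplane and hence in $\sus{0}^{i^\ast}$ for at least one setting $i^\ast$ (take $a$ lexicographically smaller than $b$ so that the ``source winner is $a$, manipulated winner is $b$'' configuration is non-degenerate). Thus $\neg\condition{3}$ holds for $\vosetting^{i^\ast}$; an explicit witness profile with a single manipulator near the tie shows $\neg\condition{1}$ once $n$ is large and $B\ge 1$; and the PT-$\Theta(\sqrt n)$ lower bound of Theorem~\ref{thm:PMV-instability} applied to $\vosetting^{i^\ast}$ alone transfers to the union, giving $\satmax{X}{\Pi,0}(r,n,B) = \Omega(\min\{B/\sqrt n, 1\})$. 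The $\psi>0$ statement follows in the same style from Theorem~\ref{thm:PMV-instability-psi>0}: with $\psi n = \Theta(n)$ flips available, the data adversary can deterministically push $\hist(P)$ into the nonempty $\csus{n,B}$ with probability $1-o(1)$ regardless of $\vec\pi$, so the max-semi-random likelihood is $\Theta(1)$.

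The main obstacle is the case-by-case verification that $d_0^i = q-1$ and $d_\infty^i = q$ for STV, ranked pairs, Schulze, maximin, and Copeland, whose GSR hyperplane representations are more intricate than those of positional scoring rules. For each such rule and each ordered pair $(a,b)$, this reduces to identifying a single ``critical'' sign coordinate of the GSR that flips when $a$ and $b$ exchange roles in the relevant active region, and then exhibiting a single-ballot operation in $\voset{}^i$ that toggles exactly that coordinate while leaving the other signs stable; this is tedious but mechanical given each rule's explicit GSR description, and needs to be ruled out only for those pairs where the exp case or $0$ case do not already dispose of the setting.
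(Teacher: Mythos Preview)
Your overall plan matches the paper's: use Lemma~\ref{lem:CI-GSR} to pass to a PMV-multi-instability setting, invoke Theorem~\ref{thm:PMV-instability} on each piece, and combine. The lower bound half is essentially right: the paper also singles out \emph{one} carefully chosen setting $\vosetting^{i^\ast}$ per rule, checks $\neg\condition{1}$ by an explicit one-manipulator construction, checks $\neg\condition{3}$ via $\piuni\in\conv(\Pi)\cap\sus{0}$, and computes $d_0^{i^\ast}=m!-1$, $d_\infty^{i^\ast}=m!$ (using a claim like your ``one ballot flip toggles exactly the critical tie coordinate''). The $\psi>0$ part via Theorem~\ref{thm:PMV-instability-psi>0} is also what the paper does.

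Where your write-up goes wrong is the upper bound. Lemma~\ref{lem:CI-GSR} does \emph{not} produce one setting per ordered pair $(a,b)$ of alternatives; it produces one setting per pair of \emph{feasible signatures} $(\vec t_a,\vec t_b)$ with $r(\vec t_a)=\{a\}$, $r(\vec t_b)=\{b\}$. For a generic such pair the source and target regions $\ppolyz{\vec t_a}$, $\ppolyz{\vec t_b}$ differ in many sign coordinates, so $\sus{0}^i=\ppolyz{\vec t_a}\cap\ppolyz{\vec t_b}$ can have dimension far below $m!-1$; your claim that ``$\sourcepoly^i$ and $\targetpoly^i$ are separated by a single tie equation'' and hence $d_0^i=q-1$ is false for most settings in $\calM$. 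The fix is easy and is exactly what the paper does (packaged as Theorem~\ref{thm:PMV-instability-GSR-upper}): you only need $d_0^i\le m!-1$, which holds because $\vec t_a\ne\vec t_b$ forces at least one implicit equality, together with the trivial $d_\Delta^i\le m!-d_0^i$. Plugging these into the PT-$\Theta(\sqrt n)$ bound gives $O\bigl((\min\{(B+1)/\sqrt n,1\})^{m!-d_0^i}\bigr)\le O(\min\{B/\sqrt n,1\})$ for every $i$, and the finite union finishes. So drop the equality claim and argue with the inequality instead; then no rule-by-rule work is needed for the upper bound at all.
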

\noindent{\bf Proof sketch.} We apply Procedure~\ref{proc:application} to $X=\cm$  under any  integer positional scoring rule $r_{\vec s}$ to illustrate the idea.  For Step (i) of Procedure~\ref{proc:application}, we adopt a representation that is similar to that in Examples~\ref{ex:illustration} and \ref{ex:influence-model} and \eqref{eq:union-ex} (formally defined  in the proof of Lemma~\ref{lem:CI-GSR} in Appendix~\ref{app:proof-prop-CI-GSR}). Due to the $O\left(\min\left\{\frac{B}{\sqrt n},1\right\}\right)$ upper bound in Theorem~\ref{thm:PMV-instability-GSR-upper}, which will be proved soon, it suffices to prove a matching  lower bound by identifying an instability setting $\vosetting$ that represents some unstable histograms $\csus{n,B}$, and then apply  Theorem~\ref{thm:PMV-instability} to $\vosetting$ to prove that for all $B\ge 1$, 
\begin{equation}
\label{eq:B-ineq-main-text}
 \sup\nolimits_{\vec\pi\in \Pi^n}\Pr\nolimits_{P\sim\vec \pi}(\hist(P)\in\csus{n,B}) = \Theta\left(\min\left\{\frac{B}{\sqrt n},1\right\}\right)
\end{equation}
Like in Examples~\ref{ex:illustration} and~\ref{ex:influence-model}, let $\sourcepoly$ denote the set of vectors where alternative $1$'s total score is at least as high as alternative $2$'s total score, which is strictly higher than the total score of any other alternative. Let $\targetpoly$ denote the set of vectors where $2$'s total score is the strictly highest. Let $\voset{\pm}^{1\ra 2}$ denote the vote operations where a ranking with $2\succ 1$ is replaced by another ranking. Let $\vosetting= \langle\sourcepoly, \targetpoly, \voset{\pm}^{1\ra 2},\vec 1\,\rangle$.

For Step (ii) of Procedure~\ref{proc:application}, we apply  Theorem~\ref{thm:PMV-instability} to $\vosetting$. We prove $\condition{1}=\text{false}$ for any sufficiently large $n$ by explicitly constructing a profile where a single manipulator can and has incentive to change the winner from $1$ to $2$. Then, we show that $\condition{3}=\text{false}$, because 
(1)  $\piuni\in \conv(\Pi)$ (assumption in Theorem~\ref{thm:PMV-instability-applications}), and (2) $\piuni\in \sus{0}$ (because $\sus{0}=\sourcepolyz\cap \targetpolyz$ represents vectors where $1$ and $2$ are co-winners, and $r_{\vec s}(\piuni) =\ma$). This means that the polynomial case of Theorem~\ref{thm:PMV-instability} holds. It is not hard to verify that $d_0 = \dim(\sus{0}) = m!-1$. We then prove that $d_\infty  = m!$ by showing that an interior point of $\sourcepolyz$ can be manipulated to an interior point of $\targetpolyz$. Inequality \eqref{eq:B-ineq-main-text} then follows after the polynomial case of $\sup$ of Theorem~\ref{thm:PMV-instability}, which proves the $\cm$ part under $r_{\vec s}$. 

The $\psi>0$ case follows after a straightforward application of Theorem~\ref{thm:PMV-instability-psi>0}. $\hfill\Box$

The full statement of Theorem~\ref{thm:PMV-instability-applications} (including other coalitional influence problems such as the ones defined in Appendix~\ref{app:more-CI})  and its full proof can be found in Appendix~\ref{app:applications-proofs}. The next example shows an application of Theorem~\ref{thm:PMV-instability-applications} to Borda. 
\begin{ex}
\label{ex:thm-Borda-CM}
In the setting of Example~\ref{ex:sCM-Borda}, notice that $\piuni = \frac12(\pi^1+\pi^2)$, which means that $\piuni\in\conv(\Pi)$. It follows from Theorem~\ref{thm:PMV-instability-applications} that for all  sufficiently large $n$, $\satmax{\cm}{\Pi,0}(\borda,n,1) =  \Theta\left( \frac{1}{\sqrt n} \right) $, and for every $\psi>0$, $\satmax{\cm}{\Pi,0}(\borda,n,1) =  \Theta\left(1\right) $.
\end{ex}

Theorem~\ref{thm:PMV-instability-applications} leads to the following  corollary about IC, which is the$(\{\piuni\},0)$-semi-random model as mentioned in Example~\ref{ex:model-special-cases}.
\begin{coro}[{\bf Likelihood of Coalitional Influence w.r.t.~IC}{}]
\label{coro:IC}
Let $r$ be an integer positional scoring rule, STV, ranked pairs, Schulze, maximin, or Copeland with lexicographic tie-breaking for fixed $m\ge 3$. For any $X\in \{\cm,\mov\}$,   there exists $N>0$  such that for any $n>N$ and any $B\ge 1$,

$\hfill \Pr\nolimits_{P\sim \text{IC}}(\sat{X}(r,P,B)) =  \Theta\left(\min\left\{\frac{B}{\sqrt n},1\right\}\right) \hfill $
\end{coro}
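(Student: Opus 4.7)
The plan is to derive Corollary~\ref{coro:IC} as a direct specialization of Theorem~\ref{thm:PMV-instability-applications} by instantiating $\Pi = \{\piuni\}$ and $\psi = 0$. The key observation is that the Impartial Culture corresponds exactly to the singleton set of distributions $\Pi = \{\piuni\}$, so $\satmax{X}{\{\piuni\},0}(r,n,B)$ collapses to $\Pr_{P\sim\text{IC}}(\sat{X}(r,P,B)=1)$ because the supremum in \eqref{dfn:max-s-sat} is over the unique vector $(\piuni,\ldots,\piuni)\in \{\piuni\}^n$ and no data-adversary modifications are allowed when $\psi=0$.

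First I would verify that $\Pi = \{\piuni\}$ satisfies the hypotheses of Theorem~\ref{thm:PMV-instability-applications}: (i) strict positiveness, which holds because $\piuni$ places probability $1/m! > 0$ on every linear order; (ii) closedness, since any singleton in Euclidean space is closed; and (iii) $\piuni \in \conv(\Pi) = \{\piuni\}$, which is immediate. Then I would invoke Theorem~\ref{thm:PMV-instability-applications} for each voting rule $r$ in the list (integer positional scoring rule, STV, ranked pairs, Schulze, maximin, or Copeland with lexicographic tie-breaking, for fixed $m\ge 3$) and each $X\in\{\cm,\mov\}$, yielding some threshold $N_{r,X}>0$ such that for all $n > N_{r,X}$ and $B\ge 1$,
\[
\satmax{X}{\{\piuni\},0}(r,n,B) = \Theta\!\left(\min\!\left\{\frac{B}{\sqrt n},1\right\}\right).
\]

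Finally I would conclude by identifying the two sides with the statement of the corollary. Since the corollary is stated for a fixed rule $r$ and fixed $X$, no union over rules is needed; one simply takes $N = N_{r,X}$. There is no main obstacle to overcome: Theorem~\ref{thm:PMV-instability-applications} already subsumes the claim, and the entire proof is an unpacking of the definition of $\satmax{X}{\Pi,\psi}$ at the point $\Pi=\{\piuni\}$, $\psi=0$. The only small bookkeeping item is to note that $\Pr_{P\sim\text{IC}}(\sat{X}(r,P,B))$ in the corollary's display is understood as the probability that $\sat{X}(r,P,B)=1$, matching \eqref{dfn:max-s-sat}.
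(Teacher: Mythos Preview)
Your proposal is correct and matches the paper's approach exactly: the paper simply states that the corollary follows from Theorem~\ref{thm:PMV-instability-applications} by taking $\Pi=\{\piuni\}$ and $\psi=0$, and you have spelled out the verification of the hypotheses (strict positiveness, closedness, $\piuni\in\conv(\Pi)$) and the collapse of the $\sup$ in \eqref{dfn:max-s-sat} in more detail than the paper itself does.
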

As discussed in the Introduction, the $X=\cm$ and $B=1$ case of Corollary~\ref{coro:IC} proves the $\Omega\left(\min\left\{\frac{1}{\sqrt n},1\right\}\right)$ matching lower bound for many commonly-studied voting rules, especially all integer positional scoring rules and STV.  

\myparagraph{The second application} is an upper bound for all generalized scoring rules  and all closed and strictly positive $\Pi$ with $\psi = 0$. 
\begin{thm}[{\bf Upper bound on Coalitional Influence  under GSRs}{}]
\label{thm:PMV-instability-GSR-upper}
Let $r$ denote any GSR with fixed $m\ge 3$. For any closed and strictly positive $\Pi$, any $X\in\{ \cm, \mov\}$, any $n$, and any $B\ge 1$,
$$\hfill \satmax{X}{\Pi,0}(r,n,B) =  O\left(\min\left\{\frac{B}{\sqrt n},1\right\}\right) \hfill $$
\end{thm}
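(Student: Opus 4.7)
The plan is to execute Procedure~\ref{proc:application}. For Step (i), Lemma~\ref{lem:CI-GSR} gives, for any $X \in \{\cm, \mov\}$ and any GSR $r$, a finite collection $\calM = \{\vosetting^i : i \le I\}$ of PMV-instability settings, with $I$ depending only on $r$ and $m$, such that $\sat{X}(r, P, B) = 1$ iff $\hist(P) \in \csus{n,B}^{\calM} = \bigcup_i \csus{n,B}^{\vosetting^i}$. A union bound then yields
\[
\satmax{X}{\Pi, 0}(r, n, B) \le \sum_{i=1}^{I} \sup_{\vec\pi \in \Pi^n} \Pr(\vXp \in \csus{n,B}^{\vosetting^i}),
\]
so it suffices to bound each summand by $O(\min\{B/\sqrt n, 1\})$; I would do this by invoking Theorem~\ref{thm:PMV-instability} and handling each of its four cases for the fixed setting $\vosetting^i$.

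The 0 case contributes zero and the exp case contributes $\exp(-\Theta(n)) = O(1/\sqrt n)$, which dominates $O(B/\sqrt n)$ for $B \ge 1$. For the PT-$\Theta(n)$ case, fix any $C_2 < B_{\conv(\Pi)} < C_3$: the subcase $B \le C_2 n$ is again $\exp(-\Theta(n))$, and the subcase $B \ge C_3 n$ gives $O((1/\sqrt n)^{q - d_\infty}) = O(1)$, which matches $\min\{B/\sqrt n, 1\} = 1$ since $B/\sqrt n \ge C_3 \sqrt n$. The intermediate regime $C_2 n < B < C_3 n$ is handled by monotonicity of $\csus{n,B}$ in $B$: for every $\vec \pi$ the event $\{\vXp \in \csus{n,B}\}$ is monotone, so the sup-probability is monotone, and I sandwich it above by its value at $B = C_3 n$ while still having $B/\sqrt n = \Omega(\sqrt n) \gg 1$.

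The substantive work lies in the PT-$\Theta(\sqrt n)$ case, whose bound is $\Theta(\min\{B+1,\sqrt n\}^{d_\Delta}/(\sqrt n)^{q-d_0})$. Using the identity $q - d_0 = d_\Delta + (q - d_\infty) \ge d_\Delta$ together with $\min\{B+1, \sqrt n\} \le \sqrt n$, this rewrites as $O\bigl((\min\{(B+1)/\sqrt n, 1\})^{d_\Delta} \cdot (1/\sqrt n)^{q - d_\infty}\bigr)$. Assuming $d_0 < q$, either $d_\Delta \ge 1$, delivering an $O(\min\{B/\sqrt n, 1\})$ factor on its own, or $d_\Delta = 0$ with $q - d_\infty \ge 1$, delivering $O(1/\sqrt n)$; in both subcases the product is $O(\min\{B/\sqrt n, 1\})$ for $B \ge 1$. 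I would therefore reduce the whole theorem to the structural claim $d_0 < q$: for the settings produced by Lemma~\ref{lem:CI-GSR} for $\cm$ and $\mov$, $\sourcepoly$ and $\targetpoly$ are polyhedral cones cut out by sign constraints on the GSR hyperplanes $\vH$, encoding the events $r(\vec x) = a$ and $r(\vec x) = b$ for distinct $a \neq b$; since a resolute GSR assigns a unique winner on the interior of each sign cell, $\sus{0} = \sourcepolyz \cap \targetpolyz = \sourcepoly \cap \targetpoly$ must lie in the zero-set of at least one defining $\vec h_j$, forcing $d_0 \le q - 1$.

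The main obstacle I anticipate is verifying this geometric claim uniformly across all (constant-many) settings emitted by Lemma~\ref{lem:CI-GSR}, particularly for the control/bribery variants in the appendix where the vote-operation set $\voset{}$ is more involved and the source/target polyhedra live in an augmented coordinate system that mixes original votes with added/deleted ones; once $d_0 \le q - 1$ is established per setting, the case analysis above closes the proof.
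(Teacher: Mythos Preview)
Your proposal is correct and follows essentially the same route as the paper: decompose via Lemma~\ref{lem:CI-GSR}, apply a union bound over the finitely many PMV-instability settings, invoke Theorem~\ref{thm:PMV-instability} on each, and reduce everything to the structural fact $d_0 \le q-1$, which you (like the paper) obtain from the observation that the source and target polyhedra correspond to distinct GSR signatures, hence their characteristic cones meet inside the zero set of some $\vec h_j$. One small slip: you wrote $\sus{0}=\sourcepolyz\cap\targetpolyz=\sourcepoly\cap\targetpoly$, but the last equality is not literally true (cones vs.\ polyhedra); only the cone intersection is needed. Your concern about an ``augmented coordinate system'' for control/bribery is moot here, since the stated theorem restricts to $\cm$ and $\mov$, for which Lemma~\ref{lem:CI-GSR} produces settings directly in $\mathbb R^{m!}$.
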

\noindent{\bf Proof sketch.} For Step (i) of  Procedure~\ref{proc:application}, we adopt the same representation as in the proof of Lemma~\ref{lem:CI-GSR}. Step (ii) relies on the notation and results of a general dichotomy theorem on the semi-random likelihood of influence problems for GSRs (Theorem~\ref{thm:GSR} in Appendix~\ref{sec:semi-random-CI}). The proof can be found in Appendix~\ref{app:proof-Lemma-PMV-upper}. 
%Then, for Step (ii) of  Procedure~\ref{proc:application}, we prove that for each of its PMV-instability problem, $d_0^{i} \le  m!-1$ and $d_\infty^i\le m!$, which means that $d_\Delta\le m! - d_0^i$.  Therefore, by Theorem~\ref{thm:PMV-instability}, the max-semi-random likelihood of  each PMV-instability problem for every $B\ge 1$ is at most
%$$\hfill 
% O\left(\min\{B+1,\sqrt n\}^{d^i_{\Delta}}\cdot (\frac{1}{\sqrt n})^{m!-d^i_0}  \right) = O\left(\left(\min\left\{\frac{B+1}{\sqrt n},1\right\}\right)^{m!-d_0^i}\right)= O\left(\min\left\{\frac{B}{\sqrt n},1\right\}\right)
%\hfill $$
%
%This proves  Theorem~\ref{thm:PMV-instability-GSR-upper} because the multi-instability problem consists of finitely-many PMV-instability problems.  
$\hfill\Box$ 

Theorem~\ref{thm:PMV-instability-GSR-upper} immediately extends all previous $O\left( \frac{1}{\sqrt n}  \right) $ upper bound on $\cm$ for a single manipulator ($B=1$) that will be discussed in Section~\ref{sec:related-work} to any coalition size $B\ge 1$, because all rules studied in these works are GSRs.  

\myparagraph{The third application} studies a new  notion of coalitional manipulation that aims at making the loser win under integer positional scoring rules. For any positional scoring rule, the loser is the alternative with the minimum total score.

\begin{dfn}[\bf Coalitional manipulation for the loser] Given any integer positional scoring rule $r_{\vec s}$ with lexicographic tie-breaking, for any profile $P$ and any $B\ge 0$, we define $\cml(r_{\vec s}, P, B) = 1$ if and only if a coalition of no more than $B$ voters have incentive to misreport their preferences to make the loser under $P$ win. 
\end{dfn}

Clearly, under veto, no coalition of voters have incentive to misreport their preferences to make the loser win, i.e., $\cml(\veto, P, B)=0$ for all $P$ and $B$. This is because any voter who  prefers the loser to the winner cannot reduce the difference between the total score of the loser and the sum of total score of all other alternatives, which means that no matter how they vote, the loser would not become the winner. The following theorem characterizes the max-semi-random $\cml$ under any other integer positional scoring rule for every sufficiently large $n$ and $B$.

\begin{thm}[\bf Max-Semi-Random Coalitional Manipulation for the Loser]
\label{thm:CML}{
Let $r_{\vec s}$  be an integer positional scoring rule with lexicographic tie-breaking for fixed $m\ge 3$ that is different from veto. For  any closed and strictly positive $\Pi$ with $\piuni\in\conv(\Pi)$, there exist $ N>0$ and $B^*>0$ such that for any $n>N$ and any $B\ge B^*$,
$$\satmax{\cml}{\Pi,0}(r_{\vec s} ,n,B) =  \Theta\left(\min\left\{\frac{B}{\sqrt n},1\right\}^{m-1}\right)$$

%For any $X\in \control$,  any $n>N$ and any $B\ge 0$, $\satmax{X}{\Pi,\psi}(r,n,B) =  \Theta\left(1\right)$.
\noindent Moreover, for any $\psi>0$, $\satmax{\cml}{\Pi,\psi}(r_{\vec s} ,n,B) =\Theta(1)$.} \end{thm}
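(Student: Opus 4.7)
I plan to apply Procedure~\ref{proc:application}, closely mirroring the proof sketch of Theorem~\ref{thm:PMV-instability-applications} but tailored to the goal of making the \emph{loser} win. For Step (i), I model $\cml$ under $r_{\vec s}$ as the union of PMV-instability settings $\vosetting^{a,b}$ indexed by ordered pairs of distinct alternatives $(a,b)$: the source polyhedron $\sourcepoly^{a,b}$ encodes ``$b$ is the lex-winner and $a$ is the strict loser'', the target polyhedron $\targetpoly^{a,b}$ encodes ``$a$ is the winner'', and $\voset{\pm}^{a,b}$ consists of the operations replacing any ranking with $a\succ b$ by any other ranking (so the manipulators who prefer the loser $a$ to the current winner $b$ execute the coalition). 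An argument parallel to Lemma~\ref{lem:CI-GSR} gives $\cml(r_{\vec s},P,B)=1$ iff $\hist(P)\in\bigcup_{(a,b)}\csus{n,B}^{a,b}$.

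For Step (ii), I will instantiate Theorem~\ref{thm:PMV-instability} on each setting with $d_0=m!-(m-1)$ and $d_\infty=m!$, so that $d_\Delta=m-1$ and $q-d_0=m-1$ place us in the PT-$\Theta(\sqrt n)$ case with bound $\Theta\bigl(\min\{(B+1)/\sqrt n,\,1\}^{m-1}\bigr)$, which is exactly the claim for $B\geq B^*\geq 1$. Computing $d_0$ is clean: $\sus{0}=\sourcepolyz^{a,b}\cap\targetpolyz^{a,b}$ combines the source cone's $s_b\geq s_c$ and $s_a\leq s_c$ with the target cone's $s_a\geq s_c$, and these collapse into the $m-1$ independent equalities $s_1=s_2=\cdots=s_m$, so $\dim(\sus{0})=m!-(m-1)$. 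Condition $\neg\condition{3}$ is immediate, since $\piuni$ gives every alternative equal score and therefore lies in $\sus{0}$, while $\piuni\in\conv(\Pi)$ by hypothesis. For $\neg\condition{1}$ with $n>N$ and $B\geq B^*$, I will exhibit an explicit integer perturbation of the uniform histogram lying in $\csus{n,B}^{a,b}$: start from a near-uniform profile where $a$ loses by $O(1)$ against every $c$, then show $O(1)$ manipulators with $a\succ b$ suffice to flip the winner to $a$.

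The hard part is verifying $d_\infty=m!$, and this is exactly where the exclusion of veto enters. I must exhibit an open subset of $\mathbb{R}^{m!}$ contained in $\sus{\infty}$. For a small perturbation $\vec x=\vec 1+\varepsilon\vec\delta$ in the interior of $\sourcepolyz^{a,b}$, the gap $s_c(\vec x)-s_a(\vec x)$ is $O(\varepsilon)$ for each $c\neq a$ while $\Theta(1)$ mass sits on the rankings with $a\succ b$. I will spread a portion of that mass uniformly over the $(m-1)!$ rankings that place $a$ at the top: per unit of mass moved, $s_a$ gains $s_1$ while each $s_c$ gains only $\bar s=(s_2+\cdots+s_m)/(m-1)$, and the strict inequality $s_1>\bar s$ holds precisely for non-veto positional scoring rules (it degenerates to equality only when $s_1=\cdots=s_{m-1}$, i.e., when $r_{\vec s}$ is veto up to affine rescaling). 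Moving enough mass therefore places $\vec y=\vec x+\vec o\times\vomatrix{}$ strictly in $\targetpolyz^{a,b}$, and since this works uniformly in an open neighborhood of $\vec\delta$'s, $\sus{\infty}$ is full-dimensional. The main technical obstacle is making the averaging argument robust for scoring vectors with a flat top such as $\vec s=(2,2,1,0)$, where $s_1=s_2$ but $s_2>s_3$; there I will replace ``$a$ at the top'' by ``$a$ at the smallest position $i\leq m-1$ with a strict drop $s_i>s_{i+1}$'' and redo the score-gain computation, which still yields a strict positive gain on $s_a-s_c$ for every $c\neq a$.

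Finally, I take a union bound over the $m(m-1)$ pairs $(a,b)$ for the upper direction and pick any single pair for the matching lower direction, producing the claimed $\Theta$ for $\psi=0$. The $\psi>0$ half follows directly from Theorem~\ref{thm:PMV-instability-psi>0}, exactly as in the closing sentence of the proof sketch of Theorem~\ref{thm:PMV-instability-applications}.
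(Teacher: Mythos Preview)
Your two-step plan and the computation of $d_0=m!-(m-1)$, together with the checks of $\neg\condition{3}$, the union bound for the upper direction, and the appeal to Theorem~\ref{thm:PMV-instability-psi>0} for $\psi>0$, all match the paper.

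The genuine gap is in your verification of $d_\infty=m!$. First, the claim ``$s_1>\bar s$ holds precisely for non-veto rules'' is simply false: since $s_1\ge s_i$ for every $i\ge 2$ with strict inequality for $i=m$, one always has $s_1>\bar s$, including for veto. Second, and more seriously, your score accounting only records what the destination contributes and forgets to subtract the source: moving a unit of mass from $R$ (with $a\succ b$) to $R'$ changes $s_a$ by $s_a(R')-s_a(R)$, not by $s_a(R')$. Because the source rankings already place $a$ above $b$, $a$'s score there is well above average, so the net gain can be far below $s_1-\bar s$ and even negative. Concretely, for $\vec s=(3,3,3,1,0)$ with $m=5$ (which is \emph{not} veto), taking source uniform over $\{R:a\succ_R b\}$ and destination uniform over $\{R':a\text{ at top}\}$ gives $E[s_a\mid\text{src}]=2.8$, $E[s_b\mid\text{src}]=1.2$, $E[s_b\mid\text{dest}]=7/4$, and hence the net change in $s_a-s_b$ equals $(3-2.8)-(7/4-1.2)=-0.35<0$. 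Your flat-top fix (put $a$ at the first strict drop, here position $3$) produces exactly the same numbers and still fails.

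The paper handles $d_\infty=m!$ differently and more directly: it builds an explicit profile $\sourceprofile$ whose histogram is an \emph{interior} point of $\sourcepolyz$ and shows that replacing a bounded number of copies of $[\others\succ b\succ a]$ by $[b\succ\others\succ a]$ (paper's labels: $a$ winner, $b$ loser) yields an interior point of $\targetpolyz$; Claim~\ref{claim:dim-infty-sufficient} then gives $d_\infty=m!$. This same construction simultaneously verifies $\neg\condition{1}$. The non-veto hypothesis enters precisely as $s_1>s_{m-1}$, which is exactly what makes that specific single-ranking shift strictly increase the loser's score relative to every other alternative; no averaging is needed.
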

\noindent{\bf Proof sketch.}  For Step (i) of  Procedure~\ref{proc:application},  we first define a set of instability settings, denoted by $\calM_{\cml}$, that represents $\cml$ under $r_{\vec s}$.  For every pair of different alternatives $a,b$, define an instability setting $\vosetting_{\cml}^{a\ra b} = (\sourcepoly, \targetpoly, \voset{\pm}^{a\ra b}, \vec 1)$, where  $\sourcepoly$ denotes the set of vectors where $a$ is the winner and $b$ is the loser under $r_{\vec s}$, and  $\targetpoly$ denotes the set of vectors where $b$ is the winner  under $r_{\vec s}$. Define $\calM_{\cml} = \left\{\vosetting_{\cml}^{a\ra b}: a,b\in\ma, a\ne b\right\}$. 
It is  not hard to verify that   $\calM_{\cml}$ represents $\cml$ under $r_{\vec s}$.

Then, for Step (ii) of Procedure~\ref{proc:application}, we  apply Theorem~\ref{thm:PMV-instability} to every $\vosetting_{\cml}^{a\ra b}$. We first prove $\condition{1} = \text{false}$  for any sufficiently large $n$ and $B$, by constructing a successful manipulation by $B$ voters.  Then, let $\sus{0}^{a\ra b}= \sourcepolyz\cap \targetpolyz$. We have $\piuni\in \conv(\Pi)\cap \sus{0}$, which means that $\condition{3} = \text{false}$. Therefore, the polynomial case of Theorem~\ref{thm:PMV-instability} holds. It is then not hard to verify that   $d_0 = \dim(\sus{0}^{a\ra b}) = m!-(m-1)$ and $d_{\infty} = \dim(\sourcepolyz) = m!$, which means that $d_{\Delta} =  m-1 $. By Theorem~\ref{thm:PMV-instability}, we have
$$\hfill \sup\nolimits_{\vec\pi\in\Pi^n}\Pr\left(\vXp \in \csus{n,B}\right)= \Theta\left(\dfrac{\min\{B+1,\sqrt n\}^{m-1}}{(\sqrt n)^{m!-(m!-(m-1))}} \right) = \Theta\left(\min\left\{\frac{B}{\sqrt n},1\right\}^{m-1}\right)\hfill $$

Theorem~\ref{thm:CML} follows after applying this bound to all $a\ne b$. The $\psi>0$ case follows after a straightforward application of Theorem~\ref{thm:PMV-instability-psi>0}. The full proof can be found in Appendix~\ref{app:proof-thm-CML}. $\hfill\Box$

%\subsection{Semi-Random Likelihood of Margin of Victory for Loser}

\section{Related Work and Discussions}
\label{sec:related-work}

Table~\ref{tab:cm} summarizes most relevant theoretical work on the likelihood of $\cm$. 
\begin{table}[htp]
\centering
\begin{tabular}{|p{.3\textwidth}|c| c@{\ }| c@{\ }|}
\hline \bf Rule & \boldmath $B$ & \bf Distribution & \bf  Likelihood\\ \hline
Plurality~\cite{Pazner1978:Cheatproofness} & $n^\alpha$ ($ \alpha<0.5$) & IC & $o(1)$ \\ \hline
Representable voting systems~\cite{Peleg1979:Note} & $o(\sqrt n)$ & positive i.i.d. &  $o(1)$ \\ \hline
Borda, plurality, and range voting~\cite{Nitzan1985:Vulnerability}& \multirow{ 9}{*}{$1$}  &  \multirow{ 9}{*}{IC}  & decreases in $n$ \\ \cline{1-1}\cline{4-4}
Plurality~\cite{Fristrup1989:A-Note}&   &   & \multirow{ 7}{*}{ $O\left(\frac{1}{\sqrt n}\right)$} \\ \cline{1-1} 
plurality with runoff~\cite{Slinko2002:Asymptotic}&  &  &   \\ \cline{1-1} 
Representable voting systems, UMG-based rules~\cite{Slinko2002:On-Asymptotic} &   & &   \\ \cline{1-1} 
Bucklin~\cite{Slinko2002:The-Majoritarian}& &  &   \\ \cline{1-3}
Positional scoring rules, top cycle, and Copeland~\cite{Baharad02:Asymptotic}& $1$ & small correlation &   \\ \hline
%Maximin~\cite{Kim1996:Statistical} & unlimited & IC & $1-o(1)$ \\ \hline
Plurality~\cite{Slinko2002:Asymptotic} & $1$ & IC & $\Omega(\frac{1}{\sqrt n})$ \\ \hline
Rules constantly far away from dictatorships~\cite{Friedgut2011:A-quantitative,Dobzinski08:Frequent,Xia08:Sufficient,Isaksson10:Geometry,Mossel2015:A-quantitative} & $1$ & IC & $\Omega(\frac{1}{n^{67}})$ \\ \hline
Condorcet rules~\cite{Xia2023:Semi-Random}& any $O(\sqrt n)$ & Semi-random & $\Omega(\frac{B}{\sqrt n})$ \\ \hline
Positional scoring rules~\cite{Slinko2004:How-large} &  $O(n^\alpha)$ ($\alpha<0.5$) & IC & $O((\frac{1}{n})^{0.5-\alpha})$ \\ \hline
Positional scoring rules~\cite{Pritchard09:Asymptotics} &  $c\sqrt n$ & IC & a function of $c$ \\ \hline
Positional scoring rules~\cite{Procaccia2007:Junta} & $o(\sqrt n)$ &  \multirow{6}{*}{ i.i.d.} & $o(1)$ \\ \cline{1-2}\cline{4-4}
GSRs~\cite{Xia08:Generalized}&  $\Theta(n^\alpha)$ ($\alpha<0.5$) & & \begin{tabular}{@{}c@{}}$O(n^{\alpha-0.5})$ if $\alpha<0.5$,\\ $1-\exp(-\Theta(n))$ if $\alpha>0.5$ \end{tabular}\\  \cline{1-2}\cline{4-4}
 GSRs~\cite{Mossel13:Smooth}&  around $c\sqrt n$ &  & phase transition\\  \cline{1-2}\cline{4-4}
  GDSRs~\cite{Xia15:Generalized}&  \begin{tabular}{@{}c@{}} $0$, $\Theta(\sqrt n)$, \\ $\Theta (n)$, or $\infty$ \end{tabular}& & $1-o(1)$ \\
  \hline
\end{tabular}
\caption{Literature on the likelihood of $\cm$.\label{tab:cm}}
\end{table}

\noindent{\bf\bf Likelihood of coalitional manipulability: upper bounds  w.r.t.~IC.} \citet{Pattanaik1975:Strategic} proposed to study the likelihood of strategic voting by single manipulator and conjectured that the likelihood is smaller in larger elections. \citet{Pazner1978:Cheatproofness} proved that the likelihood of single-voter  manipulability goes to $0$ as $n\ra\infty$ under plurality. They also noted  that the results can be extended to any coalition of $n^\alpha$  manipulators, where $0\le \alpha<1/2$.   \citet{Peleg1979:Note} proved that the likelihood of single-voter manipulability under any {\em representable voting system}, which includes positional scoring rules,  goes to $0$ under any positive   i.i.d.~distributions.  Peleg also considered coalitional manipulation problem, by noticing in footnote 2  that the result still holds for any coalition of $o(\sqrt n)$ voters.  \citet{Nitzan1985:Vulnerability} demonstrated that the likelihood decreases when $n$ is large  under Borda, plurality, and range voting.   \citet{Fristrup1989:A-Note} proved an $O\left(\frac{1}{\sqrt n}\right)$ rate of convergence for single-voter  manipulability under plurality. \citet{Kim1996:Statistical} proved that maixmin (a.k.a.,~Simpson's method) can be manipulated  by a coalition of unlimited number of voters almost surely as $n\ra\infty$.  \citet{Slinko2002:Asymptotic,Slinko2002:On-Asymptotic,Slinko2002:The-Majoritarian} proved an $O\left(\frac{1}{\sqrt n}\right)$ upper bound for a single manipulator under plurality with runoff, representable voting system, all voting rules based on the unweighted majority graphs, and Bucklin.
\citet{Baharad02:Asymptotic} proved an $O\left(\frac{1}{\sqrt n}\right)$ upper bound for a single manipulator under positional scoring rule, top cycle, and Copeland, when voters' preferences has small local correlations, which is more general than IC.   \citet{Slinko2004:How-large} investigated the likelihood of coalitional manipulation by up to $Cn^\alpha$ manipulator for any fixed $0\le \alpha < 1/2$, and proved an $O((\frac{1}{n})^{0.5-\alpha})$ upper bound for any positional scoring rule with strictly decreasing scores. \citet{Maus2007:Anonymous} characterized the least manipulable rule by a single manipulator among tops-only, anonymous, and surjective choice rules, to be the unanimity rules with status quo.  \citet{Pritchard09:Asymptotics} showed that, the likelihood for a coalition of $c\sqrt n$ manipulators to succeed under a positional scoring rule is a function of $c$, and provided an algorithm based on integer linear program to compute the minimum coalition size.

%A series of work has established an $O(\frac{1}{\sqrt n})$  upper bound on the likelihood of single-voter manipulability  under commonly-studied voting rules w.r.t.~IC~\citep{Pattanaik1975:Strategic,Pazner1978:Cheatproofness,Peleg1979:Note,Nitzan1985:Vulnerability,Fristrup1989:A-Note,Kim1996:Statistical,Slinko2002:Asymptotic,Slinko2002:On-Asymptotic,Slinko2002:The-Majoritarian,Baharad02:Asymptotic,Slinko2004:How-large,Maus2007:Anonymous,Pritchard09:Asymptotics}. The likelihood of coalitional manipulability was also considered for a few rules with certain sizes of  the coalition~\citep{Pazner1978:Cheatproofness,Peleg1979:Note,Slinko2004:How-large,Pritchard09:Asymptotics}. All these results are straightforward corollaries of our Theorem~\ref{thm:PMV-instability-GSR-upper}, which implies an $O\left(\min\left(\frac{B}{\sqrt n},1\right)\right)$ upper bound for every  $B\ge 1$ w.r.t.~IC under a wide range of voting rules, including (but not limited to) the ones studied in previous work.

\myparagraph{\bf Likelihood of coalitional manipulability: lower bounds w.r.t.~IC.} 
\citet{Slinko2002:Asymptotic} proved an $\Omega(\frac{1}{\sqrt n})$ lower bound under plurality for a single manipulator. A quantitative Gibbard-Satterthwaite theorem was   proved for $m=3$ by~\citet{Friedgut2011:A-quantitative}, and  was subsequently developed in~\citep{Dobzinski08:Frequent,Xia08:Sufficient,Isaksson10:Geometry}, and the general case was proved by~\citet{Mossel2015:A-quantitative}, which implies that any voting rule that is constantly away from any dictatorships, the likelihood of single-voter manipulability  under IC is $\Omega\left(\frac{1}{n^{67}m^{166}}\right)$, i.e., $\Omega(\frac{1}{n^{67}})$ for any fixed $m$. \citet{Xia2023:Semi-Random}  proved an $\Omega(\frac{B}{\sqrt n})$ lower bound for $\cm$ for an arbitrary size $1\le B\le \sqrt n$  of the coalition under any Condorcet consistent rule w.r.t.~a large class of semi-random models that include IC. The tightness of these bounds (except for plurality and some Condorcet consistent rules for $B=1$) is an open question, which is addressed by our Corollary~\ref{coro:IC}.

\myparagraph{\bf Likelihood of coalitional manipulability: simulations.}  Beyond theoretical work, there is also a large literature on comparing the empirical coalitional manipulability of  commonly-studied voting rules, mostly by computer simulations~\citep{Chamberlin1985:Investigation,Kelly1993:Almost,Smith1999:Manipulability,Aleskerov1999:Degree,Pritchar07:Exact,Aleskerov2011:An-individual,Aleskerov2012:On-the-manipulability,Green-Armytage2016:Statistical}. These works confirm that the likelihood for a large election to be manipulable by a single manipulator is low.

\myparagraph{\bf\bf Likelihood of coalitional manipulability under other distributions.} As discussed above, the convergence-to-$0$ result for representable voting systems in \citep{Peleg1979:Note}  works for any i.i.d.~distribution, and the $O(1/\sqrt n)$ upper bound for positional scoring rule, top cycle, and Copeland by \citet{Baharad02:Asymptotic} works for distributions with small local correlations. \citet{Procaccia2007:Junta} proved that for weighted voters whose preferences are generated independently, positional scoring rules cannot be manipulated by $o(\sqrt n)$ manipulators almost surely as $n\ra\infty$. \citet{Xia08:Generalized} proved that any generalized scoring rule  and i.i.d.~distributions that satisfy certain conditions, a coalition of $\Theta(n^\alpha)$  manipulators is powerless when $\alpha<\frac 12$, as the likelihood for them to succeed is $O(n^{\alpha-0.5})$ ; and they are powerful when $\alpha>\frac 12$, as the likelihood is  lower bounded by $1-\exp(-\Omega(n))$ in such case.  \citet{Mossel13:Smooth} illustrated  a smooth transition from powerlessness to powerfulness for  a coalition of $c\sqrt n$ manipulators with variable $c$. \citet{Xia15:Generalized} proved that for a large class of influence problems including coalitional manipulation, under i.i.d.~distributions, with probability that goes to $1$  the number of voters needed is $0$, $\Theta(\sqrt n)$, $\Theta(n)$, or impossible. The paper does not characterize the likelihood for each case.  \citet{Durand2016:Can-a-condorcet} used Condorcification to decrease  coalitional manipulable profiles.

There is a large body of literature on the likelihood of coalitional manipulability under the {\em Impartial Anonymous Culture (IAC)}, which  assumes that each histogram happens equally likely  and resembles the flat Dirichlet distribution, based on theoretical analysis~\citep{Lepelley1987:The-proportion,Saari1990:Susceptibility,Lepelley1994:The-vulnerability,Huang2000:Analytical,Favardin2002:Borda,Favardin2006:Some,Slinko2006:How-the-size,Wilson2007:Probability,Lepelley2008:On-Ehrhart} and computer simulations~\citep{Lepelley2003:Voting,Pritchar07:Exact,Green-Armytage2016:Statistical}. Both IAC and IC are mainly of theoretical interest and  {\em ``are poor proxies of political electorates''}~\citep{Nurmi2012:On-the-Relevance}.

\vspace{2mm}\noindent{\bf\bf Other coalitional influence problems.}   $\mov$ measures the stability of elections and provides an upper bound on $\cm$ and some previous proofs of upper bounds on $\cm$ are done for $\mov$, such as~\citep{Slinko2002:Asymptotic,Xia08:Generalized}.  \citet{Pritchard2006:Average} proved that for any  positional scoring rule, the expected margin of victory under IC is $\Theta(\sqrt n)$ and characterized the voting rules with maximum expected $\mov$ under IC.   Results in \citep{Mossel13:Smooth,Xia15:Generalized} discussed above also apply to $\mov$.  \citet{Brill2022:Margin} studied the distribution of $\mov$ for some tournament rules, under a probability distribution on tournament graphs, where the direction of each edge is drawn independently and uniformly.

The likelihood of $\mov=1$ has been used to measure the {\em decisiveness} of voting, sometimes called {\em voting power}, which plays an important role in the paradox of voting~\cite{Downs1957:An-Economic} and in definitions of power indices in cooperative game theory. For two alternatives under the plurality rule, the voting power is equivalent to the likelihood of ties~\citep{Good1975:Estimating} (see, e.g.,~\cite{Xia2021:How-Likely} for a semi-random analysis on the likelihood of ties and references therein). In general, the two problems are different, for example as shown in Example~\ref{ex:ties-stable} and~\ref{ex:noties-unstable} in Appendix~\ref{app:tie-instability}.

Beyond $\cm$ and $\mov$, there is a large body of literature on the computational complexity of other types of coalition influence problems that based on vote operations, such as constructive/destructive control (making a designated alternative win/lose) by adding/deleting votes, and bribery (different changes in votes have different costs and a total budget is given). See~\citep{Faliszewski2016:Control} for a recent survey and see Appendix~\ref{app:more-CI} for their definitions. Little work has been done to analyze their likelihood of success, except~\citep{Xia15:Generalized}, which as discussed above, does not provide an accurate characterization.  The full versions of our results in the Appendix provide the first asymptotically tight bounds for these coalitional influence problems.

\vspace{2mm}\noindent{\bf Technical novelty.} While our Procedure~\ref{proc:application} is similar to the ``polyhedral approaches" adopted  in previous work  on semi-random social choice~\citep{Xia2020:The-Smoothed,Xia2021:How-Likely,Xia2021:Semi-Random}, the main technical tool of this paper (Theorem~\ref{thm:PMV-instability} for $\psi =0$ and Theorem~\ref{thm:PMV-instability-psi>0} for $\psi>0$) are  significant and non-trivial extensions of the main technical theorems in previous work, especially~\citep[Theorem~1]{Xia2021:How-Likely}, which  can be viewed as  a special case of our Theorem~\ref{thm:PMV-instability} with $\sourcepoly= \targetpoly$, $B=0$. The hardest part of the proof of Theorem~\ref{thm:PMV-instability} is  the  polynomial lower bound, whose $\sourcepoly= \targetpoly$ and $B=0$ case was proved in~\citep[Theorem~1]{Xia2021:How-Likely} by explicitly enumerating sufficiently many target integer vectors for the PMV. However, due to the generality of the PMV-instability problem, we do not see an easy way to perform a similar enumeration. To address this technical challenge, we take a different approach by first pretending that the PMV can take non-integer values, then  enumerating  (possibly non-integral) vectors that are far away from each other by exploring two directions: the direction that represents no budget (i.e., $B=0$) and the direction that represents infinite budget (i.e., $B=\infty$), and finally proving that for each such (possibly non-integral) vector, there exists a nearby integer target vector for the PMV.  The last step is achieved by carefully modeling the influence problem as an ILP on which the sensitivity analysis developed by~\citep{Cook86:Sensitivity} can be applied.  
We believe that our Theorem~\ref{thm:PMV-instability} and Theorem~\ref{thm:PMV-instability-psi>0} are useful and general tools for studying likelihood of coalitional influence, as exemplified by its applications to prove  Theorems~\ref{thm:PMV-instability-applications}--\ref{thm:CML}, Corollary~\ref{coro:IC}, and a general Theorem~\ref{thm:GSR} (Appendix~\ref{sec:semi-random-CI}).

\myparagraph{\bf Semi-random analysis and smoothed analysis in general.} Semi-random analysis~\citep{Blum1990:Some,Blum1995:Coloring} refers to the  analysis under a model where the process of generating  instances has adversarial components and random components. For example, in the smoothed complexity analysis~\citep{Spielman2009:Smoothed}, the input to an algorithm is obtained from an adversarially chosen ``ground truth'' plus a (small) random perturbation.  This can be viewed as the adversary directly choosing a distribution over data (from a set of distributions).  See~\citep{Feige2021:Introduction} for a recent survey of various semi-random models and complexity results under them. %, which resembles smoothed analysis in the sense that an adversary chooses the worst-case distributions for voters. 
%See Section~\ref{sec:prelim} for more discussions on its generality and limitations.

\section{Summary and Future Work}
We extend previous studies on average-case likelihood of coalitional manipulation in elections in three aspects: (1)  a more general semi-random model that introduces correlations among agents' preferences, (2) many other coalitional influence problems, and (3) arbitrary  coalition size, by taking the polyhedral approach (Procedure~\ref{proc:application}).  We believe that our model and results make a  non-trivial step forward, because they address  long-standing open questions  and expand  the scope of previous work to more general settings. 

Moving forward, we see three natural directions for future work. {\bf More general models:}  a natural question is how to relax the statistical independence and strict positiveness constraints on $\Pi$ to build a more general yet  tractable model. Additionally, at a high level, the contamination adversary may be viewed as another influencer. Therefore, another natural question is how to incorporate different kinds of adversaries and influencers to the model, e.g., by performing multimode control attacks~\citep{Faliszewski11:Multimode}.  {\bf Stronger theorems:} as discussed after Theorem~\ref{thm:PMV-instability}, characterizing the constants in the bounds is a natural open question.  Additionally, as discussed in the Introduction, Theorem~\ref{thm:PMV-instability-GSR-upper} poses a challenge to designing natural rules with lower likelihood of coalitional influence, or improving the $\Omega(\frac{1}{n^{-67}})$ lower bound (and extending it to $B\ge 2$). {\bf More applications:}  how to develop informative characterizations as done in Section~\ref{sec:applications} for other rules, other coalition influence problems, or other applications such as matching, resource allocation, fair division~\citep{Bai2022:Fair}, judgement aggregation~\citep{Liu2022:Semi-Random}, are important and challenging tasks.   

\newpage

\bibliographystyle{ACM-Reference-Format}
\bibliography{/Users/administrator/GGSDDU/references}

%%% -*-BibTeX-*-
%%% Do NOT edit. File created by BibTeX with style
%%% ACM-Reference-Format-Journals [18-Jan-2012].

\begin{thebibliography}{87}

%%% ====================================================================
%%% NOTE TO THE USER: you can override these defaults by providing
%%% customized versions of any of these macros before the \bibliography
%%% command.  Each of them MUST provide its own final punctuation,
%%% except for \shownote{}, \showDOI{}, and \showURL{}.  The latter two
%%% do not use final punctuation, in order to avoid confusing it with
%%% the Web address.
%%%
%%% To suppress output of a particular field, define its macro to expand
%%% to an empty string, or better, \unskip, like this:
%%%
%%% \newcommand{\showDOI}[1]{\unskip}   % LaTeX syntax
%%%
%%% \def \showDOI #1{\unskip}           % plain TeX syntax
%%%
%%% ====================================================================

\ifx \showCODEN    \undefined \def \showCODEN     #1{\unskip}     \fi
\ifx \showDOI      \undefined \def \showDOI       #1{#1}\fi
\ifx \showISBNx    \undefined \def \showISBNx     #1{\unskip}     \fi
\ifx \showISBNxiii \undefined \def \showISBNxiii  #1{\unskip}     \fi
\ifx \showISSN     \undefined \def \showISSN      #1{\unskip}     \fi
\ifx \showLCCN     \undefined \def \showLCCN      #1{\unskip}     \fi
\ifx \shownote     \undefined \def \shownote      #1{#1}          \fi
\ifx \showarticletitle \undefined \def \showarticletitle #1{#1}   \fi
\ifx \showURL      \undefined \def \showURL       {\relax}        \fi
% The following commands are used for tagged output and should be
% invisible to TeX
\providecommand\bibfield[2]{#2}
\providecommand\bibinfo[2]{#2}
\providecommand\natexlab[1]{#1}
\providecommand\showeprint[2][]{arXiv:#2}

\bibitem[Aleskerov et~al\mbox{.}(2011)]%
        {Aleskerov2011:An-individual}
\bibfield{author}{\bibinfo{person}{Fuad Aleskerov}, \bibinfo{person}{Daniel
  Karabekyan}, \bibinfo{person}{M.~Remzi Sanver}, {and}
  \bibinfo{person}{Vyacheslav Yakuba}.} \bibinfo{year}{2011}\natexlab{}.
\newblock \showarticletitle{An individual manipulability of positional voting
  rules}.
\newblock \bibinfo{journal}{\emph{SERIEs}}  \bibinfo{volume}{2}
  (\bibinfo{year}{2011}), \bibinfo{pages}{431--446}.
\newblock


\bibitem[Aleskerov et~al\mbox{.}(2012)]%
        {Aleskerov2012:On-the-manipulability}
\bibfield{author}{\bibinfo{person}{Fuad Aleskerov}, \bibinfo{person}{Daniel
  Karabekyan}, \bibinfo{person}{M.~Remzi Sanver}, {and}
  \bibinfo{person}{Vyacheslav Yakuba}.} \bibinfo{year}{2012}\natexlab{}.
\newblock \showarticletitle{{On the manipulability of voting rules: The case of
  4 and 5 alternatives}}.
\newblock \bibinfo{journal}{\emph{Mathematical Social Sciences}}
  \bibinfo{volume}{64} (\bibinfo{year}{2012}), \bibinfo{pages}{67--73}.
\newblock


\bibitem[Aleskerov and Kurbanov(1999)]%
        {Aleskerov1999:Degree}
\bibfield{author}{\bibinfo{person}{Fuad Aleskerov} {and}
  \bibinfo{person}{Eldeniz Kurbanov}.} \bibinfo{year}{1999}\natexlab{}.
\newblock \showarticletitle{{Degree of manipulability of social choice
  procedures}}.
\newblock In \bibinfo{booktitle}{\emph{Current Trends in Economics}},
  \bibfield{editor}{\bibinfo{person}{Ahmet Alkan},
  \bibinfo{person}{Charalambos~D. Aliprantis}, {and}
  \bibinfo{person}{Nicholas~C. Yannelis}} (Eds.).
  \bibinfo{publisher}{Springer}, \bibinfo{pages}{13--27}.
\newblock


\bibitem[Baharad and Neeman(2002)]%
        {Baharad02:Asymptotic}
\bibfield{author}{\bibinfo{person}{Eyad Baharad} {and} \bibinfo{person}{Zvika
  Neeman}.} \bibinfo{year}{2002}\natexlab{}.
\newblock \showarticletitle{The asymptotic strategyproofness of scoring and
  Condorcet consistent rules}.
\newblock \bibinfo{journal}{\emph{Review of Economic Design}}
  \bibinfo{volume}{4} (\bibinfo{year}{2002}), \bibinfo{pages}{331--340}.
\newblock


\bibitem[Bai et~al\mbox{.}(2022)]%
        {Bai2022:Fair}
\bibfield{author}{\bibinfo{person}{Yushi Bai}, \bibinfo{person}{Uriel Feige},
  \bibinfo{person}{Paul G{\"o}lz}, {and} \bibinfo{person}{Ariel~D. Procaccia}.}
  \bibinfo{year}{2022}\natexlab{}.
\newblock \showarticletitle{{Fair Allocations for Smoothed Utilities}}. In
  \bibinfo{booktitle}{\emph{Proceedings of ACM EC}}.
\newblock


\bibitem[Banzhaf~III(1968)]%
        {Banzhaf-III1968:One-Man}
\bibfield{author}{\bibinfo{person}{John~F. Banzhaf~III}.}
  \bibinfo{year}{1968}\natexlab{}.
\newblock \showarticletitle{{One Man, 3.312 Votes: A Mathematical Analysis of
  the Electoral College}}.
\newblock \bibinfo{journal}{\emph{Villanova Law Review}} \bibinfo{volume}{13},
  \bibinfo{number}{2} (\bibinfo{year}{1968}), \bibinfo{pages}{{Article 3}}.
\newblock


\bibitem[Bentkus(2005)]%
        {Bentkus2005:A-Lyapunov-type}
\bibfield{author}{\bibinfo{person}{Vidmantas Bentkus}.}
  \bibinfo{year}{2005}\natexlab{}.
\newblock \showarticletitle{{A Lyapunov-type bound in $R^d$}}.
\newblock \bibinfo{journal}{\emph{Theory of Probability \& Its Applications}}
  \bibinfo{volume}{49}, \bibinfo{number}{2} (\bibinfo{year}{2005}),
  \bibinfo{pages}{311---323}.
\newblock


\bibitem[Blum(1990)]%
        {Blum1990:Some}
\bibfield{author}{\bibinfo{person}{Avrim Blum}.}
  \bibinfo{year}{1990}\natexlab{}.
\newblock \showarticletitle{{Some tools for approximate 3-coloring}}. In
  \bibinfo{booktitle}{\emph{Proceedings of FOCS}}.
\newblock


\bibitem[Blum and Spencer(1995)]%
        {Blum1995:Coloring}
\bibfield{author}{\bibinfo{person}{Avrim Blum} {and} \bibinfo{person}{Joel
  Spencer}.} \bibinfo{year}{1995}\natexlab{}.
\newblock \showarticletitle{{Coloring Random and Semi-Random k-Colorable
  Graphs}}.
\newblock \bibinfo{journal}{\emph{Journal of Algorithms}} \bibinfo{volume}{19},
  \bibinfo{number}{2} (\bibinfo{year}{1995}), \bibinfo{pages}{204--234}.
\newblock


\bibitem[Brill et~al\mbox{.}(2022)]%
        {Brill2022:Margin}
\bibfield{author}{\bibinfo{person}{Markus Brill}, \bibinfo{person}{Ulrike
  Schmidt-Kraepelin}, {and} \bibinfo{person}{Warut Suksompong}.}
  \bibinfo{year}{2022}\natexlab{}.
\newblock \showarticletitle{{Margin of victory for tournament solutions}}.
\newblock \bibinfo{journal}{\emph{Artificial Intelligence}}
  (\bibinfo{year}{2022}).
\newblock


\bibitem[Buchanan(1974)]%
        {Buchanan1974:Hegel}
\bibfield{author}{\bibinfo{person}{James~M. Buchanan}.}
  \bibinfo{year}{1974}\natexlab{}.
\newblock \showarticletitle{{Hegel on the Calculus of Voting}}.
\newblock \bibinfo{journal}{\emph{Public Choice}}  \bibinfo{volume}{11}
  (\bibinfo{year}{1974}), \bibinfo{pages}{99--101}.
\newblock


\bibitem[Chamberlin(1985)]%
        {Chamberlin1985:Investigation}
\bibfield{author}{\bibinfo{person}{John~R. Chamberlin}.}
  \bibinfo{year}{1985}\natexlab{}.
\newblock \showarticletitle{An investigation into the relative manipulability
  of four voting systems}.
\newblock \bibinfo{journal}{\emph{Behavioral Science}} \bibinfo{volume}{30},
  \bibinfo{number}{4} (\bibinfo{year}{1985}), \bibinfo{pages}{195---203}.
\newblock


\bibitem[Condorcet(1785)]%
        {Condorcet1785:Essai}
\bibfield{author}{\bibinfo{person}{Marquis~de Condorcet}.}
  \bibinfo{year}{1785}\natexlab{}.
\newblock \bibinfo{booktitle}{\emph{Essai sur l'application de l'analyse \`a la
  probabilit\'e des d\'ecisions rendues \`a la pluralit\'e des voix}}.
\newblock Paris: L'Imprimerie Royale.
\newblock


\bibitem[Conforti et~al\mbox{.}(2014)]%
        {Conforti2014:Integer}
\bibfield{author}{\bibinfo{person}{Michele Conforti}, \bibinfo{person}{Gerard
  Cornuejols}, {and} \bibinfo{person}{Giacomo Zambelli}.}
  \bibinfo{year}{2014}\natexlab{}.
\newblock \bibinfo{booktitle}{\emph{{Integer Programming}}}.
\newblock \bibinfo{publisher}{Springer}.
\newblock


\bibitem[Conitzer et~al\mbox{.}(2009)]%
        {Conitzer09:Preference}
\bibfield{author}{\bibinfo{person}{Vincent Conitzer}, \bibinfo{person}{Matthew
  Rognlie}, {and} \bibinfo{person}{Lirong Xia}.}
  \bibinfo{year}{2009}\natexlab{}.
\newblock \showarticletitle{Preference Functions That Score Rankings and
  Maximum Likelihood Estimation}. In \bibinfo{booktitle}{\emph{Proceedings of
  the Twenty-First International Joint Conference on Artificial Intelligence
  (IJCAI)}}. \bibinfo{address}{Pasadena, CA, USA}, \bibinfo{pages}{109--115}.
\newblock


\bibitem[Cook et~al\mbox{.}(1986)]%
        {Cook86:Sensitivity}
\bibfield{author}{\bibinfo{person}{William~J. Cook}, \bibinfo{person}{Albertus
  M.~H. Gerards}, \bibinfo{person}{Alexander Schrijver}, {and}
  \bibinfo{person}{Eva Tardos}.} \bibinfo{year}{1986}\natexlab{}.
\newblock \showarticletitle{Sensitivity theorems in integer linear
  programming}.
\newblock \bibinfo{journal}{\emph{Mathematical Programming}}
  \bibinfo{volume}{34}, \bibinfo{number}{3} (\bibinfo{year}{1986}),
  \bibinfo{pages}{251--264}.
\newblock


\bibitem[Daskalakis et~al\mbox{.}(2016)]%
        {Daskalakis2016:A-Size-Free}
\bibfield{author}{\bibinfo{person}{Constantinos Daskalakis},
  \bibinfo{person}{Anindya De}, \bibinfo{person}{Gautam Kamat}, {and}
  \bibinfo{person}{Christos Tzamos}.} \bibinfo{year}{2016}\natexlab{}.
\newblock \showarticletitle{{A Size-Free CLT for Poisson Multinomials and its
  Applications}}. In \bibinfo{booktitle}{\emph{Proceedings of STOC}}.
  \bibinfo{pages}{1074--1086}.
\newblock


\bibitem[Diakonikolas and Kane(2021)]%
        {Diakonikolas2021:Robust}
\bibfield{author}{\bibinfo{person}{Ilias Diakonikolas} {and}
  \bibinfo{person}{Daniel~M. Kane}.} \bibinfo{year}{2021}\natexlab{}.
\newblock \showarticletitle{{Robust High-Dimensional Statistics}}.
\newblock In \bibinfo{booktitle}{\emph{Beyond the Worst-Case Analysis of
  Algorithms}}, \bibfield{editor}{\bibinfo{person}{Tim Roughgarden}} (Ed.).
  \bibinfo{publisher}{Cambridge University Press}.
\newblock


\bibitem[Diakonikolas et~al\mbox{.}(2016)]%
        {Diakonikolas2016:The-fourier}
\bibfield{author}{\bibinfo{person}{Ilias Diakonikolas},
  \bibinfo{person}{Daniel~Mertz Kane}, {and} \bibinfo{person}{Alistair
  Stewart}.} \bibinfo{year}{2016}\natexlab{}.
\newblock \showarticletitle{{The fourier transform of poisson multinomial
  distributions and its algorithmic applications}}. In
  \bibinfo{booktitle}{\emph{Proceedings of STOC}}. \bibinfo{pages}{1060--1073}.
\newblock


\bibitem[Dobzinski and Procaccia(2008)]%
        {Dobzinski08:Frequent}
\bibfield{author}{\bibinfo{person}{Shahar Dobzinski} {and}
  \bibinfo{person}{Ariel~D. Procaccia}.} \bibinfo{year}{2008}\natexlab{}.
\newblock \showarticletitle{Frequent Manipulability of Elections: The Case of
  Two Voters}. In \bibinfo{booktitle}{\emph{Proceedings of the Fourth Workshop
  on Internet and Network Economics (WINE)}}. \bibinfo{address}{Shanghai,
  China}, \bibinfo{pages}{653--664}.
\newblock


\bibitem[Downs(1957)]%
        {Downs1957:An-Economic}
\bibfield{author}{\bibinfo{person}{Anthony Downs}.}
  \bibinfo{year}{1957}\natexlab{}.
\newblock \bibinfo{booktitle}{\emph{{An Economic Theory of Democracy}}}.
\newblock \bibinfo{publisher}{New York: Harper \& Row}.
\newblock


\bibitem[Durand et~al\mbox{.}(2016)]%
        {Durand2016:Can-a-condorcet}
\bibfield{author}{\bibinfo{person}{Francois Durand}, \bibinfo{person}{Fabien
  Mathieu}, {and} \bibinfo{person}{Ludovic Noirie}.}
  \bibinfo{year}{2016}\natexlab{}.
\newblock \showarticletitle{{Can a condorcet rule have a low coalitional
  manipulability?}}. In \bibinfo{booktitle}{\emph{Proceedings of ECAI}}.
  \bibinfo{pages}{707--715}.
\newblock


\bibitem[Faliszewski et~al\mbox{.}(2011)]%
        {Faliszewski11:Multimode}
\bibfield{author}{\bibinfo{person}{Piotr Faliszewski}, \bibinfo{person}{Edith
  Hemaspaandra}, {and} \bibinfo{person}{Lane~A. Hemaspaandra}.}
  \bibinfo{year}{2011}\natexlab{}.
\newblock \showarticletitle{Multimode Control Attacks on Elections}.
\newblock \bibinfo{journal}{\emph{Journal of Artificial Intelligence Research}}
   \bibinfo{volume}{40} (\bibinfo{year}{2011}), \bibinfo{pages}{305--351}.
\newblock


\bibitem[Faliszewski and Rothe(2016)]%
        {Faliszewski2016:Control}
\bibfield{author}{\bibinfo{person}{Piotr Faliszewski} {and}
  \bibinfo{person}{J{\"o}rg Rothe}.} \bibinfo{year}{2016}\natexlab{}.
\newblock \showarticletitle{Control and bribery in voting}.
\newblock In \bibinfo{booktitle}{\emph{{Handbook of Computational Social
  Choice}}}. \bibinfo{publisher}{Cambridge University Press}, Chapter~7.
\newblock


\bibitem[Favardin and Lepelley(2006)]%
        {Favardin2006:Some}
\bibfield{author}{\bibinfo{person}{Pierre Favardin} {and}
  \bibinfo{person}{Dominique Lepelley}.} \bibinfo{year}{2006}\natexlab{}.
\newblock \showarticletitle{{Some further results on the manipulability of
  social choice rules}}.
\newblock \bibinfo{journal}{\emph{Social Choice and Welfare}}
  \bibinfo{volume}{26}, \bibinfo{number}{3} (\bibinfo{year}{2006}),
  \bibinfo{pages}{485--509}.
\newblock


\bibitem[Favardin et~al\mbox{.}(2002)]%
        {Favardin2002:Borda}
\bibfield{author}{\bibinfo{person}{Pierre Favardin}, \bibinfo{person}{Dominique
  Lepelley}, {and} \bibinfo{person}{Jerome Serais}.}
  \bibinfo{year}{2002}\natexlab{}.
\newblock \showarticletitle{{Borda rule, Copeland method and strategic
  manipulation}}.
\newblock \bibinfo{journal}{\emph{Review of Economic Design}}
  \bibinfo{volume}{7} (\bibinfo{year}{2002}), \bibinfo{pages}{213--228}.
\newblock


\bibitem[Feige(2021)]%
        {Feige2021:Introduction}
\bibfield{author}{\bibinfo{person}{Uriel Feige}.}
  \bibinfo{year}{2021}\natexlab{}.
\newblock \showarticletitle{{Introduction to Semi-Random Models}}.
\newblock In \bibinfo{booktitle}{\emph{Beyond the Worst-Case Analysis of
  Algorithms}}, \bibfield{editor}{\bibinfo{person}{Tim Roughgarden}} (Ed.).
  \bibinfo{publisher}{Cambridge University Press}.
\newblock


\bibitem[Friedgut et~al\mbox{.}(2011)]%
        {Friedgut2011:A-quantitative}
\bibfield{author}{\bibinfo{person}{Ehud Friedgut}, \bibinfo{person}{Gil Kalai},
  \bibinfo{person}{Nathan Keller}, {and} \bibinfo{person}{Noam Nisan}.}
  \bibinfo{year}{2011}\natexlab{}.
\newblock \showarticletitle{{A Quantitative Version of the
  Gibbard-Satterthwaite theorem for Three Alternatives}}.
\newblock \bibinfo{journal}{\emph{SIAM J. Comput.}} \bibinfo{volume}{40},
  \bibinfo{number}{3} (\bibinfo{year}{2011}), \bibinfo{pages}{934--952}.
\newblock


\bibitem[Fristrup and Keiding(1989)]%
        {Fristrup1989:A-Note}
\bibfield{author}{\bibinfo{person}{Peter Fristrup} {and} \bibinfo{person}{Hans
  Keiding}.} \bibinfo{year}{1989}\natexlab{}.
\newblock \showarticletitle{A Note on Asymptotical Strategy Proofness}.
\newblock \bibinfo{journal}{\emph{Economics Letters}}  \bibinfo{volume}{31}
  (\bibinfo{year}{1989}), \bibinfo{pages}{307--312}.
\newblock


\bibitem[Gehrlein(2006)]%
        {Gehrlein2006:Condorcets}
\bibfield{author}{\bibinfo{person}{William~V. Gehrlein}.}
  \bibinfo{year}{2006}\natexlab{}.
\newblock \bibinfo{booktitle}{\emph{{Condorcet's Paradox}}}.
\newblock \bibinfo{publisher}{Springer}.
\newblock


\bibitem[Gibbard(1973)]%
        {Gibbard73:Manipulation}
\bibfield{author}{\bibinfo{person}{Allan Gibbard}.}
  \bibinfo{year}{1973}\natexlab{}.
\newblock \showarticletitle{Manipulation of voting schemes: {A} general
  result}.
\newblock \bibinfo{journal}{\emph{Econometrica}}  \bibinfo{volume}{41}
  (\bibinfo{year}{1973}), \bibinfo{pages}{587--601}.
\newblock


\bibitem[Good and Mayer(1975)]%
        {Good1975:Estimating}
\bibfield{author}{\bibinfo{person}{I.~J. Good} {and}
  \bibinfo{person}{Lawrence~S. Mayer}.} \bibinfo{year}{1975}\natexlab{}.
\newblock \showarticletitle{{Estimating the efficacy of a vote}}.
\newblock \bibinfo{journal}{\emph{Behavioral Science}} \bibinfo{volume}{20},
  \bibinfo{number}{1} (\bibinfo{year}{1975}), \bibinfo{pages}{25--33}.
\newblock


\bibitem[Green-Armytage et~al\mbox{.}(2016)]%
        {Green-Armytage2016:Statistical}
\bibfield{author}{\bibinfo{person}{James Green-Armytage},
  \bibinfo{person}{T.~Nicolaus Tideman}, {and} \bibinfo{person}{Rafael
  Cosman}.} \bibinfo{year}{2016}\natexlab{}.
\newblock \showarticletitle{Statistical evaluation of voting rules}.
\newblock \bibinfo{journal}{\emph{Social Choice and Welfare}}
  \bibinfo{volume}{46}, \bibinfo{number}{1} (\bibinfo{year}{2016}),
  \bibinfo{pages}{183--212}.
\newblock


\bibitem[Huang and Chua(2000)]%
        {Huang2000:Analytical}
\bibfield{author}{\bibinfo{person}{H.~C. Huang} {and} \bibinfo{person}{Vincent
  C.~H. Chua}.} \bibinfo{year}{2000}\natexlab{}.
\newblock \showarticletitle{{Analytical representation of probabilities under
  the IAC condition}}.
\newblock \bibinfo{journal}{\emph{Social Choice and Welfare}}
  \bibinfo{volume}{17} (\bibinfo{year}{2000}), \bibinfo{pages}{143--155}.
\newblock


\bibitem[Huber(1964)]%
        {Huber1964:Robust}
\bibfield{author}{\bibinfo{person}{Peter~J. Huber}.}
  \bibinfo{year}{1964}\natexlab{}.
\newblock \showarticletitle{{Robust Estimation of a Location Parameter}}.
\newblock \bibinfo{journal}{\emph{Annals of Mathematical Statistics}}
  \bibinfo{volume}{35}, \bibinfo{number}{1} (\bibinfo{year}{1964}),
  \bibinfo{pages}{73--101}.
\newblock


\bibitem[Isaksson et~al\mbox{.}(2010)]%
        {Isaksson10:Geometry}
\bibfield{author}{\bibinfo{person}{Marcus Isaksson}, \bibinfo{person}{Guy
  Kindler}, {and} \bibinfo{person}{Elchanan Mossel}.}
  \bibinfo{year}{2010}\natexlab{}.
\newblock \showarticletitle{{The Geometry of Manipulation: A Quantitative Proof
  of the Gibbard-Satterthwaite Theorem}}. In
  \bibinfo{booktitle}{\emph{Proceedings of the 51st Annual Symposium on
  Foundations of Computer Science (FOCS)}}. \bibinfo{address}{Washington, DC,
  USA}, \bibinfo{pages}{319--328}.
\newblock


\bibitem[Kelly(1993)]%
        {Kelly1993:Almost}
\bibfield{author}{\bibinfo{person}{Jerry~S. Kelly}.}
  \bibinfo{year}{1993}\natexlab{}.
\newblock \showarticletitle{{Almost all social choice rules are highly
  manipulable, but a few aren't}}.
\newblock \bibinfo{journal}{\emph{Social Choice and Welfare}}
  \bibinfo{volume}{10}, \bibinfo{number}{2} (\bibinfo{year}{1993}),
  \bibinfo{pages}{161--175}.
\newblock


\bibitem[Kim and Roush(1996)]%
        {Kim1996:Statistical}
\bibfield{author}{\bibinfo{person}{K.~H. Kim} {and} \bibinfo{person}{F.~W.
  Roush}.} \bibinfo{year}{1996}\natexlab{}.
\newblock \showarticletitle{{Statistical manipulability of social choice
  functions}}.
\newblock \bibinfo{journal}{\emph{Group Decision and Negotiation}}
  \bibinfo{volume}{5} (\bibinfo{year}{1996}), \bibinfo{pages}{263--282}.
\newblock


\bibitem[Lehtinen and Kuorikoski(2007)]%
        {Lehtinen2007:Unrealistic}
\bibfield{author}{\bibinfo{person}{Aki Lehtinen} {and} \bibinfo{person}{Jaakko
  Kuorikoski}.} \bibinfo{year}{2007}\natexlab{}.
\newblock \showarticletitle{{Unrealistic Assumptions in Rational Choice
  Theory}}.
\newblock \bibinfo{journal}{\emph{Philosophy of the Social Sciences}}
  \bibinfo{volume}{37}, \bibinfo{number}{2} (\bibinfo{year}{2007}),
  \bibinfo{pages}{115--138}.
\newblock


\bibitem[Lepelley et~al\mbox{.}(2008)]%
        {Lepelley2008:On-Ehrhart}
\bibfield{author}{\bibinfo{person}{Dominique Lepelley}, \bibinfo{person}{Ahmed
  Louichi}, {and} \bibinfo{person}{Hatem Smaoui}.}
  \bibinfo{year}{2008}\natexlab{}.
\newblock \showarticletitle{{On Ehrhart polynomials and probability
  calculations in voting theory}}.
\newblock \bibinfo{journal}{\emph{Social Choice and Welfare}}
  \bibinfo{volume}{30} (\bibinfo{year}{2008}), \bibinfo{pages}{363--383}.
\newblock


\bibitem[Lepelley and Mbih(1987)]%
        {Lepelley1987:The-proportion}
\bibfield{author}{\bibinfo{person}{Dominique Lepelley} {and}
  \bibinfo{person}{Boniface Mbih}.} \bibinfo{year}{1987}\natexlab{}.
\newblock \showarticletitle{{The proportion of coalitionally unstable
  situations under the plurality rule}}.
\newblock \bibinfo{journal}{\emph{Economics Letters}}  \bibinfo{volume}{24}
  (\bibinfo{year}{1987}), \bibinfo{pages}{311--315}.
\newblock


\bibitem[Lepelley and Mbih(1994)]%
        {Lepelley1994:The-vulnerability}
\bibfield{author}{\bibinfo{person}{Dominique Lepelley} {and}
  \bibinfo{person}{Boniface Mbih}.} \bibinfo{year}{1994}\natexlab{}.
\newblock \showarticletitle{The vulnerability of four social choice functions
  to coalitional manipulation of preferences}.
\newblock \bibinfo{journal}{\emph{Social Choice and Welfare}}
  \bibinfo{volume}{11}, \bibinfo{number}{3} (\bibinfo{year}{1994}),
  \bibinfo{pages}{253--265}.
\newblock


\bibitem[Lepelley and Valognes(2003)]%
        {Lepelley2003:Voting}
\bibfield{author}{\bibinfo{person}{Dominique Lepelley} {and}
  \bibinfo{person}{Fabrice Valognes}.} \bibinfo{year}{2003}\natexlab{}.
\newblock \showarticletitle{{Voting rules, manipulability and social
  homogeneity}}.
\newblock \bibinfo{journal}{\emph{Public Choice}}  \bibinfo{volume}{116}
  (\bibinfo{year}{2003}), \bibinfo{pages}{165--184}.
\newblock


\bibitem[Liu et~al\mbox{.}(2020)]%
        {Liu2020:How}
\bibfield{author}{\bibinfo{person}{Ao Liu}, \bibinfo{person}{Yun Lu},
  \bibinfo{person}{Lirong Xia}, {and} \bibinfo{person}{Vassilis Zikas}.}
  \bibinfo{year}{2020}\natexlab{}.
\newblock \showarticletitle{{How Private Is Your Voting?}}.
  \bibinfo{howpublished}{Presented at WADE-18 workshop}. In
  \bibinfo{booktitle}{\emph{Proceedings of UAI}}.
\newblock


\bibitem[Liu and Xia(2022)]%
        {Liu2022:Semi-Random}
\bibfield{author}{\bibinfo{person}{Ao Liu} {and} \bibinfo{person}{Lirong Xia}.}
  \bibinfo{year}{2022}\natexlab{}.
\newblock \showarticletitle{{The Semi-Random Likelihood of Doctrinal
  Paradoxes}}. In \bibinfo{booktitle}{\emph{Proceedings of AAAI}}.
\newblock


\bibitem[Lov{\'a}sz and Plummer(2009)]%
        {Lovasz2009:Matching}
\bibfield{author}{\bibinfo{person}{L. Lov{\'a}sz} {and} \bibinfo{person}{M.D.
  Plummer}.} \bibinfo{year}{2009}\natexlab{}.
\newblock \bibinfo{booktitle}{\emph{{Matching Theory}}}.
\newblock \bibinfo{publisher}{North-Holland; Elsevier Science Publishers B.V.;
  Sole distributors for the U.S.A. and Canada, Elsevier Science Publishing
  Company}.
\newblock


\bibitem[Mallows(1957)]%
        {Mallows57:Non-null}
\bibfield{author}{\bibinfo{person}{Colin~L. Mallows}.}
  \bibinfo{year}{1957}\natexlab{}.
\newblock \showarticletitle{Non-null ranking model}.
\newblock \bibinfo{journal}{\emph{Biometrika}} \bibinfo{volume}{44},
  \bibinfo{number}{1/2} (\bibinfo{year}{1957}), \bibinfo{pages}{114--130}.
\newblock


\bibitem[Maus et~al\mbox{.}(2007)]%
        {Maus2007:Anonymous}
\bibfield{author}{\bibinfo{person}{Stefan Maus}, \bibinfo{person}{Hans Peters},
  {and} \bibinfo{person}{Ton Storcken}.} \bibinfo{year}{2007}\natexlab{}.
\newblock \showarticletitle{{Anonymous voting and minimal manipulability}}.
\newblock \bibinfo{journal}{\emph{Journal of Economic Theory}}
  \bibinfo{volume}{135} (\bibinfo{year}{2007}), \bibinfo{pages}{533--544}.
\newblock


\bibitem[McGarvey(1953)]%
        {McGarvey53:Theorem}
\bibfield{author}{\bibinfo{person}{David~C. McGarvey}.}
  \bibinfo{year}{1953}\natexlab{}.
\newblock \showarticletitle{A Theorem on the Construction of Voting Paradoxes}.
\newblock \bibinfo{journal}{\emph{Econometrica}} \bibinfo{volume}{21},
  \bibinfo{number}{4} (\bibinfo{year}{1953}), \bibinfo{pages}{608--610}.
\newblock


\bibitem[McLean and Hewitt(1994)]%
        {McLean1994:Condorcet}
\bibfield{editor}{\bibinfo{person}{Iain McLean} {and} \bibinfo{person}{Fiona
  Hewitt}} (Eds.). \bibinfo{year}{1994}\natexlab{}.
\newblock \bibinfo{booktitle}{\emph{{Condorcet: Foundations of Social Choice
  and Political Theory}}}.
\newblock \bibinfo{publisher}{Edward Elgar Publishing}.
\newblock


\bibitem[Mossel et~al\mbox{.}(2013)]%
        {Mossel13:Smooth}
\bibfield{author}{\bibinfo{person}{Elchanan Mossel}, \bibinfo{person}{Ariel~D.
  Procaccia}, {and} \bibinfo{person}{Miklos~Z. Racz}.}
  \bibinfo{year}{2013}\natexlab{}.
\newblock \showarticletitle{{A Smooth Transition From Powerlessness to Absolute
  Power}}.
\newblock \bibinfo{journal}{\emph{Journal of Artificial Intelligence Research}}
  \bibinfo{volume}{48}, \bibinfo{number}{1} (\bibinfo{year}{2013}),
  \bibinfo{pages}{923--951}.
\newblock


\bibitem[Mossel and Racz(2015)]%
        {Mossel2015:A-quantitative}
\bibfield{author}{\bibinfo{person}{Elchanan Mossel} {and}
  \bibinfo{person}{Miklos~Z. Racz}.} \bibinfo{year}{2015}\natexlab{}.
\newblock \showarticletitle{{A quantitative Gibbard-Satterthwaite theorem
  without neutrality}}.
\newblock \bibinfo{journal}{\emph{Combinatorica}} \bibinfo{volume}{35},
  \bibinfo{number}{3} (\bibinfo{year}{2015}), \bibinfo{pages}{317--387}.
\newblock


\bibitem[Nitzan(1985)]%
        {Nitzan1985:Vulnerability}
\bibfield{author}{\bibinfo{person}{Shmuel Nitzan}.}
  \bibinfo{year}{1985}\natexlab{}.
\newblock \showarticletitle{The vulnerability of point-voting schemes to
  preference variation and strategic manipulation}.
\newblock \bibinfo{journal}{\emph{Public Choice}}  \bibinfo{volume}{47}
  (\bibinfo{year}{1985}), \bibinfo{pages}{349--370}.
\newblock


\bibitem[Nurmi(1999)]%
        {Nurmi1999:Voting}
\bibfield{author}{\bibinfo{person}{Hannu Nurmi}.}
  \bibinfo{year}{1999}\natexlab{}.
\newblock \bibinfo{booktitle}{\emph{Voting Paradoxes and How to Deal with
  Them}}.
\newblock \bibinfo{publisher}{Springer-Verlag Berlin Heidelberg}.
\newblock


\bibitem[Nurmi(2012)]%
        {Nurmi2012:On-the-Relevance}
\bibfield{author}{\bibinfo{person}{Hannu Nurmi}.}
  \bibinfo{year}{2012}\natexlab{}.
\newblock \showarticletitle{{On the Relevance of Theoretical Results to Voting
  System Choice}}.
\newblock In \bibinfo{booktitle}{\emph{{Electoral Systems}}},
  \bibfield{editor}{\bibinfo{person}{Dan~S. Felsenthal} {and}
  \bibinfo{person}{Mosh{\'e} Machover}} (Eds.). \bibinfo{publisher}{Springer}.
\newblock


\bibitem[Pattanaik(1975)]%
        {Pattanaik1975:Strategic}
\bibfield{author}{\bibinfo{person}{Prasanta~K. Pattanaik}.}
  \bibinfo{year}{1975}\natexlab{}.
\newblock \showarticletitle{{Strategic Voting Without Collusion Under Binary
  and Democratic Group Decision Rules}}.
\newblock \bibinfo{journal}{\emph{The Review of Economic Studies}}
  \bibinfo{volume}{42}, \bibinfo{number}{1} (\bibinfo{year}{1975}),
  \bibinfo{pages}{93--103}.
\newblock


\bibitem[Pattanaik(1978)]%
        {Pattanaik1978:Strategy}
\bibfield{author}{\bibinfo{person}{Prasanta~K. Pattanaik}.}
  \bibinfo{year}{1978}\natexlab{}.
\newblock \bibinfo{booktitle}{\emph{{Strategy and group choice}}}.
\newblock \bibinfo{publisher}{Elsevier North-Holland}.
\newblock


\bibitem[Pazner and Wesley(1978)]%
        {Pazner1978:Cheatproofness}
\bibfield{author}{\bibinfo{person}{Elisha~A. Pazner} {and}
  \bibinfo{person}{Eugene Wesley}.} \bibinfo{year}{1978}\natexlab{}.
\newblock \showarticletitle{{Cheatproofness Properties of the Plurality Rule in
  Large Societies}}.
\newblock \bibinfo{journal}{\emph{The Review of Economic Studies}}
  \bibinfo{volume}{45}, \bibinfo{number}{1} (\bibinfo{year}{1978}),
  \bibinfo{pages}{85--91}.
\newblock


\bibitem[Peleg(1979)]%
        {Peleg1979:Note}
\bibfield{author}{\bibinfo{person}{Bezalel Peleg}.}
  \bibinfo{year}{1979}\natexlab{}.
\newblock \showarticletitle{A note on manipulability of large voting schemes}.
\newblock \bibinfo{journal}{\emph{Theory and Decision}}  \bibinfo{volume}{11}
  (\bibinfo{year}{1979}), \bibinfo{pages}{401--412}.
\newblock


\bibitem[Pritchard and Slinko(2006)]%
        {Pritchard2006:Average}
\bibfield{author}{\bibinfo{person}{Geoffrey Pritchard} {and}
  \bibinfo{person}{Arkadii Slinko}.} \bibinfo{year}{2006}\natexlab{}.
\newblock \showarticletitle{{On the Average Minimum Size of a Manipulating
  Coalition}}.
\newblock \bibinfo{journal}{\emph{Social Choice and Welfare}}
  \bibinfo{volume}{27}, \bibinfo{number}{2} (\bibinfo{year}{2006}),
  \bibinfo{pages}{263--277}.
\newblock


\bibitem[Pritchard and Wilson(2007)]%
        {Pritchar07:Exact}
\bibfield{author}{\bibinfo{person}{Geoffrey Pritchard} {and}
  \bibinfo{person}{Mark Wilson}.} \bibinfo{year}{2007}\natexlab{}.
\newblock \showarticletitle{Exact results on manipulability of positional
  voting rules}.
\newblock \bibinfo{journal}{\emph{Social Choice and Welfare}}
  \bibinfo{volume}{29}, \bibinfo{number}{3} (\bibinfo{year}{2007}),
  \bibinfo{pages}{487--513}.
\newblock


\bibitem[Pritchard and Wilson(2009)]%
        {Pritchard09:Asymptotics}
\bibfield{author}{\bibinfo{person}{Geoffrey Pritchard} {and}
  \bibinfo{person}{Mark~C. Wilson}.} \bibinfo{year}{2009}\natexlab{}.
\newblock \showarticletitle{Asymptotics of the minimum manipulating coalition
  size for positional voting rules under impartial culture behaviour}.
\newblock \bibinfo{journal}{\emph{Mathematical Social Sciences}}
  \bibinfo{volume}{1} (\bibinfo{year}{2009}), \bibinfo{pages}{35--57}.
\newblock


\bibitem[Procaccia and Rosenschein(2007)]%
        {Procaccia2007:Junta}
\bibfield{author}{\bibinfo{person}{Ariel~D. Procaccia} {and}
  \bibinfo{person}{Jeffrey~S. Rosenschein}.} \bibinfo{year}{2007}\natexlab{}.
\newblock \showarticletitle{{Junta Distributions and the Average-Case
  Complexity of Manipulating Elections}}.
\newblock \bibinfo{journal}{\emph{Journal of Artificial Intelligence Research}}
   \bibinfo{volume}{28} (\bibinfo{year}{2007}), \bibinfo{pages}{157--181}.
\newblock


\bibitem[Rai{\v c}(2019)]%
        {Raic2019:A-multivariate}
\bibfield{author}{\bibinfo{person}{Martin Rai{\v c}}.}
  \bibinfo{year}{2019}\natexlab{}.
\newblock \showarticletitle{{A multivariate Berry--Esseen theorem with explicit
  constants}}.
\newblock \bibinfo{journal}{\emph{Bernoulli}} \bibinfo{volume}{25},
  \bibinfo{number}{4A} (\bibinfo{year}{2019}), \bibinfo{pages}{2824--2853}.
\newblock


\bibitem[Riker and Ordeshook(1968)]%
        {Riker1968:A-Theory}
\bibfield{author}{\bibinfo{person}{William~H. Riker} {and}
  \bibinfo{person}{Peter~C. Ordeshook}.} \bibinfo{year}{1968}\natexlab{}.
\newblock \showarticletitle{{A Theory of the Calculus of Voting}}.
\newblock \bibinfo{journal}{\emph{The American Political Science Review}}
  \bibinfo{volume}{62}, \bibinfo{number}{1} (\bibinfo{year}{1968}),
  \bibinfo{pages}{25--42}.
\newblock


\bibitem[Saari(1990)]%
        {Saari1990:Susceptibility}
\bibfield{author}{\bibinfo{person}{Donald~G. Saari}.}
  \bibinfo{year}{1990}\natexlab{}.
\newblock \showarticletitle{{Susceptibility to manipulation}}.
\newblock \bibinfo{journal}{\emph{Pubfic Choice}}  \bibinfo{volume}{64}
  (\bibinfo{year}{1990}), \bibinfo{pages}{21--41}.
\newblock


\bibitem[Satterthwaite(1975)]%
        {Satterthwaite75:Strategy}
\bibfield{author}{\bibinfo{person}{Mark Satterthwaite}.}
  \bibinfo{year}{1975}\natexlab{}.
\newblock \showarticletitle{Strategy-proofness and {A}rrow's conditions:
  Existence and correspondence theorems for voting procedures and social
  welfare functions}.
\newblock \bibinfo{journal}{\emph{Journal of Economic Theory}}
  \bibinfo{volume}{10} (\bibinfo{year}{1975}), \bibinfo{pages}{187--217}.
\newblock


\bibitem[Schulze(2011)]%
        {Schulze11:New}
\bibfield{author}{\bibinfo{person}{Markus Schulze}.}
  \bibinfo{year}{2011}\natexlab{}.
\newblock \showarticletitle{A new monotonic, clone-independent, reversal
  symmetric, and condorcet-consistent single-winner election method}.
\newblock \bibinfo{journal}{\emph{Social Choice and Welfare}}
  \bibinfo{volume}{36}, \bibinfo{number}{2} (\bibinfo{year}{2011}),
  \bibinfo{pages}{267---303}.
\newblock


\bibitem[Slinko(2002a)]%
        {Slinko2002:Asymptotic}
\bibfield{author}{\bibinfo{person}{Arkadii Slinko}.}
  \bibinfo{year}{2002}\natexlab{a}.
\newblock \showarticletitle{{Asymptotic strategy-proofness of the plurality and
  the runoff rules}}.
\newblock \bibinfo{journal}{\emph{Social Choice and Welfare}}
  \bibinfo{volume}{19} (\bibinfo{year}{2002}), \bibinfo{pages}{313--324}.
\newblock


\bibitem[Slinko(2002b)]%
        {Slinko2002:On-Asymptotic}
\bibfield{author}{\bibinfo{person}{Arkadii Slinko}.}
  \bibinfo{year}{2002}\natexlab{b}.
\newblock \showarticletitle{{On Asymptotic Strategy-Proofness of Classical
  Social Choice Rules}}.
\newblock \bibinfo{journal}{\emph{Theory and Decision}} \bibinfo{volume}{52},
  \bibinfo{number}{4} (\bibinfo{year}{2002}), \bibinfo{pages}{389--398}.
\newblock


\bibitem[Slinko(2002c)]%
        {Slinko2002:The-Majoritarian}
\bibfield{author}{\bibinfo{person}{Arkadii Slinko}.}
  \bibinfo{year}{2002}\natexlab{c}.
\newblock \showarticletitle{{The Majoritarian Compromise in large societies}}.
\newblock \bibinfo{journal}{\emph{Review of Economic Design}}
  \bibinfo{volume}{7}, \bibinfo{number}{341--347} (\bibinfo{year}{2002}).
\newblock


\bibitem[Slinko(2004)]%
        {Slinko2004:How-large}
\bibfield{author}{\bibinfo{person}{Arkadii Slinko}.}
  \bibinfo{year}{2004}\natexlab{}.
\newblock \showarticletitle{{How large should a coalition be to manipulate an
  election?}}
\newblock \bibinfo{journal}{\emph{Mathematical Social Sciences}}
  \bibinfo{volume}{47} (\bibinfo{year}{2004}), \bibinfo{pages}{289--293}.
\newblock


\bibitem[Slinko(2006)]%
        {Slinko2006:How-the-size}
\bibfield{author}{\bibinfo{person}{Arkadii Slinko}.}
  \bibinfo{year}{2006}\natexlab{}.
\newblock \showarticletitle{{How the size of a coalition affects its chances to
  influence an election}}.
\newblock \bibinfo{journal}{\emph{Social Choice and Welfare}}
  \bibinfo{volume}{26}, \bibinfo{number}{1} (\bibinfo{year}{2006}),
  \bibinfo{pages}{143--153}.
\newblock


\bibitem[Smith(1999)]%
        {Smith1999:Manipulability}
\bibfield{author}{\bibinfo{person}{David~A. Smith}.}
  \bibinfo{year}{1999}\natexlab{}.
\newblock \showarticletitle{Manipulability measures of common social choice
  functions}.
\newblock \bibinfo{journal}{\emph{Social Choice and Welfare}}
  \bibinfo{volume}{16}, \bibinfo{number}{4} (\bibinfo{year}{1999}),
  \bibinfo{pages}{639--661}.
\newblock


\bibitem[Spielman and Teng(2004)]%
        {Spielman2004:Smoothed}
\bibfield{author}{\bibinfo{person}{Daniel~A. Spielman} {and}
  \bibinfo{person}{Shang-Hua Teng}.} \bibinfo{year}{2004}\natexlab{}.
\newblock \showarticletitle{{Smoothed analysis of algorithms: Why the simplex
  algorithm usually takes polynomial time}}.
\newblock \bibinfo{journal}{\emph{J. ACM}} \bibinfo{volume}{51},
  \bibinfo{number}{3} (\bibinfo{year}{2004}).
\newblock


\bibitem[Spielman and Teng(2009)]%
        {Spielman2009:Smoothed}
\bibfield{author}{\bibinfo{person}{Daniel~A. Spielman} {and}
  \bibinfo{person}{Shang-Hua Teng}.} \bibinfo{year}{2009}\natexlab{}.
\newblock \showarticletitle{{Smoothed Analysis: An Attempt to Explain the
  Behavior of Algorithms in Practice}}.
\newblock \bibinfo{journal}{\emph{Commun. ACM}} \bibinfo{volume}{52},
  \bibinfo{number}{10} (\bibinfo{year}{2009}), \bibinfo{pages}{76--84}.
\newblock


\bibitem[Thurstone(1927)]%
        {Thurstone27:Law}
\bibfield{author}{\bibinfo{person}{Louis~Leon Thurstone}.}
  \bibinfo{year}{1927}\natexlab{}.
\newblock \showarticletitle{A law of comparative judgement}.
\newblock \bibinfo{journal}{\emph{Psychological Review}} \bibinfo{volume}{34},
  \bibinfo{number}{4} (\bibinfo{year}{1927}), \bibinfo{pages}{273--286}.
\newblock


\bibitem[Valiant and Valiant(2011)]%
        {Valiant2011:Estimating}
\bibfield{author}{\bibinfo{person}{Gregory Valiant} {and} \bibinfo{person}{Paul
  Valiant}.} \bibinfo{year}{2011}\natexlab{}.
\newblock \showarticletitle{{Estimating the unseen: an n/log(n)-sample
  estimator for entropy and support size, shown optimal via new CLTs}}. In
  \bibinfo{booktitle}{\emph{Proceedings of STOC}}. \bibinfo{pages}{685--694}.
\newblock


\bibitem[Wilson and Pritchard(2007)]%
        {Wilson2007:Probability}
\bibfield{author}{\bibinfo{person}{Mark~C. Wilson} {and}
  \bibinfo{person}{Geoffrey Pritchard}.} \bibinfo{year}{2007}\natexlab{}.
\newblock \showarticletitle{{Probability calculations under the IAC
  hypothesis}}.
\newblock \bibinfo{journal}{\emph{Mathematical Social Sciences}}
  (\bibinfo{year}{2007}), \bibinfo{pages}{244--256}.
\newblock


\bibitem[Xia(2015)]%
        {Xia15:Generalized}
\bibfield{author}{\bibinfo{person}{Lirong Xia}.}
  \bibinfo{year}{2015}\natexlab{}.
\newblock \showarticletitle{{Generalized Decision Scoring Rules: Statistical,
  Computational, and Axiomatic Properties}}. In
  \bibinfo{booktitle}{\emph{Proceedings of the Sixteenth ACM Conference on
  Economics and Computation}}. \bibinfo{address}{Portland, Oregon, USA},
  \bibinfo{pages}{661--678}.
\newblock


\bibitem[Xia(2020)]%
        {Xia2020:The-Smoothed}
\bibfield{author}{\bibinfo{person}{Lirong Xia}.}
  \bibinfo{year}{2020}\natexlab{}.
\newblock \showarticletitle{{The Smoothed Possibility of Social Choice}}. In
  \bibinfo{booktitle}{\emph{Proceedings of NeurIPS}}.
\newblock


\bibitem[Xia(2021a)]%
        {Xia2021:How-Likely}
\bibfield{author}{\bibinfo{person}{Lirong Xia}.}
  \bibinfo{year}{2021}\natexlab{a}.
\newblock \showarticletitle{{How Likely Are Large Elections Tied?}}. In
  \bibinfo{booktitle}{\emph{Proceedings of ACM EC}}.
\newblock


\bibitem[Xia(2021b)]%
        {Xia2021:Semi-Random}
\bibfield{author}{\bibinfo{person}{Lirong Xia}.}
  \bibinfo{year}{2021}\natexlab{b}.
\newblock \showarticletitle{{The Semi-Random Satisfaction of Voting Axioms}}.
  In \bibinfo{booktitle}{\emph{Proceedings of NeurIPS}}.
\newblock


\bibitem[Xia(2023)]%
        {Xia2023:Semi-Random}
\bibfield{author}{\bibinfo{person}{Lirong Xia}.}
  \bibinfo{year}{2023}\natexlab{}.
\newblock \showarticletitle{{Semi-Random Impossibilities of Condorcet
  Criterion}}. In \bibinfo{booktitle}{\emph{Proceedings of AAAI}}.
\newblock


\bibitem[Xia and Conitzer(2008a)]%
        {Xia08:Generalized}
\bibfield{author}{\bibinfo{person}{Lirong Xia} {and} \bibinfo{person}{Vincent
  Conitzer}.} \bibinfo{year}{2008}\natexlab{a}.
\newblock \showarticletitle{Generalized Scoring Rules and the Frequency of
  Coalitional Manipulability}. In \bibinfo{booktitle}{\emph{Proceedings of the
  ACM Conference on Electronic Commerce}}. \bibinfo{pages}{109--118}.
\newblock


\bibitem[Xia and Conitzer(2008b)]%
        {Xia08:Sufficient}
\bibfield{author}{\bibinfo{person}{Lirong Xia} {and} \bibinfo{person}{Vincent
  Conitzer}.} \bibinfo{year}{2008}\natexlab{b}.
\newblock \showarticletitle{A Sufficient Condition for Voting Rules to Be
  Frequently Manipulable}. In \bibinfo{booktitle}{\emph{Proceedings of the ACM
  Conference on Electronic Commerce (EC)}}. \bibinfo{address}{Chicago, IL,
  USA}, \bibinfo{pages}{99--108}.
\newblock


\bibitem[Xia and Conitzer(2009)]%
        {Xia09:Finite}
\bibfield{author}{\bibinfo{person}{Lirong Xia} {and} \bibinfo{person}{Vincent
  Conitzer}.} \bibinfo{year}{2009}\natexlab{}.
\newblock \showarticletitle{Finite Local Consistency Characterizes Generalized
  Scoring Rules}. In \bibinfo{booktitle}{\emph{Proceedings of the Twenty-First
  International Joint Conference on Artificial Intelligence}}.
  \bibinfo{pages}{336--341}.
\newblock


\end{thebibliography}

\newpage
\tableofcontents

\newpage
\appendix
%\section*{Appendix}
%\section{Related Work with More Details}
%\label{app:related-work}
%

\section{Additional Preliminaries}
 A {\em fractional} profile is a   profile $P$ coupled with a possibly non-integer and/or negative weight vector $\vec \omega_P=(\omega_R:R\in P)\in{\mathbb R}^{n}$ for the votes in $P$. %It follows that a non-fractional profile is a fractional profile with uniform weight, namely $\vec \omega_P = \vec 1$.  
Sometimes  the weight vector is omitted when it is clear from the context.

The {\em min-semi-random likelihood} is defined as:
%\begin{equation}
%\label{dfn:min-s-sat}
%\satmin{X}{\Pi,0}(r,n,B) \triangleq \inf\nolimits_{\vec\pi\in\Pi^n}\Pr\nolimits_{P\sim\vec\pi}\sat{X}(r,P,B)
%\end{equation}

\begin{equation}
\label{dfn:min-s-sat}
\satmin{X}{\Pi,\psi}(r,n,B) \triangleq  \underbrace{\inf\nolimits_{\vec\pi\in\Pi^n}}_{\begin{subarray}{l} \text{distribution}\\ \text{adversary}\end{subarray}}\underbrace{\Pr\nolimits_{P\sim\vec\pi}}_{\begin{subarray}{l} \text{preliminary}\\ \text{profile}\end{subarray}}(\underbrace{\exists P'\overset\psi\approx P}_{\begin{subarray}{l} \text{contamination}\\\text{adversary}\end{subarray}}\text{ s.t. }\underbrace{\sat{X}(r,P,B)=1}_{\text{influential}}),
\end{equation}

\subsection{Other Commonly-studied Voting Rules} 
\label{app:more-rule}
\vspace{2mm}\noindent{\bf Weighted Majority Graphs.}  For any (fractional) profile $P$ and any pair of alternatives $a,b$, let $ P[a\succ b]$ denote the total weight of votes in $P$ where $a$ is preferred to $b$. Let $\wmg(P)$ denote the {\em weighted majority graph} of $P$, whose vertices are $\ma$ and whose weight on edge $a\ra b$ is $w_P(a,b) = P[a\succ b] - P[b\succ a]$. Sometimes a distribution $\pi$ over $\ml(\ma)$ is viewed as a fractional profile, where for each $R\in\ml(\ma)$ the weight on $R$ is $\pi(R)$. In this case we let $\wmg(\pi)$ denote the weighted majority graph of the {fractional} profile represented by $\pi$. 
%The {\em unweighted majority graph (UMG)} of a profile $P$, denoted by $\umg(P)$, is the unweighted directed graph where the vertices are the alternatives and there is an edge $a\ra b$ if and only if $P[a\succ b] > P[b\succ a]$. If $a$ and $b$ are tied, then there is no edge between $a$ and $b$. $\umg(\pi)$ is defined similarly. A {\em Condorcet cycle}  of a profile $P$ is a cycle in $\umg(P)$. A {\em weak Condorcet cycle} of a profile $P$ is a cycle in a supergraph of $\umg(P)$.

A voting rule is said to be {\em weighted-majority-graph-based (WMG-based)} if its winners only depend on the WMG of the input profile. In this paper we consider the following commonly-studied WMG-based rules.
\begin{itemize}
\item {\bf Copeland.} The Copeland rule is parameterized by a number $0\le \alpha\le 1$, and is therefore denoted by Copeland$_\alpha$, or $\copeland$ for short. For any fractional profile $P$, an alternative $a$ gets $1$ point for each other alternative it beats in their head-to-head competition, and gets $\alpha$ points for each tie. Copeland$_\alpha$ chooses all alternatives with the highest total score as the winners. 
\item  {\bf Maximin.} For each alternative $a$, its min-score  is defined to be $\min_{b\in\ma}w_P(a,b)$. Maximin, denoted by $\maximin$, chooses all alternatives with the max min-score as the winners.
\item {\bf Ranked pairs.} Given a profile $P$, an alternative $a$ is a winner under ranked pairs (denoted by $\rp$) if there exists a way to fix edges in $\wmg(P)$ one by one in a non-increasing order w.r.t.~their weights (and sometimes break ties), unless it creates a cycle with previously fixed edges, so that after all edges are considered, $a$ has no incoming edge. Ties between edges are broken lexicographically. For example, if  $1\ra 2$ and $2\ra 3$ have the same weight, then $1\ra 2$ is chosen first. If $1\ra 2$ and $1\ra 3$ have the same weight, then $1\ra 2$ is chosen first.
%\footnote{Despite its name, it indeed defines an irresolute voting rule rather than a tie-breaking mechanism.} 
\item {\bf Schulze.} For any directed path in the WMG, its strength is defined to be the minimum weight on any single edge along the path. For any pair of alternatives $a,b$, let $s[a,b]$ be the highest weight among all paths from $a$ to $b$. Then, we write $a\succeq b$ if and only if $s[a,b]\ge s[b,a]$, and~\citep{Schulze11:New} proved that the strict version of this binary relation, denoted by $\succ$, is transitive. The Schulze rule, denoted by $\schulze$, chooses all alternatives $a$ such that for all other alternatives $b$, we have $a\succeq b$. 
\end{itemize}

\noindent{\bf STV with lexicographic tie-breaking mechanism.} The (single-winner) STV with lexicographic tie-breaking chooses winners in $m-1$ rounds. In each round, the {\em loser} of plurality under lexicographic tie-breaking is removed from the election.  We note that this rule is different from first computing STV winners under {\em parallel universe tie-breaking}~\citep{Conitzer09:Preference} and then breaking ties among the co-winners.

\vspace{2mm}\noindent{\bf Plurality with runoff.}  The plurality with runoff rule with lexicographic tie-breaking, denoted by $\plurunoff$, chooses the winner in two rounds. In the first round, the two alternatives with highest plurality scores are chosen (ties are broken lexicographically), and all other alternatives are removed. In the second round, the majority rule with lexicographic tie-breaking is applied to choose the winner. 

%\vspace{2mm}\noindent{\bf Bucklin (also known as the Grand Junction system).} Given a profile $P$ and an alternative $a$, we  defined $a$'s {\em Bucklin score} in $P$, denoted by $\bucklinscore(P,a)$ to be the smallest $i$ such that $a$ is ranked among top-$i$ positions in more than half of the votes. The {\em simplified Bucklin rule} with lexicographic tie-breaking, denoted by  $\bucklin$, choose the alternative with the minimum Bucklin score (and break ties lexicographically when multiple alternatives have the same minimum Bucklin score).

\begin{prop}
\label{prop:rvs-GSR}
Any representable voting system with lexicographic tie-breaking is a GSR. 
\end{prop}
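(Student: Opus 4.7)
The plan is to encode each pairwise score difference as a single hyperplane in the GSR sense, and then let the function $g$ read off the lexicographically-smallest co-winner from the resulting sign vector. Recall that in a representable voting system there exists a function $f: \ml(\ma) \times \ma \to \mathbb R$ such that the co-winners on a profile $P$ are exactly the alternatives maximizing $\score(a, P) = \sum_{R\in P} f(R,a)$, and this score depends linearly on the histogram via $\score(a,P) = \sum_{R\in\ml(\ma)} [\hist(P)]_R \cdot f(R,a)$.

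For each ordered pair $(a,b)$ of distinct alternatives, I would define a hyperplane $\vec h_{a,b} \in \mathbb R^{m!}$ whose coordinate indexed by $R$ is $f(R,a) - f(R,b)$, so that $\vec h_{a,b}\cdot \hist(P) = \score(a,P) - \score(b,P)$ and $\sign(\vec h_{a,b}\cdot \hist(P)) \in \{+,-,0\}$ records whether $a$ strictly outscores, loses to, or ties $b$. Letting $\vH$ be the collection of all $K = m(m-1)$ such hyperplanes, the vector $\signH(\hist(P))$ encodes the complete pattern of pairwise score comparisons. I then define $g:\{+,-,0\}^K \to \ma$ by setting $C(\vec\sigma) = \{a\in\ma : \sigma_{a,b}\in\{+,0\}\text{ for every }b\ne a\}$ and returning the lexicographically-smallest element of $C(\vec\sigma)$ when this set is nonempty; on sign patterns that are not realized by any profile, $g$ is set arbitrarily.

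It remains to verify that for every profile $P$ the set $C(\signH(\hist(P)))$ coincides with the set of score-maximizing alternatives, so that $g(\signH(\hist(P)))$ returns the lexicographically-smallest co-winner and hence matches the representable rule with lexicographic tie-breaking. This is immediate: an alternative $a$ weakly outscores every other alternative if and only if $a$ maximizes $\score(\cdot,P)$, and lexicographic tie-breaking depends only on the identity of the tied alternatives, which is exactly the information retained in $\signH$. I do not anticipate any real obstacle: the construction is purely linear-algebraic in $\mathbb R^{m!}$ and produces a GSR in the sense of Definition~\ref{dfn:GISR} (though it need not be an int-GSR unless the values of $f$ are rational and can be cleared to integers).
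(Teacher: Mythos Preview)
Your proposal is correct and follows essentially the same approach as the paper: define one hyperplane per pair of alternatives encoding their score difference, and let $g$ read off the lexicographically smallest score-maximizer from the sign vector. The only cosmetic difference is that the paper takes $K=\binom{m}{2}$ unordered pairs whereas you take $K=m(m-1)$ ordered pairs, which is redundant (since $\vec h_{b,a}=-\vec h_{a,b}$) but harmless.
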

\begin{proof}
The proof is similar to the proof that shows positional scoring rules with lexicographic tie-breaking are GSRs illustrated in Example~\ref{ex:borda-GSR}. Formally, we define the following score difference vector that is similar to the score difference vector defined for positional scoring rules~\citep{Xia2020:The-Smoothed}.

\begin{dfn}[\bf Score difference vector for representable voting system]\label{dfn:score-diff-rvs} 
For any scoring function $s:\ml(\ma)\times\ma\ra\mathbb R_{\ge 0}$ and any pair of different alternatives $a,b$, let $\score_{a,b}^{s}$ denote the $m!$-dimensional vector indexed by rankings in $\ml(\ma)$: for any $R\in\ml(\ma)$, the $R$-element of $\score_{a,b}^{s}$ is $s(R,a)-s(R,b)$.
\end{dfn}
Let $K={m\choose 2}$ and the hyperplanes are score difference vectors  $\{\score_{a,b}^{s}:a\in\ma, b\in\ma, a\ne b\}$. For any profile $P$, $\signH(\hist(P))$ contains information about the comparisons of total scores of all pairs of alternatives, from which $g$ chooses a winner and applies the tie-breaking mechanism when needed.
\end{proof}

\subsection{Other Commonly-studied Coalitional Influence Problems} 
\label{app:more-CI}
In the {\em constructive control by adding votes (CCAV)} (respectively, {\em destructive control by adding votes (DCAV)}) problem, we are given a distinguished alternative $d$, and we let $\ccav_{d}(r,P,B)=1$ (respectively, $\dcav_{d}(r,P,B)=1$), if  there exists a preference profile $P^*$ with $|P^*|\le B$ such that $r(P+P^*)=\{d\}$ (respectively, $r(P+P^*)\ne \{d\}$).  

In the {\em constructive control by deleting votes (CCDV)} (respectively, {\em destructive control by deleting votes (DCDV)}) problem, we are given a distinguished alternative $d$, and we let $\ccdv_{d}(r,P,B)=1$ (respectively, $\dcdv_{d}(r,P,B)=1$), if  there exists $P'\subseteq P$ with $|P'|\le B$ such that $r(P-P^*)=\{d\}$ (respectively, $r(P-P^*)\ne \{d\}$). %Multimode control attacks~\citep{Faliszewski11:Multimode} uses different types of control strategies. 

For convenience, we let $\control$ denote the control problems introduced above, and let $\econtrol$ denote their effective variants, formally defined as follows.  
 \begin{dfn}[\bf Effective control problems]
 \label{dfn:control-problems}
Define
$$\econtrol = \{ \eccav,\eccdv,\edcav,\edcdv\}\text{ and }$$
$$\control =\{\ccav,\ccdv,\dcav,\dcdv\} $$
\end{dfn}

\section{Materials for Section~\ref{sec:model}}

\subsection{Full Version of the PMV-instability Problem}
\label{sec:PMV-instability-dfn-full}

\appDfn{\bf\boldmath  PMV-instability problem, full version}
{dfn:PMV-instability-problem}
{
Given an instability setting $\vosetting{}=\langle\sourcepoly, \targetpoly, \voset{}, \vec c\,\rangle$,  a set $\Pi$ of distributions over $[q]$, $n\in\mathbb N$, and $B\ge 0$, we are asked to bound
\begin{align*}
&\text{\bf  max-semi-random   instability: }\sup\nolimits_{\vec\pi\in\Pi^n}\Pr\left(\vXp\in  \nb{\csus{n,B}}{\psi}\right) \text{, and}\\
&\text{\bf  min-semi-random  instability: }\inf\nolimits_{\vec\pi\in\Pi^n}\Pr\left(\vXp\in  \nb{\csus{n,B}}{\psi}\right)
\end{align*}
}
\begin{dfn}[\bf multi-instability setting]
\label{dfn:multi-instability-setting} A {\em multi-instability setting}, denoted by $\calM = \{\vosetting^i:  i\le I\}$, is  a set of $I\in\mathbb N$ instability settings, where $\vosetting^i = \langle\sourcepoly^i,\targetpoly^i,\voset{}^i,\vec c^{\,i}\,\rangle$, whose unstable histograms  are denoted by $\csus{n,B}^{i}$. Let $\csus{n,B}^{\calM} = \bigcup_{i\le I}\csus{n,B}^{i}$.  
\end{dfn}

\begin{dfn}[\bf\boldmath The multi-instability problem]
\label{dfn:multi-instability-problem} 
Given a  multi-instability setting  $\calM = \{\vosetting{}^i: i\le I\}$,  a set $\Pi$ of distributions over $[q]$, $n\in\mathbb N$, and $B\ge 0$, we are asked to bound
\begin{align*}
&\text{\bf  max-semi-random  multi-instability: }\sup\nolimits_{\vec\pi\in\Pi^n}\Pr\left(\vXp \in\csus{n,B}^\calM\right) \text{, and}\\
&\text{\bf  min-semi-random multi-instability: }\inf\nolimits_{\vec\pi\in\Pi^n}\Pr\left(\vXp \in\csus{n,B}^\calM\right)
\end{align*} 
\end{dfn}
Like  PMV-instability problems, the max-(respectively, min-) semi-random  multi-instability represents the upper bound (respectively, lower bound) on the likelihood for the PMV to be unstable w.r.t.~any $\vosetting{}^i$ in $\calM$, when the underlying probabilities $\vec\pi$ is adversarially chosen from $\Pi^n$.

\subsection{Full Version of Lemma~\ref{lem:CI-GSR} and Its Proof}
\label{app:proof-prop-CI-GSR}
\appLem{\bf Coalitional Influence  as multi-instability, Full Version}
{lem:CI-GSR}{
For any coalitional influence problem $X\in\{\cm,\mov, \cb_{d,\vec c},\db_{d,\vec c}, \ecb_{d,\vec c},\edb_{d,\vec c}\}$ and any GSR $r$, there exist a set  $\calM = \{\vosetting^i:  i\le I\}$  of $I$ instability settings such that for every $n$-profile $P$ and every $B\ge 0$, $\sat{X}(r,P,B) =1$ if and only if $\hist(P)\in \csus{n,B}^\calM$.
}

\begin{proof} We first recall some formal notation about GSR. For any real number $x$,  let $\sign(x)\in\{+,-,0\}$ denote the sign of $x$. Given a set of $K$ hyperplanes in the $q$-dimensional Euclidean space, denoted by $\vH = (\vec h_1,\ldots,\vec h_K)$, for any $\vec x \in {\mathbb R}^q$, we let $\sign_\vH(\vec x) = (\sign(\vec x\cdot \vec h_1),\ldots, \sign(\vec x\cdot \vec h_K))$. In other words, for any $k\le K$, the $k$-th component of $\sign_\vH(\vec x)$ equals to $0$, if $\vec p$ lies in hyperplane $\vec h_k$; and it equals to $+$ (respectively, $-$) if  $\vec p$ lies in the positive (respectively, negative) side of  $\vec h_k$.
Each element in $\{+,-,0\}^K$ is called a {\em signature}.

\begin{dfn} [\bf Feasible signatures]
Given integer $\vH$ with $K = |\vH|$, let $\sk =  \{+,-,0\}^K$. A signature $\vec t\in\sk$ is {\em  feasible}, if there exists $\vec x\in \mathbb R^{m!}$  such that $\sign_\vH(\vec x) = \vec t$. %Any such $\vec x$ is called a {\em witness} of $\vec t$. 
Let $\fs \subseteq \sk$ denote the set of all feasible signatures.
%A signature $\vec t$ is called an {\em atomic signature} if and only if $\vec t\in \{+,-\}^K$. %Let $\fsatomic$ denote the set of all feasible atomic signatures.
\end{dfn}
The domain of any GISR $\cor$ can be naturally extended to $\mathbb R^{m!}$ and to $\fs$. Specifically, for any $\vec t\in\fs$ we let $\cor(\vec t) = g(\vec t)$. It suffices to define $g$ on the feasible signatures, i.e., $\fs$.

See~\citep[Section D.2]{Xia2021:Semi-Random} for the GSR representations of some commonly-studied voting rules, especially the rules defined in Appendix~\ref{app:more-rule}. 

Next, given $\vH$ and a feasible signature $\vec t$,  we recall from~\citep[Section D.4]{Xia2021:Semi-Random} the definition of $\ppoly{\vH,\vec t}$ that represents profiles whose signatures are $\vec t$. 
\begin{dfn}[\bf\boldmath $\ppoly{\vH,\vec t}$  ($\ppoly{\vec t}$ in short)]
\label{dfn:poly-H-t}
For any $\vH = (\vec h_1,\ldots,\vec h_K)\in (\mathbb R^{d})^K$ and any $\vec t\in \fs$, we let  $\pba{\vec t}=\left[\begin{array}{c}\pba{\vec t}_{+}\\ \pba{\vec t}_{-}\\ \pba{\vec t}_{0} \end{array}\right]$, where 
\begin{itemize}
\item $\pba{\vec t}_{+}$ consists of a row $-\vec h_i$ for each $i\le K$ with $t_i = +$.
\item $\pba{\vec t}_{-}$ consists of a row $\vec h_i$ for each $i\le K$ with $t_i = -$.
\item  $\pba{\vec t}_{0}$ consists of two rows $-\vec h_i$ and $\vec h_i$ for each $i\le K$ with $t_i = 0$.
\end{itemize}
Let $\pvbb{\vec t} = [\underbrace{-\vec 1}_{\text{for }\pba{\vec t}_{+}},\underbrace{-\vec 1}_{\text{for }\pba{\vec t}_{-}},\underbrace{\vec 0}_{\text{for }\pba{\vec t}_{0}}]$.  The corresponding polyhedron is denoted by $\ppoly{\vH,\vec t}$, or $\ppoly{\vec t}$  in short when $\vH$ is clear from the context. 
\end{dfn}

Then, we formally define some  vote operations that will be used in the proof.
\begin{dfn}  We define four vote operations as follows.
\begin{itemize}
\item {\bf Vote change}: $\vosc  = \{\hist(R_2)-\hist(R_1):R_1,R_2\in \ml(\ma) \}$.
\item {\bf Motivated vote change}: for any pair of different alternatives $a,b$, let 
$$\voset{\pm}^{a\ra b}  = \{\hist(R_b)-\hist(R_a):R_b,R_a\in \ml(\ma) \text{ and }b\succ_{R_a} a\}$$
\item {\bf Generalized vote change}: for any pair of different alternatives $a,b$, let 
$$\voset{\pm}^*  = \{\hist(R_1), \hist(R_2), \hist(R_2)-\hist(R_1):R_1,R_2\in \ml(\ma)  \}$$
%\item {\bf Add votes}:   $\vosa   = \{\hist(R):R\in \ml(\ma)\}$.
%\item {\bf Delete votes}:  $\vosd = \{-\hist(R):R \in \ml(\ma)\}$.
\end{itemize}
\end{dfn}
We are now ready to define the instability settings whose union models the coalitional influence problems described in the statement of the proposition.
\begin{itemize}
\item {\bf \boldmath $X= \cm$.} For every pair of different alternatives $a,b$, and every pair of feasible signatures $\vec t_a,\vec t_b$ such that $r(\vec t_a) = \{a\}$ and $r(\vec t_b) = \{b\}$, $\calM$ contains
$$\langle\ppoly{\vec t_a}, \ppoly{\vec t_b}, \voset{\pm}^{a\ra b}, \vec 1\,\rangle$$

\item {\bf \boldmath $X= \mov$.} For every pair of different alternatives $a,b$, and every pair of feasible signatures $\vec t_a,\vec t_b$ such that $r(\vec t_a) = \{a\}$ and $r(\vec t_b) = \{b\}$, $\calM$ contains
$$\langle\ppoly{\vec t_a}, \ppoly{\vec t_b}, \voset{\pm}, \vec 1\,\rangle$$

\item {\bf \boldmath $X= \cb_{d,\vec c}$.} For every  every pair of feasible signatures $ \vec t,\vec t_a$ such that $r(\vec t_a) = \{a\}$ , $\calM$ contains
$$\langle\ppoly{\vec t}, \ppoly{\vec t_a}, \voset{\pm}^*, \vec c\,\rangle$$

\item {\bf \boldmath $X= \db_{d,\vec c}$.} For every  alternative $b\ne a$, and every pair of feasible signatures $\vec t,\vec t_b$ such that $r(\vec t_b) = \{b\}$, $\calM$ contains
$$\langle\ppoly{\vec t}, \ppoly{\vec t_b}, \voset{\pm}^*, \vec c\,\rangle$$

\item {\bf \boldmath $X= \ecb_{d,\vec c}$.} For every  every pair of feasible signatures $ \vec t,\vec t_a$ such that $r(\vec t)\ne \{a\}$ and $r(\vec t_a) = \{a\}$ , $\calM$ contains
$$\langle\ppoly{\vec t}, \ppoly{\vec t_a}, \voset{\pm}^*, \vec c\,\rangle$$

\item {\bf \boldmath $X= \edb_{d,\vec c}$.} For every  alternative $b\ne a$, and every pair of feasible signatures $\vec t_a,\vec t_b$ such that $r(\vec t_a) = \{a\}$ and $r(\vec t_b) = \{b\}$, $\calM$ contains
$$\langle\ppoly{\vec t_a}, \ppoly{\vec t_b}, \voset{\pm}^*, \vec c\,\rangle$$
 \end{itemize}

\end{proof}

\section{Materials for Section~\ref{sec:PMV-instability}}
\label{app:PMV-instability}

\subsection{Definitions and Conditions for Min-Semi-Random Likelihood}

For any set $\Pi^*\subseteq {\mathbb R}^q$, we define $B^-_{\Pi^*}\in\mathbb R$ to be the minimum budget $B $ such that $\Pi^*$ is completely contained in   $\sus{B }$. If no such $B $ exists, then we let $B^-_{\Pi^*} \triangleq \infty$. Formally,
\begin{equation}
\label{eq:B-min-minus}
B^-_{\Pi^*} \triangleq \inf \{B \ge 0: \Pi^*\subseteq \sus{B }\}
\end{equation}
For example, $B^-_{\conv(\Pi)}=\infty$ in Figure~\ref{fig:illustration-PMV} (b), because no matter how large $B$ is, $\sus{B}\subseteq \sourcepolyz$, and $\sourcepolyz$ does not contain all vectors in $\conv(\Pi)$.

Next, we define  notation and conditions used in the statement of the  theorem.

\appDfn{Full version}{dfn:conditions}{
Given an instability setting, $\Pi$, $B$, and $n$, define
$$d_0 = \dim(\sus{0}), d_{\infty} =  \dim(\sus{\infty})\text{, and } d_{\Delta}  = d_{\infty}  - d_0,$$
where $\dim(\sus{0})$ is the {\em dimension} of $\sus{0}$, which is the dimension of the minimal affine space that contains $\sus{0}$. We also define the following five conditions:
\begin{center}
\begin{tabular}{|c|c|c|c|c|}
\hline $\condition{1}$ & $\condition{2}$ & $\condition{3}$ & $\condition{4}$ & $\condition{5}$ \\
\hline $\csus{n,B}=\emptyset$ & $\conv(\Pi)\cap\sus{\infty}=\emptyset$ & $\conv(\Pi)\cap\sus{0}=\emptyset$ &  $\conv(\Pi)\subseteq \sus{\infty}$ & $\conv(\Pi)\subseteq\sus{0}$\\
\hline
\end{tabular}
\end{center}
%\begin{itemize}
%\item $\condition{1}\triangleq \left [\csus{n,B}=\emptyset\right]$.
%\item $\condition{2}\triangleq \left [\conv(\Pi)\cap\sus{\infty}=\emptyset\right]$.
%\item $\condition{3}\triangleq \left [\conv(\Pi)\cap\sus{0}=\emptyset\right]$.
%\item $\condition{4}\triangleq \left [\conv(\Pi)\subseteq \sus{\infty} \right]$.
%\item $\condition{5}\triangleq \left [\conv(\Pi)\subseteq\sus{0}\right]$.
%\end{itemize}
}
Recall that $\conv(\Pi)$ is the convex hull of $\Pi$. Because $\sus{0}\subseteq \sus{\infty}$, $d_\Delta\ge 0$. Also notice that $\condition{2}$ implies $\condition{3}$, or equivalently, $\neg\condition{3}$ implies $\neg\condition{2}$. Similarly, $\condition{4}$ implies $\condition{5}$, or equivalently, $\neg\condition{5}$ implies $\neg\condition{4}$.

\subsection{Properties of $B_{\Pi^*}$ and $B^-_{\Pi^*}$}

\begin{claim}
\label{claim:B-min}
For any convex and compact set $\Pi^*$, if $B_{\Pi^*}\ne\infty$ then $\Pi^*\cap \sus{B_{\Pi^*}}\ne\emptyset$. 
\end{claim}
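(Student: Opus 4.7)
The statement is a standard closedness/attainment claim: the infimum defining $B_{\Pi^*}$ is achieved. The plan is to extract the witness by a compactness argument, using (i) compactness of $\Pi^*$, (ii) closedness of the polyhedra $\sourcepolyz$ and $\targetpolyz$, and (iii) the paper's normalization $\min_i [\vec c]_i = 1$ to control the vote-operation vectors.

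Concretely, since $B_{\Pi^*} < \infty$, pick a decreasing sequence $B_k \downarrow B_{\Pi^*}$ with $\Pi^*\cap\sus{B_k}\ne\emptyset$ and choose $\vec x_k \in \Pi^*\cap\sus{B_k}$. Compactness of $\Pi^*$ gives a subsequence with $\vec x_k \to \vec x^*\in\Pi^*$. For each $k$, unfolding the definition \eqref{eq:CB} of $\sus{B_k}$, choose $\vo_k\in\mathbb R_{\ge 0}^{|\voset{}|}$ with $\vec c\cdot \vo_k \le B_k$ and $\vec x_k + \vo_k\times\vomatrix{}\in\targetpolyz$. Because every component of $\vec c$ is at least $1$, the bound $\vec c\cdot \vo_k\le B_k\le B_1$ forces $\|\vo_k\|_\infty \le B_1$, so $\{\vo_k\}$ lies in a compact subset of $\mathbb R_{\ge 0}^{|\voset{}|}$ and, passing to a further subsequence, $\vo_k\to\vo^*\ge\vec 0$.

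It remains to verify $\vec x^*\in\sus{B_{\Pi^*}}$. Since $\sourcepolyz$ is closed and each $\vec x_k\in\sourcepolyz$, the limit satisfies $\vec x^*\in\sourcepolyz$. Since $\targetpolyz$ is closed and $\vec x_k + \vo_k\times\vomatrix{}\in\targetpolyz$ for every $k$, continuity of the map $(\vec x,\vo)\mapsto \vec x+\vo\times\vomatrix{}$ yields $\vec x^* + \vo^*\times\vomatrix{}\in\targetpolyz$. Finally $\vec c\cdot\vo^* = \lim_k \vec c\cdot\vo_k \le \lim_k B_k = B_{\Pi^*}$, so $\vo^*$ witnesses $\vec x^*\in \sus{B_{\Pi^*}}$, and hence $\vec x^*\in \Pi^*\cap\sus{B_{\Pi^*}}$ as desired.

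The argument has no serious obstacle; the only point requiring care is showing the auxiliary sequence $\{\vo_k\}$ is bounded, and this is precisely where the normalization $\min_i[\vec c]_i=1$ (which rules out ``free'' vote operations that could push $\vo_k$ to infinity while keeping $\vec c\cdot\vo_k$ bounded) plays its role. Note that only compactness of $\Pi^*$ is used here; convexity is not needed for this particular claim.
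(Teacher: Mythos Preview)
Your proof is correct and follows essentially the same approach as the paper: extract a convergent subsequence in $\Pi^*$ by compactness, then bound the operation vectors $\vo_k$ via $\vec c\cdot\vo_k\le B_1$ (using $\vec c>\vec 0$) to extract a further convergent subsequence, and pass to the limit using closedness of $\sourcepolyz$ and $\targetpolyz$. Your observation that convexity of $\Pi^*$ is not actually used is also accurate.
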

\begin{proof}
Let $\{B_j:j\in \mathbb N\}$ denote a  sequence that converges to $B_{\Pi^*}$, such that for all $j\in \mathbb N$, $\Pi^*\cap \sus{B_j}\ne\emptyset$. For any $j\in\mathbb N$, let $\vec y_j\in \Pi^*\cap \sus{B_j}$ denote an arbitrary vector. Because $\Pi^*$ is compact, a subsequence of $\{\vec y_j:j\in \mathbb N\}$, denoted by $\{\vec y_{j_i}:i\in \mathbb N\}$ converges to a vector $\vec y^*\in\Pi^*$. Notice that $\vec y_{j_i}$ is in $\Pi^*$, $\sourcepolyz$, and both are closed sets. Therefore, $\vec y^*\in  \Pi^*\cap \sourcepolyz$. 

Next, we prove that $\vec y^*\in \sus{B_{\Pi^*}}$. For every $i\in\mathbb N$, let $\vo_{j_i}\in {\mathbb R}_{\ge 0}^{|\voset{}|}$ denote the operation vector such that $ \vec c\cdot \vo_{j_i}\le B_{j_i}$ and $\vec y_{j_i} + \vo_{j_i}\times \vomatrix{}\in \targetpolyz$. Let $\vec x_{j_i} = \vec y_{j_i} + \vo_{j_i}\times \vomatrix{}$. Because $\vo_{j_i}$'s are bounded ($ \vec c\cdot \vo_{j_i}\le B_{1}$), there exists a subsequence $\{j_1':i\in\mathbb N\}$ of $\{j_1:i\in\mathbb N\}$ such that $\vo_{j_i'}$ converges to a vector $\vo^*$. It is not hard to verify that $\vec c\cdot\vo^*\le B_{\Pi^*}$ and $\{\vec x_{j_i'} =  \vec y_{j_i'}+\vo_{j_i'}\times \vomatrix{}:i\in\mathbb N\}$ converges to $\vec y^*+\vo^*\times \vomatrix{}$. Because for all $i\in\mathbb N$, $ \vec x_{j_i'}\in \targetpolyz$ and $\targetpolyz$ is closed, we have $\vec y^*+\vo^*\times \vomatrix{}\in \targetpolyz$  as well. This proves that $\vec y^*\in \sus{B_{\Pi^*}}$, which completes the proof of Claim~\ref{claim:B-min}. 
\end{proof}

\begin{claim}
\label{claim:B-min-minus}
For any bounded set $\Pi^*$, if  $\Pi^*\subseteq \sus{\infty}$ then $B^-_{\Pi^*}\ne\infty$.
\end{claim}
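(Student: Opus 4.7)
The plan is to show that for every bounded $\Pi^*\subseteq\sus{\infty}$ one can construct, for each $\vec x\in\Pi^*$, a non-negative operation vector $\vo(\vec x)$ which moves $\vec x$ into $\targetpolyz$ and whose cost $\vec c\cdot\vo(\vec x)$ is bounded by a linear function of $\|\vec x\|$. Setting $B^{*}=\sup_{\vec x\in\Pi^*}\vec c\cdot\vo(\vec x)$ then yields $B^{*}<\infty$ and $\Pi^*\subseteq\sus{B^{*}}$, so $B^{-}_{\Pi^*}\le B^{*}<\infty$.

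The construction will exploit the fact that $\sus{\infty}=\sourcepolyz\cap(\targetpolyz+\calQ_{\infty})$ is an intersection of polyhedral cones and is therefore itself a polyhedral cone. By the Minkowski--Weyl theorem I can fix finitely many generators $\vec g_1,\ldots,\vec g_N\in\sus{\infty}$, including a $\pm$-pair for a basis of any lineality space, such that every $\vec x\in\sus{\infty}$ admits a conic representation $\vec x=\sum_i\alpha_i\vec g_i$ with $\alpha_i\ge 0$. For each generator I fix once and for all an $\vo_i\ge\vec 0$ with $\vec g_i+\vo_i\times\vomatrix{}\in\targetpolyz$, which exists because $\vec g_i\in\sus{\infty}$. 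Given any such conic representation of $\vec x$, I set $\vo(\vec x)=\sum_i\alpha_i\vo_i\ge\vec 0$. Since $\targetpolyz$ is a convex cone,
\[
\vec x+\vo(\vec x)\times\vomatrix{}=\sum_i\alpha_i\bigl(\vec g_i+\vo_i\times\vomatrix{}\bigr)\in\targetpolyz,
\]
so $\vec x\in\sus{\vec c\cdot\vo(\vec x)}$ with $\vec c\cdot\vo(\vec x)=\sum_i\alpha_i(\vec c\cdot\vo_i)$.

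The main obstacle, which the rest of the argument addresses, is that a generic conic decomposition of $\vec x$ may have arbitrarily large coefficients, so the bound on $\vec c\cdot\vo(\vec x)$ is not automatic. To control the $\alpha_i$'s I will invoke Carath\'eodory's theorem for conic combinations: each $\vec x\in\sus{\infty}$ can be written using at most $\dim(\sus{\infty})$ linearly independent generators $\{\vec g_i:i\in S\}$, so the $\alpha_i$'s are the unique solution of the linear system $\vec x=\sum_{i\in S}\alpha_i\vec g_i$ and satisfy $\max_{i\in S}\alpha_i\le \|A_S^{-1}\|\cdot\|\vec x\|$, where $A_S$ is the matrix with columns $\{\vec g_i:i\in S\}$. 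Because only finitely many index sets $S\subseteq\{1,\ldots,N\}$ need to be considered, taking the worst of these constants yields a uniform bound $\max_i\alpha_i\le L\|\vec x\|$ for a constant $L$ depending only on $\sus{\infty}$ and the chosen generators. Hence $\vec c\cdot\vo(\vec x)\le L\|\vec x\|\sum_i(\vec c\cdot\vo_i)$, which is bounded on the bounded set $\Pi^*$, completing the proof.
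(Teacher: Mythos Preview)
Your proof is correct, but it takes a considerably more elaborate route than the paper's. The paper simply intersects $\sus{\infty}$ with any cube $Q\supseteq\Pi^*$; since $\sus{\infty}$ is a polyhedral cone and $Q$ a polytope, $Q\cap\sus{\infty}$ is a polytope with finitely many vertices $\vec x_1,\ldots,\vec x_k$, each of which lies in some $\sus{B_j}$. Taking $B^*=\max_j B_j$ and using that $\sus{B^*}$ is convex (it is a polyhedron) gives $\Pi^*\subseteq Q\cap\sus{\infty}=\conv(\{\vec x_j\})\subseteq\sus{B^*}$ in one line.

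Your argument instead fixes conic generators of $\sus{\infty}$, uses conic Carath\'eodory to write each $\vec x$ as a non-negative combination of linearly independent generators, and bounds the coefficients via the pseudo-inverse of the generator matrix (your ``$A_S^{-1}$'' should really be a left inverse, since $A_S$ need not be square, but this is cosmetic). This buys you something the paper's proof does not: an explicit linear bound $\vec c\cdot\vo(\vec x)\le C\|\vec x\|$ valid for \emph{every} $\vec x\in\sus{\infty}$, not just a finite $B^*$ tailored to the particular bounded set $\Pi^*$. The paper's proof, on the other hand, avoids Carath\'eodory and the pseudo-inverse machinery entirely by pushing all the work onto the convexity of $\sus{B^*}$, which makes it about three lines long.
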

\begin{proof} It suffices to prove that there exists $B^*\ge 0$ so that $\Pi^*\subseteq \sus{B^*}$.
Because $\Pi^*$ is bounded, let $Q$ denote any cube that contains $\Pi^*$. Because $Q$ is a polytope and $\sus{\infty}$ is a polyhedral cone, $Q\cap \sus{\infty}$ is a polytope that contains $\Pi^*$. Let the V-representation of $Q\cap \sus{\infty}$ be $\conv(\{\vec x_1,\ldots,\vec x_k\})$ for some $k\in\mathbb N$. For every $j\le k$, because $\vec x+j\in \sus{\infty}$, there exists $B_j\ge 0$ such that $\vec x_j\in\sus{B_j}$. Let $B^*=\max\{B_1,\ldots,B_k\}$. It follows that $\Pi^*\subseteq Q\cap \sus{\infty} \subseteq \sus{B^*}$, which proves Claim~\ref{claim:B-min-minus}.
\end{proof}

\subsection{Full Version of Theorem~\ref{thm:PMV-instability} and Its Proof}
\label{app:proof-thm-PMV-instability}
\appThm{\bf\boldmath Semi-Random  PMV-Instability, $\psi =0$, full version}
{thm:PMV-instability}{
Given any $q\in\mathbb N$, any closed and strictly positive $\Pi$ over $[q]$, and any instability settings $\langle\sourcepoly,\targetpoly,\voset{},\vec c\,\rangle$, any $C_2>0$ and $C_3>0$ with $C_2<B_{\conv(\Pi)}<C_3$,  any $n\in \mathbb N$, and any $B\ge 0$,  
$$ 
 \begin{array}{r@{}|l|@{}l@{}|l|}
\cline{2-4} & \text{\bf Name} &  \text{\bf\  Likelihood}&  \text{\bf Condition}\\
\cline{2-4} \multirow{6}{*}{$\sup_{\vec\pi\in\Pi^n}\Pr\left(\vXp \in \csus{n,B}\right)= \left\{\begin{array}{@{}r@{}}\\\\\\\\\\\end{array}\right.$}& \text{0 case}& \ 0 &\condition{1}\\
\cline{2-4}& \text{exp case}& \ \exp(-\Theta(n)) &\neg \condition{1}\wedge \condition{2}\\
\cline{2-4}&  \text{PT-$\Theta(\sqrt n)$}& \ \Theta\left(\dfrac{\min\{B+1,\sqrt n\}^{d_{\Delta}}}{(\sqrt n)^{q-d_0}} \right) &\neg \condition{1}\wedge \neg \condition{3}\\
\cline{2-4}& \text{PT-$\Theta(n)$} & \begin{array}{ l l } \exp(-\Theta(n)) &\text{if }B\le C_2n \\ \hline \Theta\left( (\frac{1}{\sqrt n})^{q-d_\infty}  \right) &\text{if }B\ge C_3n\end{array}&\begin{array}{@{}l}\text{otherwise, i.e.,}\\ \neg \condition{1}\wedge \neg \condition{2}\wedge \condition{3}\end{array}\\
\cline{2-4}
\end{array}
$$
For any $C^-_2<B^-_{\conv(\Pi)}<C^-_3$,  any $n\in \mathbb N$, and any $B\ge 0$,  
$$ 
\begin{array}{r@{}|l|@{}l@{}|l|}
\cline{2-4} & \text{\bf Name} &  \text{\bf\  Likelihood}&  \text{\bf Condition}\\
\cline{2-4} \multirow{6}{*}{$\inf_{\vec\pi\in\Pi^n}\Pr\left(\vXp \in \csus{n,B}\right)= \left\{\begin{array}{@{}r@{}}\\\\\\\\\\\end{array}\right.$}& \text{0 case}& \ 0 &\condition{1}\\
\cline{2-4}& \text{exp case}& \ \exp(-\Theta(n)) &\neg \condition{1}\wedge \condition{4}\\
\cline{2-4}&  \text{PT-$\Theta(\sqrt n)$}& \ \Theta\left(\dfrac{\min\{B+1,\sqrt n\}^{d_{\Delta}}}{(\sqrt n)^{q-d_0}} \right) &\neg \condition{1}\wedge \neg \condition{5}\\
\cline{2-4}& \text{PT-$\Theta(n)$} & \begin{array}{ l l } \exp(-\Theta(n)) &\text{if }B\le C^-_2n \\ \hline \Theta\left( (\frac{1}{\sqrt n})^{q-d_\infty}  \right) &\text{if }B\ge C^-_3n\end{array}&\begin{array}{@{}l}\text{otherwise, i.e.,}\\ \neg \condition{1}\wedge \neg \condition{4}\wedge \condition{5}\end{array}\\
\cline{2-4}
\end{array}
$$
}

\begin{proof}  We first prove the  $\sup$ part  of the theorem, then leverage the techniques  to prove the \myhyperlink{inf}{$\inf$ part}. 

\vspace{3mm}
\noindent  {{\bf\boldmath Proof for the $\sup$ part.}} % Recall that the first case of $\sup$ is the \myhyperlink{PMV-proof-sup-0}{\em $0$ case}, the second case is the \myhyperlink{PMV-proof-sup-exp}{\em  exponential case}, the third case is the  \myhyperlink{PMV-proof-sup-sqrtn}{\em phase transition at $\Theta(\sqrt n)$ case} (PT-$\Theta(\sqrt n)$-$\sup$ for short), and the last case, which contains the $B\le C_2n$ subcase  and the $B\ge C_3n$ subcase, is the \myhyperlink{PMV-proof-sup-n}{\em phase transition at $\Theta(n)$ case} (PT-$\Theta(n)$-$\sup$ for short). In other words, the $\sup$ part of Theorem~\ref{thm:PMV-instability} becomes the following (hyperlinks to different parts of the proof are in red).
For convenience, hyperlinks (in red) to the proofs of four cases are provided as follows. 

\renewcommand{\arraystretch}{1.5}
$$\sup\limits_{\vec\pi\in\Pi^n}\Pr\left(\vXp \in \csus{n,B}\right) = \left\{\text{
\begin{tabular}{|l| c|c|c|@{}c@{}|}
%\cline{2-4} \multicolumn{1}{c|}{} & Theorem~\ref{thm:PMV-instability}& Theorem~\ref{thm:PMV-stability-mid}& Theorem~\ref{thm:PMV-stability-large}&Condition\\
\hline 
 \myhyperlink{PMV-proof-sup-0}{$0$ case}&     $0$ &  \multicolumn{3}{c|}{$\condition{1}$}\\
\hline \myhyperlink{PMV-proof-sup-exp}{  exponential case} &  $\exp(-\Theta(n))$  &  \multicolumn{2}{c|}{$\condition{2}$}  & \multirow{4}{*}{$\neg\condition{1}$}\\
\hline \multirow{2}{*}{ \myhyperlink{PMV-proof-sup-n}{PT-$\Theta(n)$-$\sup$}} &    $\exp(-\Theta(n))$  &  \myhyperlink{PMV-proof-sup-n-smallB}{$B\le C_2n$} & $\neg\condition{2}\wedge \condition{3}$ &  \\
\cline{2-4} &     $\Theta\left((\frac{1}{\sqrt n})^{q-d_\infty}\right) $ & \myhyperlink{PMV-proof-sup-n-largeB}{$B\ge C_3n$} & $\neg\condition{2}\wedge \condition{3}$ &  \\
\hline \myhyperlink{PMV-proof-sup-sqrtn}{PT-$\Theta(\sqrt n)$-$\sup$ }  &  $\Theta\left( \dfrac{\min\{B,\sqrt n\}^{d_{\Delta}}} {(  \sqrt n )^{q-d_0}}  \right)$  & \multicolumn{2}{c|}{ $\neg\condition{3}$} &\\
\hline
\end{tabular}}\right.
$$
\renewcommand{\arraystretch}{1}

\vspace{3mm}
\noindent \hypertarget{PMV-proof-sup-0}{{\bf\boldmath Proof for the $0$ case of $\sup$}} is straightforward, because $ \csus{n,B}=\emptyset$ states that the PMV-instability problem does not have a size-$n$ non-negative integer solution. In the rest of the proof for $\sup$, it suffices prove the exponential case and the polynomial case for all $n$ that are larger than a constant $N$.   This is because  for any $n$  such that the $0$ case does not hold (which means that $\csus{n,B}\ne\emptyset$), and for every $\vec\pi\in\Pi^n$,
$$\Pr\left(\vXp \in \csus{n,B}\right)\in [\epsilon^n,1]$$
Therefore, for every $n$ below a constant $N$, we have 
$$\sup_{\vec\pi\in\Pi^n}\Pr\left(\vXp \in \csus{n,B}\right)\in [\epsilon^N,1],$$
\vspace{3mm}
\noindent \hypertarget{PMV-proof-sup-exp}{{\bf\boldmath Proof for the exponential  case  of $\sup$.}}  
Because $\conv(\Pi)\cap\sus{\infty}=\emptyset$, $\conv(\Pi)$ is convex and compact, and $\sus{\infty}$ is convex,  due to the strict hyperplane separation theorem, for every $\vec\pi\in\Pi^n$, $\sum_{i=1}^{n}\pi_i$ is $\Omega(n)$ away from any vector in $\sus{\infty}$, which means that $\sum_{i=1}^{n}\pi_i$ is $\Omega(n)$ away from any vector in $\cpoly{\infty}$, because $\sus{\infty}$ is the characteristic cone of $\cpoly{\infty}$ as proved in the following claim. 

\begin{claim}
\label{claim:cc-h-inf}
The characteristic cone of $\cpoly{\infty}$ is $\sus{\infty}$.
\end{claim}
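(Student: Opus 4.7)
\textbf{Proof proposal for Claim~\ref{claim:cc-h-inf}.} The plan is to reduce the claim to two standard facts from polyhedral theory: for nonempty polyhedra $P$ and $Q$, one has $\mathrm{rec}(P \cap Q) = \mathrm{rec}(P) \cap \mathrm{rec}(Q)$, and $\mathrm{rec}(P + Q) = \mathrm{rec}(P) + \mathrm{rec}(Q)$; and for any polyhedral cone $C$, $\mathrm{rec}(C) = C$. Here $\mathrm{rec}(\cdot)$ denotes the characteristic (recession) cone, and for a polyhedron of the form $\{\vec x : \ba \times \invert{\vec x} \le \invert{\vbb}\}$ this recession cone is exactly $\{\vec x : \ba \times \invert{\vec x} \le \invert{\vec 0}\}$, which matches the notation $\polyz$ used in the paper.

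First I would unpack the definitions to write $\cpoly{\infty}$ as a single Minkowski/intersection expression. By \eqref{eq:cone-infty} and the discussion immediately preceding it,
\[
\cpoly{\infty} \;=\; \sourcepoly \cap (\targetpoly + \calQ_{\infty}),
\]
while $\sus{\infty} = \sourcepolyz \cap (\targetpolyz + \calQ_{\infty})$. Note that $\calQ_{\infty} = \{-\vo \times \vomatrix{} : \vo \in \mathbb{R}_{\ge 0}^{|\voset{}|}\}$ is a polyhedral cone (no budget constraint is present), so it coincides with its own recession cone. Moreover $\targetpoly + \calQ_{\infty}$ is a polyhedron, since the Minkowski sum of two polyhedra is a polyhedron.

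Second, I would dispose of the trivially empty case: if $\cpoly{\infty} = \emptyset$, the characteristic cone is $\{\vec 0\}$ by convention and the claim is only invoked in the proof of the exponential case, where $\csus{n,B} \ne \emptyset$ forces $\cpoly{\infty} \ne \emptyset$ (any element of $\csus{n,B}$ lies in $\cpoly{B} \subseteq \cpoly{\infty}$). Assuming $\cpoly{\infty} \ne \emptyset$, I apply the two standard facts in succession:
\[
\mathrm{rec}(\cpoly{\infty}) \;=\; \mathrm{rec}(\sourcepoly) \cap \mathrm{rec}(\targetpoly + \calQ_{\infty}) \;=\; \sourcepolyz \cap (\targetpolyz + \calQ_{\infty}) \;=\; \sus{\infty},
\]
where the middle equality uses $\mathrm{rec}(\targetpoly + \calQ_{\infty}) = \mathrm{rec}(\targetpoly) + \mathrm{rec}(\calQ_{\infty}) = \targetpolyz + \calQ_{\infty}$.

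I do not anticipate any serious obstacle: the argument is essentially a citation of two well-known properties of recession cones of polyhedra (see, e.g., Schrijver's \emph{Theory of Linear and Integer Programming}). The only mild care needed is (i) verifying nonemptiness of the polyhedra involved so that the recession-cone identities apply (handled above), and (ii) confirming that $\calQ_{\infty}$, as defined via $\vo \ge \vec 0$ with no budget cap, is indeed a polyhedral cone so that the identity $\mathrm{rec}(\targetpoly + \calQ_{\infty}) = \targetpolyz + \calQ_{\infty}$ holds without needing a closure step. Both checks are immediate from the definitions.
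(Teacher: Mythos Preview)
Your proposal is correct and follows essentially the same approach as the paper: the paper proves inline the two ``Observations'' that $\mathrm{rec}(P_1+P_2)=\mathrm{rec}(P_1)+\mathrm{rec}(P_2)$ and $\mathrm{rec}(P_1\cap P_2)=\mathrm{rec}(P_1)\cap\mathrm{rec}(P_2)$ (the latter under nonemptiness), and then applies them to $\cpoly{\infty}=\sourcepoly\cap(\targetpoly+\calQ_{\infty})$ exactly as you do. Your handling of the nonemptiness hypothesis and the fact that $\calQ_{\infty}$ is already a cone matches the paper's argument.
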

\begin{proof} We prove  two general observations about characteristic cones. For each $i\in\{1,2\}$, let $\ppoly{i}$ denote a polyhedron whose V-representation is $\calV_i+ \ppolyz{i}$, where $\calV_i$ is a convex polytope and $\ppolyz{i}$ is the characteristic cone of $\ppoly{i}$. 
\begin{itemize}
\item []{\bf Observation 1.} The characteristic cone of $\ppoly{1}+\ppoly{2}$ is $\ppolyz{1}+\ppolyz{2}$. This is because
$$\ppoly{1}+\ppoly{2} = (\calV_1+\calV_2) + (\ppolyz{1}+\ppolyz{2})$$
Here $\ppolyz{1}+\ppolyz{2}$ is indeed a finitely generated cone, because suppose for $i\in\{1,2\}$, $\ppolyz{i}$ is the convex cone generated from $\calB_i$, then it is not hard to verify that $\ppolyz{1}+\ppolyz{2}$ is a cone generated by $\calB_1\cup \calB_2$.
%and $ \cone(\calB_1)+  \cone(\calB_2) =\cone(\calB_1\cup \calB_2) $.
\item []{\bf Observation 2.} If $\ppoly{1}\cap \ppoly{2}\ne \emptyset$, then its characteristic cone  is $\ppolyz{1}\cap \ppolyz{2}$. This   is proved by the H-representations of  $\ppoly{1}$ and $\ppoly{2}$. Suppose for each $i\in\{1,2\}$, $\ppoly{i} = \{\vec x: \ba_i\times\invert{\vec x}\le \invert{\vbb_i}\}$, which means that $\ppolyz{i} = \left\{\vec x: \ba_i\times\invert{\vec x}\le \invert{\vec 0}\right\}$. Then, we have 
$$\ppoly{1}\cap \ppoly{2} = \left\{\vec x: \multimatrix{\ba_1\\ \ba_2}\times \invert{\vec x}\le \invert{\vbb_1,\vbb_2}\right \},$$
whose characteristic cone is 
$\left\{\vec x: \multimatrix{\ba_1\\ \ba_2}\times \invert{\vec x}\le \invert{\vec 0}\right \}=\ppolyz{1}\cap \ppolyz{2}$.
\end{itemize}
Recall that $\cpoly{\infty} = \sourcepoly\cap \left(\targetpoly + \calQ_{\infty}\right)$.  By Observation 1, the characteristic cone of $\targetpoly + \calQ_{\infty}$ is $\targetpolyz  + \calQ_{\infty}$. Then, by Observation 2, the characteristic cone of $\cpoly{\infty}$ is $\sourcepolyz\cap \left(\targetpolyz+ \calQ_{\infty}\right)$, which is $\sus{\infty}$ (due to (\ref{eq:cone-infty})).
\end{proof}
Then, the upper bound in the exponential  case  of $\sup$ follows after a straightforward application of Hoeffding's inequality and the union bound (applied to all $q$ dimension). More precisely, recall that $\vec\pi$ is strictly positive, then Hoeffding's inequality implies that for every   $i\in[q]$, the probability for the $i$th dimension of $\vXp$ to be more than $\Omega(n)$ away from the $i$th dimension of $\sum_{j=1}\pi_j$ is exponentially small. Therefore, according to the union bound, the probability for the $L_\infty$ distance between $\vXp$ and $\sum_{j=1}\pi_j$ to be $\Omega(n)$ is exponentially small, which implies that the probability for $\vXp$ to be in $\csus{n,B}$ is exponentially small.  The lower bound  in the exponential  case  is straightforward, because for any $\vec\pi\in\Pi^n$ (recall that all distributions in $\vec\pi$ are strictly positive) and any $\vec x\in\csus{n,B}$, $\Pr(\vXp=\vec x) = \exp(-\Theta(n))$.

\vspace{3mm}
\noindent \hypertarget{PMV-proof-sup-sqrtn}{{\bf \boldmath Proof for the phase transition at $\Theta(\sqrt n)$ case of $\sup$  (PT-$\Theta(\sqrt n)$-$\sup$ for short).}} The proof proceeds in the following three steps: 
\begin{itemize}
\item Prove the \myhyperlink{PMV-proof-sup-sqrtn-ub-smallB}{polynomial upper bound for $B\le \sqrt n$}, i.e., $O\left(\dfrac{(B+1)^{d_{\Delta}}}{(\sqrt n)^{q-d_0}}  \right)$.
\item Prove the \myhyperlink{PMV-proof-sup-sqrtn-ub-largeB}{polynomial upper bound for $B> \sqrt n$}, i.e.,
$O\left( (\frac{1}{\sqrt n})^{q-d_\infty}  \right) $.
\item Prove the asymptotically matching \myhyperlink{PMV-proof-sup-sqrtn-lb}{polynomial lower bound}, i.e., $\Omega\left(\dfrac{\min\{B+1,\sqrt n\}^{d_{\Delta}}}{(\sqrt n)^{q-d_0}}  \right)$
\end{itemize}
We first introduce some notation and assumptions that will be used in the proofs. Let $\ba_0 = \left[\begin{array}{c}\ba_{\text{S}}\\\ba_{\text{T}}\end{array}\right]$ and let $\ba_\infty$ denote an integer matrix  that characterizes $\sus{\infty}$. That is,
$$\sus{\infty} = \left\{\vec x\in {\mathbb R}^q:\ba_\infty\times \invert{\vec x} \le  \invert{\vec 0}\right\}$$
The existence of such $\ba_\infty$ is due to~\citep[Proposition~3.12]{Conforti2014:Integer}, which states that any polyhedron that has a rational H-representation has a rational V-representation, and vice versa. More precisely,  because $\targetpolyz$ has a rational H-representation, it has a rational V-representation, denoted by $\cone(\{\vec x_1,\ldots,\vec x_k\})$, where $\{\vec x_1,\ldots,\vec x_k\}\subseteq {\mathbb Q}^q$. Then, we have 
$$\targetpolyz+\calQ_{\infty} = \cone(\{\vec x_1,\ldots,\vec x_k\}\cup \voset{}),$$
which means that $\targetpolyz+\calQ_{\infty}$ can be represented by a set of linear inequalities with rational coefficients, due to~\citep[Proposition~3.12]{Conforti2014:Integer}. Consequently, $\sourcepolyz\cap(\targetpolyz+\calQ_{\infty})$ can be represented by a set of linear inequalities with rational coefficients, by combining the linear inequalities for $\sourcepolyz$ and the linear inequalities for $\targetpolyz+\calQ_{\infty}$.

Let $\ba_0^=$ and $\ba_\infty^=$ denote the implicit equalities of $\ba_0$ and $\ba_\infty$, respectively. We have $\rank(\ba_0^=) = q-d_0$ and $\rank(\ba_\infty^=) = q-d_\infty$ (\cite[Theorem 3.17]{Conforti2014:Integer}).  Next, we show that, without loss of generality,  in the rest of the proof for   PT-$\Theta(\sqrt n)$-$\sup$, we can assume that $\vec 1$ cannot be represented as a linear combination of rows in $\ba_0^=$  
or a linear combination of  rows in $\ba_\infty^=$. Formally,
\begin{asmp}\label{asmp:indwith1}
$\rank\left(\left[\begin{array}{c}\ba_0^=\\\vec 1\end{array}\right]\right) = q-d_0+1$.
\end{asmp}
\begin{asmp}\label{asmp:indwith2}
$\rank\left(\left[\begin{array}{c}\ba_\infty^=\\\vec 1\end{array}\right]\right) = q-d_\infty+1$.
\end{asmp}
To see that we can assume Assumption~\ref{asmp:indwith1}, suppose for the sake of contradiction that Assumption~\ref{asmp:indwith1} does not hold, which means that $\vec 1$ is a linear combination of rows in $\ba_0^=$. Then, for every $\vec x\in \sus{0}=\sourcepolyz\cap \targetpolyz$, we have $\vec x\cdot 1 = 0$. Therefore,  $\conv(\Pi)\cap \sus{0}=\emptyset$, which contradicts $\neg\condition{3}$. Similarly, if Assumption~\ref{asmp:indwith2} does not hold, then we have $\conv(\Pi)\cap \sus{\infty}=\emptyset$, which again contradicts $\neg\condition{3}$, because $\sus{0}\subseteq \sus{\infty}$.  
 
\vspace{3mm}
\noindent \hypertarget{PMV-proof-sup-sqrtn-ub-smallB}{{\bf \boldmath Proof for the polynomial upper bound of PT-$\Theta(\sqrt n)$-$\sup$, $B\le \sqrt n$.}} %Because $\sus{0}\ne\emptyset$, we have $\sus{0} = \cpolyz{1}\cap \cpolyz{2}$. 

\vspace{2mm}\noindent{\bf\bf \bf\boldmath Overview of proof.} The proof proceeds in three steps. 
% \myhyperlink{ub-step1}{Step 1} is used to introduce notions needed for \myhyperlink{ub-step2}{Step 2}, which is the most critical and involved step among the four. More precisely,  
In \myhyperlink{ub-step1}{Step 1}, we use  $\ba_{0}^=$ and $\ba_\infty^=$ to define a partition of $[q]$ into three sets $I_{0+}, I_{0-}$, and $I_1$, which contain $q-d_0+1$, $d_\infty-d_0$, and $d_\infty-1$ numbers, respectively. For convenience, we rename the coordinates so that 
$$\underbrace{1,\ldots, q-d_\infty+1}_{I_{0+}},\underbrace{q-d_\infty+2, \ldots,q-d_0+1}_{I_{0-}},\underbrace{q-d_0+2, \ldots, q}_{I_1}$$
Let $I_{1+} = I_{0-}\cup I_1$ and let $I_0 = I_{0+}\cup I_{0-}$. \myhyperlink{ub-step2}{Step 2} proves two properties of the partition. Let $\cpolyn{B} =  \{\vec x\in \cpoly{B}:\vec x\cdot\vec 1 = n\}$ denote the subset of $\cpoly{B}$ that consists of all size-$n$ vectors.  First, in \myhyperlink{ub-step2.1}{Step 2.1}, we prove  that given the $I_{1+}$ coordinates of vectors in $\cpolyn{B}$, each of its remaining coordinates (in $I_{0+}$) can take no more than $O(1)$ integer values. Second, in \myhyperlink{ub-step2.2}{Step 2.2}, we prove that given the $I_{1}$ coordinates of vectors in $\cpolyn{B}$, each of its remaining coordinates (in $I_{0}$) can take no more than $O(B)$ integer values. 

In light of \myhyperlink{ub-step2.1}{Step 2.1} and \myhyperlink{ub-step2.2}{Step 2.2}, we can enumerate integer vectors $\vec y$ in $\cpolyn{B}$ as follows: first, we fix the $I_1$ coordinates of $\vec y$; second, each of the $d_\Delta=d_\infty-d_0$  coordinates in $I_{0-}$ takes  no more than $O(B)$ integer values; and finally, each of the $q-d_\infty+1$  coordinates in $I_{0+}$ takes  no more than $O(1)$ integer values. Then in \myhyperlink{ub-step3}{Step 3}, we leverage this enumeration method with the Bayesian network representation and the point-wise anti-concentration bound in~\citep{Xia2021:How-Likely} to prove the upper bound. 
 
\vspace{2mm}
\noindent{\bf\boldmath \hypertarget{ub-step1}{Step 1} of poly upper bound: Define the partition $[q]=I_{0+}\cup I_{0-}\cup I_{1}$.} 
Let $\calP_0$ and $\calP_\infty$ denote the affine hulls of $\sus{0}$ and $\sus{\infty}$, respectively (which are the same as the linear spaces generated by $\sus{0}$ and $\sus{\infty}$, because both contains $\vec 0$). It follows from~\citep[Theorem 3.17]{Conforti2014:Integer} that
\begin{equation}
\label{eq:calP}\calP_0 = \left\{\vec x\in {\mathbb R}^q:\ba_0^=\times \invert{\vec x} = \invert{\vec 0}\right\}\text{ and } \calP_\infty = \left\{\vec x\in {\mathbb R}^q:\ba_\infty^=\times \invert{\vec x} = \invert{\vec 0}\right\}
\end{equation}
That is, $\calP_0$ (respectively, $\calP_\infty$) consists of vectors that satisfies of the implicit equalities of $\ba_0$ (respectively, $\ba_\infty$).

Let $\ba_* = \left[\begin{array}{c}\ba_\infty^=\\\ba_0^=\end{array}\right]$. We  will define a partition of $[q]$ as $I_{0+}\cup I_{0-}\cup I_1$, where $|I_{0+}| = q- d_\infty+1$, $|I_{0-}| = d_\Delta$, and $|I_1| =  d_0-1$, and the partition satisfies the following two conditions.  
\begin{itemize}
\item {\bf \hypertarget{cond1}{Condition 1}.} The $I_{0+}$ columns of $\multimatrix{\ba_\infty^=\\\vec 1}$ are linearly independent.
\item {\bf \hypertarget{cond2}{Condition 2}.} The $I_{0+}\cup I_{0-}$ columns of $\multimatrix{\ba_0^=\\\vec 1}$ are linearly independent.
\end{itemize}
We first define two sets $I_{0+}'$ and $I_{0-}'$ as follows.

\vspace{2mm}\noindent{\bf \boldmath Define $I_{0+}'$.} Recall that $\rank(\ba_\infty^=) = q-d_{\infty}$. Therefore,  $\ba_\infty^=$ contains a set of $q-d_{\infty}$ linearly independent column vectors, whose indices are denoted by $I_{0+}'\subseteq [q]$. W.l.o.g.~let $I_{0+}' = \{1,\ldots, q-d_{\infty}\}$---if this is not the case, then we shift the $I_{0+}'$ columns in $\ba_\infty^=$ to be the first $q-d_{\infty}$ columns and rename  the coordinates.

\vspace{2mm}\noindent{\bf \boldmath Define $I_{0-}'$.} 
Notice that the  $I_{0+}'$ columns of $\ba_*$ are linearly independent (because their $\ba_\infty^=$ parts are already linearly independent). Because $\sus{0}\subseteq \sus{\infty}$, we have $\calP_0\subseteq \calP_\infty$, which means that
$$\left\{\vec x\in {\mathbb R}^q:\ba_* \times \invert{\vec x} = \invert{\vec 0}\right\} = \calP_\infty \cap \calP_0= \calP_0$$
This means that $\rank(\ba_*)=\rank(\ba_0^=) = q-d_0$. Consequently, there exist a set of $q-d_0-(q-d_\infty)= d_{\infty}-d_0=d_\Delta$ columns of $\ba_*$, whose indices are denoted by $I_{0-}'\subseteq ([q]\setminus I_{0+}')$, such the $I_{0+}' \cup I_{0-}'$ columns of $\ba_*$ are linearly independent. W.l.o.g., let $I_{0-}' = \{q-d_{\infty}+1,\ldots, q-d_0\}$. 

Notice that the $I_{0+}'$ columns of $\left[\begin{array}{c}\ba_\infty^=\\\vec 1 \end{array}\right]$ are  linearly independent (because their $\ba_\infty^=$ parts are already linearly independent). Let $J$ denote the indices to columns of $\left[\begin{array}{c}\ba_\infty^=\\\vec 1 \end{array}\right]$ that are linearly independent with the $I_{0+}'$ columns. That is, 
\begin{equation}
\label{eq:dfn-J}
J = \left\{i_+\in ([q]\setminus I_{0+}'): \text{the }I_{0+}'\cup\{i_+\}\text{ columns of } \left[\begin{array}{c}\ba_\infty^=\\\vec 1 \end{array}\right] \text{ are linearly independent}\right\}
\end{equation}
By definition, we have $J\ne \emptyset$, because according to Assumption~\ref{asmp:indwith2},
$$\rank\left(\left[\begin{array}{c}\ba_\infty^=\\\vec 1 \end{array}\right]\right) =\rank\left( \ba_\infty^=\right) +1 =  q-d_{\infty}+1>q-d_{\infty} = |I_{0+}'|$$ %Therefore, there exists a column $i_+\in ([q]\setminus I_{0+}')$ such that the $I_{0+}'\cup\{i_+\}$ columns of $\left[\begin{array}{c}\ba_\infty^=\\\vec 1 \end{array}\right]$  are  linearly independent. Let $J\subseteq ([q]\setminus I_{0+}')$ denote the set of all such $i_+$'s. Formally,
Next, we define two specific columns: $i_+\in J$ and $i_-$  in the following two cases ( $i_+=i_-$ in case 2), prove that the $I_{0+}'\cup I_{0-}'\cup\{i_-\}$ columns of $\left[\begin{array}{c}\ba_{*}\\\vec 1 \end{array}\right]$  are  linearly independent (in Claim~\ref{claim:partition-property}), and then use them  to define $I_{0+}, I_{0-}$ and $I_1$.
\begin{itemize}
\item {\bf \boldmath Case 1: $J\cap  I_{0-}'\ne \emptyset$.}  Let $i_+$ denote an arbitrary number in $J\cap  I_{0-}'$ and let $i_-\in ([q]\setminus (I_{0+}'\cup I_{0-}'))$ denote an arbitrary number such that the $I_{0+}'\cup I_{0-}'\cup\{i_-\}$ columns of $\left[\begin{array}{c}\ba_{*}\\\vec 1 \end{array}\right]$  are  linearly independent. The existence of such $i_-$ is guaranteed by the following two observations. First, according to the definitions of $I_{0+}'$ and $I_{0-}'$, the   $I_{0+}'\cup I_{0-}'$ columns of $\left[\begin{array}{c}\ba_{*}\\\vec 1 \end{array}\right]$ are linearly independent.  Second, according to Assumption~\ref{asmp:indwith1}, 
$$\rank\left(\left[\begin{array}{c}\ba_*\\\vec 1 \end{array}\right]\right) \ge \rank\left(\left[\begin{array}{c}\ba_0^=\\\vec 1 \end{array}\right]\right) = \rank\left( \ba_0^=\right) +1 =  q-d_0+1>q-d_0 = |I_{0+}'\cup I_{0-}'|$$ 
\item {\bf  \boldmath  Case 2: $J\cap  I_{0-}' = \emptyset$.} Choose any $i_+\in J\subseteq I_1$ and let $i_- = i_+$. We now prove that  the $I_{0+}'\cup I_{0-}' \cup \{i_-\}$ columns of $\left[\begin{array}{c}\ba_{*}\\\vec 1 \end{array}\right]$ are linearly independent. Suppose for the sake of contradiction this is not true, which means that column $i_-$ of $\left[\begin{array}{c}\ba_{*}\\\vec 1 \end{array}\right]$ can be written as an affine combination of the $I_{0+}'\cup I_{0-}'$ columns of $\left[\begin{array}{c}\ba_{*}\\\vec 1 \end{array}\right]=\left[\begin{array}{c}\ba_{\infty}\\\ba_{0}\\\vec 1 \end{array}\right]$. This mean that column $i_-$ of $\left[\begin{array}{c}\ba_\infty^=\\\vec 1 \end{array}\right]$ can be written as the same affine combination of the $I_{0+}'\cup I_{0-}'$ columns  of $\left[\begin{array}{c}\ba_\infty^=\\\vec 1 \end{array}\right]$. Recall that $J\cap  I_{0-}' = \emptyset$, which  means that in matrix $\left[\begin{array}{c}\ba_\infty^=\\\vec 1 \end{array}\right]$, each column   in $I_{0-}'$  is an affine combination of the $I_{0+}'$ columns  of $\left[\begin{array}{c}\ba_\infty^=\\\vec 1 \end{array}\right]$. Therefore, in  $\left[\begin{array}{c}\ba_\infty^=\\\vec 1 \end{array}\right]$, column $i_-$ is linearly {\em dependent} with the $I_{0+}'$ columns. This contradicts  the definition of $i_-$, which is the same as $i_+\in J$.
\end{itemize}
Notice that in both cases, the following claim holds.
\begin{claim}
\label{claim:partition-property}
The $I_{0+}'\cup I_{0-}'\cup\{i_-\}$ columns of $\left[\begin{array}{c}\ba_{*}\\\vec 1 \end{array}\right]$  are  linearly independent.
\end{claim}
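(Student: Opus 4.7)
The plan is a simple case split matching the case split used just above to define $i_+$ and $i_-$. In both cases the relevant observation is that a linear dependence among columns of $\left[\begin{array}{c}\ba_*\\\vec 1\end{array}\right]$ corresponds, via the bottom $\vec 1$ row, to an affine dependence among the same columns of $\ba_*$, so I will switch freely between these two formulations. Most of the work is already embedded in how $i_+$ and $i_-$ were chosen; the claim essentially packages the two constructions into a single statement.

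In Case 1, where $J\cap I_{0-}'\neq\emptyset$, the index $i_-$ was chosen precisely so that $I_{0+}'\cup I_{0-}'\cup\{i_-\}$ is a linearly independent set of columns of $\left[\begin{array}{c}\ba_*\\\vec 1\end{array}\right]$, so the claim is immediate once I verify that such an $i_-$ exists. For existence, the $I_{0+}'\cup I_{0-}'$ columns of $\ba_*$ are already linearly independent by construction, hence they remain independent after appending the $\vec 1$ row. Assumption~\ref{asmp:indwith1} gives $\rank\!\left[\begin{array}{c}\ba_0^=\\\vec 1\end{array}\right]=q-d_0+1$, so $\rank\!\left[\begin{array}{c}\ba_*\\\vec 1\end{array}\right]\geq q-d_0+1 > |I_{0+}'\cup I_{0-}'|$, which forces some column outside $I_{0+}'\cup I_{0-}'$ to extend the independent set, and that column can be taken as $i_-$.

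Case 2, where $J\cap I_{0-}'=\emptyset$ and $i_-=i_+$, is the substantive one and I will argue by contradiction. Suppose the $i_-$-column of $\left[\begin{array}{c}\ba_*\\\vec 1\end{array}\right]$ is an affine combination of its $I_{0+}'\cup I_{0-}'$ columns. Restricting this relation to the rows of $\left[\begin{array}{c}\ba_\infty^=\\\vec 1\end{array}\right]$, the $i_+$-column of this smaller matrix is the same affine combination of its $I_{0+}'\cup I_{0-}'$ columns. The case hypothesis $J\cap I_{0-}'=\emptyset$ means that in $\left[\begin{array}{c}\ba_\infty^=\\\vec 1\end{array}\right]$ each $I_{0-}'$-column is itself an affine combination of the $I_{0+}'$-columns, so substituting collapses the expression to show that the $i_+$-column is an affine combination of the $I_{0+}'$-columns alone. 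This contradicts $i_+\in J$ via the definition of $J$ in~\eqref{eq:dfn-J}.

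I expect Case 2 to be the main obstacle, since it is the only step requiring a genuine contradiction argument; Case 1 reduces to a rank count. The only other point requiring care is the translation between linear independence in $\left[\begin{array}{c}\ba_*\\\vec 1\end{array}\right]$ and affine independence in $\ba_*$, which is handled uniformly by reading off the last coordinate of any putative dependence relation.
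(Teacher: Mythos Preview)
Your proposal is correct and follows essentially the same approach as the paper: the claim is really just a summary of what the two-case construction of $i_-$ already established, with Case~1 holding by the very choice of $i_-$ (plus the rank count from Assumption~\ref{asmp:indwith1}) and Case~2 handled by the contradiction argument that collapses the $I_{0-}'$ columns into affine combinations of the $I_{0+}'$ columns inside $\left[\begin{smallmatrix}\ba_\infty^=\\\vec 1\end{smallmatrix}\right]$. Your explicit remark about the $\vec 1$ row converting linear dependence into affine dependence is exactly the mechanism the paper uses implicitly when it speaks of ``affine combinations.''
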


\vspace{2mm}\noindent{\bf \boldmath Define $I_{0+}$, $I_{0-}$, and $I_{1}$.} Given $I_{0+}'$, $I_{0-}'$, $i_+$, and $i_-$ defined above, we are now ready to define $I_{0+}$, $I_{0-}$, and $I_{1}$ as follows. Let 
$$I_{0+} = I_{0+}'\cup \{i_+\}, I_{0-} = I_{0-}'\cup \{i_-\}\setminus \{i_+\}, \text{ and }I_1 = [q]\setminus (I_{0+} \cup I_{0-} )$$
By definition, we have $|I_{0+}| = q-d_\infty+1$, $|I_{0+}| =  d_0-d_\infty$, and $|I_{1}| = d_0-1$.  For convenience, we rename the coordinates so that 
$$I_{0+} = \{1,\ldots, q-d_{\infty}+1\},I_{0-} = \{q-d_{\infty}+2,\ldots,q-d_0+1\} \text{ and }I_1 = \{q-d_0+2,\ldots,q\}$$

\vspace{3mm}
\noindent{\bf \boldmath Verify \myhyperlink{cond1}{Condition 1}.}    Because $i_+\in J$, the $I_{0+} = I_{0+}'\cup \{i_+\}$ columns of $\multimatrix{\ba_\infty^=\\\vec 1}$ are linearly independent (due to the definition of $J$ in (\ref{eq:dfn-J})), which means that \myhyperlink{cond1}{Condition 1} is satisfied.  

\vspace{3mm}
\noindent{\bf \boldmath Verify \myhyperlink{cond2}{Condition 2}.}   Recall that when defining $i_-$, we  proved that the $I_{0+}  \cup I_{0-} = I_{0+}'  \cup I_{0-}'\cup \{i \} $ columns of $\multimatrix{\ba_*\\\vec 1}$ are linearly independent. Because $\rank\left(\multimatrix{\ba_\infty^= \\\ba_0^=}\right)=\rank(\ba_*)=\rank(\ba_0^=) = q-d_0$, each row in $\ba_\infty^=$ can be written as an affine combination of rows in $\ba_0^=$. Therefore, if  some linear combination of the $I_{0+}  \cup I_{0-} = I_{0+}'  \cup I_{0-}'\cup \{i \} $ columns of $\multimatrix{\ba_0^=\\\vec 1}$ equals to $\vec 0$, then the same linear combination of the $I_{0+}  \cup I_{0-} = I_{0+}'  \cup I_{0-}'\cup \{i \} $ columns of $\multimatrix{\ba_*\\\vec 1}$ equals to $\vec 0$ as well, which is a contradiction to Claim~\ref{claim:partition-property}. This verifies \myhyperlink{cond2}{Condition 2}.

In the remainder of the proof, we let
$$I_{0} = I_{0+}  \cup I_{0-} \text{ and }I_{1+} = I_{1}  \cup I_{0-}$$
This leads to two partitions of $[q]$, i.e., $[q] = I_0\cup I_1 = I_{0+}\cup I_{1+}$, which will be used in the next step.
%\begin{dfn}
%Given an $L\times q$ matrix $\ba$ with rank $d$, a set $I\subseteq [q]$ with $|I| = q-d$ is called a {\em set of variables}, if columns of $\ba$ indexed by $[q]\setminus  I$ has full rank (i.e., $d$).
%\end{dfn}
%In other words, if we switch 

\vspace{2mm}
\noindent{\bf\boldmath \hypertarget{ub-step2}{Step 2}  of poly upper bound: Bound the width of coordinates in $I_{0+}$ and $I_{0}$.} 
For any polyhedron $\poly\subseteq \mathbb R^{q}$, any $I\subseteq[q]$, any $\vec y_I\in {\mathbb R}^{I}$, and any $i\in ([q]\setminus I)$, let $\range{i}{\poly, {\vec y_I}}$ denote the difference between the maximum value of the $i$-th component of vectors in $\poly$ whose $I$-components are $\vec y_I$ and the minimum value of the $i$-th component of vectors in $\poly$ whose $I$-components are $\vec y_I$. Formally,
$$\range{i}{\poly, {\vec y_I}} = \max\nolimits_{\vec x\in \poly: [\vec x]_{I} = \vec y_I}[\vec x]_{i}-\min\nolimits_{\vec x\in \poly: [\vec x]_{I} = \vec y_I}[\vec x]_{i}$$
Recall that $\cpolyn{B} = \{\vec x\in\cpoly{B}: \vec x\cdot \vec 1 = n\}$. 
In \myhyperlink{ub-step2.1}{Step 2.1} and \myhyperlink{ub-step2.2}{Step 2.2}, we bound $\range{i}{\cpolyn{B}, {\vec y_I}}$ for $I = I_{1+}=I_{0-}\cup I_1$ and $I = I_1$, respectively.

\vspace{2mm}
\noindent{\bf\boldmath \hypertarget{ub-step2.1}{Step 2.1}  of poly upper bound: Bound the width of coordinates in $I_{0+}$.} In this step, we prove that 
there exists a constant $C^*$ such that for any $B$, any $n$, any $\vec y_{I_{1+}}\in {\mathbb R}^{I_{1+}}$, and any $i\in  I_{0+}$,  
\begin{equation*}
\range{i}{\cpolyn{B}, {\vec y_{I_{1+}}}}\le C^*
\end{equation*}
Notice that $\cpolyn{B}\subseteq \cpolyn{\infty} = \{\vec x\in\cpoly{\infty}:\vec x\cdot\vec 1 = n\}$. Therefore, it suffices to prove the following stronger inequality.
\begin{equation}
\label{eq:range-I1+}
\range{i}{\cpolyn{\infty}, {\vec y_{I_{1+}}}}\le C^*
\end{equation}
According to the V-representation of $\cpoly{\infty}$, for any $\vec y = (\vec y_{I_{0+},\vec y_{I{1+}}})\in \cpolyn{\infty}\subseteq \cpoly{\infty}$, we can write $\vec y = \vec v + \vec x$, where $\vec v = (\vec v_{I_{0+}},\vec v_{I_{1+}})$ is in a convex polytope  and $\vec x = (\vec x_{I_{0+}},\vec x_{I_{1+}})$ is in the characteristic cone of $\cpoly{\infty}$. Let $n' = \vec x\cdot 1$. Next, we use Gauss-Jordan elimination to define a matrix $\bd_\infty $ based on $\left[\begin{array}{c}\ba_\infty^=\\ \vec 1\end{array}\right]$, such that
\begin{equation}
\label{eq:x-rep}
\vec x_{I_{0+}} = (\vec x_{I_{1+}}, n')\times \bd_\infty 
\end{equation}

By Claim~\ref{claim:cc-h-inf}, we have $ \vec x\in \sus{\infty} \subseteq \calP_\infty$.  Recall from \myhyperlink{cond1}{Condition 1} that the first $q-d_{\infty}+1$ columns (i.e., the $I_{0+}$ columns) of $\left[\begin{array}{c}\ba_\infty^=\\ \vec 1\end{array}\right]$ are linearly independent, and recall from Assumption~\ref{asmp:indwith2} that $\rank\left(\multimatrix{\ba_\infty^=\\ \vec 1}\right) = q-d_{\infty}+1$.  Therefore, Gauss-Jordan elimination on 
$\left[\begin{array}{c}\ba_\infty^=\\ \vec 1\end{array}\right]\times \invert{\vec x} = \invert{\vec 0, n'}$ leads to a $d_{\infty}\times (q-d_{\infty}+1)$ matrix $\bd_\infty$ such that 
$$\left[\begin{array}{c}\ba_\infty^=\\ \vec 1\end{array}\right]\times \invert{\vec x} = \invert{\vec 0, n'}\text{ if and only if }\vec x_{I_0+} =  (\vec x_{I_{1+}}, n')\times \bd_\infty,$$
which proves (\ref{eq:x-rep}).

%leads to the following parametric vector form of  $\calP_\infty$.
%=\left\{\vec x: \left[\begin{array}{c}\ba_\infty^=\\ \vec 1\end{array}\right]\times \invert{\vec x} \ \invert{\vec 0}\text{ and }\vec x\cdot \vec 1 = n\right\}$. 
%There exists a $d_{\infty}\times (q-d_{\infty}+1)$ matrix $\bd_\infty$ such that 
%$$\calP_\infty= \left\{[(\vec y_{I_{1+}},n')\times \bd_\infty, \vec y_{I_{1+}}]:\vec y_{I_1}\in {\mathbb R}^{I_{1+}}, n'\in\mathbb R\right\}$$ 
%In other words,  for any $\vec x\in \calP_\infty$, its $I_{0+}$ components are completely determined by its $I_{1+}$ components and $n'= \vec x\cdot \vec 1$. Equivalently, we have:
%\begin{equation}
%\label{equ:P+}
%\vec y = (\vec y_{I_{0+}},\vec y_{I_{1+}}) \in \calP_\infty \text{ if and only if } \vec y_{I_{0+}} = (\vec y_{I_{1+}},\vec y \cdot\vec 1)\times \bd_\infty
%\end{equation}

  Let $C_{\max}$ denote the maximum $L_\infty$ norm of  vectors in $\mV$, which means that $|\vec v|_\infty\le C_{\max}$. Then,
$$\left|n- n'\right| = \left|\vec y\cdot\vec 1-\vec x\cdot\vec1\right| = |\vec v\cdot \vec 1|   \le qC_{\max}$$ 

 Let $\hat C$ denote the maximum absolute value of entries in $\bd_\infty$ and let $C^*=C_{\max} + 2qC_{\max}\hat C$. We prove that $\vec y_{I_{0+}}$ is in a $C^*$ neighborhood of $(\vec y_{I_{1+}},n)\times \bd_\infty$ in $L_\infty$ as follows.
\begin{align*}
& | \vec y_{I_{0+}}- (\vec y_{I_{1+}},n)\times \bd_\infty|_\infty =  |{\vec v_{I_{0+}}+\vec x_{I_{0+}}}- (\vec v_{I_{1+}}+\vec x_{I_{1+}},n)\times \bd_\infty|_\infty \\
=&  | \vec v_{I_{0+}}+(\vec x_{I_{1+}}, n')\times \bd_\infty - (\vec v_{I_{1+}}+\vec x_{I_{1+}},n)\times \bd_\infty|_\infty   &\text{by (\ref{eq:x-rep})}\\
=& | \vec v_{I_{0+}} -  (\vec v_{I_{1+}},n-n')\times \bd_\infty|_\infty\\
\le& C_{\max} + 2qC_{\max}\hat C = C^*
\end{align*}
This proves (\ref{eq:range-I1+}) and completes \myhyperlink{ub-step2.1}{Step 2.1}. 

\vspace{2mm}
\noindent{\bf\boldmath \hypertarget{ub-step2.2}{Step 2.2} of poly upper bound:  Bound the width of coordinates in $I_{0}$.} In this step, we prove that 
there exists a constant $C^*$ such that for any $B\ge 0$, any $n$, any $\vec y_{I_{1}}\in {\mathbb R}^{I_{1}}$, and any $i\in  I_{0}$,  
\begin{equation}
\label{eq:range-I1}
\range{i}{\cpolyn{B}, {\vec y_{I_1}}}\le C^*(B+1)
\end{equation}
 We first prove that for any $\vec x\in \cpoly{B}$, there exists $\vec x'\in \sus{0}$ that is $O(B+1)$ away from $\vec x$ in $L_\infty$. 
\begin{claim}
\label{claim:close-to-cone0} There exists $C$ such that for any $\vec x\in \cpoly{B}$, there exists $\vec x'\in \sus{0}$ such that $|\vec x - \vec x'|_\infty\le C(B+1)$.
\end{claim}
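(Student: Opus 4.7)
The plan is to decompose the membership $\vec x\in\cpoly{B}$ into bounded perturbations of the linear inequalities that define $\sus{0}$, and then invoke Hoffman's lemma applied to the fixed matrix $\ba_0 = \left[\begin{array}{c}\ba_{\text{S}}\\ \ba_{\text{T}}\end{array}\right]$. First, I would unpack $\vec x\in\cpoly{B}$ to obtain a non-negative vote-operation vector $\vo\in{\mathbb R}^{|\voset{}|}_{\ge 0}$ with $\vec c\cdot\vo\le B$ and $\vec x+\vo\times\vomatrix{}\in\targetpoly$. Since $\min_i[\vec c]_i = 1$, we have $|\vo|_1\le B$, and consequently $|\vo\times\vomatrix{}|_\infty\le C_0 B$ where $C_0\triangleq\max_{\vec o\in\voset{}}|\vec o|_\infty$ depends only on $\voset{}$.

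Next, I would express the constraints that $\vec x$ inherits from $\sourcepoly$ and $\targetpoly$ as perturbed versions of those defining $\sus{0}$. From $\vec x\in\sourcepoly$ we immediately obtain $\ba_{\text{S}}\invert{\vec x}\le\invert{\vbb_{\text{S}}}$. From $\vec x+\vo\times\vomatrix{}\in\targetpoly$, rearranging yields $\ba_{\text{T}}\invert{\vec x}\le\invert{\vbb_{\text{T}}}-\ba_{\text{T}}\invert{(\vo\times\vomatrix{})}$, whose right-hand side is coordinatewise bounded by $\invert{\vbb_{\text{T}}}+C_1 B\invert{\vec 1}$ for a constant $C_1$ depending only on $\ba_{\text{T}}$ and $C_0$. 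Stacking these, $\vec x$ satisfies the system $\ba_0\invert{\vec y}\le\invert{\vec 0}$ with coordinatewise positive residual of $L_\infty$ size at most $C_2(B+1)$ for a constant $C_2$ depending only on the PMV-instability setting.

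Finally, because we are in the case $\neg\condition{3}$, the set $\conv(\Pi)\cap\sus{0}$ is non-empty, so in particular $\sus{0}\ne\emptyset$ and the fixed system $\ba_0\invert{\vec y}\le\invert{\vec 0}$ is feasible. Hoffman's lemma, applied to the integer matrix $\ba_0$, then yields a point $\vec x'\in\sus{0}$ with $|\vec x-\vec x'|_\infty\le H(\ba_0)\cdot C_2(B+1)$, and $C\triangleq H(\ba_0)\cdot C_2$ is the constant promised by the claim. The main obstacle I anticipate is cosmetic rather than substantive: confirming that the $L_\infty$ version of Hoffman's lemma applies cleanly here and that $H(\ba_0)$ genuinely depends only on $\ba_0$ (and hence on the PMV-instability setting), not on $B$, $n$, $\vec\pi$, or $\vec x$. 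As a more self-contained alternative one could use the V-representations of $\sourcepoly$ and $\targetpoly$ as Minkowski sums of bounded polytopes with $\sourcepolyz$ and $\targetpolyz$ to extract cone witnesses in $\sourcepolyz$ and $\targetpolyz$ that are within $O(B+1)$ of $\vec x$ and of each other, and then merge them into a single $\vec x'\in\sus{0}$ by a direct polyhedral argument that is essentially a restatement of Hoffman's lemma for this particular system.
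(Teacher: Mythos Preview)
Your proposal is correct. Both your argument and the paper's reduce the claim to a sensitivity result for linear systems, but they organize the reduction differently. The paper works in the extended variable space $(\vec x,\vo)$: it writes two LPs, LP$_\poly^B$ (with right-hand sides $\vbb_{\text S},\vbb_{\text T},\vec 0,B$) and LP$_\cone^B$ (with all-zero right-hand sides), notes that the latter is feasible since $(\vec 0,\vec 0)$ satisfies it, and applies the LP sensitivity theorem of Cook et al.~(1986) to find a feasible solution $(\vec x',\vo')$ of LP$_\cone^B$ within $O(B+1)$ of $(\vec x,\vo)$; the constraints $-\vo'\le\vec 0$ and $\vec c\cdot\vo'\le 0$ with $\vec c>\vec 0$ then force $\vo'=\vec 0$, so $\vec x'\in\sus{0}$. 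You instead eliminate $\vo$ up front by bounding $|\vo\times\vomatrix{}|_\infty\le C_0B$ (using $|\vo|_1\le\vec c\cdot\vo\le B$, valid because $\min_i[\vec c]_i=1$), push this into the $\ba_{\text T}$ rows as a residual of size $O(B+1)$, and then apply Hoffman's lemma directly in $\vec x$-space to the homogeneous system $\ba_0\invert{\vec y}\le\invert{\vec 0}$. Your route is a bit more elementary since it avoids the lifted space, while the paper's approach has the advantage of reusing exactly the same Cook et al.\ machinery that appears in Claims~\ref{claim:int-close} and~\ref{claim:distance}. One small remark: you need not invoke $\neg\condition{3}$ to ensure $\sus{0}\ne\emptyset$, since $\sus{0}=\sourcepolyz\cap\targetpolyz$ is a cone and always contains $\vec 0$; dropping that hypothesis makes the claim self-contained, which matters because the paper reuses it outside the PT-$\Theta(\sqrt n)$ case.
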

\begin{proof}
The proof is done by analyzing the following two linear programs, denoted by LP$_{\poly}^B$ and LP$_{\cone }^B$ whose variables are $\vec x$ and $\vo$.

\renewcommand{\arraystretch}{1.5}
\begin{center}
\begin{tabular}{|c | c|}
\hline 
LP$_{\poly}^B$& LP$_{\cone}^B$ \\
\hline
\begin{tabular}{rl}
$\max$ &  $0$\\
s.t. & $\ba_{\text{S}}\times \invert{\vec x}\le \invert{\vbb_{\text S}}$ \\
& $\ba_{\text{T}}\times \invert{\vec x +\vo\times \vomatrix{}}\le \invert{\vbb_{\text T}}$ \\

& $-\vo\le \vec 0$\\
& $\vec c\cdot\vo \le B$
\end{tabular}
&  
\begin{tabular}{rl}
$\max$ &  $0$\\
s.t. & $\ba_{\text{S}}\times \invert{\vec x}\le \invert{\vec 0}$ \\
& $\ba_{\text{T}}\times \invert{\vec x +\vo\times \vomatrix{}}\le \invert{\vec 0}$ \\
& $-\vo\le \vec 0$\\
& $\vec c\cdot\vo \le 0$
\end{tabular}\\
\hline 
\end{tabular}
\end{center}
\renewcommand{\arraystretch}{1}
%The main difference between LP$_{\poly}^B$ and LP$_{\poly}$ (respectively, LP$_{\cone}^B$ and LP$_{\cone}$) is that LP$_{\poly}^B$ (respectively, LP$_{\cone}^B$ ) contains an additional budget constraint $\vec c\cdot\vo \le B$ (respectively, $\vec c\cdot\vo \le 0$).  
Because $\vec x\in \cpoly{B}$, there exists $\vo\ge \vec 0$ such that $\vec x+\vo\times \vomatrix{}\in \targetpoly$ and $\vec c\cdot \vec x\le B$. Therefore,  $(\vec x,\vec w)$ is a feasible solution to LP$_{\poly}^B$. Notice that LP$_{\cone}^B$ is feasible (for example, $\vec 0$ is a feasible solution) and LP$_{\poly}^B$ and LP$_{\cone}^B$ only differ on the right hand side of the inequalities. Therefore, due to~\citep[Theorem 5 (i)]{Cook86:Sensitivity}, there exists a feasible solution $(\vec x',\vec w')$  to LP$_{\cone}^B$ that is no more than $q\Delta \max\{|\vbb_{\text{S}}-\vec 0|_\infty,|\vbb_{\text{T}}-\vec 0|_\infty, B\} = O(B+1)$ away from $(\vec x,\vec w)$ in $L_\infty$, where $\Delta$ is the maximum absolute value of determinants of square sub-matrices of the left hand side of LP$_{\poly}^B$ and LP$_{\cone}^B$, i.e.,  $\multimatrix{\begin{array}{cc}\ba_{\text{S}} &  0 \\ \ba_{\text{T}} & \ba_{\text{T}}\times \invert{\vomatrix{}}\\ 0 & -{\mathbb I}\\ 0 & \vec c \end{array}}$. Recall that $\vec c\ge \vec 0$. This means that the $-\vo\le\vec 0$ constraint and the $\vec c\cdot\vo\le 0$ constraint in LP$_{\cone}^B$ imply that $\vo = \vec 0$, which means that $\vec x'\in \sus{0}$. This completes the proof of Claim~\ref{claim:close-to-cone0}.
\end{proof}

Like in \myhyperlink{ub-step2.1}{Step 2.1}, we define $\bd_0$ to be the matrix obtained from applying Gauss-Jordan elimination on $\multimatrix{\ba_0^=\\ \vec 1}$. That is, for every $\vec x' = (\vec x_{I_0}', \vec x_{I_1}')\in \sus{0}$, let $n'=\vec x'\cdot \vec 1$,  we have 
\begin{equation}
\label{eq:x-rep-I0}\vec x_{I_0}' =  (\vec x_{I_1}',n')\times \bd_0
\end{equation}
Next, we use Claim~\ref{claim:close-to-cone0} to prove that for any $\vec y = (\vec y_{I_0}, \vec y_{I_1})\in \cpolyn{B}$, $|\vec y - ((\vec y_{I_1},n)\times \bd_0, \vec y_{I_1})|_\infty = O(B+1)$, which would prove (\ref{eq:range-I1}).

For any $\vec y = (\vec y_{I_0}, \vec y_{I_1})\in \cpolyn{B}$, let $\vec x' = (\vec x_{I_0}', \vec x_{I_1}')\in \sus{0}$ denote the vector in $\sus{0}$ that is no more than $C(B+1)$ away from $\vec y$ guaranteed by Claim~\ref{claim:close-to-cone0}. This means that $\vec x_{I_1}'$ and $n'$ are $O(B+1)$ away from $\vec y_{I_1}$ and $n$, respectively, which implies that $\vec x_{I_0}' = (\vec x_{I_1}',n')\times \bd_0 $ is $O(B+1)$ away from $(\vec y_{I_1},n)\times \bd_0$. Also because $\vec x_{I_0}'$ is $O(B+1)$ away from $\vec y_{I_0}$, we have that $\vec y_{I_0}$ is $O(B+1)$ away from $(\vec y_{I_1},n)\times \bd_0$. Formally, let $d_{\max}^0$ denote the maximum absolute value of entries in $\bd_0$, we have the following bound.
\begin{align*}
& |\vec y - ((\vec y_{I_1},n)\times \bd_0, \vec y_{I_1})|_\infty = |\vec y_{I_0} - (\vec y_{I_1},n)\times \bd_0|_\infty\\
\le &|\vec y_{I_0} - \vec x_{I_0}'|_\infty+|  \vec x_{I_0}' - (\vec x_{I_1}',n')\times \bd_0|_\infty+|  (\vec x_{I_1}',n')\times \bd_0- (\vec y_{I_1},n)\times \bd_0|_\infty\\
\le & C(B+1)+0+ |  (\vec x_{I_1}'- \vec y_{I_1},n'-n)\times \bd_0 |_\infty& \text{by (\ref{eq:x-rep-I0})}\\
\le & C(B+1)(2qd_{\max}^0 +1)
\end{align*}
The last inequality holds because $|\vec x_{I_1}'- \vec y_{I_1}|_\infty \le C(B+1)$ and $|n'-n|\le qC(B+1)$. This completes the proof of \myhyperlink{ub-step2.2}{Step 2.2}.

\vspace{2mm}
\noindent{\bf\boldmath \hypertarget{ub-step3}{Step 3} of poly upper bound: Upper-bound the probability.}   Recall that $\vXp=\hist(P)$, where $P$ consists of $n$ independent random variables $Y_1,\ldots,Y_n$ distributed as $\vec \pi$.  Like~\citep{Xia2021:How-Likely}, we represent each $Y_j$  as two random variables $Z_j$ and $W_j$ and a simple Bayesian network based on the partition $[q] = I_0\cup I_1$. 

\begin{dfn}[\bf Alternative representation of $\bm{Y_1,\ldots,Y_{n}}$~\citep{Xia2020:The-Smoothed}]\label{dfn:altfory} For each $j\le n$, we define a Bayesian network with two random variables $Z_j \in \{0,1\}$ and $W_j\in [q]$, where $Z_j$ is the parent of $W_j$. The conditional probabilities are defined as follows.
\begin{itemize}
\item For each $\ell\in \{0,1\}$, let $\Pr(Z_j = \ell)  \triangleq \Pr(Y_j \in I_\ell)$. 
\item For each $\ell\in \{0,1\}$ and each $t\le q$, let $\Pr(W_j = t|Z_j=\ell)  \triangleq \Pr(Y_j = t|Y_j\in I_\ell)$.
\end{itemize} 
\end{dfn}
In particular, if $t\not\in I_\ell$ then $\Pr(W_j = t|Z_j=\ell)=0$. It is not hard to verify  that for any $j\le n$, $W_j$ has the same distribution as $Y_j$. For any $\vec z\in \{0,1\}^n$, we let $\ind_0(\vec z\,)\subseteq [n]$ denote the indices of components of $\vec z$ that equal to $0$. Given $\vec z$, we define the following random variables. 
\begin{itemize}
\item Let $\vec W_{\ind_0(\vec z\,)}  \triangleq \{W_j:j\in\ind_0(\vec z\,)\}$. That is, $\vec W_{\ind_0(\vec z\,)}$ consists of random variables $\{W_j: z_j = 0\}$.
\item Let $\hist(\vec W_{\ind_0(\vec z\,)})$ denote the vector of the $q-d_0+1 = |I_0|$ random variables that correspond to the histogram of $\vec W_{\ind_0(\vec z\,)}$ restricted to $I_0$. Technically, the domain of  every random variable in $\vec W_{\ind_0(\vec z\,)}$ is $[q]$, but since  they only receive positive probabilities on $I_0$, they are treated as random variables over $I_0$ when $\hist(\vec W_{\ind_0(\vec z\,)})$ is defined. 
\item Similarly, let $\vec W_{\ind_1(\vec z\,)} \triangleq\{W_j:j\in\ind_1(\vec z\,)\}$ and let $\hist(\vec W_{\ind_1(\vec z\,)})$ denote  the vector of $|I_1| = d_0-1$ random variables that correspond to the histogram of $\vec W_{\ind_1(\vec z\,)}$.
\end{itemize}
%For any $n\in\mathbb N$, let $\cpolynint{B}$ denote the integer vectors $\vec x$ in $\cpoly{B}$ such that $\vec x\cdot\vec 1 = n$. That is,
%$$\cpolynint{B} \triangleq \left\{\vec x\in \cpoly{B}\cap {\mathbb Z}^{m!}: \vec x\cdot\vec 1 = n\right\}$$

Let $\cpolynint{B}  \triangleq \cpolyn{B}\cap {\mathbb Z}^q$. For any $\vec y_{1}\in {\mathbb Z}_{\ge 0}^{d_0-1}$, we let $\cpolynint{B}|_{\vec y_{1}}$ denote the $I_0$ components of $\vec y\in \cpolynint{B}$ whose $I_1$ components are $\vec y_1$. %projection of $\poly$ on the hyperplane that corresponds to all vectors whose restrictions on $I_1$  equal to $\vec h_1$. 
Formally,  
$$\cpolynint{B}|_{\vec y_{1}} \triangleq\left\{\vec y_0\in {\mathbb Z}_{\ge 0}^{q-d_0+1}: (\vec y_0,\vec y_1)\in \cpolynint{B}\right\}$$ 
  
We recall the following calculations in~\citep{Xia2021:How-Likely}  for any $\vec\pi\in \Pi^n$, which is done by first separating the $|\ind_0(\vec z\,)| \ge 0.9 \epsilon n$ case (which happens with $1-\exp(-\Omega(n))$ probability) from the $|\ind_0(\vec z\,)|<  0.9\epsilon n$ (which happens with exponentially small probability), then applying the law of total probability conditioned on $\vec Z$, and finally using the conditional independence in the Bayesian network (i.e., $\vec W$'s are independent given $\vec Z$) to simplify the formula.
\begin{align}
&\Pr\nolimits_{P\sim \vec\pi}(\hist(P)\in \cpolynint{B})  \le  \sum_{\vec z\in \{0,1\}^n: |\ind_0(\vec z\,)| \ge 0.9 \epsilon n} \Pr(\vec Z = \vec z\,) \sum_{\vec y_1\in {\mathbb Z}_{\ge 0}^{d_0-1}}\Pr\left(\hist(\vec W_{\ind_1(\vec z\,)}) =\vec y_1 \;\middle\vert\; \vec Z = \vec z\right)\notag\\
&\hspace{40mm}\times \Pr\left(\hist(\vec W_{\ind_0(\vec z\,)}) \in \cpolynint{B}|_{\vec y_1} \;\middle\vert\; \vec Z = \vec z\right) + \Pr(|\ind_0(\vec z\,)| < 0.9\epsilon n)\label{eq:histz-T}
\end{align}

To upper-bound (\ref{eq:histz-T}), we will show that  for any $\vec z$ with $|\ind_0(\vec z\,)| \ge 0.9 \epsilon n$ and any $\vec y_1 \in {\mathbb Z}_{\ge 0}^{d_0-1}$,
\begin{equation}
\label{eq:histo-T}
\Pr\left(\hist(\vec W_{\ind_0(\vec z\,)}) \in \cpolynint{B}|_{\vec y_1} \;\middle\vert\; \vec Z = \vec z\right)=O((B+1)^{d_{\Delta}})\times O\left(n^{\frac{d_0-q}{2}}\right) 
\end{equation}
Conditioned on $\vec Z=\vec z$, $\hist(\vec W_{\ind_0(\vec z\,)})$ can be viewed as a PMV of $|\ind_0(\vec z\,)|$ strictly positive independent variables over $I_0=[q-d_0+1]$. Therefore, according to the point-wise anti-concentration bound~\cite[Lemma 3 in the Appendix]{Xia2020:The-Smoothed},  for any $\vec z$ with $|\ind_0(\vec z\,)| \ge 0.9 \epsilon n$ and any $\vec y_0\in \cpolynint{B}|_{\vec y_1}$, 
$$\Pr\left(\hist(\vec W_{\ind_0(\vec z\,)}) = \vec y_0\;\middle\vert\; \vec Z = \vec z\right) = O\left(|\ind_0(\vec z\,)|^{\frac{d_0-q}{2}}\right)= O\left(n^{\frac{d_0-q}{2}}\right)$$
Then, to prove (\ref{eq:histo-T}), it suffices to prove $ |\cpolynint{B}|_{\vec y_1}| = O((B+1)^{d_{\Delta}})$. This is done by enumerating vectors in $\cpolynint{B}|_{\vec y_1}$  as follows: According to \myhyperlink{ub-step2.2}{Step 2.2} of the poly upper bound, each $I_{0-}$ component of vectors in $\cpolynint{B}|_{\vec y_1}$ has no more than $\lceil C^*(B+1)+1\rceil $ choices, and given the $I_{0-}$ components, each $I_{0+}$ component has no more than $\lceil C^*+1\rceil $ choices, where $C^*$ is the maximum value of the constants in \myhyperlink{ub-step2.1}{Step 2.1} and \myhyperlink{ub-step2.2}{3.2}. Therefore, 
$$ |\cpolynint{B}|_{\vec y_1}| \le (C^*+1)^{q-d_{\infty}+1} (C^*(B+1)+1)^{d_{\infty}-d_0}  = O((B+1)^{d_{\infty}-d_0})= O((B+1)^{d_{\Delta}})$$
This proves  (\ref{eq:histo-T}).  Then, combining (\ref{eq:histz-T}) and  (\ref{eq:histo-T}), and following a  similar argument as in~\citep{Xia2021:How-Likely}, we have 
\begin{align*}
&\Pr\nolimits_{P\sim \vec\pi}(\hist(P)\in \csus{n,B})\le \Pr\nolimits_{P\sim \vec\pi}(\hist(P)\in \cpolynint{B})\\
\le & \sum_{\vec z\in \{0,1\}^n: |\ind_0(\vec z\,)| \ge 0.9 \epsilon n} \Pr(\vec Z = \vec z\,) \sum_{\vec y_1\in {\mathbb Z}_{\ge 0}^{d_0-1}}\Pr\left(\hist(\vec W_{\ind_1(\vec z\,)}) =\vec y_1 \;\middle\vert\; \vec Z = \vec z\right) \\
&\hspace{40mm} \times O((B+1)^{d_{\Delta}})\times O\left(n^{\frac{d_0-q}{2}}\right) + \exp(-\Theta(n))\\
= & \Pr(|\ind_0(\vec Z\,)| \ge 0.9 \epsilon n)\times O((B+1)^{d_{\Delta}})\times O\left(n^{\frac{d_0-q}{2}}\right) + \exp(-\Theta(n))\\
=&O\left((B+1)^{d_{\Delta}}\cdot \left(\frac{1}{\sqrt n}\right)^{q-d_0}\right)
\end{align*}
This proves the polynomial upper bound for $B\le \sqrt n$.

\vspace{3mm}
\noindent \hypertarget{PMV-proof-sup-sqrtn-ub-largeB}{{\bf \boldmath Proof for the polynomial upper bound of $\sup$ for $B > \sqrt n$.}} Notice that for any $B$ and $n$, $\cpoly{B,n}\subseteq \cpoly{\infty}$. Therefore, it suffices to prove the following stronger claim, which  holds for any $B$, any $n$, and any $\vec \pi\in \Pi^n$.
\begin{claim}
\label{claim:poly-upper-strong}
For any  $n$ and $\vec \pi\in \Pi^n$, $\Pr\left(\vXp \in \cpoly{\infty}\right) = O\left((\frac{1}{\sqrt n})^{q-d_\infty}\right)$.
\end{claim}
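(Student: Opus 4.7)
The plan is to mirror the three-step structure of the polynomial upper bound in the $B\le\sqrt n$ regime, but realigned to the implicit equalities of $\cpoly{\infty}$ rather than those of $\sus{0}$. The key observation is that Step~1 has already constructed a partition $[q]=I_{0+}\cup I_{1+}$ with $|I_{0+}|=q-d_\infty+1$ whose $I_{0+}$ columns of $\multimatrix{\ba_\infty^=\\\vec 1}$ are linearly independent (\myhyperlink{cond1}{Condition~1}), and Step~2.1 has already shown $\range{i}{\cpolyn{\infty},\vec y_{I_{1+}}}\le C^{*}$ for every $i\in I_{0+}$ and every $\vec y_{I_{1+}}$. Taken together, these two facts immediately give $|\cpolynint{\infty}|_{\vec y_{I_{1+}}}|=O(1)$ uniformly in $n$ and in $\vec y_{I_{1+}}$, so no additional geometric work is needed.

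What remains is to redo Step~3 with the partition $(I_{0+},I_{1+})$ swapped in for $(I_{0},I_{1})$. I would apply the Bayesian-network representation of Definition~\ref{dfn:altfory} using $I_{0+}$ and $I_{1+}$, so that $\vec Z$ records which of the $n$ agents lands in $I_{0+}$ and $\vec W_{\ind_{0+}(\vec z)}$ forms a PMV on $|I_{0+}|=q-d_\infty+1$ coordinates. Strict positivity of $\Pi$ and Hoeffding's inequality make $|\ind_{0+}(\vec Z)|\ge 0.9\epsilon n$ except on an event of probability $\exp(-\Theta(n))$; conditional on such a $\vec z$, the point-wise anti-concentration bound (\cite[Lemma~1]{Xia2021:How-Likely}, equivalently \cite[Lemma 3 in the Appendix]{Xia2020:The-Smoothed}) gives per-point mass $O(n^{-(|I_{0+}|-1)/2})=O(n^{-(q-d_\infty)/2})$. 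Multiplying this by the $O(1)$ bound on $|\cpolynint{\infty}|_{\vec y_{I_{1+}}}|$, then summing over $\vec y_{I_{1+}}$ against its marginal (total mass $\le 1$), yields
\[
\Pr\!\left(\vXp\in\cpoly{\infty}\right)\;\le\; O(n^{-(q-d_\infty)/2})\cdot O(1)+\exp(-\Theta(n))\;=\;O\!\left((1/\sqrt n)^{q-d_\infty}\right).
\]

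The main obstacle is essentially bookkeeping. Assumption~\ref{asmp:indwith2}, inherited from the enclosing PT-$\Theta(\sqrt n)$-$\sup$ proof, is exactly what guarantees $|I_{0+}|=q-d_\infty+1$ and thereby the target exponent; without it the exponent would drop by one. But if Assumption~\ref{asmp:indwith2} failed, then $\vec 1\in\Span(\ba_\infty^=)$ would force $\sus{\infty}\subseteq\{\vec x:\vec 1\cdot\vec x=0\}$ and hence $\conv(\Pi)\cap\sus{\infty}=\emptyset$, contradicting $\neg\condition{3}$ in the ambient PT-$\Theta(\sqrt n)$ case. The only other edge case is $\cpoly{\infty}=\emptyset$, where the probability is trivially $0$. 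So beyond invoking the already-proved Step~1 and Step~2.1 and repeating Step~3 with the $I_{0+},I_{1+}$ partition, nothing new is required.
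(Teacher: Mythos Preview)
Your proposal is correct and matches the paper's proof essentially line for line: the paper also reuses the partition $[q]=I_{0+}\cup I_{1+}$ and the bound~\eqref{eq:range-I1+} from Step~2.1, introduces the analogous Bayesian-network representation on $(I_{0+},I_{1+})$ (with $Z_j^+,W_j^+$), and then repeats the Step~3 computation with exponent $q-d_\infty$. The edge-case remarks you add about Assumption~\ref{asmp:indwith2} and $\cpoly{\infty}=\emptyset$ are accurate and consistent with how the paper handles the surrounding context.
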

\begin{proof}
The proof is similar to \myhyperlink{ub-step3}{Step 3 of the $B\le \sqrt n$ case} above. We will define a Bayesian network for the partition  $[q]=I_{0+}\cup I_{1+}$, and then apply (\ref{eq:range-I1+}) to enumerate vectors in $\cpolynint{\infty}|_{\vec y_{1+}}$.

\begin{dfn}[\bf Alternative representation of $\bm{Y_1,\ldots,Y_{n}}$]
\label{dfn:alt-for-y+} For each $j\le n$, we define a Bayesian network with two random variables $Z_j^+ \in \{0,1\}$ and $W_j^+\in [q]$, where $Z_j^+$ is the parent of $W_j^+$. The conditional probabilities are defined as follows.
\begin{itemize}
\item For each $\ell\in \{0,1\}$, let $\Pr(Z_j^+ = \ell) \triangleq \Pr(Y_j \in I_{\ell+})$. 
\item For each $\ell\in \{0,1\}$ and each $t\le q$, let $\Pr(W_j^+ = t|Z_j^+=\ell)  \triangleq \Pr(Y_j = t|Y_j\in I_{\ell+})$.
\end{itemize}
\end{dfn}
In particular, if $t\not\in I_{\ell+}$ then $\Pr(W_j^+ = t|Z_j^+=\ell)=0$.
It is not hard to verify  that for any $j\le n$, $W_j^+$ follows the same distribution as $Y_j$. For any $\vec z\in \{0,1\}^n$, we let $\ind_{0+}(\vec z\,)\subseteq [n]$  denote the indices of components of $\vec z$ that equal to $0$. 
Given $\vec z$, we define the following random variables. 
\begin{itemize}
\item Let $\vec W^+_{\ind_{0+}(\vec z\,)} \triangleq\{W_j^+:j\in\ind_{0+}(\vec z\,)\}$. That is, $\vec W^+_{\ind_{0+}(\vec z\,)}$ consists of random variables $\{W_j^+: z_j = 0\}$.
\item Let $\hist(\vec W^+_{\ind_{0+}(\vec z\,)})$ denote the vector of the $|I_{0+}|= q-d_\infty+1$ random variables that correspond to the histogram of $\vec W^+_{\ind_{0+}(\vec z\,)}$ restricted to $I_{0+}$.  
\item Similarly, let $\vec W^+_{\ind_{1+}(\vec z\,)} \triangleq\{W_j^+:j\in\ind_{1+}(\vec z\,)\}$ and let $\hist(\vec W^+_{\ind_{1+}(\vec z\,)})$ denote  the vector of $|I_{1+}| = d_\infty-1$ random variables that correspond to the histogram of $\vec W^+_{\ind_{1+}(\vec z\,)}$.
\end{itemize}
%For any $n\in\mathbb N$, let $\cpolynint{B}$ denote the integer vectors $\vec x$ in $\cpoly{B}$ such that $\vec x\cdot\vec 1 = n$. That is,
%$$\cpolynint{B} \triangleq \left\{\vec x\in \cpoly{B}\cap {\mathbb Z}^{m!}: \vec x\cdot\vec 1 = n\right\}$$

For any $\vec y_{1+}\in {\mathbb Z}_{\ge 0}^{I_{1+}}$, we let $\cpolynint{\infty}|_{\vec y_{1+}}$ denote the $I_{0+}$ components of $\vec y\in \cpolynint{\infty}$ whose $I_{1+}$ components are $\vec y_{1+}$. %projection of $\poly$ on the hyperplane that corresponds to all vectors whose restrictions on $I_{1+}$  equal to $\vec h_1$. 
Formally,  
$$\cpolynint{\infty}|_{\vec y_{1+}} \triangleq\left\{\vec y_{0+}\in {\mathbb Z}_{\ge 0}^{q-d_0+1}: (\vec y_{0+},\vec y_{1+})\in \cpolynint{\infty}\right\}$$ 
Like \myhyperlink{ub-step3}{Step 3 of the $B\le \sqrt n$ case},  for any $\vec\pi\in \Pi^n$,
\begin{align}
&\Pr\nolimits_{P\sim \vec\pi}(\hist(P)\in \cpolynint{\infty})\notag \\
 \le &  \sum_{\vec z\in \{0,1\}^n: |\ind_{0+}(\vec z\,)| \ge 0.9 \epsilon n} \Pr(\vec Z^+ = \vec z\,) \sum_{\vec y_{1+}\in {\mathbb Z}_{\ge 0}^{d_\infty-1}}\Pr\left(\hist(\vec W^+_{\ind_{1+}(\vec z\,)}) =\vec y_{1+} \;\middle\vert\; \vec Z^+ = \vec z\right)\notag\\
&\hspace{20mm}\times \Pr\left(\hist(\vec W^+_{\ind_{0+}(\vec z\,)}) \in \cpolynint{\infty}|_{\vec y_{1+}} \;\middle\vert\; \vec Z^+ = \vec z\right) + \Pr(|\ind_{0+}(\vec z\,)| < 0.9\epsilon n)\label{eq:histz-T+}
\end{align}
Following  (\ref{eq:range-I1+}) and the point-wise anti-concentration bound~\cite[Lemma 3 in the Appendix]{Xia2020:The-Smoothed},   for any $\vec z$ with $|\ind_{0+}(\vec z\,)| \ge 0.9 \epsilon n$ and any $\vec y_{1+} \in {\mathbb Z}_{\ge 0}^{d_\infty-1}$, we have
\begin{equation}
\label{eq:histo-T+}
\Pr\left(\hist(\vec W^+_{\ind_{0+}(\vec z\,)}) \in \cpolynint{\infty}|_{\vec y_{1+}} \;\middle\vert\; \vec Z^+ = \vec z\right)=O(1)\times O\left(n^{\frac{d_\infty-q}{2}}\right) 
\end{equation}
 Then, combining (\ref{eq:histz-T+}) and  (\ref{eq:histo-T+}) and recalling that $\hist(P)$ is a size-$n$ non-negative integer vector,  we have 
\begin{align*}
&\Pr\nolimits_{P\sim \vec\pi}(\hist(P)\in \csus{n,B})\le \Pr\nolimits_{P\sim \vec\pi}(\hist(P)\in \cpolynint{\infty}) \\
\le & \sum_{\vec z\in \{0,1\}^n: |\ind_{0+}(\vec z\,)| \ge 0.9 \epsilon n} \Pr(\vec Z^+ = \vec z\,) \sum_{\vec y_{1+}\in {\mathbb Z}_{\ge 0}^{d_\infty-1}}\Pr\left(\hist(\vec W^+_{\ind_{1+}(\vec z\,)}) =\vec y_{1+} \;\middle\vert\; \vec Z^+ = \vec z\right) \\
&\hspace{40mm} \times  O\left(n^{\frac{d_\infty-q}{2}}\right) + \exp(-\Theta(n))\\
= & \Pr(|\ind_{0+}(\vec Z^+\,)| \ge 0.9 \epsilon n) \times O\left(n^{\frac{d_\infty-q}{2}}\right) + \exp(-\Theta(n))
= O\left( \left(\frac{1}{\sqrt n}\right)^{q-d_\infty}\right)
\end{align*}
This proves Claim~\ref{claim:poly-upper-strong}.
\end{proof}
The polynomial upper bound for $B>\sqrt n$ follows after Claim~\ref{claim:poly-upper-strong}. 

\vspace{3mm}
\noindent \hypertarget{PMV-proof-sup-sqrtn-lb}{{\bf \boldmath Proof for the $\Omega\left(\dfrac{\min\{B+1,\sqrt n\}^{d_{\Delta}}}{(\sqrt n)^{q-d_0}} \right) $ lower bound  of PT-$\Theta(\sqrt n)$-$\sup$.}}    It suffices to prove the lower bound for $B\le \sqrt n$, because when $B\ge \sqrt n$, we have 
$$\sup_{\vec\pi\in\Pi^n}\Pr(\vXp\in\csus{n,B})\ge\sup_{\vec\pi\in\Pi^n}\Pr(\vXp\in\csus{n,\sqrt n}) = \Omega\left( (\frac{1}{\sqrt n})^{q-d_\infty} \right),$$
which is the desired lower bound. 

The proof  for $B\le \sqrt n$ proceeds in three steps. In \myhyperlink{lb-step1}{Step 1}, for any strictly positive $\pi^*\in\sus{0}$, we identify  an $O(\sqrt n)$ neighborhood of $n\cdot\pi^*$ that contains  $\Omega \left((B+1)^{d_{\Delta}}\cdot \left( \sqrt n\right)^{d_0-1}\right)$ integer vectors in $\csus{n,B}$. In   \myhyperlink{lb-step2}{Step 2}, for any $\pi^*\in\conv(\Pi)$, we define $\vec \pi^\circ=(\pi^\circ_1,\ldots,\pi^\circ_n)\in \Pi^n$ such that $\sum_{j=1}^n\pi_j^\circ$ is $O(\sqrt n)$ away from $n\cdot\pi^*$.  The lower bound is then proved in \myhyperlink{lb-step3}{Step 3}. Among the three steps, \myhyperlink{lb-step1}{Step 1} is the most involved part, and \myhyperlink{lb-step2}{Steps 2} and \myhyperlink{lb-step3}{3} follow after similar proofs in~\citep{Xia2021:How-Likely}.

\vspace{2mm}
\noindent{\bf\boldmath \hypertarget{lb-step1}{Step 1} of poly lower bound:  Identify $\Omega \left((B+1)^{d_{\Delta}}\cdot \left( \sqrt n\right)^{d_0-1}\right)$  vectors in $\csus{n,B}$.}   

\vspace{2mm}\noindent{\bf\bf Overview  of Step 1.} The proof  proceeds in  four sub-steps.  In \myhyperlink{lb-step1.1}{Step 1.1}, we prove in Claim~\ref{claim:int-close} that there exists a constant $C'$ such that for every (possibly non-integer) non-negative vector in $\cpolyn{B}$, there is an ``accompany'' integer vector in $\csus{n,B}$ that is at most $C'$ away in $L_\infty$. Then, we  identify a set $\calR_{B,n}$ of possibly non-integer vectors in $\cpolyn{B}$ that are at least $2C'$ away from each other, and apply Claim~\ref{claim:int-close} to obtain a set $\calR^{\mathbb Z}_{B,n}$ of integer vectors in $\csus{n,B}$ of the same size (as $\calR_{B,n}$). To define $\calR_{B,n}$, we explore two directions in an $O(\sqrt n)$ neighborhood of $n\cdot\pi^*$: the $\calP_0$ direction and the $\calP_\infty$ direction (which we recall from (\ref{eq:calP}) are affine hulls of $\sus{0}$ and $\sus{\infty}$, respectively). More precisely, we will  first enumerate $\Omega((\sqrt n)^{d_0-1})$ vectors in a  $\calP_0$ neighborhood of $n\cdot\pi^*$ (defined as $\calR_n^0$ in \myhyperlink{lb-step1.2}{Step 1.2}), and then enumerate  $\Omega((B+1)^{d_\Delta})$ vectors in a neighborhood of $n\cdot\pi^*$ (defined as $\calR_B^\infty$ in \myhyperlink{lb-step1.3}{Step 1.3}) that is a complement of $\calP_0$ in $\calP_\infty$. Finally, in \myhyperlink{lb-step1.4}{Step 1.4}, we formally define $\calR_{B,n}$ and $\calR^{\mathbb Z}_{B,n}$ and prove that  $|\calR_{B,n}|  = |\calR^{\mathbb Z}_{B,n}| = \Omega((B+1)^{d_\Delta}\cdot (\sqrt n)^{d_0-1})$ and $\calR_{B,n}\subseteq \cpolyn{B}\cap {\mathbb R_{\ge 0}^q}$  in Claim~\ref{claim:far-away} and Claim~\ref{claim:calR-in-HB}, respectively.

\vspace{2mm}
\noindent{\bf\boldmath \hypertarget{lb-step1.1}{Step 1.1}: Define $C'$.} We prove the following claim.

\begin{claim}
\label{claim:int-close}
There exists a constant  $C'$  that does not depend on $n$ or $B$, such that for any  $\vec x\in \cpolyn{B}\cap{\mathbb R}_{\ge 0}^q$, there exists a  (non-negative integer) vector $\vec x'\in \csus{n,B}$  such that $|\vec x - \vec x'|_\infty< C'$. 
 \end{claim}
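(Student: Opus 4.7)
The plan is to invoke the integer-programming proximity theorem of Cook, Gerards, Schrijver, and Seb\H{o} (1986), whose LP-sensitivity counterpart is already used in the proof of Claim~\ref{claim:close-to-cone0}. Membership of a vector $\vec y$ in $\cpolyn{B}$ (respectively $\csus{n,B}$) is witnessed by the existence of a pair $(\vec y,\vo)$ with $\vec y\in\mathbb{R}_{\ge 0}^{q}$ (respectively $\mathbb{Z}_{\ge 0}^{q}$) and $\vo\in\mathbb{R}_{\ge 0}^{|\voset{}|}$ (respectively $\mathbb{Z}_{\ge 0}^{|\voset{}|}$) satisfying the linear system
\begin{equation*}
\ba_{\text{S}}\times\invert{\vec y}\le\invert{\vbb_{\text{S}}},\quad \ba_{\text{T}}\times\invert{(\vec y+\vo\times\vomatrix{})}\le\invert{\vbb_{\text{T}}},\quad \vec y\cdot\vec 1=n,\quad \vec c\cdot\vo\le B.
\end{equation*}
After rewriting $\vec y\cdot\vec 1=n$ as the pair of inequalities $\vec y\cdot\vec 1\le n$ and $-\vec y\cdot\vec 1\le -n$, the combined coefficient matrix $M$ has integer entries, since $\ba_{\text{S}}$, $\ba_{\text{T}}$, $\vomatrix{}$, and $\vec 1$ are integer by the definition of the PMV-instability setting; for $\vec c$, the normalization $\min_i[\vec c]_i=1$ together with rationality lets us clear denominators by simultaneously rescaling $\vec c$ and $B$, which preserves the feasibility region.

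Next I would apply the proximity theorem with the trivial objective $\vec 0$. Under this objective, both the LP relaxation and the IP have finite maximum $0$: the LP is feasible by hypothesis (witnessed by some $(\vec x,\vo)$ for the given fractional $\vec x\in\cpolyn{B}\cap\mathbb{R}_{\ge 0}^q$), and the IP is feasible in the only regime where Claim~\ref{claim:int-close} is used, namely $\csus{n,B}\ne\emptyset$ (the case $\csus{n,B}=\emptyset$ is the $0$ case of Theorem~\ref{thm:PMV-instability}, dispatched separately in the outer proof). Every feasible point is therefore an optimum, and part~(i) of the theorem produces an integer feasible pair $(\vec x',\vo')$ with
\begin{equation*}
|(\vec x,\vo)-(\vec x',\vo')|_\infty \le (q+|\voset{}|)\cdot\Delta,
\end{equation*}
where $\Delta$ is the maximum absolute value of subdeterminants of $M$. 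Projecting to the $\vec x$ coordinates and setting $C'=(q+|\voset{}|)\Delta+1$ yields $\vec x'\in\csus{n,B}$ with $|\vec x-\vec x'|_\infty<C'$.

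The only real obstacle is structural bookkeeping: one must verify that all the problem data $\ba_{\text{S}},\ba_{\text{T}},\vomatrix{},\vec 1,\vec c$ assemble into an integer matrix $M$, and that the inputs $n$ and $B$ appear only on the right-hand side of the system, so that the subdeterminant bound $\Delta$ — and hence $C'$ — is determined solely by the PMV-instability setting $\vosetting$. Both checks go through directly from the definition. The uniformity of $C'$ in $B$ is exactly what is needed for the subsequent enumeration of $\Omega((B+1)^{d_{\Delta}}(\sqrt n)^{d_0-1})$ well-separated integer vectors in $\csus{n,B}$ carried out in the remaining steps of the polynomial lower bound.
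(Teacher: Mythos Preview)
Your proposal is correct and essentially identical to the paper's own proof: both set up the same linear program in the variables $(\vec x,\vo)$, observe that it has a real feasible solution (from $\vec x\in\cpolyn{B}\cap\mathbb{R}_{\ge 0}^q$) and an integer feasible solution (from $\csus{n,B}\ne\emptyset$), and then invoke \cite[Theorem~1]{Cook86:Sensitivity} to get an integer feasible pair within $L_\infty$ distance $(q+|\voset{}|)\Delta$, where $\Delta$ depends only on the coefficient matrix and hence not on $n$ or $B$. Your bookkeeping about integrality of $\vec c$ (clearing denominators) and the $+1$ to secure the strict inequality are minor refinements the paper leaves implicit.
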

\begin{proof} 

Consider the following linear program LP$_{B,n}$  whose variables are $\vec x$ and $\vo$:
\begin{equation*} 
\label{eq:LPBn}
LP_{B,n} = \left\{\text{\begin{tabular}{rl}
$\max$ &\ \ \ $ 0$\\
s.t. & $\ba_{\text{S}} \times \invert{\vec x} \le \invert{\vbb_\text{S}}$\\
 & $\ba_{\text{T}} \times \invert{\vec x + \vo\times \vomatrix{}} \le \invert{\vbb_\text{T}}$\\
 &  $\vec x\cdot\vec 1 = n$\\
  &  $\vec c\cdot \vo \le B$\\
   &  $\vec x  \ge \vec 0$, $\vo  \ge \vec 0$
\end{tabular}}
\right.
\end{equation*}
It is not hard to verify that, because $\vec x\in \cpolyn{B}\cap {\mathbb R_{\ge 0}^q}$, LP$_{B,n}$ has a feasible solution $(\vec x,\vo)$ for some $\vo\ge\vec 0$ (which may not be an integer vector).  Recall that   $\csus{n,B}\ne\emptyset$, which means that LP$_{B,n}$ has a feasible integer solution. Therefore, by~\citep[Theorem 1]{Cook86:Sensitivity},  LP$_{B,n}$ has an integer solution $(\vec x',\vo')$ whose $L_\infty$ distance to $(\vec x,\vo)$ is no more than $(q+|\voset{}|)\Delta$, where $\Delta$ is the maximum absolute determinant of square submatrices of the matrix that defines LP$_{B,n}$, i.e., $\left[\begin{array}{rc}
\ba_{\text{S}} & \vec 0\\
\ba_{\text{T}} & \ba_{\text{T}}\times \invert{\vomatrix{}}\\
\vec 1&\vec 0\\
- \vec 1&\vec 0\\
\vec 0&\vec c \\
- \mathbb I&\vec 0\\
\vec 0& -\mathbb I \ \ \ 
\end{array}\right]$. Notice that this matrix does not depend on $B$ or $n$. 
The claim follows after letting $C' \triangleq (q+|\voset{}|)\Delta$ and noticing that $(\vec x',\vo\,')\in\csus{n,B}$.
\end{proof}

\vspace{2mm}
\noindent{\bf\boldmath \hypertarget{lb-step1.2}{Step 1.2}: Define $\calR_n^0$.} Let $\calP_0^*$ denote the size-$0$ vectors of $\calP_0$. That is,
$$\calP_0^* = \calP_0\cap\left\{\vec x:\vec x\cdot \vec 1 = 0\right\} = \left\{\vec x\in{\mathbb R}^q: \left[\begin{array}{c}\ba_0^=\\  \vec 1\end{array}\right]\times \invert{\vec x} = \invert{\vec 0}\right\}$$
Recall from Assumption~\ref{asmp:indwith1} that $\rank\left(\left[\begin{array}{c}\ba_0^=\\  \vec 1\end{array}\right]\right) = \rank(\ba_0^=) +1 = q-d_0 +1$, which means that $\dim(\calP_0^*) = q -\rank\left(\multimatrix{\ba_0^=\\  \vec 1} \right) = d_0-1$. Therefore, $\calP_0^*$ contains a basis of  $d_0-1$ linearly independent vectors, denoted by  $\bB^0=\{\vec p_1^{\,0},\ldots,{\vec p_{d_0-1}}^{\,0}\}$.  Let ${\mathbb L^0}$ denote the lattice generated by $\bB^0$  excluding $\vec 0$. That is, 
$$ {\mathbb L^0} = \left\{ \sum\nolimits_{i=1}^{d_0-1} \gamma_i\cdot\vec p_i^{\,0}: \forall {i\le d_0-1}, \gamma_i\in {\mathbb Z}\text{ and } \exists  i\le d_0-1 \text{ s.t. } \gamma_i \ne 0\right\}$$
Let $C^0$ denote the minimum $L_\infty$ norm of all vectors in ${\mathbb L^0}$, i.e.,
$$C^0 =  \inf \left\{\left|\vec x \right|_\infty: \vec x\in{\mathbb L^0} \right\}$$
Next, we prove  $C^0>0$. Suppose for the sake of contradiction that $C^0=0$. Then,  there exist  a sequence of vectors 
$$\left\{\vec x^j = \sum\nolimits_{i=1}^{d_0-1} \gamma_i^j\cdot\vec p_i ^{\,0}: j\in\mathbb N\right\}$$
such that for all $j\in\mathbb N$, $|\vec x^j|_\infty\le \frac 1j$. Let $\vec\gamma^j = (\gamma_1^j,\ldots,\gamma_{d_0-1}^j)$. Because $|\vec\gamma^j|_\infty \ge 1$, we have  $ \frac{\vec x^j}{|\vec\gamma^j|_\infty}\le \frac 1j$. Notice that $\{\vec \gamma\in {\mathbb R}^{d_0-1}:|\vec \gamma|_\infty=1\}$ is closed and compact,  there exists a subsequence of $\left\{\frac{\vec\gamma^j}{|\vec\gamma^j|_\infty}\right\}$ that converges to a vector $\vec\gamma^*$ with $|\vec\gamma^*|_\infty =1$. Let  $\vec x^* = \sum\nolimits_{i=1}^{d_0-1} \gamma_i^*\cdot\vec p_i^{\,0}$. It follows that $\left\{\frac{\vec x^j}{|\vec\gamma^j|_\infty} : j\in\mathbb N \right\}$ converges to $\vec x^*$, which means that $|\vec x^*|_\infty = \lim_{j\ra\infty }\left|\frac{\vec x^j}{|\vec\gamma^j|_\infty}\right|_\infty = 0$.  This contradicts the  linear independence of vectors in $\bB^0$.  

Then, we use $C^0$ and $\bB^0$ to define a subset of $\frac{2C'}{C^0}\cdot {\mathbb L^0}\cup \{\vec 0\}$ as follows.
\begin{equation}
\label{eq:calR-n-0}
\calR_n^0 = \left\{\sum\nolimits_{i=1}^{d_0-1} \frac{2C'}{C^0}\gamma_i\cdot\vec p_i^{\,0}: \forall i\le d_0-1, \gamma_i\in \{ 0, 1,\ldots,\lfloor  \sqrt n \rfloor \}  \right\}
\end{equation}
By definition, $\calR_n^0\subseteq \calP_0^*\subseteq  \calP_0$, and for every vector $\vec x\in \calR_n^0$, we have $\vec x\cdot \vec 1 = 0$ and $|\vec x|_\infty = O(\sqrt n)$. The value $\frac{2C'}{C^0}$ is chosen so that  the $L_\infty$ distance between any pair of different vectors in $\calR_n^0$ is at least $2C'$.

\vspace{2mm}
\noindent{\bf\boldmath \hypertarget{lb-step1.3}{Step 1.3}: Define $\calR_B^\infty$.}  Recall that $\calP_0\subseteq \calP_\infty$, $d_0 = \dim(\calP_0)$ and $d_{\infty} = \dim(\sus{\infty}) =\dim(\calP_\infty)$. In the following procedure, we define a set of $d_\Delta$ linear independent vectors $\bB^\infty\subseteq \sus{\infty}$, which are the basis of a complement of $\calP_0$ in $\calP_\infty$.

\vspace{2mm}\noindent{\bf\bf \bf\boldmath Procedure for defining $\bB^\infty$.}Let $\{\vec p_1,\ldots,\vec p_{d_\infty}\}$ denote an arbitrary set of $d_\infty$ linearly independent vectors in 
$\sus{\infty}$, whose existence is guaranteed by the fact that $\dim(\sus{\infty}) = d_{\infty}$.
Start with $\bB^\infty =\emptyset$. For every $1\le j\le d_\infty$, we add $\vec p_j$ to $\bB^\infty$ if and only if it is linearly independent with $\calP_0\cup \bB^\infty$. At the end of procedure we have $|\bB^\infty| = d_\infty - d_0= d_\Delta$. W.l.o.g., let $\bB^\infty=\{\vec p_1^{\,\infty},\ldots,{\vec p_{d_\Delta}}^{\,\infty}\}$.

Let $\overline{\calP_0} = \Span(\bB^\infty)$. It follows that  $\dim(\overline{\calP_0}) = d_{\Delta}$ and
\begin{equation}
\label{eq:span-diff}
\calP_0\cap \overline{\calP_0}=\{\vec 0\}\text{ and }  \calP_0 +  \overline{\calP_0} = \calP_\infty
\end{equation}
 For each $i\le d_{\Delta}$, because $\vec p_i^{\,\infty}\in \sus{\infty}$, we have $\vec p_i^{\,\infty}\in\sourcepolyz$ and we can write $\vec p_i^{\,\infty} = \vec y_i^{\,\infty} - \vo_i^{\,\infty}\times \vomatrix{}$, where $\vec y_i^{\,\infty}\in \targetpolyz$ and $\vo_i^{\,\infty}\ge \vec 0$. For every $i\le d_\Delta$, we must have $\vec w_i^{\,\infty}\ne \vec 0$, because otherwise $\vec p_i^{\,\infty}\in \sus{0}\subseteq \calP_0$, which contradicts (\ref{eq:span-diff}), because $\vec p_i^{\,\infty}\in \overline{\calP_0}$.  Recall that  $\vec c >\vec 0$. W.l.o.g.~we can assume $\vec c\cdot\vec w_i^{\,\infty} =1$, otherwise we divide $\vec p_i^{\,\infty}$ by $\vec c\cdot \vec w_i^{\,\infty} $. 

Let $\mathbb L^\infty$ denote the lattice generated by $\bB^\infty$ excluding $\vec 0$. That is, $${\mathbb L^\infty} = \left\{ \sum\nolimits_{i=1}^{d_{\Delta}} \eta_i\cdot\vec p_i^{\,\infty}: \forall {i\le d_{\Delta}}, \eta_i\in {\mathbb Z}\text{ and } \exists  i\le d_{\Delta} \text{ s.t. }  \eta_i \ne 0\right\}$$
Let $C^\infty$ denote the minimum distance between ${\mathbb L^\infty}$ and $\calP_0$. That is, 
\begin{equation}
\label{eq:dis-P0-Pinfty}C^\infty = \inf \left\{\left|\vec x^{\,\infty} - \vec y\,\right|_\infty: \vec x^{\,\infty}\in{\mathbb L^\infty}, \vec y\in \calP_0\right\}
\end{equation}
Following an argument that is similar to the proof of  $C^0>0$, we have $C^\infty>0$. Formally, for the sake of contradiction suppose $C^\infty=0$. Then, there exist two sequences of vectors 
$$\left\{\vec x^j = \sum\nolimits_{i=1}^{d_{\Delta}} \eta_i^j\cdot\vec p_i^{\,\infty} : j\in\mathbb N\right\}\text{ and }\left\{\vec y^j\in \calP_0:j\in\mathbb N\right\},$$
such that for all $j\in\mathbb N$,  $|\vec x^j -\vec y^j|_\infty\le \frac 1j$. Let $\vec\eta^j = (\eta_1^j,\ldots,\eta_{d_{\Delta}}^j)$. Because $|\vec\eta^j|_1 \ge 1$ and $\calP_0$ includes $\vec 0$, the distance between $\frac{\vec x^j}{|\vec\eta^j|_1}$ and $\calP_0$ is at most $\frac 1j$. Notice that $\{\vec \eta\in {\mathbb R}^{d_\Delta}:|\vec \eta|_1=1\}$ is closed and compact,  there exists a subsequence of $\left\{\frac{\vec\eta^j}{|\vec\eta^j|_1}\right\}$ that converges to a vector $\vec\eta^*$ with $|\vec\eta^*|_1 =1$. Let $\left\{j_i: i\in\mathbb N\right\}$ denote the indices of the subsequence. Let  $\vec x^* = \sum\nolimits_{i=1}^{d_{\Delta}} \eta_i^*\cdot\vec p_i^{\,\infty}$. Because $\vec p_i^{\,\infty}$'s are linearly independent, we have $\vec x^*\ne \vec 0$. Then,  $\{\vec x^{j_i}:i\in \mathbb N\}$ and  $\{\vec y^{j_i}:i\in \mathbb N\}$ both converge to $\vec x^*$. Because $\calP_0$ is closed, we have $\vec x^*\in \calP_0$, which means that $\vec x^*\in  \overline{\calP_0}\cap \calP_0= \{\vec 0\}$. This contradicts (\ref{eq:span-diff}).

Because $\csus{n,B}\ne \emptyset$, the infimum of all $B^*\ge 0$  such that there exists $n^*\in\mathbb N$ so that $\csus{B^*,n^*}\ne \emptyset$ is well-defined, formally defined as $B^{\#}$ as follows.
\begin{equation}
\label{eq:B-sharp}
B^{\#} =  \inf \{B^*\ge 0: \exists n^*\in{\mathbb N}\text{ s.t. } \csus{B^*,n^*}\ne \emptyset\}
\end{equation}
We note that there exists $n^{\#}\in\mathbb N$ such that $\csus{B^{\#},n^{\#}}\ne \emptyset$, because  there are finite number of combinations of operations whose total budget is under $B$, and $B^{\#}$ is the minimum budget among the successful combinations (at a non-negative integer vector whose size is $n^{\#}$).  Clearly $B\ge B^{\#}$. Define
\begin{equation}
\label{eq:calR-B-infty}
\calR_B^\infty = \left\{\sum\nolimits_{i=1}^{d_{\Delta}} \frac{2C'}{C^\infty}\eta_i\cdot\vec p_i^{\,\infty}: \forall i\le d_{\Delta},\eta_i\in \{0,1,\ldots, \left\lfloor\frac{(B-B^{\#})C^\infty}{2d_{\Delta}C'}\right\rfloor\}\right\}
\end{equation}
where we recall that $C'$ is the constant guaranteed by Claim~\ref{claim:int-close}. Intuitively, $\calR_B^\infty$ consists of some ``grids'' in $\frac{2C'}{C^\infty}\cdot {\mathbb L^\infty}\cup \{\vec 0\}$, which is a subset of $\sus{\infty}$, because for all $i\le d_\Delta$, $\vec p_i^{\,\infty}\in \sus{\infty}$. 
%\begin{claim}\label{claim:R-infty} 
%$ \calR_B^\infty\subseteq \cone_{B-B^{\#}}$.
%\end{claim}
%\begin{proof}
%Recall that by construction, for each $i\le d_{\Delta}$,   $\vec p_i^{\,\infty}\in\sourcepolyz$ and  $\vec p_i^{\,\infty} = \vec y_i^{\,\infty} - \vo_i^{\,\infty}\times \vomatrix{}$, where $\vec y_i^{\,\infty}\in \targetpolyz$, $\vo_i^{\,\infty}\ge \vec 0$, and $|\vo_i^{\,\infty}|_1 =1$. For every $\vec x^{\,\infty}\in \calR_B^\infty$, such that 
%$\vec x^{\,\infty} = \sum\nolimits_{i=1}^{d_{\Delta}} \frac{2C'}{C^\infty}\eta_i\cdot\vec p_i^{\,\infty}$, we let 
%$$\vec y^{\,\infty} = \sum\nolimits_{i=1}^{d_{\Delta}} \frac{2C'}{C^\infty}\eta_i\cdot\vec y_i^{\,\infty}\text{ and }  \vo^{\,\infty} = \sum\nolimits_{i=1}^{d_{\Delta}} \frac{2C'}{C^\infty}\eta_i\cdot \vo_i^{\,\infty}$$
%It follows that $\vec x^{\,\infty}\in \sourcepolyz$, $\vec y^{\,\infty}\in \targetpolyz$, $\vo^{\,\infty}\ge \vec 0$, and $\vo^{\,\infty}\cdot \vec c \le \sum\nolimits_{i=1}^{d_{\Delta}} \frac{2C'}{C^\infty} \cdot \frac{(B-B^{\#})C^\infty}{2|\vec c|_1d_{\Delta}C'} \cdot |\vec c|_1\le B-B^{\#}$. Also, $\vec x^{\,\infty} + \vo^{\,\infty}\times \vomatrix{}=\vec y^{\,\infty}\in \targetpolyz$. 
%\end{proof}

\vspace{2mm}
\noindent{\bf\boldmath \hypertarget{lb-step1.4}{Step 1.4}: Define $\calR_{B,n}$ and $\calR^{\mathbb Z}_{B,n}$.}   Recall the definition of $B^{\#}$ from (\ref{eq:B-sharp}) that $\csus{B^{\#},n^{\#}}\ne\emptyset$.  Fix $\vec x^{\,\#}\in \csus{B^{\#},n^{\#}}$. Let $\vec x^{\,@}\in \sus{0}$ denote an arbitrary interior point of $\sus{0}$. That is, let $\ba_0^+$ denote the remaining rows of $\ba_0=\left[\begin{array}{c}\ba_{\text{S}}\\\ba_{\text{T}}\end{array}\right]$  after removing the implicit equalities $\ba_0^=$, there exists a constant $\epsilon^@>0$ such that 
$$\ba_0^+\times \invert{\vec x^{\,@}}< -\invert{\epsilon^@\cdot\vec 1}\text{ and }\ba_0^=\times \invert{\vec x^{\,@}}= \invert{\vec 0}$$
Let $C^@>0$ denote an arbitrary constant such that
\begin{equation}
\label{eq:C-at}
C^@ > \frac{2C'}{C^0}\cdot \frac{\sum\nolimits_{i=1}^{d_0-1}  |\ba_0^+\times \invert{\vec p_i^{\,0}}|_\infty}{\epsilon^@}
\end{equation}
The constraint on $C^@$ in (\ref{eq:C-at}) guarantees that for any $\vec x^{\,0}\in \calR_n^0$, we have $C^@\sqrt n \vec x^{\,@} + \vec x^{\,0}\in \sus{0}$, which will be formally proved and used in the proof of Claim~\ref{claim:calR-in-HB} below.

For any $n\in\mathbb N$ and any $\vec x^{\,\infty}\in \calR_B^\infty$,  define
$$\calR_{\vec x^{\,\infty}} = (n- n^{\#} - C^@\sqrt n\cdot(\vec x^@\cdot\vec 1) -  \vec x^{\,\infty}\cdot\vec 1)\cdot  \pi^* + \vec x^{\,\#}+ C^@\sqrt n\vec x^{\,@}+ \calR_n^0 + \vec x^{\,\infty}  $$
It follows that the size of every vector in $\calR_{\vec x^{\,\infty}} $ is $n$. Let $\calR_{B,n} = \bigcup_{\vec x^{\,\infty}\in\calR_B^\infty} \calR_{\vec x^{\,\infty}}$.  The following claim states that vectors in $\calR_{B,n}$ are   $2C'$ away from each other in $L_\infty$, where we recall that $C'$ is the constant guaranteed by Claim~\ref{claim:int-close}.
\begin{claim}[\bf\boldmath Sparsity of  $\calR_{B,n}$]
\label{claim:far-away}
For any pair of  vectors $\vec x^1,\vec x^2\in \calR_{B,n}$ whose $\calR_n^0$ or $\calR_B^\infty$  components are different, we have $|\vec x^1-\vec x^2|_\infty \ge  2C'$.
\end{claim}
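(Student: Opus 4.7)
\textbf{Proof plan for Claim~\ref{claim:far-away}.} The plan is to first write $\vec x^1 - \vec x^2$ in a form that separates a ``lattice-$\mathbb L^\infty$ part'' from a ``$\calP_0$ part'', and then apply the definitions of $C^\infty$ and $C^0$ respectively in the two relevant cases. Unpacking the definition of $\calR_{B,n}$, if we write $\vec x^i = A_i\pi^\ast + \vec x^{\,\#} + C^@\sqrt{n}\vec x^{\,@} + \vec x^{\,0}_i + \vec x^{\,\infty}_i$ for $i\in\{1,2\}$ with $\vec x^{\,0}_i\in\calR_n^0$, $\vec x^{\,\infty}_i\in\calR_B^\infty$, and $A_i = n - n^{\#} - C^@\sqrt{n}(\vec x^{\,@}\!\cdot\!\vec 1) - \vec x^{\,\infty}_i\!\cdot\!\vec 1$, setting $\vec u = \vec x^{\,\infty}_1 - \vec x^{\,\infty}_2$ and $\vec v = \vec x^{\,0}_1 - \vec x^{\,0}_2$ gives
\[
\vec x^1 - \vec x^2 \;=\; \vec u \;-\; (\vec u\cdot\vec 1)\,\pi^\ast \;+\; \vec v.
\]

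The next observation is that $-(\vec u\cdot\vec 1)\pi^\ast + \vec v\in\calP_0$. Indeed, $\pi^\ast\in\sus{0}\subseteq\calP_0$, and $\calP_0$ is a linear subspace, so $(\vec u\cdot\vec 1)\pi^\ast\in\calP_0$; moreover $\vec v$ is an integer combination of the basis vectors $\vec p_1^{\,0},\dots,\vec p_{d_0-1}^{\,0}$ of $\calP_0^\ast$, hence $\vec v\in\calP_0^\ast\subseteq\calP_0$. Thus $\vec x^1 - \vec x^2 = \vec u + \vec y$ for some $\vec y\in\calP_0$.

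I would then split into two cases. In the case $\vec x^{\,\infty}_1\ne\vec x^{\,\infty}_2$, the difference of coefficient vectors is a nonzero integer vector, so $\vec u$ lies in $\frac{2C'}{C^\infty}\mathbb L^\infty$ with nonzero scaling, i.e.\ $\vec u = \frac{2C'}{C^\infty}\vec u'$ for some $\vec u'\in\mathbb L^\infty$. Since $\calP_0$ is a linear subspace, $-\frac{C^\infty}{2C'}\vec y$ is still in $\calP_0$, so by the definition of $C^\infty$ in \eqref{eq:dis-P0-Pinfty},
\[
|\vec x^1 - \vec x^2|_\infty \;=\; \tfrac{2C'}{C^\infty}\,\bigl|\vec u' - (-\tfrac{C^\infty}{2C'}\vec y)\bigr|_\infty \;\ge\; \tfrac{2C'}{C^\infty}\cdot C^\infty \;=\; 2C'.
\]
In the remaining case, $\vec x^{\,\infty}_1 = \vec x^{\,\infty}_2$ but $\vec x^{\,0}_1\ne\vec x^{\,0}_2$, we have $\vec u = \vec 0$ and $(\vec u\cdot\vec 1)\pi^\ast = \vec 0$, so $\vec x^1 - \vec x^2 = \vec v$ is a nonzero element of $\frac{2C'}{C^0}\mathbb L^0$, and the definition of $C^0$ gives $|\vec v|_\infty \ge \frac{2C'}{C^0}\cdot C^0 = 2C'$. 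Combining the two cases yields the claim.

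I expect no serious obstacle; the only subtlety is verifying that the coefficient vector of $\pi^\ast$ cancels correctly so that $-(\vec u\cdot\vec 1)\pi^\ast + \vec v$ falls entirely inside $\calP_0$, which is exactly why $\pi^\ast$ was chosen from $\sus{0}$ rather than from $\conv(\Pi)$ in general. The choice of the scaling factors $\tfrac{2C'}{C^0}$ and $\tfrac{2C'}{C^\infty}$ in the definitions of $\calR_n^0$ and $\calR_B^\infty$ is precisely calibrated to produce the bound $2C'$, which will in turn be used in \myhyperlink{lb-step1.1}{Step 1.1} to produce distinct integer companions in $\csus{n,B}$ via Claim~\ref{claim:int-close}.
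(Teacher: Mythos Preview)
Your proposal is correct and follows essentially the same approach as the paper's proof: decompose $\vec x^1-\vec x^2$ into an $\mathbb L^\infty$ component plus a $\calP_0$ component, then in the case $\vec x^{\,\infty}_1\ne\vec x^{\,\infty}_2$ invoke the definition of $C^\infty$ as the $L_\infty$ distance from $\mathbb L^\infty$ to $\calP_0$, and in the case $\vec x^{\,\infty}_1=\vec x^{\,\infty}_2$ (forcing the $\pi^*$ coefficients to cancel) invoke the definition of $C^0$. Your observation that the $\pi^*$ term lands in $\calP_0$ precisely because $\pi^*\in\sus{0}$ matches the paper's reasoning exactly.
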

\begin{proof}
For $j\in \{1,2\}$, we write
$$\vec x^j = \ell^j \pi^* + \vec x^{\,\#}+ C^@\sqrt n\vec x^{\,@}+ \sum\nolimits_{i=1}^{d_0-1} \frac{2C'}{C^0}\gamma_i^j\cdot\vec p_i^{\,0} +   \sum\nolimits_{i=1}^{d_{\Delta}} \frac{2C'}{C^\infty}\eta_i^j\cdot\vec p_i^{\,\infty},$$
where $\ell^1$ and $\ell^2$ guarantee that $\vec x^1\cdot\vec 1 = \vec x^2\cdot\vec 1 = n$.
Then, 
\begin{align*}
\vec x^1 - \vec x^2 = &\underbrace{ (\ell^1-\ell^2)\pi^*}_{\in \calP_0} + \underbrace{\sum\nolimits_{i=1}^{d_0-1} \frac{2C'}{C^0}(\gamma_i^1-\gamma_i^2)\cdot\vec p_i^{\,0}}_{\in \frac{2C'}{C^0}\cdot \mathbb L^0\cup\{\vec 0\}}  + \underbrace{\sum\nolimits_{i=1}^{d_{\Delta}} \frac{2C'}{C^\infty}(\eta_i^1-\eta_i^2)\cdot\vec p_i^{\,\infty}}_{\in\frac{2C'}{C^\infty}\cdot\mathbb L^\infty\cup\{\vec 0\}}
\end{align*}
For $j\in \{1,2\}$, let $\vec \gamma^j = (\gamma_1^j,\ldots,\gamma_{d_0-1}^j)$ and $\vec \eta^j = (\eta_1^j,\ldots,\eta_{d_{\Delta}}^j)$. Let $\vec \gamma^{\Delta} = \vec \gamma^1 - \vec \gamma^2$ and   $\vec \eta^{\Delta} = \vec \eta^1 - \vec \eta^2$. Claim~\ref{claim:far-away} is proved in the following two cases.
\begin{itemize}
\item $\vec \eta^{\Delta} \ne \vec 0$. Notice that $\sum\nolimits_{i=1}^{d_{\Delta}}  (\eta_i^1-\eta_i^2)\cdot\vec p_i^{\,\infty}  \in {\mathbb L^\infty}$. Recall that $\pi^*\in\sus{0}\subseteq \calP_0$, which means that $(\ell^2-\ell^1)\pi^*- \sum\nolimits_{i=1}^{d_0-1} \frac{2C'}{C^0}(\gamma_i^1-\gamma_i^2)\cdot\vec p_i^{\,0}\in \calP_0$. According to (\ref{eq:dis-P0-Pinfty}), the $L_\infty$ distance between $\sum\nolimits_{i=1}^{d_{\Delta}}  (\eta_i^1-\eta_i^2)\cdot\vec p_i  \in {\mathbb L^\infty}$  and any vector in $\calP_0$ is at least $C^\infty$. Therefore, 
\begin{align*}
|\vec x^1 - \vec x^2|_\infty = &\left|(\ell^1-\ell^2)\pi^*  + \sum\nolimits_{i=1}^{d_0-1} \frac{2C'}{C^0}(\gamma_i^1-\gamma_i^2)\cdot\vec p_i^{\,0} + \sum\nolimits_{i=1}^{d_{\Delta}} \frac{2C'}{C^\infty}(\eta_i^1-\eta_i^2)\cdot\vec p_i^{\,\infty} \right|\\
=& \frac{2C'}{C^\infty}\cdot\left| \sum\nolimits_{i=1}^{d_{\Delta}} (\eta_i^1-\eta_i^2)\cdot\vec p_i^{\,\infty}  -  \frac{C^\infty}{2C'}\left((\ell^2-\ell^1)\pi^* - \sum\nolimits_{i=1}^{d_0-1} \frac{2C'}{C^0}(\gamma_i^1-\gamma_i^2)\cdot\vec p_i^{\,0} \right) \right|\\
\ge & \frac{2C'}{C^\infty} \cdot C^\infty = 2 C'
\end{align*}
\item $\vec \eta^{\Delta} = \vec 0$. In this case we must have $\ell^1=\ell^2$ and  $\vec \gamma^{\Delta}\ne \vec 0$. Notice that $\sum\nolimits_{i=1}^{d_0-1}(\gamma_i^1-\gamma_i^2)\cdot\vec p_i^{\,0} \in {\mathbb L^0}$, and recall that the $L_\infty$ norm of any vector in ${\mathbb L}^0$ is at least $C^0>0$, we have 
$$\left|\vec x^1 - \vec x^2\right|_\infty =  \frac{2C'}{C^0}\cdot \left|\sum\nolimits_{i=1}^{d_0-1} (\gamma_i^1-\gamma_i^2)\cdot\vec p_i^{\,0}\right| \ge\frac{2C'}{C^0}\cdot C^0 = 2C' $$
\end{itemize}
This completes the proof of Claim~\ref{claim:far-away}.
\end{proof}
By Claim~\ref{claim:far-away}, $|\calR_{B,n}| = \left(1+\left\lfloor\frac{(B-B^{\#})C^\infty}{2d_{\Delta}C'}\right\rfloor\right)^{d_{\Delta}}\times \left(\lfloor  \sqrt n \rfloor\right)^{d_0-1}$, which is $\Omega((B+1)^{d_{\Delta}}\cdot (\sqrt n)^{ d_0-1})$. Next, we prove that for all sufficiently large $n$, $\calR_{B,n}\subseteq \cpolyn{B}\cap{\mathbb R_{\ge 0}^q}$. 

\begin{claim}[\bf\boldmath $\calR_{B,n}\subseteq \cpolyn{B}\cap{\mathbb R_{\ge 0}^q}$]
\label{claim:calR-in-HB}
There exists $N\in\mathbb N$ that does not depend on $B$ or $n$, such that when $n\ge N$, $\calR_{B,n}\subseteq \cpolyn{B}\cap{\mathbb R}_{\ge 0}^q$. 
\end{claim}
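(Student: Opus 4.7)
Fix an arbitrary $\vec x \in \calR_{B,n}$; by construction we may write
\[
\vec x \;=\; \ell\,\pi^* \,+\, \vec x^{\,\#} \,+\, C^@\sqrt n\,\vec x^{\,@} \,+\, \vec x^{\,0} \,+\, \vec x^{\,\infty},
\]
where $\vec x^{\,0}\in\calR_n^0$, $\vec x^{\,\infty}\in\calR_B^\infty$, and $\ell = n-n^{\#}-C^@\sqrt n\,(\vec x^{\,@}\!\cdot\!\vec 1)-\vec x^{\,\infty}\!\cdot\!\vec 1$. Note $\vec x\cdot\vec 1 = n$ by the choice of $\ell$, so the size constraint holds automatically, and under the current assumption $B\le\sqrt n$ all of $\vec x^{\,0}$, $C^@\sqrt n\,\vec x^{\,@}$, and $\vec x^{\,\infty}$ have $L_\infty$ norm $O(\sqrt n)$. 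Hence $\ell = n - O(\sqrt n) = \Theta(n)$ for large $n$, and since $\pi^*$ is strictly positive, the dominant term $\ell\pi^*$ forces $\vec x\ge \vec 0$ once $n$ exceeds a threshold $N$ depending only on the problem data.

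Next I will check $\vec x \in \sourcepoly$. Split $\ba_{\text S}$ into its implicit-equality part $\ba_{\text S}^=$ and the strict part $\ba_{\text S}^+$ (with matching $\vbb_{\text S}^=,\vbb_{\text S}^+$). Using $\pi^*,\vec x^{\,@}\in\sus{0}\subseteq\sourcepolyz$, $\vec x^{\,\infty}\in\sus{\infty}\subseteq\sourcepolyz$, $\vec x^{\,\#}\in\sourcepoly$, and $\vec x^{\,0}\in\calP_0^*$ (so $\ba_{\text S}^=\invert{\vec x^{\,0}}=\vec 0$), the implicit equalities of $\ba_{\text S}$ collapse to $\ba_{\text S}^=\invert{\vec x} = \ba_{\text S}^=\invert{\vec x^{\,\#}}=\vbb_{\text S}^=$. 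For the strict inequalities, $\ba_{\text S}^+\invert{\vec x^{\,@}}<-\epsilon^@\vec 1$ by the choice of the interior point $\vec x^{\,@}$, and $|\ba_{\text S}^+\invert{\vec x^{\,0}}|_\infty\le \tfrac{2C'}{C^0}\sqrt n\sum_i|\ba_0^+\invert{\vec p_i^{\,0}}|_\infty$ by \eqref{eq:calR-n-0}. The defining inequality \eqref{eq:C-at} for $C^@$ is precisely what makes $C^@\sqrt n\,\ba_{\text S}^+\invert{\vec x^{\,@}}+\ba_{\text S}^+\invert{\vec x^{\,0}}\le -\Omega(\sqrt n)\vec 1$, so combined with $\ba_{\text S}^+\invert{\pi^*},\ba_{\text S}^+\invert{\vec x^{\,\infty}}\le \vec 0$ and $\ba_{\text S}^+\invert{\vec x^{\,\#}}\le \vbb_{\text S}^+$, we get $\ba_{\text S}^+\invert{\vec x}\le \vbb_{\text S}^+$ for all sufficiently large $n$. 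Thus $\vec x\in\sourcepoly$.

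Now I will exhibit a witness operation $\vo$ for $\vec x\in\cpoly{B}$. Since $\vec x^{\,\#}\in\csus{B^{\#},n^{\#}}$ there is $\vo^{\,\#}\ge\vec 0$ with $\vec c\cdot\vo^{\,\#}\le B^{\#}$ and $\vec x^{\,\#}+\vo^{\,\#}\times\vomatrix{}\in\targetpoly$. For each $i\le d_\Delta$, the decomposition $\vec p_i^{\,\infty}=\vec y_i^{\,\infty}-\vo_i^{\,\infty}\times\vomatrix{}$ with $\vec y_i^{\,\infty}\in\targetpolyz$, $\vo_i^{\,\infty}\ge\vec 0$, and $\vec c\cdot \vo_i^{\,\infty}=1$ is fixed in \myhyperlink{lb-step1.3}{Step 1.3}. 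Set
\[
\vo \;=\; \vo^{\,\#} \,+\, \sum\nolimits_{i=1}^{d_\Delta}\tfrac{2C'}{C^\infty}\eta_i\,\vo_i^{\,\infty},
\]
which is non-negative and has cost at most $B^{\#}+\tfrac{2C'}{C^\infty}\cdot d_\Delta\cdot\lfloor\tfrac{(B-B^{\#})C^\infty}{2d_\Delta C'}\rfloor\le B$. Writing $\vec x+\vo\times\vomatrix{}$ out and applying the same decomposition as in the previous paragraph, now with $\ba_{\text T}$ in place of $\ba_{\text S}$, the contribution of $\vec x^{\,\infty}+\sum\tfrac{2C'}{C^\infty}\eta_i\vo_i^{\,\infty}\times\vomatrix{} = \sum\tfrac{2C'}{C^\infty}\eta_i\vec y_i^{\,\infty}\in\targetpolyz$ plays the role previously played by $\vec x^{\,\infty}$, and the same reasoning (with the same $C^@$ bound) gives $\ba_{\text T}\invert{\vec x+\vo\times\vomatrix{}}\le\vbb_{\text T}$.

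Combining the three verifications, for every $n\ge N$ (with $N$ depending only on $\pi^*,\vec x^{\,@},\vec x^{\,\#},\vo^{\,\#}$, the matrices $\ba_{\text S},\ba_{\text T}$, and the constants $C',C^0,C^\infty,C^@,\epsilon^@$, but independent of $B$), every $\vec x\in\calR_{B,n}$ lies in $\cpolyn{B}\cap\mathbb R_{\ge 0}^q$. The main subtlety—and the only place where the specific constant \eqref{eq:C-at} is used—is the handling of $\vec x^{\,0}$, which only lies in the affine hull $\calP_0$ of $\sus{0}$ rather than in $\sus{0}$ itself; pushing $\vec x^{\,0}$ inward by $C^@\sqrt n$ units in the direction of the interior point $\vec x^{\,@}$ is what absorbs its non-zero evaluation against the strict inequalities of both $\ba_{\text S}$ and $\ba_{\text T}$.
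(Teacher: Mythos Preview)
Your proof follows essentially the same route as the paper: decompose $\vec x$, verify nonnegativity via the dominant $\ell\pi^*$ term, verify $\vec x\in\sourcepoly$, and exhibit the same witness $\vo = \vo^{\,\#}+\sum_i\tfrac{2C'}{C^\infty}\eta_i\vo_i^{\,\infty}$ to land in $\targetpoly$. The paper organizes the middle step slightly differently: it first proves the auxiliary fact $C^@\sqrt n\,\vec x^{\,@}+\vec x^{\,0}\in\sus{0}$ (using the split $\ba_0=\ba_0^=\cup\ba_0^+$ of the \emph{combined} system), and then membership in both $\sourcepolyz$ and $\targetpolyz$ follows at once. Your variant verifies $\ba_{\text S}$ and $\ba_{\text T}$ separately via their own implicit/strict splits.

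There is one small imprecision worth fixing. Your claim ``$\ba_{\text S}^+\invert{\vec x^{\,@}}<-\epsilon^@\vec 1$'' need not hold as stated: $\vec x^{\,@}$ is interior to $\sus{0}$, not to $\sourcepolyz$, so a row of $\ba_{\text S}$ that is strict for $\sourcepolyz$ but becomes an implicit equality in $\sus{0}$ will evaluate to $0$ at $\vec x^{\,@}$, not $<-\epsilon^@$. For such a row, your ``$-\Omega(\sqrt n)$ slack'' assertion fails; however, the desired inequality still goes through because $\vec x^{\,0}\in\calP_0$ and $\pi^*\in\sus{0}$ also annihilate that row, leaving only $\vec a\cdot\vec x^{\,\#}\le b$ and $\vec a\cdot\vec x^{\,\infty}\le 0$. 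The paper's formulation via $\ba_0^=/\ba_0^+$ sidesteps this distinction automatically; if you keep your per-polyhedron split, you should case-split on whether a row of $\ba_{\text S}^+$ lies in $\ba_0^=$ or $\ba_0^+$.
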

\begin{proof}  For any  $\vec x\in \calR_{B,n}$,  we can write
$$\vec x = \ell \pi^* + \vec x^{\,\#}+ C^@\sqrt n\vec x^{\,@} + \vec x^{\,0} +  \vec x^{\,\infty}  ,$$
where $\vec x^{\,0}\in \calR_n^0$, $\vec x^{\,\infty}\in \calR_B^\infty$, and $\ell$ guarantees that $\vec x\cdot\vec 1 = n$. Recall that $\pi^*\in  \sus{0}$ is strictly positive. Therefore, $\vec x\ge \vec 0$ for any sufficiently large $n$, which means that $\calR_{B,n}\subseteq {\mathbb R}_{\ge 0}^q$. The rest of the claim proves $\vec x\in \cpolyn{B}$ in the following three steps. 

\vspace{2mm}\noindent{\bf\bf \bf\boldmath (1) $C^@\sqrt n\vec x^{\,@} + \vec x^{\,0}\in \sus{0}$.} Let $\vec x''=C^@\sqrt n\vec x^{\,@} + \vec x^{\,0}$, we prove the following. 
\begin{itemize}
\item[] {\bf\boldmath (1.1) $\ba_0^=\times\invert{\vec x''} = \invert{\vec 0}$.} Recall that $\vec x^{\,@}$ is an interior point of $\sus{0}\subseteq \sourcepolyz$, which means that $\ba_0^=\times\invert{\vec x^{\,@}}= \invert{\vec 0}$. Also recall that $\calR_n^0\subseteq \calP^0$, which means that $\ba_0^=\times\invert{\vec x^{\,0}}= \invert{\vec 0}$. Therefore, 
$$\ba_0^=\times\invert{\vec x''}=C^@\sqrt n\ba_0^=\times\invert{\vec x^{\,@}}+ \ba_0^=\times\invert{\vec x^{\,0}} =  \invert{\vec 0} $$
\item[]{\bf \boldmath (1.2) $\ba_0^+\times\invert{\vec x''} < \invert{\vec 0}$.
} Let $\vec x^{\,0}= \sum\nolimits_{i=1}^{d_0-1} \frac{2C'}{C^0}\gamma_i^j\cdot\vec p_i^{\,0}$. Recall that $\ba_0^+\times\invert{\vec x^{\,@}}<\invert{-\epsilon^@\cdot\vec 1}$, and recall the lower bound of $C^@$ from (\ref{eq:C-at}). We have 
$$\ba_0^+\times\invert{\vec x''}=C^@\sqrt n\ba_0^+\times\invert{\vec x^{\,@}}+\sum\nolimits_{i=1}^{d_0-1} \frac{2C'}{C^0}\gamma_i^j\cdot \ba_0^+\times\invert{\vec p_i^{\,0}}<    \invert{\vec 0} $$
\end{itemize}

\vspace{2mm}\noindent{\bf\bf \boldmath\bf (2) $\vec x\in \sourcepoly$.} Recall that $\pi^*\in\sus{0}$, $\vec x^{\,\#}\in \csus{B^{\#},n^{\#}}\subseteq \sourcepoly$, and $\vec x^{\,\infty}\in \sus{\infty}\subseteq \sourcepolyz$. Therefore, 
%$$\vec y = \underbrace{\ell \pi^*}_{\text{in }\sus{0}} + \underbrace{ \vec x^{\,\#}}_{\text{in }\cpoly{B^{\#}}}+ \underbrace{ C^@\sqrt n\vec x^{\,@} + \vec x^{\,0}}_{\text{in }\sus{0}}  +  \underbrace{ \vec x^{\,\infty}}_{\text{in }\sourcepolyz}$$
%We have 
\begin{align*}
\ba_{\text{S}}\times \invert{\vec x} =& \ell \ba_{\text{S}}\times {\underbrace{(\pi^*)}_{\text{in }\sus{0}}}^\top + \ba_{\text{S}}\times {\underbrace{(\vec x^{\,\#})}_{\text{in }\sourcepoly}}^\top + \ba_{\text{S}}\times {\underbrace{(C^@\sqrt n\vec x^{\,@} + \vec x^{\,0})}_{\text{in }\sus{0}}}^\top + \ba_{\text{S}}\times {\underbrace{(\vec x^{\,\infty})}_{\text{in }\sourcepolyz}}^\top\\
\le & \invert{\vec 0} + \invert{\vbb_\text{S}} + \invert{\vec 0}+ \invert{\vec 0} = \invert{\vbb_\text{S}}
\end{align*} 
This proves $\vec x\in \sourcepoly$. 

\vspace{2mm}\noindent{\bf\bf \boldmath\bf (3) $\vec x$ is $B$-manipulable.} Because $\vec x^{\,\#}\in \csus{B^{\#}, n^{\#}}$, there exists $\vo^{\,\#}\in{\mathbb Z}_{\ge 0}^{\vosize}$ such that $\vec c\cdot \vo^{\,\#}\le B^{\#}$ and $\vec x^{\,\#}+\vo^{\,\#}\times{\vomatrix{}} \in \targetpoly$.  Let $\vec x^{\,\infty} = \sum\nolimits_{i=1}^{d_{\Delta}} \frac{2C'}{C^\infty}\eta_i\cdot\vec p_i^{\,\infty} $. Also recall that for all $i\le d_{\Delta}$, $\vec p_i^{\,\infty} = \vec y_i^{\,\infty}- \vo_i^{\,\infty}\times{\vomatrix{}}$, where $\vec y_i^{\,\infty}\in \targetpolyz$, $\vo^{\,\infty}_i\ge \vec 0$, and  $\vec c\cdot \vo^{\,\infty}_i = 1$.  Let $\vo^* = \vo^{\,\#} +\sum\nolimits_{i=1}^{d_{\Delta}} \frac{2C'}{C^\infty}\eta_i\cdot \vo_i^{\,\infty} $. We prove that $\vec c\cdot \vo^*  \le B$ as follows.
\begin{align*}
\vec c \cdot \vo^* &= \vec c \cdot  \vo^{\,\#} +\sum\nolimits_{i=1}^{d_{\Delta}} \frac{2C'}{C^\infty}\eta_i\cdot \vec c \cdot\vo_i^{\,\infty} \\
& \le B^{\#} + \sum\nolimits_{i=1}^{d_{\Delta}} \frac{2C'}{C^\infty} \cdot \frac{(B-B^{\#})C^\infty}{2 d_{\Delta}C'} = B^{\#} + B-B^{\#}=B
\end{align*}
Then, 
\begin{align*}
&\ba_{\text{T}}\times \invert{\vec x + \vo^*\times \vomatrix{}} \\
= & \ell \ba_{\text{T}}\times {\underbrace{(\pi^*)}_{\text{in }\sus{0}}}^\top + \ba_{\text{T}}\times {\underbrace{(\vec x^{\,\#}+\vo^{\,\#}\times \vomatrix{})}_{\text{in }\targetpoly}}^\top + \ba_{\text{T}}\times {\underbrace{(C^@\sqrt n\vec x^{\,@} + \vec x^{\,0})}_{\text{in }\sus{0}}}^\top \\
&\hspace{40mm}+ \sum\nolimits_{i=1}^{d_{\Delta}} \frac{2C'}{C^\infty}\eta_i\cdot\ba_{\text{T}}\times {\underbrace{(\vec p_i^{\,\infty} +\vo_i^{\,\infty}\times\vomatrix{})}_{ = \vec y^{\,i}\in \targetpolyz}}^\top\\
\le & \invert{\vec 0} + \invert{\vbb_\text{T}} + \invert{\vec 0}+ \invert{\vec 0} = \invert{\vbb_\text{T}}
\end{align*} 
This proves that $\vec x + \vo^*\times \vomatrix{}\in \targetpoly$ and completes the proof of Claim~\ref{claim:calR-in-HB}.
\end{proof}
By Claim~\ref{claim:int-close}, for every $\vec x\in \calR_{B,n}$, there exists $\vec x'\in \csus{n,B}$ that is no more than $C'$ away from $\vec x$ in $L_\infty$. Because vectors in $\calR_{B,n}$ are at least $2C'$ away from each other, their corresponding vectors in $\csus{n,B}$ are different, denoted by $\calR^{\mathbb Z}_{B,n}$ and is formally defined as follows.
\begin{dfn}[\bf\boldmath $\calR^{\mathbb Z}_{B,n}$]
\label{dfn:R-B-n-Z} Let $\calR^{\mathbb Z}_{B,n}$ denote the integer vectors corresponding to the vectors in $\calR_{B,n}$ guaranteed by Claim~\ref{claim:int-close}.
\end{dfn}
It follows that $|\calR^{\mathbb Z}_{B,n}| = |\calR_{B,n}| =\Omega \left((B+1)^{d_{\Delta}}\cdot \left( \sqrt n\right)^{d_0-1}\right)$. This completes   \myhyperlink{lb-step1}{Step 1} for poly lower bound. 

\vspace{2mm}
\noindent{\bf\boldmath \hypertarget{lb-step2}{Step 2} for poly lower bound:  Define $\vec \pi^\circ$ for any $\pi^*\in\conv(\Pi)$.}  In this step, we define  $\vec \pi^\circ=(\pi^\circ_1,\ldots,\pi^\circ_{ n})\in \Pi^n$ such that $\sum_{j=1}^n\pi^\circ_j$ is $O(\sqrt n)$ away from $n\pi^*$ in a way that is similar to~\citep{Xia2021:How-Likely}. 

More precisely, because $\pi^*\in \conv(\Pi)$, by Carathéodory's theorem for convex/conic hulls (see e.g., \cite[p.~257]{Lovasz2009:Matching}), we can write  $\pi^*$ as the convex combination of $1\le t\le q$ distributions in $\Pi$, i.e., 
$$\pi^* = \sum\nolimits _{i=1}^t\alpha_i\pi_i^*, \text{ where }\sum\nolimits _{i=1}^t\alpha_i =1\text{ and }\pi_i^*\in \Pi$$
We note that $\pi^*\ge \epsilon\cdot \vec 1$, because $\Pi$ is strictly positive (by $\epsilon$).  For each $i\le t-1$, let $\vec \pi_i^\ell$ denote the vector of $\beta_i=\lfloor \ell \alpha_i\rfloor$ copies of $\pi_i^*$. Let $\vec \pi_k^\ell$ denote the vector of $\beta_t=n-\sum_{i=1}^{t-1} \beta_i$ copies of $\pi_t^*$. It follows that for any $i\le t-1$, $|\beta_i-\ell \alpha_i| \le 1$, and $|\beta_t-\ell \alpha_t|\le t + (\vec y^n\cdot \vec 1 - \ell \pi^{*}\cdot \vec 1) = O(\sqrt\ell) = O(\sqrt n)$. 

Let $\vec \pi^\circ = (\vec \pi_1^\ell,\ldots,\vec \pi_t^\ell)$, or equivalently
$$\vec \pi^\circ = (\underbrace{\pi_1^*,\ldots,\pi_1^*}_{\beta_1},\underbrace{\pi_2^*,\ldots,\pi_2^*}_{\beta_2},\ldots,  \underbrace{ \pi_t^*, \ldots, \pi_t^*}_{\beta_t})$$

\vspace{2mm}
\noindent{\bf\boldmath \hypertarget{lb-step3}{Step 3} for poly lower bound:  Lower-bound  $\Pr_{P\sim\vec\pi}(\vXp\in\cpoly{B})$.} 
\begin{claim}
\label{claim:PMV-point-lb}
For any instability settings $\vosetting{}$, any strictly positive set of distributions $\Pi$ (by $\epsilon>0$), and any $\alpha>0$,  there exist $C_{\vosetting}>0$ and $N>0$ such that for any $n\ge N$, any $0\le B\le \sqrt n$, and any $\vec \pi\in\Pi^n$ such that the $L_\infty$ distance between $\sum_{j=1}^n\pi_j$ and $\sus{0}$ is no more than $\alpha\sqrt n$,
$$\Pr(\vXp \in\csus{n,B}) \ge C_{\vosetting}\cdot (B+1)^{d_{\Delta}}\cdot\left( \frac{1}{\sqrt n}\right)^{q-d_0}$$
\end{claim}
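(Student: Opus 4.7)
The idea is to combine the three ingredients already developed in Steps~1--3: the set $\calR^{\mathbb Z}_{B,n} \subseteq \csus{n,B}$ of ``well-separated'' integer witnesses of size $\Omega((B{+}1)^{d_\Delta}(\sqrt n)^{d_0-1})$, the geometric observation that every element of $\calR^{\mathbb Z}_{B,n}$ sits in an $O(\sqrt n)$ neighborhood of $n\pi^{*}$, and the point-wise anti-concentration bound for PMVs \cite[Lemma~1]{Xia2021:How-Likely}. The route is then to lower bound $\Pr(\vXp\in\csus{n,B})$ by a union over the disjoint events $\{\vXp=\vec y\}_{\vec y\in\calR^{\mathbb Z}_{B,n}}$ and to apply the point-wise lower bound to each summand.

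\textbf{Step (a): choose $\pi^{*}$.} Given $\vec\pi$ with $\operatorname{dist}_\infty(\sum_{j=1}^n \pi_j,\sus{0})\le \alpha\sqrt n$, let $\pi^{*}\in \sus{0}$ attain this distance after rescaling by $1/n$, so that $|n\pi^{*}-\sum_{j=1}^n \pi_j|_\infty\le \alpha\sqrt n$. Because $\Pi$ is $\epsilon$-strictly positive and $n\pi^{*}$ is within $\alpha\sqrt n$ in $L_\infty$ of the strictly positive vector $\sum_{j=1}^n \pi_j\ge \epsilon n\cdot\vec 1$, for all $n\ge N_0=N_0(\epsilon,\alpha)$ the distribution $\pi^{*}$ is at least $\tfrac{\epsilon}{2}\cdot\vec 1$ component-wise; so the construction of $\calR^{\mathbb Z}_{B,n}$ from \myhyperlink{lb-step1}{Step~1} is applicable with this $\pi^{*}$.

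\textbf{Step (b): uniform distance bound.} Every $\vec y\in\calR^{\mathbb Z}_{B,n}$ has the form $\vec y=\vec y_0+\vec\delta$, where $\vec y_0 = n\pi^{*}+\vec x^{\,\#}+ C^{@}\sqrt n\,\vec x^{\,@} +\vec x^{\,0}+\vec x^{\,\infty}$ lies in $\calR_{B,n}$ and $|\vec\delta|_\infty\le C'$ by Claim~\ref{claim:int-close}. Since $|\vec x^{\,0}|_\infty=O(\sqrt n)$ (Step~1.2), $|\vec x^{\,\infty}|_\infty=O(B)=O(\sqrt n)$ (Step~1.3, using $B\le \sqrt n$), and $|\vec x^{\,\#}|_\infty=O(1)$, we get $|\vec y-n\pi^{*}|_\infty=O(\sqrt n)$ with a constant independent of $n$, $B$, and $\vec\pi$. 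Combining with $|n\pi^{*}-\sum_{j=1}^n\pi_j|_\infty\le \alpha\sqrt n$ yields $|\vec y-\sum_{j=1}^n\pi_j|_\infty\le C_1\sqrt n$ uniformly for some constant $C_1=C_1(\vosetting,\alpha)$.

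\textbf{Step (c): apply the point-wise anti-concentration bound and sum.} The PMV $\vXp$ is supported on integer vectors of size $n$, so its effective dimension is $q-1$. By \cite[Lemma~1]{Xia2021:How-Likely}, for every $\vec y\in\calR^{\mathbb Z}_{B,n}$ satisfying $|\vec y-\sum_j\pi_j|_\infty\le C_1\sqrt n$ and every $n\ge N$, there is a constant $c>0$ (depending only on $\vosetting$, $\epsilon$, $\alpha$) with
\[
\Pr(\vXp=\vec y)\;\ge\; c\left(\frac{1}{\sqrt n}\right)^{q-1}.
\]
Since the events are disjoint,
\[
\Pr(\vXp\in\csus{n,B})\;\ge\;\sum_{\vec y\in\calR^{\mathbb Z}_{B,n}}\Pr(\vXp=\vec y)\;\ge\;|\calR^{\mathbb Z}_{B,n}|\cdot c\left(\frac{1}{\sqrt n}\right)^{q-1},
\]
and substituting $|\calR^{\mathbb Z}_{B,n}|=\Omega\bigl((B{+}1)^{d_\Delta}(\sqrt n)^{d_0-1}\bigr)$ gives $\Omega\bigl((B{+}1)^{d_\Delta}(1/\sqrt n)^{q-d_0}\bigr)$, as claimed.

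\textbf{Main obstacle.} The only subtle point is verifying that all constants--in particular the distance bound $C_1$ and the strict-positivity lower bound on $\pi^{*}$--can be taken independently of $\vec\pi$. This follows from two observations: first, the construction in Step~1 uses the same $\pi^{*}$, $C'$, $C^0$, $C^\infty$, $C^@$, $\vec x^{\,\#}$, $\vec x^{\,@}$ regardless of $\vec\pi$, so the geometric bounds are uniform; second, the hypothesis $|\sum_j\pi_j-n\pi^{*}|_\infty\le \alpha\sqrt n$ combined with $\epsilon$-strict positivity ensures that the point-wise PMV bound from \cite[Lemma~1]{Xia2021:How-Likely} applies with a constant depending only on $(\vosetting,\epsilon,\alpha)$.
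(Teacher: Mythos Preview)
Your argument is correct and follows essentially the same route as the paper: choose $\pi^{*}\in\sus{0}$ within $O(\sqrt n)$ of $\sum_j\pi_j$, build $\calR^{\mathbb Z}_{B,n}$ from Step~1 around $n\pi^{*}$, verify each of its $\Omega\bigl((B{+}1)^{d_\Delta}(\sqrt n)^{d_0-1}\bigr)$ points is $O(\sqrt n)$ from $\sum_j\pi_j$, and apply the point-wise lower bound \cite[Lemma~1]{Xia2021:How-Likely}. Your extra care in Step~(a) verifying that $\pi^{*}\ge \tfrac{\epsilon}{2}\vec 1$ (so that the Step~1 construction and its threshold $N$ apply uniformly in $\vec\pi$) is a detail the paper leaves implicit; note also that the coefficient of $\pi^{*}$ in $\calR_{B,n}$ is $\ell=n-O(\sqrt n)$ rather than exactly $n$, but this does not affect the $O(\sqrt n)$ distance estimate you need.
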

\begin{proof}
Let $\pi^*\in\sus{0}$ denote an arbitrary vector such that $|\sum_{j=1}^n\pi_j-n \pi^*|_\infty<2\alpha\sqrt n$. Let $\calR^{\mathbb Z}_{B,n}$ denote the set of integer vectors for $\pi^*$ (Definition~\ref{dfn:R-B-n-Z} in \myhyperlink{lb-step1.4}{Step 1.4} above). Because $B = O(\sqrt n)$,  each vector in $\calR^{\mathbb Z}_{B,n}$ is $O(\sqrt n)$ away from $n \pi^*$, which is    $O(\sqrt n)$ away from $\sum_{j=1}^n\pi_j$. By the point-wise concentration bound (\citep[Lemma~1]{Xia2021:How-Likely}), for every $\vec x\in \calR^{\mathbb Z}_{B,n}$, we have $\Pr_{P\sim \vec\pi}(\hist(P) = \vec x) = \Omega((\sqrt n)^{1-q})$. Therefore,
\begin{align*}
&\Pr(\vXp \in\csus{n,B}) \ge \Pr\nolimits_{P\sim \vec\pi}(\hist(P) \in \calR^{\mathbb Z}_{B,n}) \ge |\calR^{\mathbb Z}_{B,n}|\times \Omega\left((\sqrt n)^{1-q}\right) \\
=& \Omega\left( (B+1)^{d_{\Delta}}\cdot (\sqrt n)^{d_0-1} \right)\times \Omega\left((\sqrt n)^{1-q}\right) = \Omega\left( (B+1)^{d_{\Delta}}\cdot (\frac{1}{\sqrt n})^{q-d_0} \right)
\end{align*} 
This completes proof of Claim~\ref{claim:PMV-point-lb}.
\end{proof}
The lower bound for PT-$\Theta(\sqrt n)$-$\sup$ follows after applying Claim~\ref{claim:PMV-point-lb} to any $\pi^*\in \conv(\Pi)\cap \sus{0}$ and $\vec \pi^\circ$ defined in \myhyperlink{lb-step2}{Step 2 for poly lower bound} above.

\vspace{3mm}
\noindent \hypertarget{PMV-proof-sup-n}{{\bf \boldmath Proof for the phase transition at $\Theta(n)$ case of $\sup$ (PT-$\Theta(n)$-$\sup$ for short).}} 

\vspace{3mm}
\noindent \hypertarget{PMV-proof-sup-n-smallB}{{\bf \boldmath $B\le C_2n$.}} Let $c^* = \frac{1}{2}(C_2+B_{\conv(\Pi)})$. We first show that $n\cdot\conv(\Pi)$ is $\Omega(n)$ away from $\sus{nc^*}$, which is equivalent to $\conv(\Pi)$ being $\Omega(1)$ away from $\sus{c^*}$.
Due to the minimality of $B_{\conv(\Pi)}$, we have $\conv(\Pi)\cap \sus{c^*}=\emptyset$. Notice that  $\conv(\Pi)$ is convex and compact  and $\sus{c^*}$ is convex. By the strict separating hyperplane theorem, the distance between $\conv(\Pi)$  and $\sus{c^*}$ is strictly positive, denoted by $c'$. Let $c_2>0$ denote any fixed constant that is smaller than $c'$. 

Next, we prove that $n\cdot\conv(\Pi)$ is $\Omega(n)$ away from $\cpoly{B}$. To this end, we prove the following claim, which states that for any sufficiently large $B'\ge 0$, $\cpoly{B'}$ is $O(1)$ away from $\sus{B'+O(1)}$. 

\begin{claim}
\label{claim:distance}
Given an instability setting $\vosetting$, there exists a constant $C$ such that for any $B'\ge C$, any $n$, and any $\vec x\in \cpoly{B'}$, there exists $\vec x'\in \sus{B'+C}$ such that $|\vec x-\vec x'|_\infty\le C$.
\end{claim}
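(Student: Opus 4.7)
My approach is a sensitivity-analysis argument that mirrors Claim~\ref{claim:close-to-cone0} (which showed $O(B+1)$-closeness of $\cpoly{B}$ to $\sus{0}$), but with the ``target'' recess region relaxed from $\sus{0}$ to $\sus{B'+C}$. The intuition is that zeroing out only the polyhedral right-hand sides $\vbb_\text{S}$ and $\vbb_\text{T}$ costs $O(1)$ in $L_\infty$ (because $|\vbb_\text{S}|_\infty$ and $|\vbb_\text{T}|_\infty$ are fixed constants of the setting), and any accompanying perturbation of the operation vector $\vo$ can be absorbed by enlarging the budget by the same constant $C$; hence both the spatial error and the extra budget remain setting-level constants, independent of $B'$ and $n$.

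Concretely, I will set up two linear programs $\text{LP}_{\poly}^{B'}$ and $\text{LP}_{\cone}^{B'+C}$ on variables $(\vec x,\vo\,)$ that share the same LHS matrix: $\ba_\text{S}$ applied to $\vec x$, $\ba_\text{T}$ applied to $\vec x+\vo\times\vomatrix{}$, $-\vo\le\vec 0$, and $\vec c\cdot\vo\le(\cdot)$. The first LP has right-hand side $(\vbb_\text{S},\vbb_\text{T},\vec 0,B')$; the second has $(\vec 0,\vec 0,\vec 0,B'+C)$. Given $\vec x\in\cpoly{B'}$, a witness $\vo\ge\vec 0$ makes $(\vec x,\vo\,)$ feasible for $\text{LP}_{\poly}^{B'}$, and $\text{LP}_{\cone}^{B'+C}$ is trivially feasible at $(\vec 0,\vec 0)$. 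I will then invoke~\citep[Theorem~5(i)]{Cook86:Sensitivity}: two feasible LPs sharing an LHS admit companion feasible solutions within $L_\infty$ distance $(q+|\voset{}|)\Delta$ times the $L_\infty$ gap between their RHS vectors, where $\Delta$ is the maximum absolute determinant of square submatrices of the common LHS. Here the RHS gap equals $\max\{|\vbb_\text{S}|_\infty,|\vbb_\text{T}|_\infty,C\}$, and $\Delta$ depends only on $\vosetting{}$.

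Choosing $C\triangleq(q+|\voset{}|)\Delta\cdot\max\{|\vbb_\text{S}|_\infty,|\vbb_\text{T}|_\infty,1\}$ makes the sensitivity bound self-referential and at most $C$, and the resulting $(\vec x',\vo\,')$ simultaneously satisfies $\vec x'\in\sus{B'+C}$ (since $\vec x'\in\sourcepolyz$, $\vo\,'\ge\vec 0$, $\vec c\cdot\vo\,'\le B'+C$, and $\vec x'+\vo\,'\times\vomatrix{}\in\targetpolyz$) and $|\vec x-\vec x'|_\infty\le C$. I anticipate no serious obstacle: the main verifications are that the LHS matrix genuinely does not depend on $B'$ or $n$---so $\Delta$ is a constant of the setting---and that both LPs are feasible, so the sensitivity theorem applies. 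The hypothesis $B'\ge C$ is not essential for the claim itself; it is used downstream (in the PT-$\Theta(n)$ case of Theorem~\ref{thm:PMV-instability}) to ensure that $B'+C$ still sits below a slightly enlarged linear threshold, so $\sus{B'+C}$ remains a meaningful recess approximation of $\cpoly{B'}$.
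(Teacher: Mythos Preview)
Your approach is the same as the paper's---a sensitivity argument via~\citep[Theorem~5(i)]{Cook86:Sensitivity}---but your implementation has a genuine circularity flaw. You include the budget constraint in both LPs with right-hand sides $B'$ and $B'+C$, so the $L_\infty$ gap between the two right-hand-side vectors is $\max\{|\vbb_\text{S}|_\infty,|\vbb_\text{T}|_\infty,C\}$, and the sensitivity bound becomes $(q+|\voset{}|)\Delta\cdot\max\{|\vbb_\text{S}|_\infty,|\vbb_\text{T}|_\infty,C\}$. With your choice $C=(q+|\voset{}|)\Delta\cdot\max\{|\vbb_\text{S}|_\infty,|\vbb_\text{T}|_\infty,1\}$, whenever $(q+|\voset{}|)\Delta\ge 1$ (the typical case) you get $C\ge\max\{|\vbb_\text{S}|_\infty,|\vbb_\text{T}|_\infty\}$, hence the sensitivity bound equals $(q+|\voset{}|)\Delta\cdot C>C$, not $\le C$. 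The ``self-referential'' closure fails.

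The paper sidesteps this by \emph{omitting the budget constraint from the LPs entirely}. Its $\text{LP}_\poly$ and $\text{LP}_\cone$ carry only the $\ba_\text{S}$, $\ba_\text{T}$, and $-\vo\le\vec 0$ rows, so the RHS gap is exactly $\max\{|\vbb_\text{S}|_\infty,|\vbb_\text{T}|_\infty\}$, a constant independent of $C$. The sensitivity theorem then yields $(\vec x',\vo\,')$ feasible for $\text{LP}_\cone$ with $|(\vec x,\vo)-(\vec x',\vo\,')|_\infty$ bounded by a setting-level constant; the budget bound is recovered \emph{afterward} via $\vec c\cdot\vo\,'\le\vec c\cdot\vo+(\vec c\cdot\vec 1)\cdot|\vo-\vo\,'|_\infty\le B'+C$. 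An equally easy fix to your version is to put budget $B'$ (not $B'+C$) on both LPs: the RHS gap on that row is then $0$, the sensitivity constant no longer depends on $C$, and you actually land in $\sus{B'}\subseteq\sus{B'+C}$.
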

\begin{proof}
The proof is done by analyzing feasible solutions to the following two linear programs LP$_{\poly}$ and LP$_{\cone }$, whose variables are $\vec x$ and $\vo$.
\renewcommand{\arraystretch}{1.5}
\begin{center}
\begin{tabular}{|c | c|}
\hline 
LP$_{\poly}$ &   LP$_{\cone }$\\
\hline
\begin{tabular}{rl}
$\max$ &  $0$\\
s.t. & $\ba_{\text{S}}\times \invert{\vec x}\le \invert{\vbb_{\text S}}$ \\
& $\ba_{\text{T}}\times \invert{\vec x +\vo\times \vomatrix{}}\le \invert{\vbb_{\text T}}$ \\
& $-\vo\le \vec 0$
\end{tabular}
&  
\begin{tabular}{rl}
$\max$ &  $0$\\
s.t. & $\ba_{\text{S}}\times \invert{\vec x}\le \invert{\vec 0}$ \\
& $\ba_{\text{T}}\times \invert{\vec x +\vo\times \vomatrix{}}\le \invert{\vec 0}$ \\
& $-\vo\le \vec 0$
\end{tabular}\\
\hline 
\end{tabular}
\end{center}
\renewcommand{\arraystretch}{1}
Because $\vec x\in \cpoly{B'}$, there exists $\vo\ge \vec 0$ such that $\vec x+\vo\times \vomatrix{}\in \targetpoly$ and $\vec c\cdot \vo\le B'$. Therefore,  $(\vec x,\vec w)$ is a feasible solution to LP$_{\poly}$. Notice that LP$_{\cone}$ is feasible (for example $\vec 0$ is a feasible solution) and LP$_{\poly}$ and LP$_{\cone}$ only differ on the right hand side of the inequalities. Therefore, due to~\citep[Theorem 5 (i)]{Cook86:Sensitivity}, there exists a feasible solution $(\vec x',\vec w')$  to LP$_{\cone}$ that is no more than $q\Delta \max\{|\vbb_{\text{S}}-\vec 0|_\infty,|\vbb_{\text{T}}-\vec 0|_\infty\} = O(1)$ away from $(\vec x,\vec w)$ in $L_\infty$, where $\Delta$ is the maximum absolute value of determinants of square sub-matrices of the left hand side of LP$_{\poly}$ and LP$_{\cone}$, i.e.,  $\multimatrix{\begin{array}{cc}\ba_{\text{S}} &  0 \\ \ba_{\text{T}} & \ba_{\text{T}}\times \invert{\vomatrix{}}\\ 0 & -{\mathbb I} \end{array}}$. This implies that 
$$\vec c\cdot\vo'\le \vec c\cdot \vec w + (\vec c\cdot\vec 1)q\Delta \max\{|\vbb_{\text{S}}|_\infty,|\vbb_{\text{T}}|_\infty\}$$
The claim follows after letting $C = (\vec c\cdot\vec 1)q\Delta \max\{|\vbb_{\text{S}}|_\infty,|\vbb_{\text{T}}|_\infty\}$.
\end{proof}
As proved above,  $n\cdot\conv(\Pi)$ is $\Omega(n)$ away from $\sus{nc^*}$ for any sufficiently large $n$.  
Notice that for any sufficiently large $n$, we have $(c^*-C_2)n>C$, where $C$ is the constant guaranteed by Claim~\ref{claim:distance}. Therefore, by Claim~\ref{claim:distance}, every vector $\vec x$ in $\cpoly{B}$ is $O(1)$ away from a vector $\vec x'$ in $\sus{B+O(1)} \subseteq \sus{nc^*}$. Recall from above that $n\cdot\conv(\Pi)$ is $\Omega(n)$ away from $\sus{nc^*}$. Therefore, $n\cdot\conv(\Pi)$ is $\Omega(n)$ away from $\cpoly{B}$.

Finally, for any $\vec\pi\in\Pi^n$, let $\pi' = \frac{1}{n}\sum_{j=1}^n\pi_j$. Because $\pi'\in\conv(\Pi)$, the $L_\infty$ distance between $\pi'$, which is the mean vector of $\vXp$, and $\cpoly{B}$ is $\Omega(n)$.   The (exponential) upper bound for the $B\le C_2n$  case is proved by a straightforward application of Hoeffding's inequality and the union bound to all $q$ dimensions as in the proof of the \myhyperlink{PMV-proof-sup-exp}{exponential  case  of $\sup$}. The exponential lower bound  trivially holds. 

\vspace{3mm}
\noindent \hypertarget{PMV-proof-sup-n-largeB}{{\bf \boldmath $B\ge C_3n$.}} The $O\left((\frac{1}{\sqrt n})^{q-d_{\infty}} \right)$ upper bound of this case follows after Claim~\ref{claim:poly-upper-strong}.  At a high level, the proof of the $\Omega\left((\frac{1}{\sqrt n})^{q-d_{\infty}} \right)$ lower bound  is similar to the proof of the \myhyperlink{PMV-proof-sup-sqrtn-lb}{polynomial lower bound  of PT-$\Theta(\sqrt n)$-$\sup$}. The main difference is that  the condition is weaker now ($\conv(\Pi)\cap\sus{\infty}\ne\emptyset$ compared to $\sus{0}\cap\conv(\Pi)\ne \emptyset$). Therefore, we will identify two different sets  $\calR^+_{B,n}\subseteq \cpolyn{B}$ and $\calR^{\mathbb Z+}_{B,n}\subseteq \csus{n,B}$ for any strictly positive $\pi^+\in \cone_{B^+}$ with $\pi^+\cdot\vec 1 =1$, where $B^+$ is a fixed number such that $C_3> B^+ > B_{\conv(\Pi)}$.  It follows that $\pi^+\in \sourcepolyz$ and we can write $\pi^+ = \vec y^{\,+} - \vo^{\,+}\times\vomatrix{}$, where $\vec y^{\,+}\in \targetpolyz$, $\vo^{\,+}\ge \vec 0$, and  $ \vec c\cdot \vo^{\,+} \le  B^+ $. 

In the following procedure, we define a basis $\bB^+$ of $\calP_\infty$ that is similar to $B^\infty$. Let $\{\vec p_1^{\,+},\ldots,\vec p_{d_\infty}^{\,+}\}$ denote a set of $d_\infty$ linearly independent vectors in  $\sus{\infty}\subseteq \calP_\infty$. W.l.o.g.~suppose for every $j\le d_\infty$,  ${\vec p_j^{\,+}}\cdot\vec 1\in \{-1,0,1\}$---otherwise we divide ${\vec p_j^{\,+}}$ by $|{\vec p_j^{\,+}}\cdot\vec 1|$. For every $j\le d_\infty$, let $\vec p_j^{\,+}  = \vec y_j^{\,+} - \vo_j^{\,+}\times \vomatrix{}$, where $\vec y_j^{\,+}\in \targetpolyz$ and $\vo_j^{\,+}\ge \vec 0$. For convenience, let $\vec p_0^{\,+} \triangleq  \pi^+$.

\vspace{2mm}\noindent{\bf\bf Procedure.} Start with $\bB^+ = \{\vec p_0^{\,+} \}$.  For every $1\le j\le d_\infty$, we add $\vec p_j^{\,+}$ to $\bB^+$ if and only if it is linearly independent of existing vectors in $\bB^+$. At the end of the procedure, we have $|\bB^+| = d_\infty$.  W.l.o.g., let $\bB^+ = \{\vec p_0^{\,+}, \vec p_1^{\,+},\ldots,\vec p_{d_\infty-1}^{\,+}\}$.

Next, we define ${\mathbb L^+}$, $C^+$, and $\calR_n^+$ that are similar to $ {\mathbb L}^0$, $ C^0$, and $\calR_n^0$ in \myhyperlink{lb-step1.2}{Step 1.2} of the proof of the \myhyperlink{PMV-proof-sup-sqrtn-lb}{polynomial lower bound  of PT-$\Theta(\sqrt n)$-$\sup$}, respectively.

Let ${\mathbb L^+}$ denote the lattice generated by $\bB^+ $ excluding $\vec 0$. That is, 
$${\mathbb L^+} = \left\{ \sum\nolimits_{i=0}^{d_{\infty}-1} \lambda_i\cdot\vec p_i^{\,+}: \forall {0\le i\le d_{\infty}-1}, \lambda_i\in {\mathbb Z}\text{ and } \exists 0\le  i\le d_{\infty}-1 \text{ s.t. }  \lambda_i \ne 0\right\}$$
Let $C^+$ denote the minimum $L_\infty$ norm of vectors in  ${\mathbb L^+}$. That is, 
$$C^+ =  \inf \left\{\left|\vec x \right|_\infty: \vec x\in\hat {\mathbb L} \right\}$$
Following a similar argument that proves $C^0>0$, we have $C^+>0$.  Then, define
 \begin{equation*}
\calR_n^+ = \left\{\sum\nolimits_{i=1}^{d_{\infty}-1} \left\lceil\frac{2C'}{C^+}\right\rceil\cdot\lambda_i\cdot\vec p_i^{\,+}: \forall 1\le i\le d_{\infty}-1,\lambda_i\in \{0,1,\ldots, \lfloor \sqrt n\rfloor\}\right\}
\end{equation*}
Notice that vectors in $\calR_n^+$ do  not have $\vec p_0 =\pi^+$ components.
Recall that $\vec x^\#\in \csus{B^{\#},n^{\#}}$, which means that $\vec x^\#\cdot\vec 1 = n^\#$. For any $n\in\mathbb N$, define
$$\calR^+_{B,n}  =\left\{ (n- n^\# -\vec x^{\,+}\cdot\vec 1)\cdot  \pi^+ + \vec x^{\,\#} + \vec x^{\,+}:   \vec x^{\,+} \in \calR_n^+\right\}$$
Like Claim~\ref{claim:far-away}, in the following claim we prove that vectors in $\calR^+_{B,n}$ are at least $2C'$ from each other in $L_\infty$. 

\begin{claim}[\bf\boldmath Sparsity of  $\calR_{B,n}^+$]
\label{claim:far-away+}
For any pair of  vectors $\vec x^1,\vec x^2\in \calR_{B,n}^+$ whose $\calR_n^+$ components are different, we have $|\vec x^1-\vec x^2|_\infty \ge  2C'$.
\end{claim}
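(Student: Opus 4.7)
\textbf{Proof proposal for Claim~\ref{claim:far-away+}.} The plan is to mimic the structure of the proof of Claim~\ref{claim:far-away}, exploiting the lattice $\mathbb{L}^+$ and its minimum-norm constant $C^+$, and recognizing that $\pi^+ = \vec p_0^{\,+}$ is itself an element of the basis $\bB^+$. The main twist relative to Claim~\ref{claim:far-away} is that here we must account for the size-correcting coefficient in front of $\pi^+$, and show that this correction still lands us inside $\mathbb{L}^+$ (up to a known scalar).

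First, I would write each $\vec x^j \in \calR^+_{B,n}$ ($j\in\{1,2\}$) as
\[
\vec x^j = (n - n^\# - \vec x^{\,+,j}\cdot \vec 1)\,\pi^+ + \vec x^{\,\#} + \vec x^{\,+,j},
\]
where $\vec x^{\,+,j} = \sum_{i=1}^{d_\infty-1} \lceil 2C'/C^+\rceil\,\lambda_i^j \,\vec p_i^{\,+}$ for integers $\lambda_i^j\in\{0,\ldots,\lfloor\sqrt n\rfloor\}$. Subtracting and setting $\mu_i = \lambda_i^1 - \lambda_i^2$, the $\vec x^{\,\#}$ and constant parts cancel and I get
\[
\vec x^1 - \vec x^2 \;=\; -\Bigl(\sum_{i=1}^{d_\infty-1} \lceil 2C'/C^+\rceil\,\mu_i\,(\vec p_i^{\,+}\cdot \vec 1)\Bigr)\vec p_0^{\,+} \;+\; \sum_{i=1}^{d_\infty-1} \lceil 2C'/C^+\rceil\,\mu_i\,\vec p_i^{\,+}.
\]

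Next, I would observe that because $\vec p_i^{\,+}\cdot\vec 1\in\{-1,0,1\}$ by construction of $\bB^+$, the coefficient of $\vec p_0^{\,+}$ above is an integer multiple of $\lceil 2C'/C^+\rceil$. Defining $\nu_0 = -\sum_{i=1}^{d_\infty-1}\mu_i(\vec p_i^{\,+}\cdot \vec 1)\in\mathbb Z$ and $\nu_i = \mu_i$ for $1\le i\le d_\infty-1$, we can therefore write
\[
\vec x^1 - \vec x^2 \;=\; \lceil 2C'/C^+\rceil \cdot \sum_{i=0}^{d_\infty-1} \nu_i \,\vec p_i^{\,+}.
\]
The hypothesis that the $\calR_n^+$ components differ means $\vec \mu\ne\vec 0$, hence some $\nu_i$ with $i\ge 1$ is nonzero, so $\sum_{i=0}^{d_\infty-1}\nu_i\,\vec p_i^{\,+}\in\mathbb{L}^+$ (recall $\mathbb{L}^+$ is the lattice generated by $\bB^+$, excluding $\vec 0$, and the linear independence of $\bB^+$ prevents the sum from being zero).

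Finally, by the definition of $C^+$ as the minimum $L_\infty$ norm over $\mathbb{L}^+$, we conclude
\[
|\vec x^1 - \vec x^2|_\infty \;\ge\; \lceil 2C'/C^+\rceil \cdot C^+ \;\ge\; 2C',
\]
which is the desired bound. I expect the only delicate step to be verifying that the size-correction coefficient indeed makes $\nu_0$ an integer, which is exactly where the normalization $\vec p_j^{\,+}\cdot\vec 1\in\{-1,0,1\}$ (chosen in the setup of $\bB^+$) is used; everything else is a direct translation of the argument in Claim~\ref{claim:far-away}.
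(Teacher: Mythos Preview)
Your proposal is correct and follows essentially the same argument as the paper: both write the difference $\vec x^1-\vec x^2$ as $\lceil 2C'/C^+\rceil$ times an integer combination of the basis $\bB^+$, use the normalization $\vec p_i^{\,+}\cdot\vec 1\in\{-1,0,1\}$ to ensure the coefficient on $\vec p_0^{\,+}=\pi^+$ is an integer, and then invoke the definition of $C^+$ as the minimum $L_\infty$ norm over $\mathbb{L}^+$.
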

\begin{proof}
For $j\in \{1,2\}$, we write
$$\vec x^j = \ell^j \cdot \pi^+ + \vec x^{\,\#} + \sum\nolimits_{i=1}^{d_{\infty}-1} \left\lceil\frac{2C'}{C^+}\right\rceil\cdot\lambda_i^j\cdot\vec p_i^{\,+},$$
where $\ell^1$ and $\ell^2$ guarantee that $\vec x^1\cdot\vec 1 = \vec x^2\cdot\vec 1 = n$.
Then, because $(\vec x^1 - \vec x^2)\cdot \vec 1=0$ and $\pi^+\cdot\vec 1 =1$, we have
$$\ell^1-\ell^2 = \sum\nolimits_{i=1}^{d_\infty-1} \left\lceil\frac{2C'}{C^+}\right\rceil(\lambda_i^2-\lambda_i^1)\cdot(\vec p_i^{\,+}\cdot \vec 1)$$
Therefore,
\begin{align*}
\vec x^1 - \vec x^2 = & (\ell^1-\ell^2)\pi^+ +  \sum\nolimits_{i=1}^{d_\infty-1} \left\lceil\frac{2C'}{C^+}\right\rceil(\lambda_i^1-\lambda_i^2)\cdot\vec p_i^{\,+}\\
=& \sum\nolimits_{i=1}^{d_\infty-1} \left\lceil\frac{2C'}{C^+}\right\rceil(\lambda_i^2-\lambda_i^1)\cdot(\vec p_i^{\,+}\cdot \vec 1)\cdot\pi^+ +  \sum\nolimits_{i=1}^{d_\infty-1} \left\lceil\frac{2C'}{C^+}\right\rceil(\lambda_i^1-\lambda_i^2)\cdot\vec p_i^{\,+}\\
=&\left\lceil\frac{2C'}{C^+}\right\rceil\left( \sum\nolimits_{i=1}^{d_\infty-1}  (\lambda_i^2-\lambda_i^1)\cdot(\vec p_i^{\,+}\cdot \vec 1)\cdot \pi^+ +  \sum\nolimits_{i=1}^{d_\infty-1}  (\lambda_i^1-\lambda_i^2)\cdot\vec p_i^{\,+} \right)
\end{align*}
Recall that for all $i\le d_\infty-1$, $\vec p_i^{\,+}\cdot \vec 1$ is an integer, which means that $\sum\nolimits_{i=1}^{d_\infty-1}  (\lambda_i^2-\lambda_i^1)\cdot(\vec p_i^{\,+}\cdot \vec 1)$ is an integer. Also because the $\calR_n^+$ components of $\vec x^1$ and $\vec x^2$ are different,  there exists $i\le d_\infty-1$ such that $\lambda_i^1-\lambda_2^2\ne 0$. Therefore, 
$$ \sum\nolimits_{i=1}^{d_\infty-1}  (\lambda_i^2-\lambda_i^1)\cdot(\vec p_i^{\,+}\cdot \vec 1)\cdot\pi^+ +  \sum\nolimits_{i=1}^{d_\infty-1}  (\lambda_i^1-\lambda_i^2)\cdot\vec p_i^{\,+} \in {\mathbb L^+}$$
Therefore, 
\begin{align*}
\left|\vec x^1 - \vec x^2\right|_\infty = & (\ell^1-\ell^2)\pi^+ +  \sum\nolimits_{i=1}^{d_\infty-1} \left\lceil\frac{2C'}{C^+}\right\rceil(\lambda_i^1-\lambda_i^2)\cdot\vec p_i^{\,+}\\
=& \sum\nolimits_{i=1}^{d_\infty-1} \left\lceil\frac{2C'}{C^+}\right\rceil(\lambda_i^2-\lambda_i^1)\cdot(\vec p_i^{\,+}\cdot \vec 1)\cdot\pi^+ +  \sum\nolimits_{i=1}^{d_\infty-1} \left\lceil\frac{2C'}{C^+}\right\rceil(\lambda_i^1-\lambda_i^2)\cdot\vec p_i^{\,+}\\
=&\left\lceil\frac{2C'}{C^+}\right\rceil\left| \sum\nolimits_{i=1}^{d_\infty-1}  (\lambda_i^2-\lambda_i^1)\cdot(\vec p_i^{\,+}\cdot \vec 1)\cdot \pi^+ +  \sum\nolimits_{i=1}^{d_\infty-1}  (\lambda_i^1-\lambda_i^2)\cdot\vec p_i^{\,+} \right|_\infty\\
\ge & \left\lceil\frac{2C'}{C^+}\right\rceil\cdot C^+ =2C'
\end{align*}
This completes the proof of Claim~\ref{claim:far-away+}.
\end{proof}
Then, we prove the counterpart of Claim~\ref{claim:calR-in-HB} for $\calR_{B,n}^+$ in the following claim. 
\begin{claim}[\bf\boldmath $\calR_{B,n}^+\subseteq \cpolyn{B}\cap{\mathbb R_{\ge 0}^q}$]
\label{claim:calR-in-HB+}
There exists $N\in\mathbb N$ that does not depend on $B$ or $n$, such that when $n\ge N$, $\calR_{B,n}^+\subseteq \cpolyn{B}\cap{\mathbb R}_{\ge 0}^q$. 
\end{claim}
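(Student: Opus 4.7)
The plan is to mirror the three-step structure of Claim~\ref{claim:calR-in-HB}, adapted to the fact that now we only assume $\pi^+\in\sus{B^+}$ (rather than $\pi^*\in\sus{0}$), and correspondingly exploit the fact that $B\ge C_3 n$ with $C_3>B^+$ provides a linear-in-$n$ operation budget. Every $\vec x\in\calR_{B,n}^+$ has the form
\[
\vec x \;=\; \ell\cdot\pi^+ \;+\; \vec x^{\,\#} \;+\; \vec x^{\,+}, \qquad \ell \;=\; n-n^\#-\vec x^{\,+}\cdot\vec 1,
\]
where $\vec x^{\,+}\in\calR_n^+$, so $|\vec x^{\,+}|_\infty=O(\sqrt n)$ and hence $\ell = n-O(\sqrt n)$. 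By construction $\vec x\cdot\vec 1=n$, so the size constraint is automatic, and since $\pi^+$ is strictly positive while $\vec x^{\,\#}+\vec x^{\,+}$ has norm $O(\sqrt n)$, the linear term $\ell\cdot\pi^+$ dominates coordinate-wise for all sufficiently large $n$, giving $\vec x\ge\vec 0$. This handles the non-negativity claim.

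For membership in $\sourcepoly$, I would decompose $\ba_\text{S}\times\invert{\vec x}$ linearly. Since $\pi^+\in\cone_{B^+}\subseteq\sourcepolyz$, each $\vec p_i^{\,+}\in\sus{\infty}\subseteq\sourcepolyz$, and $\vec x^{\,\#}\in\sourcepoly$, each term contributes $\le\invert{\vec 0}$ except $\vec x^{\,\#}$, which contributes $\le\invert{\vbb_\text{S}}$; the coefficients $\ell$ and $\lceil 2C'/C^+\rceil\lambda_i$ are all non-negative, so the inequality is preserved. This is a direct analogue of part (2) of the original claim.

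For $B$-manipulability I would explicitly produce the operation vector
\[
\vo^* \;=\; \vo^{\,\#} \;+\; \ell\cdot\vo^{\,+} \;+\; \sum\nolimits_{i=1}^{d_\infty-1}\left\lceil\tfrac{2C'}{C^+}\right\rceil\lambda_i\cdot \vo_i^{\,+}\ \ge\ \vec 0,
\]
where $\vo^{\,\#},\vo^{\,+},\vo_i^{\,+}$ are the witnesses certifying that $\vec x^{\,\#}\in\csus{B^{\#},n^{\#}}$, that $\pi^+\in\sus{B^+}$ (so $\vec c\cdot\vo^{\,+}\le B^+$), and that $\vec p_i^{\,+}\in\sus{\infty}$, respectively. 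A linear decomposition then yields $\vec x+\vo^*\times\vomatrix{}\in\targetpoly$ exactly as in part (3) of Claim~\ref{claim:calR-in-HB}, because each summand lies in $\targetpoly$ (for $\vec x^{\,\#}+\vo^{\,\#}\times\vomatrix{}$) or $\targetpolyz$ (for the $\pi^+$ and $\vec p_i^{\,+}$ parts), and all scalar coefficients are non-negative.

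The only nontrivial step is the cost bound $\vec c\cdot\vo^*\le B$, and this is where the hypothesis $B\ge C_3 n$ is used. Bounding directly,
\[
\vec c\cdot\vo^* \;\le\; B^{\#} + \ell\cdot B^+ + O(\sqrt n) \;\le\; n B^+ + O(\sqrt n).
\]
Since $C_3>B^+$ was fixed strictly, we have $C_3 n - n B^+ = (C_3-B^+)n$, which exceeds any $O(\sqrt n)$ remainder for all sufficiently large $n$, independently of $B$. Hence $\vec c\cdot\vo^*\le C_3 n\le B$ whenever $n\ge N$ for some $N$ depending only on $\vosetting$, $\pi^+$, $B^+$, $C_3$, and the constants $C',C^+$, but not on $B$ or $n$. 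This is precisely the use of the gap $C_3>B^+>B_{\conv(\Pi)}$, and I expect this cost-budget verification to be the only delicate point; the rest is bookkeeping analogous to Claim~\ref{claim:calR-in-HB}.
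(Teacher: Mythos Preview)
Your proposal is correct and follows essentially the same approach as the paper: the same decomposition $\vec x=\ell\pi^++\vec x^{\,\#}+\vec x^{\,+}$, the same non-negativity and $\sourcepoly$-membership arguments, the identical operation vector $\vo^*=\vo^{\,\#}+\ell\vo^{\,+}+\sum_i\lceil 2C'/C^+\rceil\lambda_i\vo_i^{\,+}$, and the same cost bound $\vec c\cdot\vo^*\le nB^++O(\sqrt n)<C_3 n\le B$ via the strict gap $C_3>B^+$. You also correctly identify the cost-budget verification as the one nontrivial point.
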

\begin{proof}
 Let $\vec x  = (n- n^\# -\vec x^{\,+}\cdot\vec 1)\cdot  \pi^+ + \vec x^{\,\#} + \vec x^{\,+}$ denote any vector in $\calR^+_{B,n}$, where $\vec x^{\,+}=\sum\nolimits_{i=1}^{d_{\infty}-1} \left\lceil\frac{2C'}{C^+}\right\rceil\cdot \lambda_i\cdot\vec p_i^{\,+}$.  Clearly $\calR_{B,n}^+\subseteq \ {\mathbb R}_{\ge 0}^q$ for sufficiently large $n$, because $\pi^+$ is strictly positive. It is not hard to verify that $\vec x \in \sourcepoly$, because $ \pi^+\in \sourcepolyz$, $\vec x^{\,\#}\in \sourcepoly$, and $ \vec x^{\,+}\in \sourcepolyz$.   Let 
$$\vo = (n- n^{\#}-\vec x^{\,+}\cdot\vec 1)\cdot \vo^{\,+}+\vo^{\,\#}+\sum\nolimits_{i=1}^{d_{\infty}-1} \left\lceil\frac{2C'}{C^+}\right\rceil\cdot \lambda_i \cdot\vo_i^{\,+}$$ 
It follows that $\vec c\cdot \vec w \le B^+n + B^{\#}+O(\sqrt n)$, which  is smaller than $B\ge C_3n$ for any sufficiently large $n$, because $C_3>B^+$. Then,
$$\vec x+\vec w\times \vomatrix{} = \left(n- n^{\#}-\vec x^{\,+}\cdot\vec 1\right)\cdot  \underbrace{\vec y^{\,+}}_{\text{in }\targetpolyz}+  \underbrace{\vec y^{\,\#}}_{\text{in }\targetpoly} + \sum\nolimits_{i=1}^{d_{\infty}-1} \left\lceil\frac{2C'}{C^+}\right\rceil\cdot\lambda_i\cdot\underbrace{\vec y_i}_{\text{in }\targetpolyz}\in  \targetpoly$$
Therefore, $\calR^+_{B,n}\subseteq \cpolyn{B}$.  This completes the proof of Claim~\ref{claim:calR-in-HB+}.
\end{proof}
Let $\calR^{\mathbb Z+}_{B,n}\subseteq \csus{n,B}$ denote the   integer vectors corresponding to vectors in $\calR^+_{B,n}$  guaranteed by Claim~\ref{claim:int-close}. Then, $|\calR^{\mathbb Z+}_{B,n}| = \Omega((\sqrt n)^{d_\infty-1})$. The $\Omega\left((\frac{1}{\sqrt n})^{q-d_{\infty}} \right)$ lower bound follows after  similar steps as \myhyperlink{lb-step2}{Step 2} and \myhyperlink{lb-step3}{Step 3} for the \myhyperlink{PMV-proof-sup-sqrtn-lb}{polynomial lower bound  of PT-$\Theta(\sqrt n)$-$\sup$}. Specifically, we have the following counterpart to Claim~\ref{claim:PMV-point-lb} with a  similar proof.

\begin{claim}
\label{claim:PMV-PTn-point-lb}
For any instability settings $\vosetting{}$, any strictly positive set of distributions $\Pi$ such that $B^-_{\conv(\Pi)}<\infty$, and any $C^-_3>B^-_{\conv(\Pi)}$,  there exist $C_{\vosetting}>0$ and $N>0$ such that for any $n\ge N$, any $  B\ge C^-_3 n$, and any $\vec \pi\in\Pi^n$,
$$\Pr(\vXp \in\csus{n,B}) \ge C_{\vosetting}\cdot  \left( \frac{1}{\sqrt n}\right)^{q-d_\infty}$$
\end{claim}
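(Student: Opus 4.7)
\textbf{Proof plan for Claim~\ref{claim:PMV-PTn-point-lb}.} The strategy mirrors the $\Omega((1/\sqrt n)^{q-d_\infty})$ lower bound in the PT-$\Theta(n)$-$\sup$ case, but the roles of the adversary and the analyst are reversed: in the $\inf$ setting, $\vec\pi$ is given, and I need a lower bound that holds \emph{uniformly} over every $\vec\pi\in\Pi^n$. Fix any $\vec\pi\in\Pi^n$ and let $\pi^\star=\frac{1}{n}\sum_{j=1}^n\pi_j\in\conv(\Pi)$. The hypothesis $B^-_{\conv(\Pi)}<\infty$ together with Claim~\ref{claim:B-min-minus} gives $\conv(\Pi)\subseteq\sus{B^-_{\conv(\Pi)}}$, so I can write $\pi^\star=\vec y^{\,\star}-\vo^{\,\star}\times\vomatrix{}$ with $\vec y^{\,\star}\in\targetpolyz$, $\vo^{\,\star}\ge\vec 0$, and $\vec c\cdot\vo^{\,\star}\le B^-_{\conv(\Pi)}$.

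Next, I will transplant the construction of $\calR^+_{B,n}$ from the PT-$\Theta(n)$-$\sup$ proof, using $\pi^\star$ in place of the fixed vector $\pi^+$. First I extend $\{\pi^\star\}$ to a basis $\bB^+=\{\pi^\star,\vec p_1^{\,+},\ldots,\vec p_{d_\infty-1}^{\,+}\}$ of $\calP_\infty$ with every $\vec p_j^{\,+}\in\sus{\infty}$ and $\vec p_j^{\,+}\cdot\vec 1\in\{-1,0,1\}$. I then define the lattice $\mathbb L^+$ generated by $\bB^+$ (minus $\vec 0$), the constant $C^+>0$ as the minimum $L_\infty$-norm on $\mathbb L^+$, and the grid
\[
\calR_n^+=\Bigl\{\sum_{i=1}^{d_\infty-1}\bigl\lceil\tfrac{2C'}{C^+}\bigr\rceil\lambda_i\vec p_i^{\,+}:\lambda_i\in\{0,1,\ldots,\lfloor\sqrt n\rfloor\}\Bigr\},
\]
and finally $\calR^+_{B,n}=\{(n-n^{\#}-\vec x^{\,+}\cdot\vec 1)\pi^\star+\vec x^{\,\#}+\vec x^{\,+}:\vec x^{\,+}\in\calR_n^+\}$, where $\vec x^{\,\#}\in\csus{B^{\#},n^{\#}}$ is fixed once and for all. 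Sparsity (distance $\ge 2C'$ between distinct elements) follows from the same argument as Claim~\ref{claim:far-away+}. For the containment $\calR^+_{B,n}\subseteq\cpolyn{B}$, the budget required is at most $(n-n^{\#}-\vec x^{\,+}\cdot\vec 1)\cdot B^-_{\conv(\Pi)}+B^{\#}+O(\sqrt n)\le B^-_{\conv(\Pi)}n+O(\sqrt n)$, which is at most $B\ge C^-_3 n$ for all sufficiently large $n$ since $C^-_3>B^-_{\conv(\Pi)}$. Claim~\ref{claim:int-close} then yields a set $\calR^{\mathbb Z+}_{B,n}\subseteq\csus{n,B}$ of at least $\Omega((\sqrt n)^{d_\infty-1})$ distinct integer vectors, each within $C'$ in $L_\infty$ of some element of $\calR^+_{B,n}$, and thus $O(\sqrt n)$ away from the mean $n\pi^\star$ of $\vXp$. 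Applying the pointwise concentration bound~\citep[Lemma~1]{Xia2021:How-Likely} gives $\Pr(\vXp=\vec x)=\Omega((\sqrt n)^{1-q})$ for each such $\vec x$, and summing yields the advertised bound $\Omega((\sqrt n)^{d_\infty-q})$.

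\textbf{Main obstacle.} The only subtlety is uniformity of the constants in $\vec\pi$: $\pi^\star$ varies over $\conv(\Pi)$, and so does the basis $\bB^+$ (through its dependence on $\pi^\star$), which in turn controls $C^+$ and the constants hidden in Claim~\ref{claim:int-close} and in the pointwise concentration bound. The fix is to exploit compactness and strict positivity of $\conv(\Pi)$: one can either fix a single basis independent of $\vec\pi$ (by replacing $\pi^\star$ with any fixed $\pi^+\in\conv(\Pi)$ in the basis and handling the displacement $\pi^\star-\pi^+$ as an $O(1)$ perturbation absorbed into $\vec x^{\,\#}$), or argue by a continuity/compactness argument that infimums of $C^+$ and of the pointwise concentration constants over $\pi^\star\in\conv(\Pi)$ are strictly positive. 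Either route produces a single constant $C_{\vosetting}>0$ valid for all $\vec\pi\in\Pi^n$, completing the proof.
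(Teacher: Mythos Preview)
Your approach is exactly the paper's: it simply remarks that the construction of $\calR^+_{B,n}$ from the PT-$\Theta(n)$-$\sup$ lower bound goes through for any $\pi^+\in\conv(\Pi)\cap\sus{B^+}=\conv(\Pi)$, in particular for $\pi^+=\pi^\star=\frac{1}{n}\sum_j\pi_j$, and leaves the uniformity of constants implicit. Of your two proposed routes for that uniformity, the compactness argument is the one that works (finitely many choices of sub-basis $\{\vec p_i^{\,+}\}$, and for each fixed choice $C^+$ depends continuously on $\pi^\star$ over a compact set); your first route does not, because in the term $(n-n^{\#}-\vec x^{\,+}\cdot\vec 1)\,\pi^\star$ the coefficient is of order $n$, so replacing $\pi^\star$ by a fixed $\pi^+$ creates an $O(n)$ displacement that cannot be absorbed into the fixed vector $\vec x^{\,\#}$, while keeping $\pi^\star$ in that term but using $\pi^+$ in the lattice basis destroys the integrality structure needed for the sparsity argument of Claim~\ref{claim:far-away+}.
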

 
\vspace{3mm}
\noindent \hypertarget{inf}{{\bf\boldmath Proof for $\inf$.}} Like the $\sup$ part, we call the four cases in $\inf$ part the $0$ case, the exponential case, the phase transition at $\Theta(\sqrt n)$ case, and the phase transition at $\Theta(n)$ case. 

 The $0$ case is straightforward.  To prove the exponential upper bound, we need to prove that there exists $\vec\pi\in\Pi^n$ such that $\Pr(\vXp\in \csus{n,B})=\exp(-\Omega(n))$.
 Let  $\pi'$ denote an arbitrary vector in $\conv(\Pi)\setminus \sus{\infty}$. Because $ \sus{\infty}$ is a closed cone, the distance between $\pi'$ and $\sus{\infty}$ is strictly positive, which means that the distance between  $n\cdot\pi'$ and $\sus{\infty}$ is $\Omega(n)$.      Then, let $\vec\pi\in\Pi^n$ denote an arbitrary vector such that $\frac{1}{n}\sum_{j=1}^n\pi_j$ is $O(1)$ from $\pi'$ in $L_\infty$. Therefore, $\sum_{j=1}^n\pi_j$, which is the mean vector of $\vXp$, is $\Theta(n)$ away from $\sus{\infty}\supseteq \sus{B}$. Then, similar to the \myhyperlink{PMV-proof-sup-exp}{proof for the exponential  case  of $\sup$}, by Claim~\ref{claim:cc-h-inf}, $\sum_{j=1}^n\pi_j$ is $\Theta(n)$ away from $\cpoly{B}\supseteq \csus{n,B}$. The exponential upper bound of $\inf$ then follows after Hoeffding's inequality and the union bound. The exponential lower bound trivially holds.

\vspace{2mm}
\noindent{\bf\boldmath Proof for the phase transition at $\Theta(\sqrt n)$ case of $\inf$.} The (polynomial) upper bound  follows after the upper bound of the phase transition at $\Theta(\sqrt n)$ case of $\sup$.  The polynomial lower bound follows after applying Claim~\ref{claim:PMV-point-lb} to every $\vec \pi=(\pi_1,\ldots,\pi_n)\in\Pi^n$ and letting $\pi^* = \frac{1}{n}\sum_{j=1}^n\pi_j\in\sus{0}$.

\vspace{2mm}
\noindent{\bf\boldmath Proof for the phase transition at $\Theta(n)$ case of $\inf$.} When $B\le C^-_2n$, the (exponential) 
lower bound is straightforward. The proof for the (exponential) upper bound is similar to the proof of the \myhyperlink{PMV-proof-sup-n-smallB}{$B\le C_2n$ case of PT-$\Theta(n)$-$\sup$}. More precisely, let $c^* = 
\frac{1}{2}(C^-_2+B^-_{\conv(\Pi)})$. According to the minimality of $B^-_{\conv(\Pi)}$, we have $\conv(\Pi)\nsubseteq \sus{c^*}$. Let $\pi'\in \conv(\Pi)\setminus \sus{c^*}$, which means that $\pi'$ is $\Omega(1)$ away from $\sus{c^*}$, because $\sus{c^*}$ is a closed set. Let $\vec\pi\in\Pi^n$ denote an arbitrary vector such that $\sum_{j=1}^n\pi_j$ is $O(1)$ from $n\cdot\pi'$ in $L_\infty$. Then, $\sum_{j=1}^n\pi_j$ is $\Omega(n)$ away from $\sus{c^*}$. By Claim~\ref{claim:distance}, $\sum_{j=1}^n\pi_j$, which is the mean vector of $\vXp$, is $\Omega(n)$ away from $\cpoly{B}$ in $L_\infty$. The  exponential upper bound follows after Hoeffding's inequality and the union bound. 

%, let $B^-$ be an arbitrary number such that $ C^-_2<B^-<B^-_{\conv(\Pi)}$. Then, according to the definition of $B^-_{\conv(\Pi)}$, we have $\conv(\Pi)\nsubseteq \sus{B^-}$. Therefore, there exists $\pi'\in \conv(\Pi)\setminus \sus{B^-}$. Let $\Pi' = \{\pi'\}$. It follows that $B_{\Pi'}>C^-_2$, because (1) $\pi'\notin\sus{B^-}$ implies that $B_{\Pi'}\ge B^-$, and (2)  $B^->C^-_2$.  Applying the $\sup$ part to $\Pi=\Pi'$, we have that for all $B\le C^-_2n$,
%$$\Pr\left(\vec X_{(\pi')^n}\in\csus{n,B}\right) = \exp(-\Theta(n)),$$
%which proves the exponential upper bound for $B\le C^-_2n$. 

When $B\ge C^-_3n$, the polynomial upper bound follows after the $\sup$ case. To prove the polynomial lower bound, notice that $C^-_3>B^-_{\conv(\Pi)}\ge B_{\conv(\Pi)}$. Let  $B^+$ be any number such that $B^-_{\conv(\Pi)}<B^+<C^-_3$. This means that $\conv(\Pi)\subseteq \sus{B^+}$. Because $B^-_{\conv(\Pi)}\ge B_{\conv(\Pi)}$, we have $B^+>B_{\conv(\Pi)}$.  The proof for the lower bound is similar to the proof for the 
lower bound in the phase transition at $\Theta(n)$ case of $\inf$: notice that in the \noindent \myhyperlink{PMV-proof-sup-n-largeB}{proof of the $B\ge C_3n$ case of PT-$\Theta(n)$-$\sup$}, the proof works for any $\pi^+\in\conv(\Pi)\cap \sus{B^+}=\conv(\Pi)$. 
\end{proof}

%\section{Materials for Section~\ref{sec:multi-instability}}

\subsection{Full Version of Theorem~\ref{thm:PMV-instability-psi>0} and Its Proof}
\label{sec:psi>0}

%\begin{dfn}
%Given $\Pi$ and $\psi\in[0,1]$, The {\em max-semi-random likelihood} of $X$ under $r$ with $n$ agents and budget $B$, denoted by $\satmax{X}{\Pi,\psi}(r,n,B)$,  is defined as:
%\begin{equation}
%\label{dfn:max-s-sat-Pi-psi}
%\satmax{X}{\Pi,\psi}(r,n,B) \triangleq \underbrace{\sup\nolimits_{\vec\pi\in\Pi^n}}_{\text{worst-case $\vec\pi$}}\Pr\nolimits_{P\sim\vec\pi} (\exists \underbrace{P'\in \nb{P}{\psi}}_{\text{manipulated to $P'$}}\text{ s.t. }\underbrace{\sat{X}(r,P',B)=1}_{\text{$P'$ is influenceable}})
%\end{equation}
%\end{dfn}

\appThm{Semi-Random PMV-Instability, $\psi>0$}
{thm:PMV-instability-psi>0}{ Given any $q\in\mathbb N$, any closed and strictly positive $\Pi$ over $[q]$, and any instability setting $\vosetting{} = \langle\sourcepoly,\targetpoly,\voset{},\vec c\,\rangle$, any $C_1>0$, any $n\in \mathbb N$, and any $0\le B\le C_1 \sqrt n$,  
$$\sup_{\vec\pi\in\Pi^n}\Pr\left(\vXp\in \nb{\csus{n,B}}{\psi} \right)= 
\begin{cases}
0 & \text{if }\csus{n,B}=\emptyset\\
\exp(-\Theta(n)) & \text{otherwise, if }\conv(\Pi)\cap \nbs{\sus{0}}{\psi} =\emptyset\\
\Theta(1)&\text{otherwise}
\end{cases}$$ 
$$\inf_{\vec\pi\in\Pi^n}\Pr\left(\vXp\in \nb{\csus{n,B}}{\psi} \right)= 
\begin{cases}
0 & \text{if }\csus{n,B}=\emptyset\\
\exp(-\Theta(n)) & \text{otherwise, if }\conv(\Pi)\cap \nbs{\sus{0}}{\psi} =\emptyset\\
\Theta(1)&\text{otherwise}
\end{cases}$$
}
\begin{proof}  The $0$ case of $\sup$ is straightforward.  The proof of the exponential case of $\sup$ is similar to the proof of the exponential case of $\sup$ for $\psi=0$. We will show that there is a linear separation between $\conv(\Pi)$ and $\nb{\csus{n,B}}{\psi}$. Because $\csus{n,B}\subseteq \cpoly{B}$, it suffices to show the separation between $\conv(\Pi)$ and $\nb{\cpoly{B}}{\psi}$. To this end, we first recall from Claim~\ref{claim:close-to-cone0} that every vector $\vec x\in \cpoly{B}$ is no more than $\Theta(B+1)$ away from  $\sus{0}$. Let $\psi^*>\psi$ be an arbitrary number such that $\conv(\Pi)\cap \nbs{\sus{0}}{\psi^*} =\emptyset$. Then, as $B= O(\sqrt n)$, $\nb{\csus{n,B}}{\psi}\subseteq \nbs{\sus{0}}{\psi^*}$. 

Next, we prove that $\nbs{\sus{0}}{\psi^*}$ is a polyhedron. We first define  LP$_\psi$, whose variables are $\vec \chi,\vec\kappa$, and $\vec x$. Let $\momatrix{}$ be the $q(q-1) \times q$ matrix that represents the changes that the contamination adversary can make to a single vote, i.e., every row contains a $1$  and a $-1$.  $\vec \kappa$ represents the modification by the contamination adversary.
\begin{equation} 
\label{eq:LPpsi}
\text{LP}_{\psi} \triangleq \left\{
\text{\begin{tabular}{rl}
$\max$ &\ \ \ $ 0$\\
s.t. & $\ba_{\text{S}} \times \invert{\vec x} \le \invert{\vec 0}$\\
 & $\ba_{\text{T}} \times \invert{\vec x} \le \invert{\vec 0}$\\
& $ \invert{\vec \chi} = \invert{\vec x} + \invert{\vec\kappa\times \momatrix{}}$\\
& $\vec\kappa \cdot \vec 1 \le \psi \cdot \vec x \cdot \vec 1$\\
% &  $\vec x\cdot\vec 1 = n$\\ 
   &  $\vec x  \ge \vec 0$,  $\vec \chi  \ge \vec 0$, $\vec \kappa \ge \vec 0$
\end{tabular}}
\right.
\end{equation}
Clearly,  $\nbs{\sus{0}}{\psi^*}$ is  the projection of the feasible solutions of LP$_\psi$ (which is a polyhedron in the $(\vec \chi,\vec\kappa,\vec x)$ space) to $\vec\chi$'s.  The rest of proof for the exponential case is the same as the exponential case of  $\sup$ for $\psi=0$. 

The proof of the $\Theta(1)$ case is similar to the proof of the lower bound of the polynomial case of $\sup$ for $\psi=0$. The main difference and challenge is to define the counterpart of \myhyperlink{lb-step1.4}{Step 1.4}: we will specify $\Theta(n^{(q-1)/2})$ integer vectors in an $O(\sqrt n)$ neighborhood of the sum of distributions in some $\vec \pi^*\in \Pi^n$, such that each such vector can be modified by a contamination adversary  to a vector $\vec x\in \sus{0}$.

Let $\pi^*\in \conv(\Pi)\cap \nbs{\sus{0}}{\psi}$. Clearly $n\pi^*\in \nbs{\sus{0}}{\psi}$, which means that $\sus{0}\cap \nbs{n\pi^*}{\psi}\ne\emptyset$. Choose $\vec x\in \sus{0}\cap \nbs{n\pi^*}{\psi}$. Recall from the proof of Claim~\ref{claim:close-to-cone0} that $\sus{0}$ are the feasible solutions to LP$_{\cone}^B$, which means that $\vec x$ is a feasible solution to LP$_{\cone}^B$. Then, as proved in the proof of Claim~\ref{claim:close-to-cone0}, solutions to LP$_{\cone}^B$ and solutions to LP$_{\poly}^B$ are close to each other (as an application of~\citep[Theorem 5 (i)]{Cook86:Sensitivity}), there exists a solution $\vec x' $ to LP$_{\poly}^B$ such that $|\vec x - \vec x'|_1 = O(B+1)$. It follows that $\vec x' \in\cpoly{B}$. Define
$$\vec \chi^\circ \triangleq n\pi^* + (\vec x'  - \vec x)$$
It follows that $\vec \chi^\circ \in \nbs{\cpoly{B}}{\psi}$.   Let $\vec\chi^*$ denote the vector obtained from $\vec \chi^\circ $ by moving towards the $\vec x'$ direction so that the $L_1$ distance between $\vec \chi^\circ $ and $\vec\chi^*$ is $2\sqrt n$. That is,
$$\vec\chi^* \triangleq  \vec \chi^\circ  + 2\sqrt n  \cdot \frac{\vec x' - \vec \chi^\circ }{|\vec x' - \vec \chi^\circ |_1}$$
Let $B_{\sqrt n}(\vec\chi^*)$ denote the set of all vectors whose $L_1$ distance to $\vec\chi^*$ is no more than $\sqrt n$. When $n$ is sufficiently large, all vectors in $B_{\sqrt n}(\vec\chi^*)$ are in $\mathbb R_{>0}^q$.  In such cases, 
$$B_{\sqrt n}(\vec\chi^*) \subseteq \nbs{\cpoly{B}}{\psi},$$
because for any $\vec\chi\in B_{\sqrt n}(\vec\chi^*)$, $|\vec\chi|_1 =n$ and  $|\vec\chi - \vec x'|_1\le 2n\psi$. Moreover, the dimension of $B_{\sqrt n}(\vec\chi^*)$ is $q-1$. Therefore, we can choose and fix a (sparse) subset $\calR_{B,n,\psi}$ of $\Theta(n^{(q-1)/2})$ of $B_{\sqrt n}(\vec\chi^*)$ whose vectors are $\Theta(1)$ away from each other. 
%so that the distance is more than the gap between any $\vec\chi\in B_{\sqrt n}(\vec\chi^*)$ to its corresponding ``integer solution'' as defined below. 

Next, like Claim~\ref{claim:int-close}, for every $\vec\chi\in B_{\sqrt n}(\vec\chi^*)$, we will identify a vector $\vec \chi'\in \nb{\csus{n,0}}{\psi}$ that is $O(1)$ away by considering the following LP.
\begin{equation} 
\label{eq:LPpsi}
\text{LP}_{n,B,\psi} \triangleq \left\{
\text{\begin{tabular}{rl}
$\max$ &\ \ \ $ 0$\\
s.t. & $\ba_{\text{S}} \times \invert{\vec x} \le \invert{\vbb_\text{S}}$\\
   & $\ba_{\text{T}} \times \invert{\vec x + \vo\times \vomatrix{}} \le \invert{\vbb_\text{T}}$\\ 
   &  $\vec x\cdot\vec 1 = n$\\
  &  $\vec c\cdot \vo \le B$\\
& $ \invert{\vec \chi} = \invert{\vec x} + \invert{\vec\kappa\times \momatrix{}}$\\
& $\vec\kappa \cdot \vec 1 \le \psi \cdot \vec x \cdot \vec 1$\\
   &  $\vec \chi  \ge \vec 0$, $\vec \kappa \ge \vec 0$, $\vec x  \ge \vec 0$, $\vo  \ge \vec 0$,  
\end{tabular}}
\right.
\end{equation}
Because $\vec\chi\in B_{\sqrt n}(\vec\chi^*)\subseteq \nbs{\cpoly{B}}{\psi}$, there exist $\vec \kappa$, $\vec x$, and $\vo$ such that $(\vec\chi,\vec \kappa,\vec x,\vo)$ is a feasible solution to \text{LP}$_{n,B,\psi}$.  Because $\csus{n,B}\ne\emptyset$, \text{LP}$_{n,B,\psi}$ has a feasible integer solution.  Therefore, like in the proof of Claim~\ref{claim:int-close} and following~\citep[Theorem 1]{Cook86:Sensitivity}, \text{LP}$_{n,B,\psi}$ has an integer solution $\vec\chi'$ that is $\Theta(1)$ away from $\vec\chi$. 

The rest of the proof for the $\Theta(1)$ case of $\sup$ is the same as the proof of the lower bound of the polynomial case of $\sup$ for $\psi=0$. The $\inf$ part is similar to the $\inf$ part for $\psi=0$ with similar modifications with the $\sup$ part above.
\end{proof}

\section{Materials for Section~\ref{sec:applications}}

\subsection{Full Version of Theorem~\ref{thm:PMV-instability-applications}  and Its Proof}
\label{app:applications-proofs}

\appThm{\bf Max-Semi-Random Coalitional Influence}
{thm:PMV-instability-applications}{ 
Let $r$ be an integer  positional scoring rule, STV, ranked pairs, Schulze, maximin,   Copeland, plurality with runoff, or Bucklin with lexicographic tie-breaking. For  any closed and strictly positive $\Pi$ with $\piuni\in\conv(\Pi)$, any $X\in \{\cm,\mov\}\cup\econtrol$ (except $X\in \econtrol$ when $r=\pcopeland{0}$), there exists $ N>0$  such that for all $n>N$ and all $B\ge 0$,
$$\satmax{X}{\Pi,0}(r ,n,B) =  \begin{cases}0&\text{if }B=0\\\Theta\left(\min\left\{\frac{B}{\sqrt n},1\right\}\right)  &\text{if }B\ge 1\end{cases}$$
For any $X\in \control$,  any $n>N$ and any $B\ge 0$,
$\satmax{X}{\Pi,0}(r ,n,B) =  \Theta\left(1\right)$. Moreover, for any $\psi>0$, $\satmax{X}{\Pi,\psi}(r ,n,B) =\Theta(1)$ for the non-zero cases above. 
}

\begin{proof}
We first prove the {\bf \boldmath $B=0$ case}. For any $X\in \{\cm,\mov\}\cup\econtrol$, it is not hard to verify that if no budget is given, then the goal  under $X$ cannot be reached, which requires the winner to be changed. Therefore, $\satmax{X}{\Pi,0}(r ,n,B) =0$. For any $X\in \control$, it suffices to prove that for any alternative $a$, 
\begin{equation}
\label{equ:applications-B0}
\sup\nolimits_{\vec\pi\in\Pi^n} \Pr\nolimits_{P\sim \vec\pi}(r(P)=\{a\}) = \Theta(1)
\end{equation}
It is not hard to see that for all voting rules mentioned in this statement of the theorem, there exists a polyhedron $\ppoly{a}$ such that for all $\vec x\in\ppoly{a}$, $r(\vec x) = \{a\}$, $\piuni\in\ppolyz{a}$, and $\dim(\ppolyz{a}) = m!$. Therefore, \eqref{equ:applications-B0} follows after~\citep[Theorem 1]{Xia2021:How-Likely} (or equivalently, Theorem~\ref{thm:PMV-instability} with $\sourcepoly=\targetpoly=\cpoly{a}$ and $B=0$).

In the rest of the proof, we assume that $X\in \{\cm,\mov\}\cup\econtrol$ and $B\ge 1$. 

\vspace{2mm}\noindent{\bf\bf Overview.} Due to Theorem~\ref{thm:PMV-instability-GSR-upper}, it suffices to prove the  $\Theta\left(\min\left\{\frac{B}{\sqrt n},1\right\}\right)$ matching lower bound by identifying an instability setting $\vosetting$ that represents some unstable histograms $\csus{n,B}$, such that for all $B\ge 1$, 
$$\satmax{X}{\Pi,0}(r ,n,B)  \ge  \sup\nolimits_{\vec\pi\in \Pi^n}\Pr\nolimits_{P\sim\vec \pi}(\hist(P)\in\csus{n,B})\text{, and }$$
\begin{equation}
\label{eq:B-ineq}
 \sup\nolimits_{\vec\pi\in \Pi^n}\Pr\nolimits_{P\sim\vec \pi}(\hist(P)\in\csus{n,B}) = \Theta\left(\min\left\{\frac{B}{\sqrt n},1\right\}\right)
\end{equation}
Notice that for any constant $C_1$ and any $B\ge C_1\sqrt n$, the right hands side of \eqref{eq:B-ineq} is $\Theta(1)$. Therefore, it suffices to prove \eqref{eq:B-ineq} for all $1\le B\le C_1 \sqrt n$. For each $X$   and each  voting rule $r$  in the statement of the theorem, we define $\vosetting= \langle\sourcepoly, \targetpoly, \voset{},\vec 1\,\rangle$, prove that $\csus{n,1}\ne\emptyset$ for any sufficiently large $n$ by construction, and prove that $d_0=m!-1$ (which if often obvious), $d_\infty= m!$ (by applying Claim~\ref{claim:dim-infty-sufficient}).

We  now  prove Theorem~\ref{thm:PMV-instability-applications} for  $X=\cm$, and then comment on how to modify the proof for other $X\in \{\mov\}\cup\econtrol$.

\vspace{2mm}\noindent{\bf\bf \bf\boldmath $\cm$ for integer positional scoring rules.}  %This part of the theorem follows after a straightforward application of Theorem~\ref{thm:PMV-cm-rvs} by noticing that any pair of alternatives $(a,b)$ are in $\mpairs{s}$ and  $\{a,b\}\subseteq \cor_s(\piuni)$. Nevertheless, we present a separate, more direct proof that mostly serves as building block for other coalitional influence problems especially $\econtrol$, which hold for every $B\ge 1$ (and therefore  are not corollaries of Theorem~\ref{thm:PMV-econtrol-rvs}).
Let $r = r_{\vec s}$ denote the positional scoring rule with scoring vector $\vec s$. Let $\sourcepoly$ denote the set of vectors where $1$'s total score is at least as high as $2$'s total score, which is strictly higher than the total score of any other alternative. Let $\targetpoly$ denote the set of vectors where $2$'s total score is strictly the highest. To formally define $\sourcepoly$ and $\targetpoly$, we first recall the definition of score difference vectors. 
\begin{dfn}[\bf Score difference vector~\citep{Xia2020:The-Smoothed}]\label{dfn:varcons} For any scoring vector $\vec s = (s_1,\ldots,s_m)$ and any pair of different alternatives $a,b$, let $\score_{a,b}^{\vec s}$ denote the $m!$-dimensional vector indexed by rankings in $\ml(\ma)$: for any $R\in\ml(\ma)$, the $R$-element of $\score_{a,b}^{\vec s}$ is $s_{j_1}-s_{j_2}$, where $j_1$ and $j_2$ are the ranks of $a$ and $b$ in $R$, respectively.
\end{dfn}
In words, $\score_{a,b}^{\vec s}$ is the score vector of $a$ (under all linear orders) minus the score vector of $b$. Then, we define
$$\sourcepoly\triangleq\left\{\vec x: \begin{split}\score_{2,1}^{\vec s}\cdot\vec x&\le 0\\ \forall i\ge 3, \  \score_{i,2}^{\vec s}\cdot\vec x&\le -1\\   -\vec x&\le  \vec 0\end{split}\right\}\text{, }\targetpoly\triangleq \left\{\vec x:  \begin{split} \forall i\ne 2, \ \score_{i,2}^{\vec s}\cdot\vec x&\le -1\\   -\vec x&\le  \vec 0\end{split}\right\}\text{, and}$$
$$\vosetting_{\vec s} = \langle\sourcepoly, \targetpoly, \voset{\pm}^{1\ra2},\vec 1\,\rangle$$
It is not hard to verify that for any $\vec y\in \sourcepoly$ and any $\vec x\in\targetpoly$, we have $r_{\vec s}(\vec y) = \{1\}$ ($1$ has the highest score and wins due to tie-breaking if $2$ also has the highest score) and $r_{\vec s}(\vec x) = \{2\}$ ($2$ has the strictly highest score).

Next, we show that $\csus{n,B}\ne\emptyset$ for any sufficiently large $n$ by constructing a successful instance of manipulation by a single voter.  We first define some profiles and rankings  that will be used in the rest of the proof. For any $a\in \ma$, let $\sigma_a$ denote a cyclic permutation among $\ma\setminus\{a\}$. Let $P_a$ denote the following $(m-1)$-profile.
$$P_a \triangleq \left\{\sigma_a^i(a\succ \others): 1\le i\le m-1 \right\}\cup 3\times \ml(\ma),$$
where alternatives in ``$\others$'' are ranked alphabetically. Let 
$$P_* \triangleq 2m\times (P_1\cup P_2) \cup \bigcup_{i=3}^m 2(m-i) \times P_i$$
It follows that the $\vec s(P_*,1)=\vec s(P_*,2)>\vec s(P_*,3)>\cdots>\vec s(P_*,m)$.

Let $R_1$ (respectively, $R_2$) denote the ranking where $1$ (respectively, $2$) is ranked at the top, $2$ (respectively, $1$) is ranked at the bottom, and the remaining alternatives are ranked alphabetically. That is,
$$R_1 \triangleq [1\succ 3\succ\cdots\succ m\succ 2]\text{ and }R_2 \triangleq [2\succ 3\succ\cdots\succ m\succ 1]$$
Let $\ell\le m-1$ denote the index to the minimum value of $s_{\ell}- s_{\ell+1}$. Let $R_2'$ denote the ranking where $2$ and $1$ are ranked at the $\ell$-th and the $(\ell+1)$-th positions respectively, and the remaining alternatives are ranked alphabetically. That is,
$$R_2' \triangleq \underbrace{3\succ \cdots\succ \ell+1}_{\ell-1}\succ 2\succ 1\succ \underbrace{l+2\succ \cdots\succ m}_{m-\ell-1}$$
Next, we define $\sourceprofile$ and $\targetprofile$. We first define $\sourceprofile'$ to be the $n$-profile that consists of as many copies of $P_*$ as possible, and the remaining rankings are $R_1$. That is, let $n' \triangleq \left\lfloor\frac{n}{|P_*|}\right\rfloor\times |P_*|$, and
$$\sourceprofile'\triangleq \frac{n'}{|P_*|}\times P_* \cup (n-n')\times \{R_1\}$$
Let $\sourceprofile$ denote the profile obtained from $\sourceprofile'$ by replacing $\left\lfloor\frac{(n-n')(s_1-s_m)}{s_1+s_{\ell+1}-s_m-s_\ell}\right\rfloor$ copies of $R_2'$ by $R_2$. It follows that  for any sufficiently large $n$ (so that $\sourceprofile'$ contains enough copies of $R_2'$ and the score difference between $1$ and any alternative $i\ge 3$ is sufficiently large), $\sourceprofile$ is well-defined and $r_{\vec s}(\sourceprofile)=\{1\}$. Let $\targetprofile$ be obtained from $\sourceprofile$ by replacing an $R_2'$ vote by an $R_2$ vote. It follows that $r_{\vec s}(\targetprofile)=\{2\}$. This proves that $\csus{n,1}\ne\emptyset$ for any sufficiently large $n$. 

It is not hard to verify that $d_0  = m!-1$ (the only implicit equality represents $1$ and $2$ have the same score). Let $\sourceprofile^*$ be the profile obtained from $\sourceprofile$ by replacing an $R_2$ vote by an $R_2'$ vote. It follows that   $\hist(\sourceprofile^*)$ and $\hist(\targetprofile)$ are interior points of $\sourcepolyz$ and $\targetpolyz$, respectively.  Therefore, By Claim~\ref{claim:dim-infty-sufficient}, $d_\infty  = m!$. The lower bound \eqref{eq:B-ineq} follows after the polynomial case of $\sup$ of Theorem~\ref{thm:PMV-instability} (applied to $\vosetting_{\vec s}$).  This completes the proof of Theorem~\ref{thm:PMV-instability-applications} for $\cm$ under integer positional scoring rules.

\vspace{2mm}\noindent{\bf\bf \boldmath $\cm$ for STV.} Let $\sourcepoly$ consists of the histograms (for which the STV winner is $1$) where the execution of STV satisfies the following conditions:
\begin{itemize}
\item   for every $1\le i\le m-4$, in round $i$, alternative $m+1-i$ has the strictly lowest plurality score among the remaining alternatives; 
\item In round $m-3$, alternative $3$ has the highest score, and the score of  $2$ is no more than the score of $1$;
\item if $1$ loses in round $m-3$, then $2$ would become the winner; and if $2$ loses in round $m-3$, then $1$ would become the winner.
\end{itemize}
Formally, let us recall the score difference vector (for a pair of alternatives $a,b$, after a set of alternatives $B$ is removed) to define $\sourcepoly$ and $\targetpoly$.
\begin{dfn}[\citep{Xia2021:Semi-Random}]
For any pair of alternatives $a,b$ and any subset of alternatives $B\subseteq(\ma\setminus\{a,b\})$, we let $\scorediff{B,a,b}$ denote the vector, where for every $R\in\ml(\ma)$, the $R$-th component of $\scorediff{B,a,b}$ is the plurality score of $a$ minus the plurality score of $b$ in $R$ after alternatives in $B$ are removed. 
\end{dfn}
Then, $\sourcepoly$ consists of vectors $\vec x$ that satisfies the following linear constraints.
\begin{itemize}
\item For every $i\le m-4$ and every   $i'<i$,  
$$\scorediff{\{m+2-i,\ldots,m\},m+1-i,i'}\cdot \vec x\le -1$$
\item Let $B_3 = \{4,\ldots,m\}$. There are two constraints:  $\scorediff{B_3,2,1}\cdot \vec x\le 0$ and $\scorediff{B_3,1,3}\cdot \vec x\le -1$.
\item  $\scorediff{B_3\cup\{1\},3,2}\cdot \vec x\le -1$ and $\scorediff{B_3\cup\{2\},3,1}\cdot \vec x\le -1$.
\item For every $R\in\ml(\ma)$, there is a constraint $-x_R\le 0$.
\end{itemize}

Let $\targetpoly$ denote the polyhedron that differs from $\sourcepoly$ in round $m-3$, where $1$ has the lowest plurality score and drops out, which means that $2$ is the STV winner. It is not hard to verify that for all $\vec y\in \sourcepoly$ and all $\vec x\in \targetpoly$, we have $\stv(\vec y) =\{1\}$ and $\stv(\vec x) =\{2\}$.

Let $\vosetting_\stv \triangleq \langle\sourcepoly, \targetpoly, \voset{\pm}^{1\ra 2},\vec 1\,\rangle$.  Next, we construct profiles $P_\text{S}$ and $P_\text{T}$ to show that $\csus{n,B}\ne\emptyset$ for any sufficiently large $n$ and $B\ge 1$. For any $a\in\ma$, let $P_a^*$ denote the $(m-1)!$-profile that is obtained from $\ml(\ma\setminus\{a\})$ by putting $a$ at the top. Let
$$P^* \triangleq \bigcup_{i=4}^m (m-i)\times P_a^* \text{ and }n^* \triangleq |P^*|$$
Let 
\begin{align*}
P_\text{S} \triangleq & \left\lfloor \frac{n-n^*}{3}-1\right\rfloor\times \{ [1\succ 2\succ\others], [2\succ 1\succ\others]\}\\
& + \left(n-n^*+2-2\left\lfloor \frac{n-n^*}{3} \right\rfloor\right)\times \{3\succ 2\succ 1\succ \others \} +P^*
\end{align*}
It follows that $|P_\text{S}| = n$, and for all $n\ge n^* + 3 m(m-1)!$, $\hist(P_\text{S})\in \sourcepoly$.
Let $P_\text{T}$ be the profile obtained from $P_\text{S}$ by replacing one vote of $[3\succ 2\succ 1\succ \others]$ by $[2\succ 3\succ 1\succ \others]$. It is not hard to verify that $\hist(P_\text{T})\in \targetpoly$. Therefore, $\csus{n,B}\ne\emptyset$ for every $B\ge 1$.

It is not hard to verify that $d_0=\dim(\sourcepolyz\cap \targetpolyz) = m!-1$, where the only implicit equality requires $1$ and $2$ are tied for the last place in round $m-3$. To see $d_\infty = m!$, let $\sourceprofile^*$ denote the profile obtained from $\sourceprofile$ by adding one vote of $[1\succ 2\succ \others]$. Let 
$\vec y = \hist(\sourceprofile^*)$ and let $\vec x$ be the histogram of the profile  obtained from $\sourceprofile^*$ by changing two votes of $[3\succ 2\succ 1\succ \others]$ to $[2\succ 1\succ 3\succ \others]$. It is not hard to verify that $\vec y$ is an interior point of $\sourcepolyz$, $\vec y$ is an interior point of $\targetpolyz$, $\dim(\sourcepolyz) = \dim(\targetpolyz)=m!$.  By Claim~\ref{claim:dim-infty-sufficient}, we have $d_\infty = m!$.  

Then, \eqref{eq:B-ineq} follows after the application of the $\sup$ part of Theorem~\ref{thm:PMV-instability} to the instability settings $\vosetting_\stv$ for $\cm$ under $\stv$.

\vspace{2mm}\noindent{\bf\bf \boldmath $\cm$ for Ranked Pairs,  Schulze, and maximin.} The proof for the three rules share the same construction. 
Let $\sourcepoly$ denote the polyhedron that consists of vectors $\vec x$ whose WMG satisfies the following conditions.
\begin{itemize}
\item The weights on the following edges are strictly positive: $1\ra 2$, $2\ra 3$, $3\ra 1$, $\{1,2,3\}\ra \{4,\ldots,m\}$.
\item For all $i\ge 4$, the weight  on $1\ra i$ is strictly larger than the weight on $1\ra 2$.
\item $w_{\vec x}(2\ra 3)>w_{\vec x}(1\ra 2)\ge w_{\vec x}(3\ra 1)$.
\end{itemize}
See Figure~\ref{fig:rp-schulze-B1-odd} (a) for an example of WMG that satisfies these conditions.  Formally, we first recall the pairwise difference vectors as follows.
\begin{dfn}[\bf Pairwise difference vectors~\citep{Xia2020:The-Smoothed}]\label{dfn:pairdiff} For any pair of different alternatives $a,b$, let $\pair_{a,b}$ denote the $m!$-dimensional vector indexed by rankings in $\ml(\ma)$: for any $R\in\ml(\ma)$, the $R$-component of $\pair_{a,b}$ is $1$ if $a\succ_R b$; otherwise it is $-1$. 
\end{dfn}
Then, let $\sourcepoly$ be characterized by the following linear inequalities/constraints:
\begin{itemize}
\item For each  edge $a\ra b \in \{1\ra 2, 2\ra 3, 3\ra 1\}\cup \left( \{1,2,3\}\ra \{4,\ldots,m\}\right)$, there is a constraint $\pair_{b,a}\cdot\vec x \le -1$. 
\item For all $i\ge 4$, $(\pair_{1,2} - \pair_{1,i}) \cdot\vec x \le -1$.
\item $(\pair_{1,2} - \pair_{2,3}) \cdot\vec x \le -1$ and $(\pair_{3,1} - \pair_{1,2}) \cdot\vec x \le 0$.
\item For all linear order $R\in\ml(\ma)$, there is a constraint $-x_R\le 0$.
\end{itemize}

Let $\targetpoly$ denote the polyhedron that consists of vectors $\vec x$ whose WMG satisfies the same conditions as $\targetpoly$, except that now it is required that $w_{\vec x}(2\ra 3)>w_{\vec x}(3\ra 1)>w_{\vec x}(1\ra 2)$. See Figure~\ref{fig:rp-schulze-B1-odd} (b) for an example of WMG that satisfies these conditions for odd $n$.  We have $\dim(\targetpolyz) = m!-1$ (the implicit equality is $(\pair_{3,1} - \pair_{2,3}) \cdot\vec x =0 $) and $\dim(\targetpolyz) = m!$.
\begin{figure}[htp]
\centering
\begin{tabular}{ccc}
\includegraphics[width = .3\textwidth]{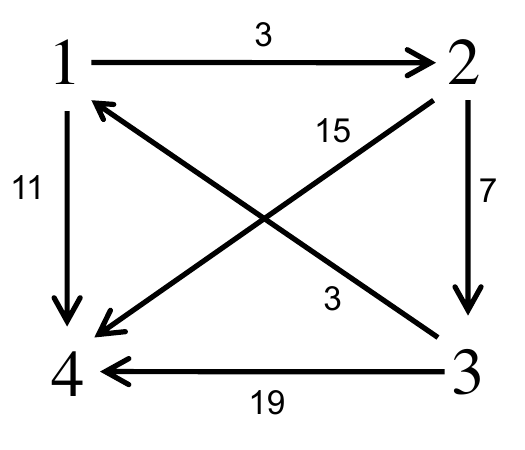} & \includegraphics[width = .3\textwidth]{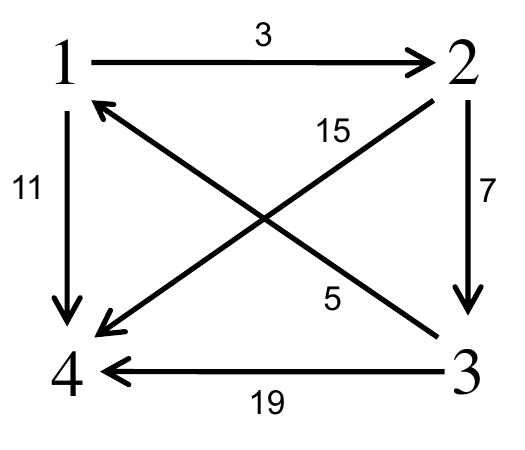}& \includegraphics[width = .3\textwidth]{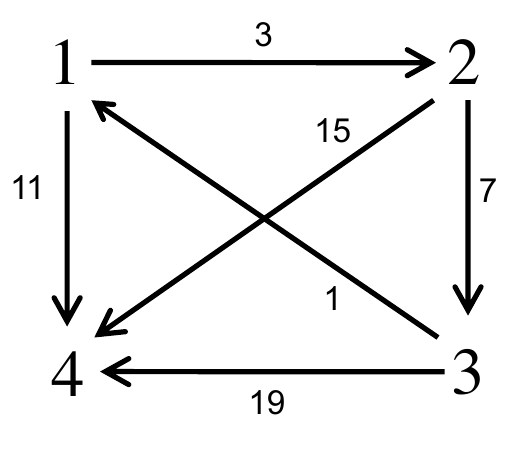} \\
(a) $\wmg(P_\text{S})$ & (b) $\wmg(P_\text{T})$  & $\wmg(\vec y)$
\end{tabular}
\caption{$\cm$ under $\rp$, $\schulze$, and $\maximin$.\label{fig:rp-schulze-B1-odd}}
\end{figure}

Let $\vosetting = \langle\sourcepoly, \targetpoly, \voset{\pm}^{1\ra 2},\vec 1\,\rangle$. It follows that for any $\vec y\in \sourcepoly$ and any $\vec x\in \targetpoly$, $\rp(\vec y) = \{1\}$ ($1\ra 2$ is fixed before $3\ra 1$ due to tie-breaking) and $\rp(\vec x) = \{2\}$ ($3\ra 1$ is fixed before $1\ra 2$); $\schulze(\vec y) = \{1\}$ ($1$ and $2$  are co-winners, so $1$ wins due to tie-breaking) and $\schulze(\vec x) = \{2\}$ ($2$ is the unique winner); and $\maximin(\vec y) = \{1\}$ ($1$ and $2$  are co-winners, so $1$ wins due to tie-breaking) and $\maximin(\vec x) = \{2\}$ ($2$ is the unique winner).

Next, we define $n$-profiles $P_\text{S}$ and $P_\text{T}$ to show that $\csus{n,1}\ne\emptyset$. The construction depends on the parity of $n$. If $n$ is odd, let $P_\text{S}$ denote a profile whose WMG is the figure shown in Figure~\ref{fig:rp-schulze-B1-odd} (a), where the weights on $1\ra 2$ and $3\ra 1$ are the same.  Due to McGarvey's theorem~\citep{McGarvey53:Theorem}, such $P_\text{S}$ exists for all sufficiently large odd number $n$, and we can assume that $P_\text{S}$ contains two copies of $\ml(\ma)$.  Let $P_\text{T}$ denote the profile obtained from  $P_\text{S}$ by replacing a $[2\succ 1\succ 3\succ\others]$ vote by $[2\succ 3\succ 1\succ\others]$, which means that the WMG of $P_\text{T}$ is Figure~\ref{fig:rp-schulze-B1-odd} (b).  If $n$ is even, then let the positive weights on edges in the WMGs of $P_\text{S}$ and $P_\text{T}$ be one more than those for odd $n$, so that all weights become even numbers.  In either case, it is not hard to verify that  $\csus{n,B}\ne\emptyset$ for every sufficiently large $n$ and every $B$.

It is not hard to verify that $d_0=m!-1$. To see $d_\infty = m!$, let  $\vec y$ be any vector such that $\wmg(\vec x)$ 
is the same as Figure~\ref{fig:rp-schulze-B1-odd} (c). Let $\vec x$ denote the vector obtained from $\vec y$ by replacing two votes of $[2\succ 1\succ 3\succ\others]$  by $[2\succ 3\succ 1\succ\others]$. It follows that the $\wmg(\vec x)$ is Figure~\ref{fig:rp-schulze-B1-odd} (b). Notice that $\vec y$ is an interior point of $ \sourcepolyz$; $\vec x$ is an interior point of $\targetpolyz$; and $\dim(\sourcepolyz)=\dim(\targetpolyz)=m!$. By Claim~\ref{claim:dim-infty-sufficient}, we have $d_{\infty}=m!$.  Then, \eqref{eq:B-ineq} follows after the application of the $\sup$ part of Theorem~\ref{thm:PMV-instability} to the instability settings  for $\cm$ under ranked pairs,  Schulze, and maximin.

\vspace{2mm}\noindent{\bf\bf \boldmath $\cm$ for Copeland.} 

\vspace{2mm}\noindent{\bf\bf \bf\boldmath $\cm$,  odd $n$.} 
Let $\sourcepoly$ denote the polyhedron that consists of vectors $\vec x$ whose UMG has the following edges: $1\ra 3$, $3\ra 2$, $2\ra 1$, $\{1,2,3\}\ra \{4,\ldots,m\}$.  
See Figure~\ref{fig:cd-odd} (a) for an example for $m=4$.   Formally, $\sourcepoly$ is characterized by the following linear inequalities/constraints:
\begin{itemize}
\item For each  edge $a\ra b \in \{1\ra 3, 3\ra 2, 2\ra 1\}\cup (\{1,2,3\}\ra \{4,\ldots,m\})$, there is a constraint $\pair_{b,a}\cdot\vec x \le -1$.  
%\item $ \pair_{3,2}   \cdot\vec x \le  1$.
\item For all linear order $R\in\ml(\ma)$, there is a constraint $-x_R\le 0$.
\end{itemize}

Let $\targetpoly$ denote the polyhedron that consists of vectors $\vec x$ whose UMG has the same edges $\sourcepoly$, except that the direction between $2$ and $ 3$ is flipped, i.e.,   $w_{\vec x}(3\ra 2) \le -1$. See Figure~\ref{fig:cd-odd} (b) for an example of the UMG for $m=4$.   %We have $\dim(\targetpolyz) = m!-1$ (the implicit equality is $(\pair_{3,1} - \pair_{2,3}) \cdot\vec x =0 $) and $\dim(\targetpolyz) = m!$.
\begin{figure}[htp]
\centering
\begin{tabular}{ccc}
\includegraphics[width = .3\textwidth]{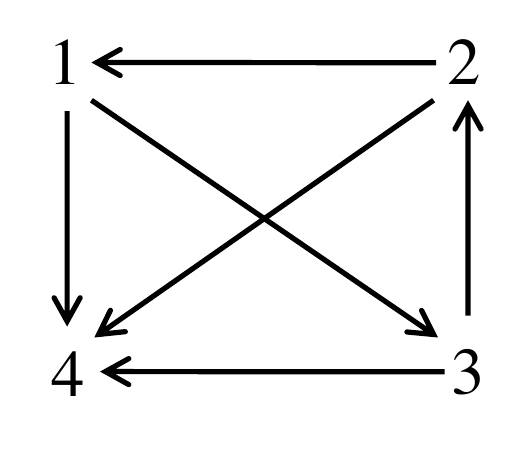} & \includegraphics[width = .3\textwidth]{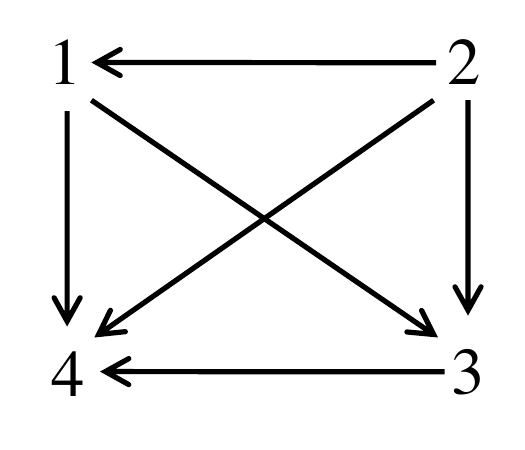}& \includegraphics[width = .3\textwidth]{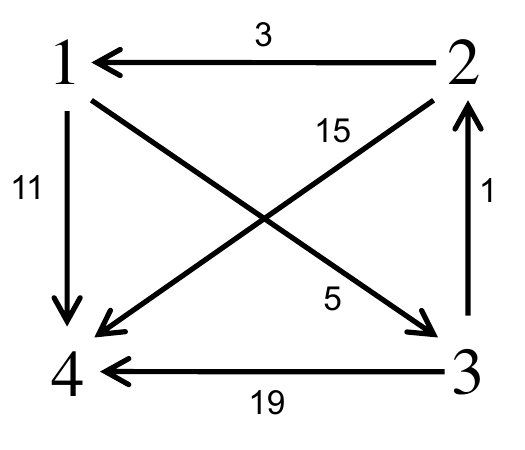}\\
(a) UMG for $\sourcepoly$& (a) UMG for $\targetpoly$ & (c) $\wmg(P_\text{S})$ 
\end{tabular}
\caption{ $\cm$ under $\copeland$, odd $n$.\label{fig:cd-odd}}
\end{figure}

Let $\vosetting_{\copeland} = \langle\sourcepoly, \targetpoly, \voset{\pm}^{1\ra 2},\vec 1\,\rangle$. It follows that for any $\vec y\in \sourcepoly$ and any $\vec x\in \targetpoly$, $\copeland(\vec y) = \{1\}$ ($1,2,3$ have the same highest Copeland score, and then $1$ wins due to tie-breaking) and $\copeland(\vec x) = \{2\}$ ($2$ is the Condorcet winner).

Next, we define $n$-profiles $P_\text{S}$ and $P_\text{T}$ to show that $\csus{n,1}\ne\emptyset$.  Let $P_\text{S}$ denote a profile whose WMG is the figure shown in Figure~\ref{fig:cd-odd} (c), where the weight  on $3\ra 2$ is $1$.  Due to McGarvey's theorem~\citep{McGarvey53:Theorem}, such $P_\text{S}$ exists for all sufficiently large odd number $n$, and we can assume that $P_\text{S}$ contains   $\ml(\ma)$.  Let $P_\text{T}$ denote the profile obtained from  $P_\text{S}$ by replacing a $[3\succ 2\succ 1 \succ\others]$ vote by $[2\succ 3\succ 1\succ\others]$, which means that the UMG of $P_\text{T}$ is Figure~\ref{fig:cd-odd} (b).

Recall that $P_\text{T}$ is obtained from $P_\text{S}$ by replacing a vote. Therefore, $\hist(P_\text{S})\in\csus{n,1}$, which means that $\csus{n,1}\ne\emptyset$. It is not hard to verify that $d_0=m!-1$ (the implicit equality corresponds to the tie between $2$ and $3$). Notice that $\hist(P_\text{S})$ and $\hist(P_\text{T})$ are interior points of $\sourcepolyz$ and $ \targetpolyz$, respectively, and $\dim(\sourcepolyz) = \dim(\targetpolyz) = m!$.   By Claim~\ref{claim:dim-infty-sufficient}, we have $d_\infty = m!$. Also notice that $\piuni\in \sourcepolyz\cap \targetpolyz$. Therefore, by Theorem~\ref{thm:PMV-instability}, for any  $B\ge 1$ and any sufficiently large odd $n$ , we have 
$$\sup\nolimits_{\vec\pi\in \Pi^n}\Pr\nolimits_{P\sim\vec \pi}(\hist(P)\in\csus{n,B} ) =\Theta\left(\min\left\{\frac{B}{\sqrt n},1\right\}\right)$$

\vspace{2mm}\noindent{\bf\bf \bf\boldmath $\cm$,  even $n$, $\alpha >0$.} Let $\sourcepoly$ be the same as defined for the odd $n$ case above (the UMG of all vectors in $\sourcepoly$ is illustrated in Figure~\ref{fig:cd-even-B1-alpha} (a)). 
Let $\targetpoly$ be the polyhedron  that consists of vectors $\vec x$ whose UMG satisfies the following conditions: the weights on the following edges are strictly positive: $1\ra 3$,  $2\ra 1$, $\{1,2,3\}\ra \{4,\ldots,m\}$. In addition, we require that $\wmg(\vec x)$ does not contain the edge $3\ra 2$, i.e., we require $w_{\vec x} (2\ra 3)\ge 0$.
See Figure~\ref{fig:cd-even-B1-alpha} (b) for the UMG for $m=4$ (where the dashed edge from $2$ to $3$ means that either there is no edge between $2$ and $3$, or there is an edge $2\ra 3$).   Formally, $\targetpoly$ is characterized by the following linear inequalities/constraints:
\begin{itemize}
\item For each  edge $a\ra b \in \{1\ra 3,  2\ra 1\}\cup ( \{1,2,3\}\ra \{4,\ldots,m\})$, there is a constraint $\pair_{b,a}\cdot\vec x \le -1$.  
\item $ \pair_{3,2}   \cdot\vec x \le  0$.
\item For all linear order $R\in\ml(\ma)$, there is a constraint $-x_R\le 0$.
\end{itemize}
 
\begin{figure}[htp]
\centering
\begin{tabular}{ccc}
\includegraphics[width = .3\textwidth]{fig/CDUMG-1.pdf} & \includegraphics[width = .3\textwidth]{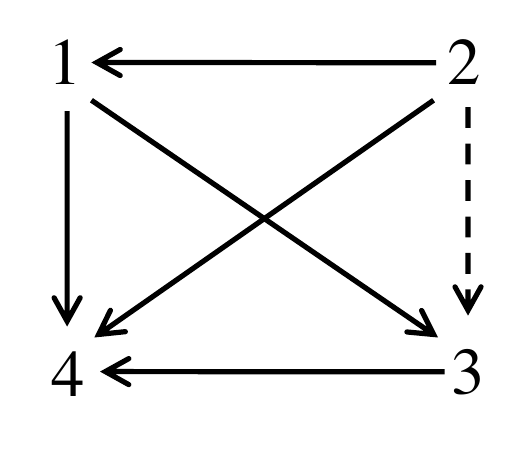}& \includegraphics[width = .3\textwidth]{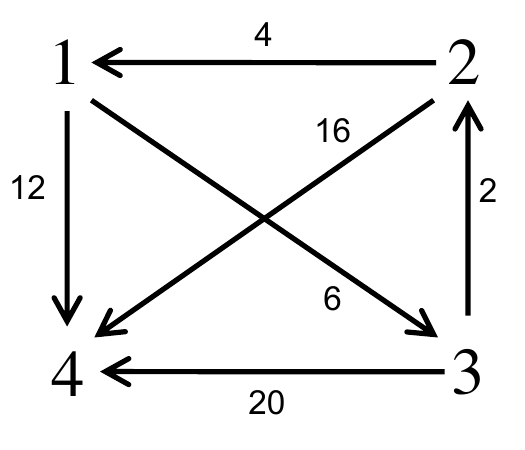}\\
(a) UMG for $\sourcepoly$& (b) UMG for $\targetpoly$ & (c) $\wmg(P_\text{S})$ 
\end{tabular}
%\begin{tabular}{cc}
%\includegraphics[width = .3\textwidth]{fig/CDS-even-alpha.pdf} & \includegraphics[width = .3\textwidth]{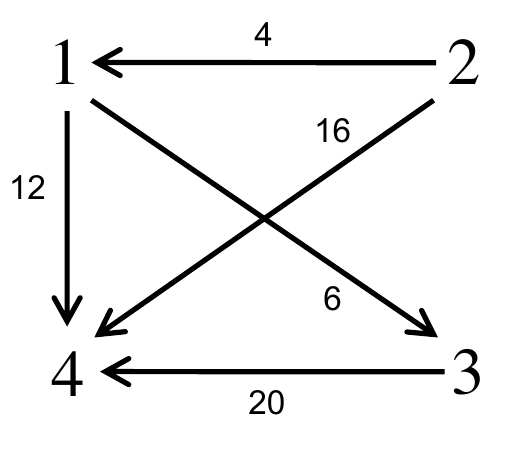}\\
%(a) $\wmg(P_\text{S})$ & (b) $\wmg(P_\text{T})$ 
%\end{tabular}
\caption{$\eccav$ under  $\copeland$,   even $n$, $\alpha>0$.\label{fig:cd-even-B1-alpha}}
\end{figure}
%Let $\sourcepoly$ denote the same polyhedron $\sourcepoly$ as in the odd $n$ case (illustrated in Figure~\ref{fig:cd-even-B1-alpha} (a)) and let $\sourcepoly = \cpoly{\text{=}}$ (illustrated in Figure~\ref{fig:cd-even-B1-alpha} (b)). 
Let $\vosetting_{\copeland} = \langle\sourcepoly, \targetpoly, \voset{\pm}^{1\ra 2},\vec 1\,\rangle$. It follows that for any $\vec y\in \sourcepoly$ and any $\vec x\in \targetpoly$, $\copeland(\vec y) = \{1\}$ ($1,2,3$ have the same highest Copeland score, and then $1$ wins due to tie-breaking) and $\copeland(\vec x) = \{2\}$ ($2$ has the highest Copeland score, which is at least $m-2+\alpha$).

Next, we define $n$-profiles $\sourceprofile$ and $\targetprofile$ to show that $\csus{n,1}\ne\emptyset$.  Let $\sourceprofile$ denote any $n$-profile that contains $\ml(\ma)$, whose UMG is as in Figure~\ref{fig:cd-even-B1-alpha} (a), and its existence is guaranteed by McGarvey's theorem~\citep{McGarvey53:Theorem}.   Let $\targetprofile$ denote the profile obtained from  $\sourceprofile$ by replacing a $[3\succ 2\succ 1 \succ\others]$ vote by $[2\succ 3\succ 1\succ\others]$, which means that the UMG of $\targetprofile$ is consistent Figure~\ref{fig:cd-even-B1-alpha} (b) (and there is no edge between $2$ and $3$ in $\umg(\targetprofile)$). 

To characterize  $d_\infty$, we prove the following convenient claim for general PMV-instability problems (for general $q$) that will be frequently used in the proofs of this paper.
\begin{claim}
\label{claim:dim-infty-sufficient}
Suppose  $\dim(\targetpolyz)=q$ and $\sourcepolyz$ has an interior point that can be manipulated to an interior point of $\sourcepolyz$, then $d_\infty =  \dim(\sourcepolyz)$. 
\end{claim}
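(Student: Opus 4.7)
The upper bound $d_\infty = \dim(\sus{\infty}) \le \dim(\sourcepolyz)$ is immediate from $\sus{\infty} \subseteq \sourcepolyz$, which follows directly from the definition in \eqref{eq:cone-infty}. So the plan is to establish the matching lower bound $\dim(\sus{\infty}) \ge \dim(\sourcepolyz)$ by exhibiting a relatively open neighborhood (within the affine hull of $\sourcepolyz$) inside $\sus{\infty}$.

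Here is how I would proceed. By hypothesis, there is an interior point $\vec y$ of $\sourcepolyz$ together with some $\vo \in \mathbb{R}_{\ge 0}^{|\voset{}|}$ such that $\vec x \triangleq \vec y + \vo \times \vomatrix{}$ is an interior point of $\targetpolyz$. Because $\dim(\targetpolyz) = q$, the interior of $\targetpolyz$ is a full $q$-dimensional open set, so there exists $\epsilon_{\text T} > 0$ such that the open ball $B(\vec x, \epsilon_{\text T}) \subseteq \targetpolyz$. Because $\vec y$ lies in the relative interior of $\sourcepolyz$, there exists $\epsilon_{\text S} > 0$ such that every vector $\vec y'$ in the affine hull $\text{aff}(\sourcepolyz)$ with $|\vec y' - \vec y|_\infty < \epsilon_{\text S}$ belongs to $\sourcepolyz$.

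Now take any $\vec \delta \in \text{aff}(\sourcepolyz) - \vec y$ with $|\vec \delta|_\infty < \min\{\epsilon_{\text S}, \epsilon_{\text T}\}$, and set $\vec y' = \vec y + \vec \delta$. Then $\vec y' \in \sourcepolyz$ by the choice of $\epsilon_{\text S}$, and $\vec y' + \vo \times \vomatrix{} = \vec x + \vec \delta \in B(\vec x, \epsilon_{\text T}) \subseteq \targetpolyz$ by the choice of $\epsilon_{\text T}$. Using the same nonnegative $\vo$ as a certificate, this shows $\vec y' \in \sus{\infty}$. Consequently, $\sus{\infty}$ contains a relatively open neighborhood of $\vec y$ in $\text{aff}(\sourcepolyz)$, so $\dim(\sus{\infty}) \ge \dim(\sourcepolyz)$, completing the proof.

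The argument is essentially routine once one observes the full-dimensionality assumption on $\targetpolyz$ makes the manipulated point $\vec x + \vec \delta$ automatically stay in $\targetpolyz$ regardless of the direction of the perturbation $\vec \delta$. The only subtlety I anticipate is the bookkeeping to confirm that we only need perturbations within $\text{aff}(\sourcepolyz)$ (not the full ambient $\mathbb R^q$), which is exactly what is needed to match $\dim(\sourcepolyz)$; the same $\vo$ works for all perturbations, so no continuity-of-LP-solutions argument is required.
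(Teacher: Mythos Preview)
Your proposal is correct and follows essentially the same route as the paper: both use $\sus{\infty}\subseteq\sourcepolyz$ for the upper bound, then exploit the full-dimensionality of $\targetpolyz$ to show that the \emph{same} operation vector $\vo$ certifies membership in $\sus{\infty}$ for every sufficiently small perturbation of $\vec y$ within $\sourcepolyz$, yielding a relatively open neighborhood of $\vec y$ inside $\sus{\infty}$. Your write-up is slightly more explicit about working in the affine hull of $\sourcepolyz$, but the underlying argument is the same.
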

\begin{proof} Because $\sus{\infty} \subseteq \sourcepolyz$, we have $d_\infty \le \dim(\sourcepolyz)$. Next, we prove that $d_\infty \ge \dim(\sourcepolyz)$. Let $\vec y = \vec x -\vo\times\vomatrix{}$ denote an interior point $ \sourcepolyz$, where $\vec x$ is an interior point of $ \targetpolyz$ and $\vo\ge \vec 0$ represents  successful operations (without budget constraints). Let $C>0$ denote an arbitrary number so that the $C$ neighborhood of $\vec x$ in $L_\infty$ is contained in $\dim(\targetpolyz)$. Therefore, the $C$ neighborhood of $\vec y$ in $L_\infty$ is contained in $\targetpolyz+\calQ_{\vec c\cdot\vo}$. It follows that any vector $\vec y' \in \sourcepolyz$ that is at most $C$ away from $\vec y$ in $L_\infty$ is in $\sus{\infty}$. Because $\vec y$ is an interior point of $\sourcepolyz$, we have $d_\infty \ge \dim(\sourcepolyz)$, which proves Claim~\ref{claim:dim-infty-sufficient}.
\end{proof} 

It is not hard to verify that $d_0=m!-1$ and   $ d_\infty = m!$ (by Claim~\ref{claim:dim-infty-sufficient}). Notice that $\piuni\in \sourcepolyz\cap \targetpolyz$.  The even $n$ and $\alpha >0$ case follows after Theorem~\ref{thm:PMV-instability}.

\vspace{2mm}\noindent{\bf\bf \bf\boldmath $\cm$, even $n$, $\alpha = 0$.} 
Let $\sourcepoly$ be the polyhedron  that consists of vectors $\vec x$ whose UMG contains $1\ra 3$,  $2\ra 1$, $\{1,2,3\}\ra \{4,\ldots,m\}$. In addition, we require that $\wmg(\vec x)$ does not contain $2\ra 3$, that is, $w_{\vec x} (3\ra 2)\ge 0$.
See Figure~\ref{fig:cd-even-B1-alpha0} (a) 
for the UMG for $m=4$ (where the dashed edge from $3$ to $2$ means that either there is no edge between $2$ and $3$, or there is an edge $3\ra 2$).   Formally, $\sourcepoly$ is characterized by the following linear inequalities/constraints:
\begin{itemize}
\item For each  edge $a\ra b \in \{1\ra 3,  2\ra 1\}\cup (\{1,2,3\}\ra \{4,\ldots,m\})$, there is a constraint $\pair_{b,a}\cdot\vec x \le -1$.  
\item $ \pair_{2,3}   \cdot\vec x \le  0$.
\item For all linear order $R\in\ml(\ma)$, there is a constraint $-x_R\le 0$.
\end{itemize}

Then, we let $\targetpoly$ be the same as $\targetpoly$  in the odd $n$ case above (as illustrated in Figure~\ref{fig:cd-even-B1-alpha0} (b) for $m=4$).  Let $\vosetting_{\copeland} = \langle\sourcepoly, \targetpoly, \voset{\pm}^{1\ra 2},\vec 1\,\rangle$. It follows that for any $\vec y\in \sourcepoly$ and any $\vec x\in \targetpoly$, $\copeland(\vec y) = \{1\}$ ($1$ and $2$ have the same highest Copeland score, so $1$ wins due to tie-breaking) and $\copeland(\vec x) = \{2\}$ ($2$ has the strictly highest Copeland score $m-1$).

Let $\sourceprofile$ denote an arbitrary $n$-profile whose WMG is as shown in Figure~\ref{fig:cd-even-B1-alpha0} (c) and it contains two copies of $\ml(\ma)$. Let $\targetprofile$ denote the $n$-profile obtained from $\sourceprofile$ by  replacing a $[3\succ 2\succ 1 \succ\others]$ vote by $[2\succ 3\succ 1\succ\others]$, which means that the UMG of $P_\text{T}$ is like Figure~\ref{fig:cd-even-B1-alpha0} (b).  It follows that $\hist(P_\text{S})\in\sourcepoly$ and $\hist(P_\text{T})\in \targetpoly$, which proves that $\csus{n,1}\ne\emptyset$.  
\begin{figure}[htp]
\centering
\begin{tabular}{@{}cccc@{}}
\includegraphics[width = .22\textwidth]{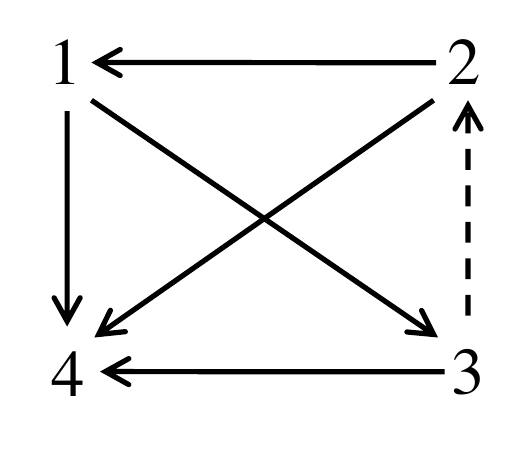} & \includegraphics[width = .22\textwidth]{fig/CDUMG-2.pdf} &
\includegraphics[width = .22\textwidth]{fig/CDT-even-alpha.pdf} & \includegraphics[width = .22\textwidth]{fig/CDS-even-alpha.pdf}\\
(a) UMG for $\sourcepoly$& (b) UMG for $\targetpoly$ & (c) $\wmg(P_\text{S})$  & (d) $\wmg(\vec y)$ \end{tabular}
%\begin{tabular}{ccc}
%\includegraphics[width = .3\textwidth]{fig/CDT-even-alpha.pdf} & \includegraphics[width = .3\textwidth]{}& \includegraphics[width = .3\textwidth]{fig/CDS-even-alpha.pdf}\\
%(a) $\wmg(P_\text{S})$ & (b) $\wmg(P_\text{T})$ & (c) $\wmg(\vec y)$
%\end{tabular}
\caption{$\eccav$ under  $\copeland$,   even $n$, $\alpha=0$.\label{fig:cd-even-B1-alpha0}}
\end{figure}

It is not hard to verify that $d_0=m!-1$. To see $d_\infty = m!$, let  $\vec y$ be any vector such that $\wmg(\vec x)$ 
is the same as Figure~\ref{fig:cd-even-B1-alpha0} (d) and it contains two copies of $[3\succ 2\succ 1 \succ\others]$. Let $\vec x$ denote the vector obtained from $\vec y$ by replacing two votes of $[3\succ 2\succ 1 \succ\others]$ vote by $[2\succ 3\succ 1\succ\others]$. It follows that the $\umg(\vec x)$ is Figure~\ref{fig:cd-even-B1-alpha0} (b). Notice that $\vec y$ is an interior point of $ \sourcepolyz$; $\vec x$ is an interior point of $\targetpolyz$; and $\dim(\sourcepolyz)=\dim(\targetpolyz)=m!$. By Claim~\ref{claim:dim-infty-sufficient}, we have $d_{\infty}=m!$.  The even $n$ and $\alpha =0$ case follows after Theorem~\ref{thm:PMV-instability}.

 \vspace{2mm}\noindent{\bf \boldmath Other coalitional influence problems.} \ 

\vspace{2mm}
\noindent {\bf The  proof for $\mov$} is based on the same constructions of $P_\text{S}$, $P_\text{T}$, $\sourcepoly$,  and $\targetpoly$. The only difference is that the set of vote operations in $\vosetting{}$ is $\voset{\pm}$.

\vspace{2mm}
\noindent {\bf The  proofs for $\eccav$ and $\eccdv$} are is based on similarly constructions of $P_\text{S}$, $P_\text{T}$, $\sourcepoly$,  and $\targetpoly$. The main differences are, first, the set of vote operations in $\vosetting{}$ is $\voset{+}$ and $\voset{-}$ for $\eccav$ and $\eccdv$, respectively. Second, the added (respectively, deleted) votes correspond to the new (respectively, old) votes in $\cm$. We add $3\times \ml(\ma)$ to $P_\text{S}$ so that there is enough votes to be deleted for $\eccdv$. Below we take constructive control $\{d\}=r(P_\text{T})$ for example (where $P_\text{T}$ depends on the problem and will be specified soon). Other cases can be proved similarly.
\begin{itemize}
\item {\bf Integer positional scoring rules.} If $d\ne 1$, then $\sourceprofile$,   $\sourcepoly$,  and $\targetpoly$ are similar to their counterparts in the proof of $\cm$ under integer positional scoring rules. Take $d=2$ for example, for $\eccav$, the added votes are $R_2$; and for $\eccdv$, the deleted votes are $R_1$. If $d = 1$, then we switch the roles of $\sourcepoly$,  and $\targetpoly$, and switch the roles of $\sourceprofile$ and $\targetprofile$ in the proof of $\cm$ under integer positional scoring rules. Then, for  $\eccav$, the added votes are $R_1$; and for $\eccdv$, the deleted votes are $R_2$.
\item  {\bf STV.} If $d\ne 1$, then $\sourceprofile$,   $\sourcepoly$,  and $\targetpoly$ are similar to their counterparts in the proof of    $\cm$ under $\stv$. Take $d=2$ for example, for $\eccav$, for $\eccav$, the added votes are $[2\succ 3\succ 1\succ\others]$; and for $\eccdv$, the deleted votes are $[4\succ 1\succ 2\succ 3\succ \others]$. If $d = 1$, then we switch the roles of $\sourcepoly$,  and $\targetpoly$, and switch the roles of $\sourceprofile$ and $\targetprofile$ in the proof of $\cm$ under integer $\stv$.
\item {\bf Ranked pairs, Schulze, and maximin.}  If $d\ne 1$, then $\sourceprofile$,   $\sourcepoly$,  and $\targetpoly$ are similar to their counterparts for $\cm$. Take $d=2$ for example, for $\eccav$, the added votes are $[2\succ 3\succ 1\succ\others]$; for $\eccdv$, the deleted votes are $[ 1\succ 3\succ 2\succ \others]$. When $n$ is even, the weights in $\wmg(\sourceprofile)$ are all even, for example the positive weights can be one plus the weights in Figure~\ref{fig:rp-schulze-B1-odd} (a). If $d = 1$, then we switch the roles of $\sourcepoly$,  and $\targetpoly$, and switch the roles of $\sourceprofile$ and $\targetprofile$ in the proof of their counterparts for $\cm$.

\item {\bf Copeland.}  The proof for Copeland ($\alpha\ne 0$) is slightly more complicated than the proof for other rules, as  $\sourcepoly$, $\targetpoly$, $\sourceprofile$, and $\targetprofile$ depend on the parity of $n$.  We prove $\eccav$ with $d=2$ (changed from a profile $\sourcepoly$ where $1$ is the winner) for odd and even $n$ respectively, then comment on how to modify it for other cases.

{\bf\boldmath When $n$ is odd}, let $\sourcepoly$ and $\targetpoly$ be the same as those for $\cm$, even $n$, $\alpha >0$ (Figure~\ref{fig:cd-even-B1-alpha}  (a) and (b)). Let $\sourceprofile$ be any $n$-profile whose WMG is as shown in Figure~\ref{fig:cd-control-odd} (a), and let $\targetprofile$ be the $(n+1)$-profile obtained from $\sourceprofile$ by adding one vote of $[2\succ 1\succ3\succ \cdots\succ m]$. The WMG of $\targetprofile$ is shown in Figure~\ref{fig:cd-control-odd} (b).

\begin{figure}[htp]
\centering
\begin{tabular}{ccc}
\includegraphics[width = .3\textwidth]{fig/CDS-odd.pdf} & \includegraphics[width = .3\textwidth]{fig/CDT-even-alpha.pdf}&
\includegraphics[width = .3\textwidth]{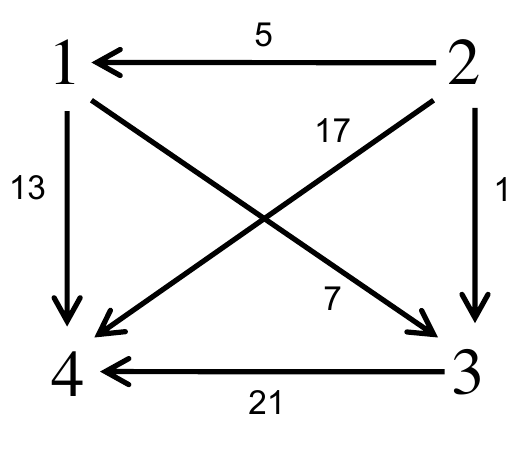}
\\
(a) $\wmg(\sourceprofile)$ & (b) $\wmg(\targetprofile)$ & (c) $\wmg(\vec x)$ 
\end{tabular}
\caption{$\eccav$ under $\copeland$, odd $n$.\label{fig:cd-control-odd}}
\end{figure}
Let $\vosetting = \langle\sourcepoly,\targetpoly,\vosetting_{+},\vec 1\,\rangle$. It follows that for any $\vec y\in \sourcepoly$ and any $\vec x\in \targetpoly$, $\copeland(\vec y) = \{1\}$ ($1$, $2$, and $3$ have the same highest Copeland score, so $1$ wins due to tie-breaking) and $\copeland(\vec x) = \{2\}$ ($2$ has the strictly highest Copeland score $m-2+\alpha$).  It is not hard to verify that $d_0= m!-1$ (the only implicit equality is the tie between $2$ and $3$). Moreover, let $\vec y = \hist(\sourcepoly)$ and let $\vec x = \hist(\sourcepoly \cup 2\times [2\succ 1\succ 3\succ \cdots\succ m])$, whose WMG is illustrated in Figure~\ref{fig:cd-control-odd} (c). Then we have that $\vec y$ and $\vec x$ are the interior points of $\sourcepolyz$ and $\targetpolyz$, respectively, and it follows from Claim~\ref{claim:dim-infty-sufficient} that $d_\infty  = m!$. The case of $d=2$, odd $n$  follows after Theorem~\ref{thm:PMV-instability}.

{\bf\boldmath When $n$ is even}, let $\sourcepoly$ and $\targetpoly$ be the same as those for  $\cm$, even $n$, $\alpha = 0$ (Figure~\ref{fig:cd-even-B1-alpha0}  (a) and (b)). Let $\sourceprofile$ be any $n$-profile whose WMG is as shown in Figure~\ref{fig:cd-control-even} (a) (which is the same as \ref{fig:cd-even-B1-alpha0} (c)), and let $\targetprofile$ be the $(n+1)$-profile obtained from $\sourceprofile$ by adding one vote of $[2\succ 1\succ3\succ \cdots\succ m]$. The WMG of $\targetprofile$ is shown in Figure~\ref{fig:cd-control-even} (b) (which is the same as Figure~\ref{fig:cd-control-even} (c)).

\begin{figure}[htp]
\centering
\begin{tabular}{ccc}
\includegraphics[width = .3\textwidth]{fig/CDT-even-alpha.pdf} & 
\includegraphics[width = .3\textwidth]{fig/CD-eControl-odd-x.pdf}&
\includegraphics[width = .3\textwidth]{fig/CDS-odd.pdf}
\\
(a) $\wmg(\sourceprofile)$ & (b) $\wmg(\targetprofile)$ & (c) $\wmg(\vec y)$ 
\end{tabular}
\caption{$\eccav$ under $\copeland$, even $n$.\label{fig:cd-control-even}}
\end{figure}
Let $\vosetting = \langle\sourcepoly,\targetpoly,\vosetting_{+},\vec 1\,\rangle$. It follows that for any $\vec y\in \sourcepoly$ and any $\vec x\in \targetpoly$, $\copeland(\vec y) = \{1\}$ ($1$, $2$, and $3$ have the same highest Copeland score, so $1$ wins due to tie-breaking) and $\copeland(\vec x) = \{2\}$ ($2$ has the strictly highest Copeland score $m-2+\alpha$).  It is not hard to verify that $d_0= m!-1$ (the only implicit equality is the tie between $2$ and $3$). Moreover, let $\vec y $ denote the histogram of $\sourceprofile$ subtracting one vote of $[2\succ 1\succ 3\succ\cdots\succ m]$ (whose WMG is illustrated in Figure~\ref{fig:cd-control-even} (c), which is the same as Figure~\ref{fig:cd-control-odd} (a)), and let $\vec x = \hist(\targetpoly)$. Then we have that $\vec y$ and $\vec x$ are the interior points of $\sourcepolyz$ and $\targetpolyz$, respectively, and it follows from Claim~\ref{claim:dim-infty-sufficient} that $d_\infty  = m!$. The case of $d=2$, odd $n$  follows after Theorem~\ref{thm:PMV-instability}.

The proof for {\bf \boldmath $\eccav$ and any  $d\ge 3$} is similar, which is done by simply switching the role $2$ and $d$  in the proof for $d=2$. The proof for {\bf \boldmath $\eccav$ and $d=1$} is done by (1) switching the role of $\sourcepoly$ and $\targetpoly$, (2) switching the role of $\sourceprofile$ and $\targetprofile$, and (3) the added vote is the inverse of $[2\succ 1\succ 3\succ\cdots\succ m]$.  The proof for {\bf \boldmath $\eccdv$} is similar, by noticing that adding $[2\succ 1\succ 3\succ\cdots\succ m]$ is equivalent to subtracting its inverse. 
\end{itemize}

\vspace{2mm}
\noindent {\bf The  proofs for $\edcav$ and $\edcdv$} are similar to the proofs for $\eccav$ and $\eccdv$, as the proof essentially works for control of changing any source winner to any target winner.  
\end{proof}

\subsection{Full Version of Theorem~\ref{thm:PMV-instability-GSR-upper} and Its Proof}
\label{app:proof-Lemma-PMV-upper}
\appThm{\bf Upper bound on Coalitional Influence Problems under GSRs}{thm:PMV-instability-GSR-upper}{
Let $r$ denote any GSR with fixed $m\ge 3$. For any closed and strictly positive $\Pi$, any $X\in\{ \cm, \mov\}\cup\control\cup \econtrol$, any $n$, and any $B\ge 0$,
$$\satmax{X}{\Pi,0}(r,n,B) =  O\left(\min\left\{\frac{B+1}{\sqrt n},1\right\}\right) $$
}
\begin{proof} We prove the theorem as a    corollary of Theorem~\ref{thm:GSR} in Appendix~\ref{sec:semi-random-CI}.  Let $X$ be any coalitional influence problem described in the lemma and let its PMV multi-instability representation be $\calM = \{\vosetting^i = \langle\sourcepoly^i, \targetpoly^i, \voset{}^i,\vec c^{\,i}\,\rangle:i\le I\}$ due to Lemma~\ref{lem:CI-GSR}. For every $i\le I$, recall that there exist a pair of feasible signatures $\vec t_1$ and $\vec t_2$ so that $\cor(\vec t_1)\ne \cor(\vec t_2)$ and $\sourcepoly^i =\cpoly{\vec t_1}$ and $\targetpoly^i =\cpoly{\vec t_2}$. Therefore, at least one component of $\vec t_1\oplus\vec t_2$ is zero. This means that $d_0^{i} = \dim(\cpoly{\vec t_1\oplus\vec t_2})\le m!-1$.  Also notice that $d_\infty^i\le m!$, which means that $d_\Delta\le m! - d_0^i$. Therefore, according to Theorem~\ref{thm:PMV-instability},  we have 
\begin{align*}
&\sup\nolimits_{\vec\pi\in \Pi^n}Pr_{P\sim\vec \pi}(\hist(P)\in \csus{n,B}^i) = O\left(\min\{B+1,\sqrt n\}^{d^i_{\Delta}}\cdot (\frac{1}{\sqrt n})^{m!-d^i_0}  \right)\\ 
= & O\left(\left(\min\left\{\frac{B+1}{\sqrt n},1\right\}\right)^{m!-d_0^i}\right)= O\left(\min\left\{\frac{B+1}{\sqrt n},1\right\}\right)
\end{align*}
This proves  Theorem~\ref{thm:PMV-instability-GSR-upper} because $I$ is finite.
\end{proof}

\subsection{Proof of Theorem~\ref{thm:CML}}
\label{app:proof-thm-CML}
\appThm{Max-Semi-Random Coalitional Manipulation for The Loser}{thm:CML}{
Let $r_{\vec s}$  be an integer positional scoring rule with lexicographic tie-breaking for fixed $m\ge 3$ that is different from veto. For  any closed and strictly positive $\Pi$ with $\piuni\in\conv(\Pi)$, there exist $ N>0$ and $B^*>0$ such that for any $n>N$ and any $B\ge B^*$,
$$\satmax{\cml}{\Pi,0}(r_{\vec s} ,n,B) =  \Theta\left(\min\left\{\frac{B}{\sqrt n},1\right\}^{m-1}\right)  $$
%For any $X\in \control$,  any $n>N$ and any $B\ge 0$, $\satmax{X}{\Pi,0}(r,n,B) =  \Theta\left(1\right)$.
Moreover, for any $\psi>0$, $\satmax{\cml}{\Pi,\psi}(r ,n,B) =\Theta(1)$
}  

\begin{proof} The proof proceeds in the following two steps. 

\myparagraph{Define the multi-instability setting $\calM_{\cml}$.} For every pair of different alternatives $a,b$, we define an instability setting $\vosetting_{\cml}^{a\ra b} = (\sourcepoly, \targetpoly, \voset{\pm}^{a\ra b}, \vec 1)$, where 
\begin{itemize}
\item $\sourcepoly$ denote the set of vectors where $a$ is the winner and $b$ is the loser under $r_{\vec s}$.
\item $\targetpoly$ denote the set of vectors where $b$ is the winner  under $r_{\vec s}$.
\end{itemize}
Take $a = 1$ and $b =2 $ for example. Recall that $\score_{a,b}^{\vec s}$ is the score vector of $b$ (under all linear orders) minus the score vector of $a$ (Definition~\ref{dfn:varcons}). Then, we have:
$$\sourcepoly\triangleq\left\{\vec x: \begin{split}\forall i\ge 2, \   \score_{i,1}^{\vec s}\cdot\vec x&\le 0\\ \forall i\ge 3, \  \score_{2,i}^{\vec s}\cdot\vec x&\le -1\\   -\vec x&\le  \vec 0\end{split}\right\}\text{, }\targetpoly\triangleq \left\{\vec x:  \begin{split}  \score_{1,2}^{\vec s}\cdot\vec x&\le -1\\  \forall i\ge 3, \ \score_{i,2}^{\vec s}\cdot\vec x&\le 0\\   -\vec x&\le  \vec 0\end{split}\right\}\text{, and}$$ 
$$\calM_{\cml} = \left\{\vosetting_{\cml}^{a\ra b}: a,b\in\ma, a\ne b\right\}$$
\noindent{\bf \boldmath Apply Theorem~\ref{thm:PMV-instability}.} In this step, we prove that for every $\vosetting_{\cml}^{a\ra b}$ (with corresponding $\csus{n,B}^{a\ra b}$), 
\begin{equation}
\label{equ:CML}
\sup_{\vec\pi\in\Pi^n}\Pr\left(\vXp \in \csus{n,B}\right)= \Theta\left(\min\left\{\frac{B}{\sqrt n},1\right\}^{m-1}\right)
\end{equation}
We first prove $\condition{1} = \text{false}$ (i.e.,  $\csus{n,B}^{a\ra b}\ne\emptyset$) for any sufficiently large $n$ and $B$, by constructing a successful manipulation by $B$ voters.  Recall from the proof of Theorem~\ref{thm:PMV-instability-applications} that for any $a\in \ma$,  $\sigma_a$ denotes a cyclic permutation among $\ma\setminus\{a\}$. Then, we define two profiles, each of which consists of $m-1$ votes as follows.
$$P_{a}^\text{top}  \triangleq \left\{\sigma_a^i(a\succ \others): 1\le i\le m-1 \right\}\text{ and } P_{a}^\text{bot} \triangleq \left\{\sigma_a^i( \others\succ a): 1\le i\le m-1 \right\}$$ 
We further define the following ``cyclic'' profile of $m$ votes: let $\sigma$ denote any cyclic permutation among $\ma$, e.g., $1\ra 2\ra\cdots\ra m\ra 1$. 
$$P_\text{cyc}  \triangleq \left\{\sigma^i(1\succ \cdots\succ m): 1\le i\le m \right\}$$
Let $\sourceprofile$ denote the $n$-profile that consists of 
$P_a^\text{top}$, $P_a^\text{bot}$,  as many copies of $P_\text{cyc}$ as possible, and the remaining rankings are $[a\succ \others\succ b]$. Let $\targetprofile$ denote the profile obtained from $\sourceprofile$ by replacing $\left\lceil\frac{(m-1)(s_1-s_m)}{(m-2)(s_1-s_{m-1})}\right\rceil +1$ copies of $[\others\succ b\succ a]$  to $[b\succ \others\succ a]$. $\targetprofile$  is well-defined for any sufficiently large $n$. It is not hard to verify that $a$ has the strictly highest score in $\sourceprofile$, $b$ has the strictly lowest score in $\sourceprofile$, and $b$ has the strictly highest score in $\targetprofile$.  This means that for any sufficient large $B$, $\csus{n,B}^{a\ra b}\ne\emptyset$.

Next, let $\sus{0}^{a\ra b}= \sourcepolyz\cap \targetpolyz$. We have $\piuni\in \conv(\Pi)\cap \sus{0}$, which means that $\condition{3} = \text{false}$. Therefore, the polynomial case of Theorem~\ref{thm:PMV-instability} holds for $\vosetting_{\cml}^{a\ra b}$. Notice that
$$\sus{0}^{a\ra b}= \sourcepolyz\cap \targetpolyz = \left\{\vec x: \begin{split}\forall i\ne b, \   \score_{i,b}^{\vec s}\cdot\vec x&\le 0\\ 
\forall i\ne b, \   \score_{b,i}^{\vec s}\cdot\vec x&\le 0\\   -\vec x&\le  \vec 0\end{split}\right\}$$
Therefore, we have $d_0 = \dim(\sus{0}^{a\ra b}) = m!-(m-1)$ (because the implicit equalities represent the scores of all alternatives are the same, which are characterized by $m-1$ equations). Notice that $\hist(\sourceprofile)$ is an interior point of $\sourcepolyz$, $\hist(\targetprofile)$ is an interior point of $\targetpoly$, and $\dim(\sourcepolyz) = \dim(\targetpolyz) = m!$. Therefore, by Claim~\ref{claim:dim-infty-sufficient}, we have $d_{\infty} = \dim(\sourcepolyz) = m!$, which means that $d_{\Delta} =  m-1 $. It follows from Theorem~\ref{thm:PMV-instability} that
$$\sup_{\vec\pi\in\Pi^n}\Pr\left(\vXp \in \csus{n,B}\right)= \Theta\left(\dfrac{\min\{B+1,\sqrt n\}^{m-1}}{(\sqrt n)^{m!-(m!-(m-1))}} \right) = \Theta\left(\min\left\{\frac{B}{\sqrt n},1\right\}^{m-1}\right),$$
which proves Equation~\eqref{equ:CML}. Then, Theorem~\ref{thm:CML} follows after applying Equation~\eqref{equ:CML} to all $a\ne b$.
\end{proof}

%\section{Materials for Section~\ref{sec:semi-random-CI}}

\section{A General Theorem on Semi-Random Coalitional Influence}
\label{sec:semi-random-CI}
\subsection{Constructive/Destructive Generalized Bribery with Anonymous Prices}
In this section, we  define two large classes of bribery problems that include some commonly-studied control problems as special cases.
\begin{dfn}
A {\em constructive generalized bribery with anonymous prices} problem is denoted by $\cb_{d,\vec c}(r,P,B)$, where  $r$ is a voting rule, $P$ is  a profile, $a$ is a distinguished alternative, $\vec c>\vec 0$ is a strictly positive cost vector, where each component is indexed by a pair $(R,R')\in (\ml(\ma)\cup \{\emptyset\})\times (\ml(\ma)\cup \{\emptyset\})$ that represents the price for the briber to convert an $R$ vote to an $R'$ vote, and $B\ge 0$ is the total budget.  We are asked whether the briber can make $a$ win by changing the votes in the profile under the budget constraint $B$---if so then we let $\cb_{d,\vec c}(r,P,B)=1$, otherwise we let $\cb_{d,\vec c}(r,P,B)=0$.

 {\em Destructive bribery with anonymous price} problem, denoted by $\db_{d,\vec c}(r,P,B)$, is defined similarly, and the only difference is that the goal of the briber is to make $a$ {\em not} the winner. 
\end{dfn}
Specifically, when $R=\emptyset$, performing an $(R, R')$ bribery is effectively the same as adding an $R'$ vote to $P$; and if $R'=\emptyset$,  performing an $(R, R')$ bribery is effectively the same as removing an $R$ vote from $P$. Moreover, we allow the price of an $(R, R')$ operation to be $\infty$, which means that this operation is not allowed in the problem.

For each constructive/destructive control/bribery problem, we also study its ``effective'' variant, which requires that the influencers' goal is not achieved in the original profile. We will add ``{\sc e}-'' to the name to denote this variant. For example, $\ecb_{d,\vec c}(r,P,B) = 1$ if $r(P)\ne \{a\}$  and $a$ can be made the winner under budget $B$.

\begin{prop}$\ccav_{d}$ and $\ccdv_{d}$ are special cases of $\cb_{d,\vec c}$. $\eccav_{d}$ and $\eccdv_{d}$ are special cases of $\ecb_{d,\vec c}$.  $\dcav_{d}$ and $\dcdv_{d}$ are special cases of $\db_{d,\vec c}$. $\edcav_{d}$ and $\edcdv_{d}$ are special cases of $\edb_{d,\vec c}$. 
\end{prop}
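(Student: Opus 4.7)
The plan is to exhibit, for each control problem, an explicit cost vector $\vec c$ that makes the corresponding generalized bribery problem coincide with the control problem. Intuitively, we restrict the bribery operations available to the briber so that only the ``add vote'' operations (for $\ccav$/$\dcav$) or only the ``delete vote'' operations (for $\ccdv$/$\dcdv$) have finite cost, and we charge unit cost for each such operation. Then the budget $B$ in the bribery problem coincides exactly with the cardinality bound in the corresponding control problem.

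Concretely, for $\ccav_{d}$ and $\dcav_{d}$, I would set $c_{\emptyset,R'} = 1$ for every $R'\in\ml(\ma)$ and $c_{R,R'} = \infty$ for every other pair $(R,R')\in(\ml(\ma)\cup\{\emptyset\})^2$. With these costs any feasible bribery consists solely of ``add $R'$ to $P$'' operations of unit cost each, and the set of profiles reachable within budget $B$ is exactly $\{P+P^* : |P^*|\le B\}$. The definition of $\cb_{d,\vec c}$ (resp.\ $\db_{d,\vec c}$) then matches the definition of $\ccav_d$ (resp.\ $\dcav_d$) verbatim. For $\ccdv_{d}$ and $\dcdv_{d}$, the symmetric choice $c_{R,\emptyset} = 1$ for every $R\in\ml(\ma)$ and $c_{R,R'}=\infty$ otherwise restricts the briber to unit-cost deletions, so the reachable profiles within budget $B$ are exactly $\{P-P' : P'\subseteq P,\ |P'|\le B\}$, which is the feasible region of $\ccdv_d$ (resp.\ $\dcdv_d$).

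For the effective variants $\eccav_d$, $\eccdv_d$, $\edcav_d$, and $\edcdv_d$, the same cost vectors work because the only additional condition imposed by the ``effective'' qualifier (namely, $r(P)\ne\{d\}$ for constructive and $r(P)=\{d\}$ for destructive) is exactly the precondition in $\ecb_{d,\vec c}$ and $\edb_{d,\vec c}$. There is no real obstacle in this proof: once the cost vector is specified the equivalence is immediate from the definitions. I would therefore write the proof as four short parallel paragraphs, one per pair of problems, each instantiating $\vec c$ and observing that the set of profiles reachable under the resulting bribery instance coincides with the feasible region of the corresponding control instance.
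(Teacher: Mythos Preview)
Your proposal is correct and follows essentially the same approach as the paper: exhibit an explicit cost vector $\vec c$ that restricts the briber to exactly the add (or delete) operations at unit cost, and observe that the reachable profiles then coincide with those in the corresponding control problem. One minor point worth noting: the paper's own proof writes the cost of the ``other'' operations as $0$ rather than $\infty$, which appears to be a slip (it conflicts with the strict positivity requirement on $\vec c$ and would make all non-add operations free); your choice of $\infty$ is the one that is actually consistent with the paper's definition of generalized bribery and with the intended meaning.
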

\begin{proof} It is not hard to verify that $\ccav_{d}$ (respectively, $\dcav_{d}$) is equivalent to $\cb_{d,\vec c}$ (respectively, $\db_{d,\vec c}$), where for any $(R,R')\in (\ml(\ma)\cup \{\emptyset\})\times (\ml(\ma)\cup \{\emptyset\})$, the $(R,R')$ component of $\vec c$, denoted by $[\vec c\,]_{(R,R')} $, is $\begin{cases}1 &\text{if }R=\emptyset\\ 0&\text{otherwise}\end{cases}$. 

$\ccdv_{d}$ (respectively, $\dcdv_{d}$) is equivalent to $\cb_{d,\vec c}$ (respectively, $\db_{d,\vec c}$), where for any $(R,R')\in (\ml(\ma)\cup \{\emptyset\})\times (\ml(\ma)\cup \{\emptyset\})$,  
$$[\vec c\,]_{(R,R')}=\begin{cases}1 &\text{if }R'=\emptyset\\ 0&\text{otherwise}\end{cases}$$

The proofs for  {\sc e}-variants are similar.
\end{proof}

\subsection{Theorem~\ref{thm:GSR}: Semi-Random Coalitional Influence for GSRs}

\begin{thm}
\label{thm:GSR}
Let $r$ denote any int-GSR with fixed $m\ge 3$.  For any closed and strictly positive $\Pi$ and any  $X\in\{\cm,\mov\}$, there exists a constant $C_1>0$, such that for any  $n\in \mathbb N$ and any $B\ge 0$ with $B\le C_1n$,   there exist $\{d_0^{\max},d_\Delta^{\max},d_0^{\min},d_\Delta^{\min}\}\subseteq [m!]$ such that
\begin{center}
$\satmax{X}{\Pi,0}(r,n,B)$ is $0$, $\exp(-\Theta(n))$, or $\Theta\left(\dfrac{\min\{B+1,\sqrt n\}^{d_\Delta^{\max}}} {( \sqrt n )^{m!-d_0^{\max}}}\right)$, and \\
$\satmin{X}{\Pi,0}(r,n,B)$ is $0$, $\exp(-\Theta(n))$, or $\Theta\left(\dfrac{\min\{B+1,\sqrt n\}^{d_\Delta^{\min}}} {( \sqrt n )^{m!-d_0^{\min}}}\right)$ 
\end{center}
\end{thm}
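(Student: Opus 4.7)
The plan is to derive Theorem~\ref{thm:GSR} as a corollary of Lemma~\ref{lem:CI-GSR} and Theorem~\ref{thm:PMV-instability}, following the two-step polyhedral approach of Procedure~\ref{proc:application}. First I would invoke Lemma~\ref{lem:CI-GSR} to express $X\in\{\cm,\mov\}$ under $r$ as a finite PMV-multi-instability problem $\calM=\{\vosetting^i=\langle\sourcepoly^i,\targetpoly^i,\voset{}^i,\vec c^{\,i}\rangle:i\le I\}$ with $I$ depending only on $m$ and $r$, so that $\sat{X}(r,P,B)=1$ iff $\hist(P)\in\bigcup_{i\le I}\csus{n,B}^i$. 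The task thus reduces to characterizing $\sup_{\vec\pi\in\Pi^n}\Pr(\vXp\in\bigcup_i\csus{n,B}^i)$ and its $\inf$ analogue.

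Second, I would apply Theorem~\ref{thm:PMV-instability} with $q=m!$ to each component setting $\vosetting^i$, yielding per-setting parameters $(d_0^i,d_\Delta^i)$ and a four-way classification. To avoid the knife-edge $B\approx B_{\conv(\Pi)}\cdot n$ left unresolved by Theorem~\ref{thm:PMV-instability}, I would pick the constant $C_1$ in Theorem~\ref{thm:GSR} to be strictly smaller than $\min_i C_2^i$ over all $i$ that fall in the PT-$\Theta(n)$ case; this guarantees that every such setting sits in its $\exp(-\Theta(n))$ subregime throughout $0\le B\le C_1 n$. Consequently each setting effectively exhibits one of only three behaviors: (a) the $0$ case, (b) the exponential case $\exp(-\Theta(n))$, or (c) the polynomial case $\Theta(\min\{B+1,\sqrt n\}^{d_\Delta^i}/(\sqrt n)^{m!-d_0^i})$.

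Third, I would aggregate across $i$ by sandwiching the union probability between the maximum and the sum: for every $\vec\pi$, $\max_i\Pr(\vXp\in\csus{n,B}^i)\le\Pr(\vXp\in\bigcup_i\csus{n,B}^i)\le I\cdot\max_i\Pr(\vXp\in\csus{n,B}^i)$. Taking $\sup_{\vec\pi}$ (resp.\ $\inf_{\vec\pi}$) of both ends and absorbing the constant $I$ shows that the semi-random likelihood of the union is of the same order as $\max_i\sup_{\vec\pi}\Pr(\vXp\in\csus{n,B}^i)$ (resp.\ $\max_i\inf_{\vec\pi}\Pr(\vXp\in\csus{n,B}^i)$). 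This already yields the trichotomy in Theorem~\ref{thm:GSR}: $0$ if every setting is in case (a), $\exp(-\Theta(n))$ if every setting is in (a) or (b) with at least one in (b), and polynomial otherwise. In the polynomial case, I would define $(d_0^{\max},d_\Delta^{\max})$ as the pair $(d_0^i,d_\Delta^i)$ of an index $i\in\calI_p$ whose per-setting polynomial rate dominates within the range $1\le B\le C_1 n$, and similarly for $(d_0^{\min},d_\Delta^{\min})$ on the $\inf$ side.

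The main obstacle I anticipate is this final parameter selection. Two polynomial-case settings with different pairs $(d_0,d_\Delta)$ can dominate at different values of $B$---one wins at $B=O(1)$ where only $d_0^i$ matters, another wins at $B=\Theta(\sqrt n)$ where $d_\infty^i=d_0^i+d_\Delta^i$ matters---so a priori no single pair gives an asymptotically tight characterization uniformly in $B$. I expect to resolve this by leveraging the symmetry inherent in GSRs: the settings produced by Lemma~\ref{lem:CI-GSR} are parameterized by pairs of feasible signatures, and for $X\in\{\cm,\mov\}$ the polynomial-case settings that actually achieve the sup are those whose source and target regions share a maximal-dimensional boundary, forcing $d_0^i$ and $d_\infty^i$ to take canonical values $m!-1$ and $m!$ (matching the $\Theta(\min\{B/\sqrt n,1\})$ bound in Theorem~\ref{thm:PMV-instability-applications}). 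When this symmetry fails, the theorem can still be salvaged by splitting the $B$-range into $B\le\sqrt n$ and $\sqrt n<B\le C_1 n$ and choosing $(d_0^{\max},d_\Delta^{\max})$ as the pair realizing the maximum polynomial rate on each end---consistent with the theorem's existential quantification over the parameter pair.
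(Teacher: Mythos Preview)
Your approach to the $\sup$ part is correct and matches the paper: the sandwich $\max_i\Pr\le\Pr(\bigcup)\le I\cdot\max_i\Pr$, followed by taking $\sup_{\vec\pi}$ and using $\sup_{\vec\pi}\max_i=\max_i\sup_{\vec\pi}$, reduces the problem to finitely many applications of Theorem~\ref{thm:PMV-instability}. The paper does exactly this (see Equation~(\ref{eq:pmv-union-ineq}) and Corollary~\ref{coro:PMV-instability}).

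The $\inf$ part, however, has a genuine gap. After taking $\inf_{\vec\pi}$ of the sandwich you obtain that the union probability is of order $\inf_{\vec\pi}\max_i\Pr(\vXp\in\csus{n,B}^i)$, \emph{not} $\max_i\inf_{\vec\pi}\Pr(\vXp\in\csus{n,B}^i)$ as you wrote; $\inf$ and $\max$ do not commute here. The quantity $\inf_{\vec\pi}\max_i$ cannot be evaluated by applying Theorem~\ref{thm:PMV-instability} setting-by-setting, because the $\vec\pi$ that minimizes the probability for one $\vosetting^i$ may fall in the polynomial region of another $\vosetting^{i'}$. The paper explicitly flags this as the hard step in the proof of Theorem~\ref{thm:PMV-multi-instability} and resolves it with an additional construction: it introduces the activation graph (Definition~\ref{dfn:activation-graph}) and for each subset $\calI\subseteq[I]$ with $\conv(\Pi)\nsubseteq\bigcup_{i\in\calI}\sus{0}^i$ picks a witness distribution $\pi_{\calI}$ bounded away from all $\sus{0}^i$ with $i\in\calI$. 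The constant $C_1$ is then chosen small enough that $n\pi_{\calI}$ stays $\Omega(n)$ away from every $\cpoly{B}^i$ with $i\in\calI$ simultaneously, which supplies the upper bound witnessing the minimax rate $w_{\min}=\min_{\pi\in\conv(\Pi)}\max_i w_{n,B}(\pi,\vosetting^i)$. The lower bound uses Claim~\ref{claim:PMV-point-lb} at the particular $\vec\pi$ under consideration. Your proposal lacks this piece.

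On your final paragraph: the worry about different $(d_0^i,d_\Delta^i)$ dominating at different $B$ is a non-issue. The statement of Theorem~\ref{thm:GSR} quantifies the parameters existentially \emph{after} $n$ and $B$ are fixed (the paper says explicitly below the theorem that they ``may depend on $n$ and $B$''), so for each $(n,B)$ one simply picks the index $i$ achieving $w_{\max}$ (or $w_{\min}$) and reads off its $(d_0^i,d_\Delta^i)$; no appeal to GSR symmetry or range-splitting is needed.
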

%\noindent{\bf Significance, generality, and limitations of the theorem.} 
As explained in the Introduction, the main merit of Theorem~\ref{thm:GSR} is conceptual, as it illustrates a separation between $0$, exponential, and polynomial cases of different degrees. The most interesting part is the  asymptotically tight  polynomial lower bounds, because little was known about them even under $\cm$, IC, and $B=1$, as discussed in Section~\ref{sec:related-work}.  The theorem  works for many other commonly-studied coalitional influence problem such as those defined in Appendix~\ref{app:more-CI}.

\vspace{2mm}\noindent{\bf Constants in the theorem.} $C_1$ and the constants in the asymptotic notation $\Theta(\cdot)$ depend on  $r$,   $X$, and $\Pi$. In other words, they do not depend on $n$ or $B$, which means that  that there exists $\beta>1$ that does not depend on $n$ or $B$, such that for every  $n$ and every $B\ge 0$,
$$\hfill
\satmax{X}{\Pi,0}(r,n,B)\in \{0\}   \cup  \left[\exp(-\beta \cdot n), \exp\left(-\frac1\beta \cdot n\right)\right] \cup  \left ( \left [  \frac 1\beta,  \beta \right]\cdot  \dfrac{\min\{B+1,\sqrt n\}^{d_\Delta^{\max}}} {( \sqrt n )^{m!-d_0^{\max}}}\right)
\hfill$$

%&\cup  \left [  \frac 1\beta\cdot \dfrac{\min\{B+1,\sqrt n\}^{d_\Delta^{\max}}} {( \sqrt n )^{m!-d_0^{\max}}},  \beta\cdot\dfrac{\min\{B+1,\sqrt n\}^{d_\Delta^{\max}}} {( \sqrt n )^{m!-d_0^{\max}}} \right]
%\end{align*}
A similar argument holds for $\satmin{X}{\Pi,0}(r,n,B)$. We note that $\{d_0^{\max},d_\Delta^{\max},d_0^{\min},d_\Delta^{\min}\}$ may depend on $n$ and $B$, yet they are integers that are no more than $m!$ as stated in the theorem.

\subsection{Proof of  Theorem~\ref{thm:GSR}}

%Recall that the proof for $\cm$ was presented in Sections~\ref{sec:model}--\ref{sec:proof-step3}. For other coalitional influence problems $X$ in the theorem statement, notice that the full version of Lemma~\ref{lem:CI-GSR} asserts that $X$ can be modeled by multi-instability problems. Therefore, the theorem is a straightforward corollary of Theorem~\ref{thm:multi-instability} as discussed in Section~\ref{sec:proof-step3}.

\vspace{2mm}\noindent{\bf Proof overview.} %Theorem~\ref{thm:GSR} is proved as a corollary of a more general problem defined later in the paper.
Recall that the full version of Lemma~\ref{lem:CI-GSR} in Appendix~\ref{app:proof-prop-CI-GSR} states that the  coalitional influence problems can be represented as   multi-instability problems (Definition~\ref{dfn:multi-instability-problem}). Next, we extend Theorem~\ref{thm:PMV-instability} to solve the multi-instability problem in Theorem~\ref{thm:multi-instability}, and then apply it to the multi-instability problems defined in the proof of Lemma~\ref{lem:CI-GSR} to prove  Theorem~\ref{thm:GSR}. 

%\subsection{Step 3 of The Proof: Prove and Apply the PMV-Multi-Instability Theorem}
%\label{sec:proof-step3}
%In this subsection, we first extend Theorem~\ref{thm:PMV-instability} to solve the multi-instability problem (Definition~\ref{dfn:multi-instability-problem}) in Theorem~\ref{thm:multi-instability}, then apply it to prove Theorem~\ref{thm:GSR}. 
For every $i\le I$, we use superscript $i$ to denote the notation defined for $\vosetting{}^i$. For example,  $d_0^i$, and $d_\Delta^i$ denote  $d_0$  and $d_\Delta$ for $\vosetting^i$. 
To present the result, it is convenient to define the following graph.
\begin{dfn}[\bf Activation graph for multi-instability]
\label{dfn:activation-graph} 
Given a PMV multi-instability setting  $\calM=\{\vosetting{}^i:i\le I\}$,  $n$, and $B$, we define a weighted undirected bipartite graph, called {\em activation graph} and is denoted by $\calA_{n,B}$, as follows.
\begin{itemize}
\item {\bf Vertices.} There are two sides of  vertices: $\conv(\Pi)$ and $\{\vosetting^1,\ldots,\vosetting^I\}$.
\item {\bf Edges and weights.} For any $\pi\in\conv(\Pi)$ and any $\vosetting^i\in \calM$, the weight on the edge between $\pi$ and $\vosetting^i$, denoted by $w_{n,B}(\pi,\vosetting^i)$, is defined as follows: for every instability settings $\vosetting$, define
\begin{equation}
\label{eq:weight-B-n}
w_{n,B}(\pi,\vosetting)  \triangleq \left\{\begin{array}{ll}
-\infty & \text{if }\csus{n,B}=\emptyset\\
- \frac{2n}{\log n } & \text{if }\csus{n,B}\ne\emptyset\text{ and } \pi\notin\sus{0}\\
d_0+d_\Delta\cdot\min\left\{\frac{2\log (B+1)}{\log n},1\right\}& \text{otherwise}
\end{array}\right.
\end{equation}
\end{itemize}
\end{dfn}

Notice that while the conditions in (\ref{eq:weight-B-n}) depend on $\pi$, the values of $w_{n,B}(\pi,\vosetting )$   do not depend on $\pi$, and they  are chosen so that $(\sqrt n)^{w_{n,B}}$ corresponds to the values in the exponential cases and polynomial cases of Theorem~\ref{thm:PMV-instability}. Specifically, the $- \frac{2n}{\log n }$ value in the second case is chosen so that $(\sqrt n)^{- \frac{2n}{\log n }} = \exp(-n)$, which corresponds to the exponential cases. %It can be any bounded negative value for the following corollary.

Given the PMV multi-instability setting  $\calM=\{\vosetting{}^i:i\le I\}$, $B$, and $n$, let  $w_{\max}$ denote the   maximum weigh on edges in $\calA_{n,B}$ and let $(\pi_{\max},i_{\max})$ denote an arbitrary edge with the max weight. That is, 
%$$\text{NonNeg}_n(\pi) \triangleq \{i\le I: \csus{n,B}^i\ne\emptyset\text{ and }\pi\in \sus{0}^i\}$$
%denote the indices to all $\vosetting^i$ whose edges with $\pi$ have non-negative weights, and let
$$w_{\max}\triangleq \max\nolimits_{\pi\in\conv(\Pi), i\le I}\left\{w_{n,B}(\pi,\vosetting^i)\right\} \text{ and }(\pi_{\max},i_{\max}) \triangleq \arg\max\nolimits_{\pi\in\conv(\Pi), i\le I} w_{n,B}(\pi,\vosetting^i)$$
Let $w_{\min}$ denote the weight of the minimax weighted edge in $\calA_{n,B}$, denoted by $(\pi_{\min},i_{\min})$, where  $\min$ is taken over all $\pi\in\conv(\Pi)$ and $\max$ is taken over all edges connected to  $\pi$.  That is,

$\hfill w_{\min}\triangleq \min\nolimits_{\pi\in\conv(\Pi)}\max\nolimits_{i\le I}\left\{w_{n,B}(\pi,\vosetting^i)\right\},  \pi_{\min} \triangleq \arg\min\nolimits_{\pi\in\conv(\Pi)} \max\nolimits_{  i\le I} w_{n,B}(\pi,\vosetting^i),\hfill $

$\hfill \text{ and }i_{\min}\triangleq \arg\max\nolimits_{  i\le I} w_{n,B}(\pi_{\min},\vosetting^i)\hfill $

Notice that $i_{\max}$ and $i_{\min}$ are both in $[q]$ and $w_{\max}$, $w_{\min}$, $i_{\max}$, and $i_{\min}$  depend on $B$ and $n$, which are clear from the context. 

\begin{thm}[\bf \boldmath Semi-Random multi-instability, $B=O(n)$]
\label{thm:multi-instability} Given any $q\in\mathbb N$, any closed and strictly positive $\Pi$ over $[q]$, and any PMV multi-instability setting $\calM=\{\vosetting^i: i\le I\}$, there exists a constant $C_1>0$ so that for any  $n\in \mathbb N$ and any $B\ge 0$ with $B\le C_1n$, 
\begin{align*}
&\sup_{\vec\pi\in\Pi^n}\Pr\left(\vXp \in \csus{n,B}^\calM\right)=\left\{\begin{array}{ll}0 &\text{if } w_{\max} = -\infty\\
\exp(-\Theta(n)) &\text{if }   w_{\max} = -\frac{2n}{\log n}\\
\Theta\left( \left(\frac{1}{\sqrt n}\right)^{q-w_{\max}}\right) &\text{otherwise} 
\end{array}\right. \\
&\inf_{\vec\pi\in\Pi^n}\Pr\left(\vXp \in \csus{n,B}^\calM\right)=\left\{\begin{array}{ll}0 &\text{if } w_{\min} = -\infty\\
\exp(-\Theta(n)) &\text{if }  w_{\min} = -\frac{2n}{\log n}\\
\Theta\left( \left(\frac{1}{\sqrt n}\right)^{q-w_{\min}}\right) &\text{otherwise} 
\end{array}\right. 
\end{align*}
\end{thm}
\begin{proof} The $0$ cases of $\sup$ and $\inf$ are straightforward. Like the proof of Theorem~\ref{thm:PMV-instability}, it suffices to prove the other cases of $\sup$ and $\inf$ for every sufficiently large $n$.

\vspace{2mm}\noindent{\bf\bf \bf\boldmath Proof for the $\sup$ part.} We first prove a convenient corollary of Theorem~\ref{thm:PMV-instability}, which uses the weight in the activation graph to represent conditions of the $\sup$ part of Theorem~\ref{thm:PMV-instability}.

\begin{coro}[\bf Activation graph representation of the $\sup$ part of Theorem~\ref{thm:PMV-instability}]
\label{coro:PMV-instability}
Given any $q\in\mathbb N$, any closed and strictly positive $\Pi$ over $[q]$,  and any instability settings $\vosetting = \langle\sourcepoly,\targetpoly,\voset{},\vec c\,\rangle$,  any $C_2$ with $C_2<B_{\conv(\Pi)}$, any $n\in\mathbb N$, and any $0\le B\le C_2 n$, let $w^* = \sup_{\pi\in\conv(\Pi)}w_{n,B}(\pi,\vosetting)$,
\begin{align*}
&\sup_{\vec\pi\in\Pi^n}\Pr\left(\vXp \in \csus{n,B}\right)=\left\{\begin{array}{ll}0 &\text{if } w^* = -\infty\\
\exp(-\Theta(n)) &\text{if } -\infty< w^* <0\\
\Theta\left( \left(\frac{1}{\sqrt n}\right)^{q-w^*}\right) &\text{otherwise}
%\\
%\Theta\left(B^{d_{\Delta}}\cdot (\frac{1}{\sqrt n})^{q-d_0} \right) &\text{if } \csus{n,B}\ne \emptyset, \conv(\Pi)\cap\sus{0}\ne\emptyset \text{,and } B\ge C_1^-
\end{array}\right. 
\end{align*}
\end{coro}
\begin{proof}
The $0$ case is straightforward. Like in the proof of Theorem~\ref{thm:PMV-instability}, for the non-zero cases it is without loss of generality to assume that   $n$ is larger than a constant.

\vspace{2mm}
\noindent{\em Exponential case.} In this case we have $w^* = -\frac{2n}{\log n}$, which means that   $\conv(\Pi)\cap \sus{0}=\emptyset$. Therefore, the exponential case of Corollary~\ref{coro:PMV-instability} follows after the exponential case and the $B\le C_2n$ case of in Theorem~\ref{thm:PMV-instability}.   

\vspace{2mm}
\noindent{\em Polynomial case.} Because $\conv(\Pi)$ is bounded and closed, it is compact. Therefore, there exists  $\pi^*\in \conv(\Pi)$ such that $w_{n,B}(\pi^*,\vosetting) = w^*=d_0+d_\Delta\cdot\min\left\{\frac{2\log (B+1)}{\log n},1\right\}$. It follows from Theorem~\ref{thm:PMV-instability} that
 
$$\sup_{\vec\pi\in\Pi^n}\Pr\left(\vXp \in \csus{n,B}\right)=\Theta\left(\dfrac{\min\{B+1,\sqrt n\}^{d_{\Delta}}}{(\sqrt n)^{q-d_0}}\right) $$
Notice that 
\begin{align*}
&\log\left(\dfrac{\min\{B+1,\sqrt n\}^{d_{\Delta}}}{(\sqrt n)^{q-d_0}}\right) = d_\Delta\min\left\{\log(B+1),\frac{\log n}{2}\right\} + (d_0-q)\frac{\log n}{2}\\
=& \frac{\log n}{2}\cdot \left( d_0+d_\Delta\cdot\min\left\{\frac{2\log (B+1)}{\log n},1\right\}-q\right) = \log\left(\left(\frac{1}{\sqrt n}\right)^{q-w^*}\right)
\end{align*}
This completes the proof of Corollary~\ref{coro:PMV-instability}.
 \end{proof}

\vspace{2mm}
\noindent{\bf \boldmath Define $C_1$ for $\sup$.} Let $C_1>0$ denote  any positive number that is smaller than any strictly positive $B_{\conv(\Pi)}^i$. That is, 
$$0<C_1<\min\{B_{\conv(\Pi)}^i: B_{\conv(\Pi)}^i>0, i\le I\}$$
The rest of the proof for the $\sup$ part of Theorem~\ref{thm:multi-instability} is done by combining the results of the applications of Corollary~\ref{coro:PMV-instability} to all instability settings and the following inequality.
\begin{equation}
\label{eq:pmv-union-ineq}
\max_{i\in I}\sup_{\vec\pi\in\Pi^n}\Pr\left(\vXp \in \csus{n,B}^i\right)\le  \sup_{\vec\pi\in\Pi^n}\Pr\left(\vXp \in \csus{n,B}^\calM\right)\le  I\cdot \max_{i\in I}\sup_{\vec\pi\in\Pi^n}\Pr\left(\vXp \in \csus{n,B}^i\right)
\end{equation}

\vspace{2mm}
\noindent{\bf \boldmath Exponential case of $\sup$.} In this case  $w_{\max} = -\frac{2n}{\log n}$. Notice that for all $\pi\in\conv(\Pi)$ and all $i\le I$, we have $w_{n,B} (\pi,\vosetting^i)\le -\frac{2n}{\log n}$, and there exists $\pi^*\in\conv(\Pi)$ and $i^*\le I$ such that  $w_{n,B} (\pi^*,\vosetting^{i^*})= -\frac{2n}{\log n}$. Therefore, by Corollary~\ref{coro:PMV-instability}, $\sup_{\vec\pi\in\Pi^n}\Pr\left(\vXp \in \csus{n,B}^{i^*}\right)= -\frac{2n}{\log n}$, which means that 
$$\max_{i\in I}\sup_{\vec\pi\in\Pi^n}\Pr\left(\vXp \in \csus{n,B}^i\right) = \exp(-\Theta(n))$$
The exponential case follows after (\ref{eq:pmv-union-ineq}).

\vspace{2mm}
\noindent{\bf \boldmath  Polynomial case of $\sup$.} The proof for the polynomial case is similar. To prove the {\bf polynomial upper bound}, notice that for all $\pi\in\conv(\Pi)$ and all $i\le I$, we have $w_{n,B}(\pi,\vosetting^i)\le w_{\max}$. By Corollary~\ref{coro:PMV-instability}, when $n$ is sufficiently large, for all $i\le I$, we have  
$$\max_{i\in I}\sup_{\vec\pi\in\Pi^n}\Pr\left(\vXp \in \csus{n,B}^i\right) = \max_{i\in I} \Theta\left(\left(\frac{1}{\sqrt n}\right)^{q-w^{i*}}\right)= O\left(\left(\frac{1}{\sqrt n}\right)^{q-w_{\max}}\right),$$
where $w^{i*}$ is the weight $w^*$ in Corollary~\ref{coro:PMV-instability} applied to $\vosetting = \vosetting^i$.

To prove the {\bf polynomial lower bound}, let $\pi^*\in\conv(\Pi)$ and $i^*\le I$ be such that  $w_{n,B}^{i^*}(\pi^*,\vosetting^{i^*})= w_{\max}$. According to the polynomial case of 
Corollary~\ref{coro:PMV-instability}, we have 
$$\max_{i\in I}\sup_{\vec\pi\in\Pi^n}\Pr\left(\vXp \in \csus{n,B}^i\right) = \Omega \left(\left(\frac{1}{\sqrt n}\right)^{q-w_{\max}}\right)$$
The polynomial lower bound of $\sup$ follows after (\ref{eq:pmv-union-ineq}). This proves the $\sup$ part of Theorem~\ref{thm:multi-instability}.

\vspace{2mm}\noindent{\bf\bf \bf\boldmath Proof for the $\inf$ part.} The hardness in proving the $\inf$ part is that any $\vec\pi\in\Pi^n$ that achieves $\inf$ for one instability settings $\vosetting^i$ may not achieve $\inf$ for another instability settings $\vosetting^{i'}$, and therefore may not  achieve $\inf$ of the multi-instability problem. This is different from the $\sup$ part, where the $\vec \pi\in\Pi^n$ with the highest value of $\sup$ under some $\vosetting^{i'}$ achieves $\sup$ of the union-manipulation problem. Consequently, even though the $\inf$ counterpart of Corollary~\ref{coro:PMV-instability} can be proved for $\inf$, it is cannot be leveraged to prove the $\inf$ part of Theorem~\ref{thm:multi-instability}.

To prove the $\inf$ part, we first define some distributions that will be used to prove the upper bounds for $\inf$. 
\begin{dfn}[\bf \boldmath $\pi_{\calI}$'s]
Given a multi-instability setting $\calM$, for every non-empty set $\calI\subseteq [I]$ such that $\conv(\Pi)\nsubseteq \bigcup_{i\in\calI}\sus{0}^i$, we choose $\pi_{\calI}\in \conv(\Pi)\setminus\left( \bigcup_{i\in\calI}\sus{0}^i\right) $. 
\end{dfn}

Because every $\sus{0}^i$ is a closed set, the distance between $\pi_{\calI}$ and $\bigcup_{i\in\calI}\sus{0}^i$ is strictly positive and is denoted by $\delta_{\calI}>0$. Due to Claim~\ref{claim:close-to-cone0}, for every $i\in \calI$ there exists a constant  $c_i$ such that each vector in $\cpoly{B}^i$ is no more than $c_i(B+1)$ away from a vector in $\sus{0}^i$. It follows that for any $n>\frac{4c_i}{\delta_{\calI}}$ and any $B\le \frac{\delta_{\calI}}{4c_i}n$, the distance between $\pi_{\calI}$ and $\cpoly{B}^i$ is at least $\frac{\delta_{\calI}}{2}n$. 

\vspace{2mm}\noindent{\bf\bf \bf\boldmath  Define $C_1$ for $\inf$.} Let $C_1$ denote the minimum $\frac{\delta_{\calI}}{4c_i}$ for all well-defined $\delta_{\calI}$ and all $i\in \calI$. Let $\delta$ denote the minimum $\frac{\delta_{\calI}}{2}$ for all well-defined $\delta_{\calI}$.  It follows that for any sufficiently large $n$ (that is larger than all $\frac{4c_i}{\delta_{\calI}}$) and any $B\le C_1n$, the distance between any well-defined $\pi_\calI$ and any $i\in\calI$ is at least $\delta n$.

\vspace{2mm}\noindent{\bf\bf \bf\boldmath  Exponential case of $\inf$.} In this case $w_{\min} =  -\frac{2n}{\log n}$. The {\bf exponential lower bound} trivially holds, because there exists an active $\vosetting^i$. To prove the {\bf exponential upper bound}, let $\pi_{\text{MM}}\in \conv(\Pi)$ be an arbitrary distribution such that for all $i\le I$, $w_{n,B}(\pi_{\text{MM}},\vosetting^i)\le -\frac{2n}{\log n}$. Let $\calI_{\text{MM}}$ denote the indices to the active instability settings (whose $\sus{0}$'s do not contain $\pi_{\text{MM}}$), that is,
$$\calI_{\text{MM}} \triangleq \left\{i\le I: w_{n,B}(\pi_{\text{MM}},\vosetting^i)= -\frac{2n}{\log n} \right\}$$
Because $\pi_{\text{MM}}\in \conv(\Pi)\setminus\left( \bigcup_{i\in\calI}\sus{0}^i\right)$,  we have that $\pi_{\calI_\text{MM}}$ is well-defined. Therefore, for every $i\in\calI_{\text{MM}}$, the distance between $\pi_{\calI_{\text{MM}}}$ and $\sus{0}^i$ is at least $\delta n$.  The exponential upper bound follows after applying Claim~\ref{claim:distance}, Hoeffding's inequality, and the union bound to any $\vec \pi\in\Pi^n$ such that $|\sum_{j=1}^n\pi_j - n\cdot\pi_{\calI_\text{MM}}|_\infty = O(1)$, as done in the $B\le C_2n$ case of the \myhyperlink{PMV-proof-sup-n}{proof for PT-$\Theta(n)$-$\sup$ of Theorem~\ref{thm:PMV-instability}}.  

\vspace{2mm}\noindent{\bf\bf \bf\boldmath  Polynomial case of $\inf$.} To prove the {\bf polynomial lower bound}, notice that for every $\vec\pi\in\Pi^n$, there exists $i\le I$ such that $w_{n,B}(\avg{\vec\pi}, \vosetting^i) = d_{n,B}^i \ge w_{\min}$. It follows from Claim~\ref{claim:PMV-point-lb} (applied to $\vosetting^i$,  and $\vec \pi$) that, for every $B\le \sqrt n$,
$$\Pr(\vXp\in \csus{n,B}^\calM)\ge \Pr(\vXp\in \csus{n,B}^i)  = \Theta\left(\left(\frac{1}{\sqrt n}\right)^{q-d_{n,B}^i}\right) = \Omega\left(\left(\frac{1}{\sqrt n}\right)^{q-w_{\min}}\right)$$
Notice that for every $B> \sqrt n$, the inequality still holds, because we have $\Pr(\vXp\in \csus{n,B}^i)\ge \Pr(\vXp\in \csus{n,\sqrt n}^i)$.

To prove the {\bf polynomial upper bound}, let
$$\pi_{\text{MM}} \triangleq \arg\min\nolimits_{\pi\in\conv(\Pi)}\max\nolimits_{i\le I}\left\{w_{n,B}(\pi,\vosetting^i)\right\}$$
Like in the proof of the exponential upper bound of $\inf$ above,  define
$$\calI_{\text{MM}} \triangleq \left\{i\le I: w_{n,B}(\pi_{\text{MM}},\vosetting^i)<0 \right\}$$
Because $\pi_{\text{MM}}\in \conv(\Pi)\setminus\left( \bigcup_{i\in\calI}\sus{0}^i\right)$,  we have that $\pi_{\calI_\text{MM}}$ is well-defined. Therefore, for every $i\in\calI_{\text{MM}}$, the distance between $\pi_{\calI_{\text{MM}}}$ and $\sus{0}^i$ is at least $\delta n$. Choose any $\vec \pi\in\Pi^n$ such that $|\sum_{j=1}^n\pi_j - n\cdot\pi_{\calI_\text{MM}}|_\infty = O(1)$. Like the exponential case above, for every $i\in\calI_{\text{MM}}$, we have
$\Pr(\vXp\in \csus{n,B}^i) \le  \exp(-\Omega(n))$. Moreover, for every $i\notin \calI_{\text{MM}}$ such that $\vosetting^i$ is active, we have $\pi_{\text{MM}}\in \sus{0}^i$, which means that $\sus{0}^i\cap\conv(\Pi)\ne\emptyset$. Recall that $d_{n,B}^i\le w_{\min}$. By Corollary~\ref{coro:PMV-instability}, we have
$$\Pr(\vXp\in \csus{n,B}^i) = O\left( \left(\frac{1}{\sqrt n}\right)^{q-d_{n,B}^i}\right) \le  O\left( \left(\frac{1}{\sqrt n}\right)^{q-w_{\min}}\right)$$ 
Therefore,
$$\Pr(\vXp\in \csus{n,B}^\calM)\le \sum\nolimits_{i\le I}\Pr(\vXp\in \csus{n,B}^i)  =  O\left(\left(\frac{1}{\sqrt n}\right)^{q-w_{\min}}\right),$$
which proves the polynomial upper bound of $\inf$. 
\end{proof}

Notice that in Theorem~\ref{thm:multi-instability}, when $w_{\max}>0$ and $w_{\min}>0$, we have 
\begin{equation}
\label{eq:i-max}
\left(\frac{1}{\sqrt n}\right)^{q-w_{\max}} = \dfrac{\min\{B+1,\sqrt n\}^{d_{\Delta}^{i_{\max}}}}{ ( \sqrt n)^{q-d_0^{i_{\max}}}} \text{ and }
\left(\frac{1}{\sqrt n}\right)^{q-w_{\min}} = \dfrac{\min\{B+1,\sqrt n\}^{d_{\Delta}^{i_{\min}}}}{ ( \sqrt n)^{q-d_0^{i_{\min}}}}
\end{equation}
Recall from Lemma~\ref{lem:CI-GSR} that  any coalitional influence problem $X\in \{\cm,\mov\}$ under any GSR $r$ can be represented by a multi-instability problem. Therefore, Theorem~\ref{thm:GSR} follows immediately after Theorem~\ref{thm:multi-instability} and \eqref{eq:i-max}. This completes the proof of Theorem~\ref{thm:GSR}.

\section{ Ties $\not\Leftrightarrow \left[\Theta(1)\text{ instability}\right]$}
\label{app:tie-instability}
\begin{ex}[\bf \boldmath Ties$\not\Rightarrow  \Theta(1)\text{ instability} $]
\label{ex:ties-stable}
Consider  $\copeland$ with four alternatives. Let $P'$ denote an arbitrary profile whose UMG is the same as Figure~\ref{fig:exstable} (a). For any $n'\in \mathbb N$, we let $P= n' P'$. It is not hard to verify that $\copeland(P) = \{1,2\}$. The winner under $P$ is stable with $\Theta(n)$ changes in votes,  because the UMG of any profile whose histogram is $\Theta(1)$ away from $\hist(P)$ is the same as Figure~\ref{fig:exstable} (a). 
 \end{ex}

\begin{figure}[htp]
\centering
\begin{tabular}{cc}
\includegraphics[width = .3\textwidth]{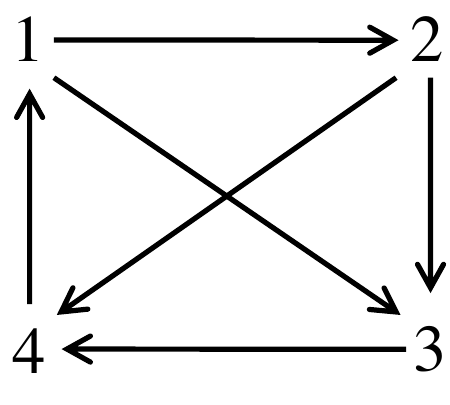}&\includegraphics[width = .3\textwidth]{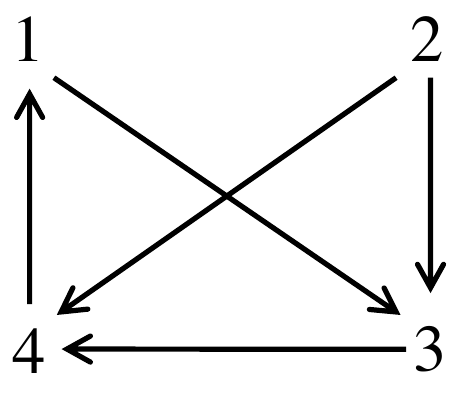}\\
(a) UMG for a tied stable profile.  & (b) UMG for a un-tied unstable profile.
\end{tabular}
\caption{Graphs used in Example~\ref{ex:ties-stable} and~\ref{ex:noties-unstable}. \label{fig:exstable}}
\end{figure}

\begin{ex}[\bf \boldmath  Ties$\not\Leftarrow \Theta(1)\text{ instability} $] 
\label{ex:noties-unstable}
Let $\cor$ denote a biased Copeland$_{0}$ rule for four alternatives,  which differs from Copeland$_{0}$ in that if $1\ra 2$, then alternative $1$ gets $2$ points (instead of $1$). Let $P'$ denote an arbitrary profile whose UMG is the same as Figure~\ref{fig:exstable} (b). For any $n'\in \mathbb N$, we let $P= n' P'$. Notice that $\cor(P) = \{2\}$. To see that $P$ is $\Theta(1)$ unstable, let $R$ denote any vote in $P$ where $2\succ 1$. Replace $R$ by $[1\succ 2\succ \others]$,   the winner becomes $1$.  

Let $P^*$ denote any  profile that is $\Theta(1)$ away from $P$. It is not hard to see that $\umg(P^*)$ contains the same edges as the graph in Figure~\ref{fig:exstable} (b) except the edge between $1$ and $2$. This means that either $\cor(P^*) = \{2\}$ (if there is no edge between $1$ and $2$ or there is an edge $2\ra 1$ in $\umg(P^*)$), or $\cor(P^*) = \{1\}$ (if $1\ra 2$ in $\umg(P^*)$). This means that $P$ is not close to any tied profile under $\cor$. 
\end{ex}

%\section{Notation}
%
%$\cpoly{B}$: remove integer constraints in $\csus{n,B}$, allowing the original vector and operation vector to be non-integral. \eqref{dfn:HB}
%
%$\voset{}$: set of vote operation vectors.
%
%$\csus{n,B}$: unstable histograms (the vote operations are implicitly given)

%\cleardoublepage
%\phantomsection
%\addcontentsline{toc}{section}{References}
 
\end{document}